\@citea\NAT@hyper@{%
     \NAT@nmfmt{\NAT@nm}%
     \hyper@natlinkbreak{\NAT@aysep\NAT@spacechar}{\@citeb\@extra@b@citeb}%
     \NAT@date}}
\@citea\NAT@nmfmt{\NAT@nm}%
\NAT@spacechar\NAT@hyper@{\NAT@date}}{}{}
\@citea\NAT@hyper@{%
     \NAT@nmfmt{\NAT@nm}%
     \hyper@natlinkbreak{\NAT@spacechar\NAT@@open\if*#1*\else#1\NAT@spacechar\fi}%
       {\@citeb\@extra@b@citeb}%
     \NAT@date}}
\@citea\NAT@nmfmt{\NAT@nm}%
\fi\NAT@hyper@{\NAT@date}}
\newcommand{\ignorethis } [1] { }
\newcommand{\mycomment}[1]{}
\newcommand{\sF}{\mathsf{F}}
\newcommand{\sB}{\mathsf{B}}
\newcommand{\chapnum    } [1] {\ref{#1}}
\newcommand{\appnum     } [1] {\ref{#1}}
\newcommand{\sectnum    } [1] {\ref{#1}}
\newcommand{\fignum     } [1] {\ref{fig:#1}}
\newcommand{\eqnnum     } [1] {\ref{#1}}
\newcommand{\chap       } [1] {Chapter~\chapnum{#1}}
\newcommand{\app        } [1] {Appendix~\appnum{#1}}
\newcommand{\sect       } [1] {Section~\sectnum{#1}}
\newcommand{\fig        } [1] {Figure~\fignum{#1}}
\newcommand{\eqn        } [1] {Equation~\eqnnum{#1}}
\newcommand{\defn       } [1] {Definition~\ref{#1}}
\newcommand{\etal       }     {{\it et~al.}}
\newcommand{\eg         }     {{e.g.,}}
\newcommand{\ie         }     {{i.e.,}}
\newcommand{\Reals      }     {{\textrm{I\kern-0.18em R}}}
\renewcommand{\vec      } [1] {{\mathbf{#1}}}
\newcommand{\change     } [1] {\mbox{{\footnotesize $\Delta$} \kern-3pt}#1}
\newcommand{\abs        } [1] {{| #1 |}}
\newcommand{\norm       } [1] {{\| #1 \|}}
\newcommand{\dotPabs    } [2] {\abs{ #1 \cdot #2}}
\def\def\svgwidth{\hsize}\import{figures/#/}{1.pdf_tex}\caption{2}\label{fig:#}#3#4{\def\svgwidth{\hsize}\import{figures/#1/}{#2.pdf_tex}\caption{#4}\label{fig:#3}}
\def\def\svgwidth{#}\import{figures/#/}{1.pdf_tex}\caption{2}\label{fig:#}3#4#5{\def\svgwidth{#5}\import{figures/#1/}{#2.pdf_tex}\caption{#4}\label{fig:#3}}
\definecolor{gray75}{gray}{0.75}
\titleformat{\chapter}[display]%
  {\scshape\Huge\bfseries}%
  {\vspace{-8em}\raggedleft{%
    {\color{gray75}%
        \rule[-5pt]{2pt}{5cm}}\enskip%
    {\color{gray75}
        \fontsize{60}{60}\selectfont\thechapter}%
    }%
  }%
  {-1em}%
  {\parbox[b]{\dimexpr\textwidth-3em\relax}{\raggedright#1}}%
  [\vspace{2em}]
\title{Line Drawings from 3D Models: \protect\newline A Tutorial}
\author{
Pierre B\'enard \\
LaBRI (UMR 5800, CNRS, Univ. Bordeaux)\\
Inria Bordeaux Sud-Ouest\\
\texttt{pierre.benard@labri.fr}~\orcid{0000-0002-2846-1955}
\And
Aaron Hertzmann \\
Adobe Research\\
\texttt{hertzman@dgp.toronto.edu}~\orcid{0000-0001-9667-0292}
}
\date{}
\begin{document}

\maketitle

\begin{abstract}
  This tutorial describes the geometry and algorithms for generating line drawings from 3D models, focusing on occluding contours.  

  The geometry of occluding contours on meshes and on smooth surfaces is described in detail, together with algorithms for extracting contours,  computing their visibility, and creating stylized renderings and animations. Exact methods and hardware-accelerated fast methods are both described, and the trade-offs between different methods are discussed.  The tutorial brings together and organizes material that, at present, is scattered throughout the literature. It also includes some novel explanations, and implementation tips.
   
  A thorough survey of the field of non-photorealistic 3D rendering is also included, covering other kinds of line drawings and artistic shading.
\end{abstract}

\setcounter{tocdepth}{1}
\tableofcontents

\chapter{Introduction}
\label{sec:intro}

Humans have been drawing pictures since the days of prehistoric cave painting. 
Various forms of line drawing have been developed since then, including Egyptian hieroglyphs, medieval etching, and industral-era printmaking.
Nowadays, line drawing and outline drawing methods are used throughout cartoons and comics, architectural rendering, instructional tutorials, and many other settings.
Drawing is the starting point for many kinds of tasks, for
everyone from children making pictures to professional architects sketching ideas.
Drawing seems to be fundamentally connected to how we represent the world visually.

While most computer graphics focuses on realistic visual simulation, over the past few decades, line drawing algorithms have also matured.  We now have the ability to automatically create reasonable line drawings from 3D geometry, much like photorealistic rendering. These algorithms provide deep insight into the geometry and topology of line drawings, which can be surprisingly subtle, given how simple line drawing might seem. Versions of these algorithms have been used throughout art, entertainment, and visualization. User evaluation has shown that these algorithms, indeed, accurately describe important aspects of how artists draw lines. This shows that these algorithms can contribute to a scientific understanding of art.

This tutorial provides a detailed guide to the mathematical theory and computer algorithms for line drawing of 3D objects.  
We focus on the curves known as \emph{occluding contours} or, simply, \emph{contours}.  
These are the most important curves for line drawing of 3D surfaces. They have a rich theory around them, and, once one understands this theory, understanding how other curves operate is much simpler. 
We describe the different algorithms required to compute and render these curves, together with references to the literature.
We also explain boundary curves and surface-surface intersection curves, since these are straightforward to include and often important.
We also discuss open research problems in contour rendering.


In addition, we survey of other topics in 3D non-photorealistic rendering, with extensive pointers to the literature, including: other types of curves, stroke rendering, and non-photorealistic shading.
We do not cover the complementary topic of image-based non-photorealistic rendering; for a survey of image-based methods, we refer the reader to the book by \citet{Rosin:2013}.

The theory of line drawing is currently scattered about and incomplete in research papers. The  algorithms for line drawing include many subtleties that are not described in the literature, and many pitfalls await the coder attempting them.  There remain some important open problems, but these gaps are not obvious from the literature.  This tutorial is meant to address these issues.

We believe that these topics ought to be known by anyone interested in understanding the curves in visual representational art.   It is one where computer graphics can make a unique contribution. Arguably, the algorithmic simulation of line drawing is a crucial step in understanding visual art.


\section{Occluding contours}

This tutorial focuses on the curves known as \emph{occluding contours} or, simply \emph{contours}. 
In some computer graphics research, these have been called \emph{silhouettes}, though the \emph{silhouettes} are technically a separate set of curves, as we will describe below.

The occluding contours of a simple 3D object are shown in \fig{contours}.  As a first definition, suppose we have a 3D object that we wish to render from a specific viewpoint. The occluding contours are \emph{surface curves that separate visible parts from invisible parts}. By rendering the visible portions of these 3D curves together with the object, we get a basic line drawing (\fig{contour_basic}).

\begin{figure}
	\centering
	\small
	\begin{subfigure}[b]{0.45\linewidth}
		\includegraphics[width=\linewidth]{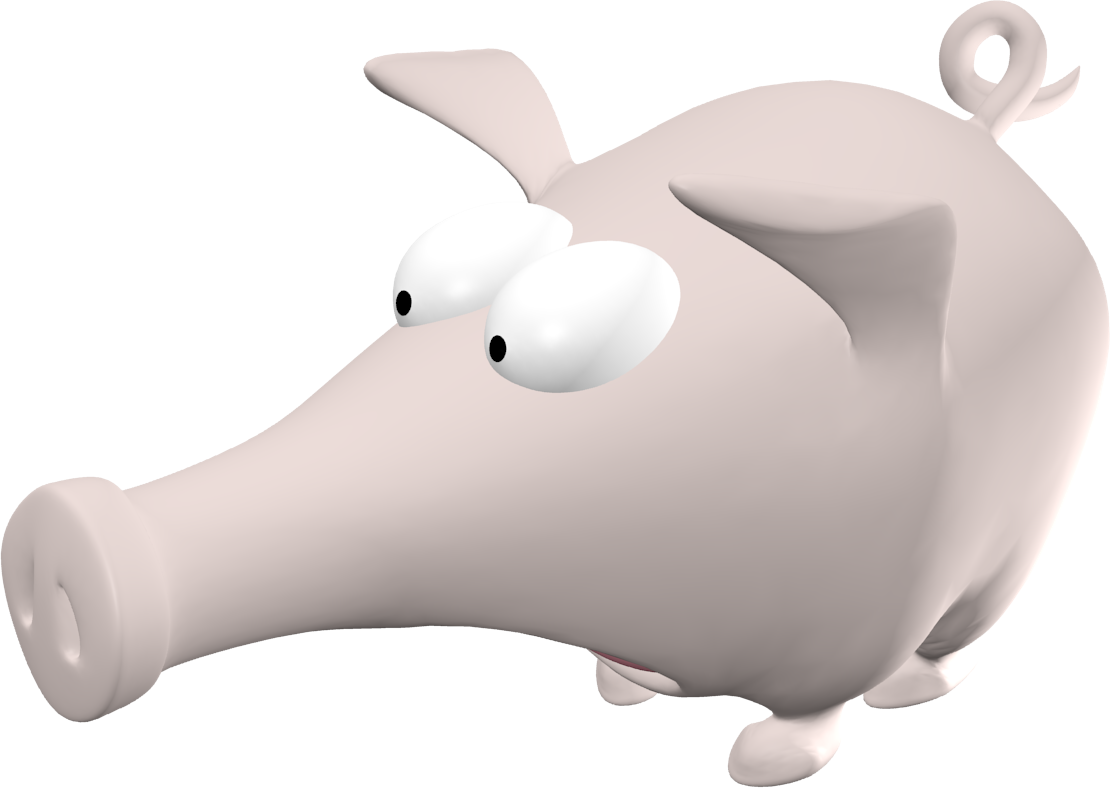}
		\caption{3D object with diffuse shading}
  \end{subfigure}
  \quad
	\begin{subfigure}[b]{0.45\linewidth}
		\includegraphics[width=\linewidth]{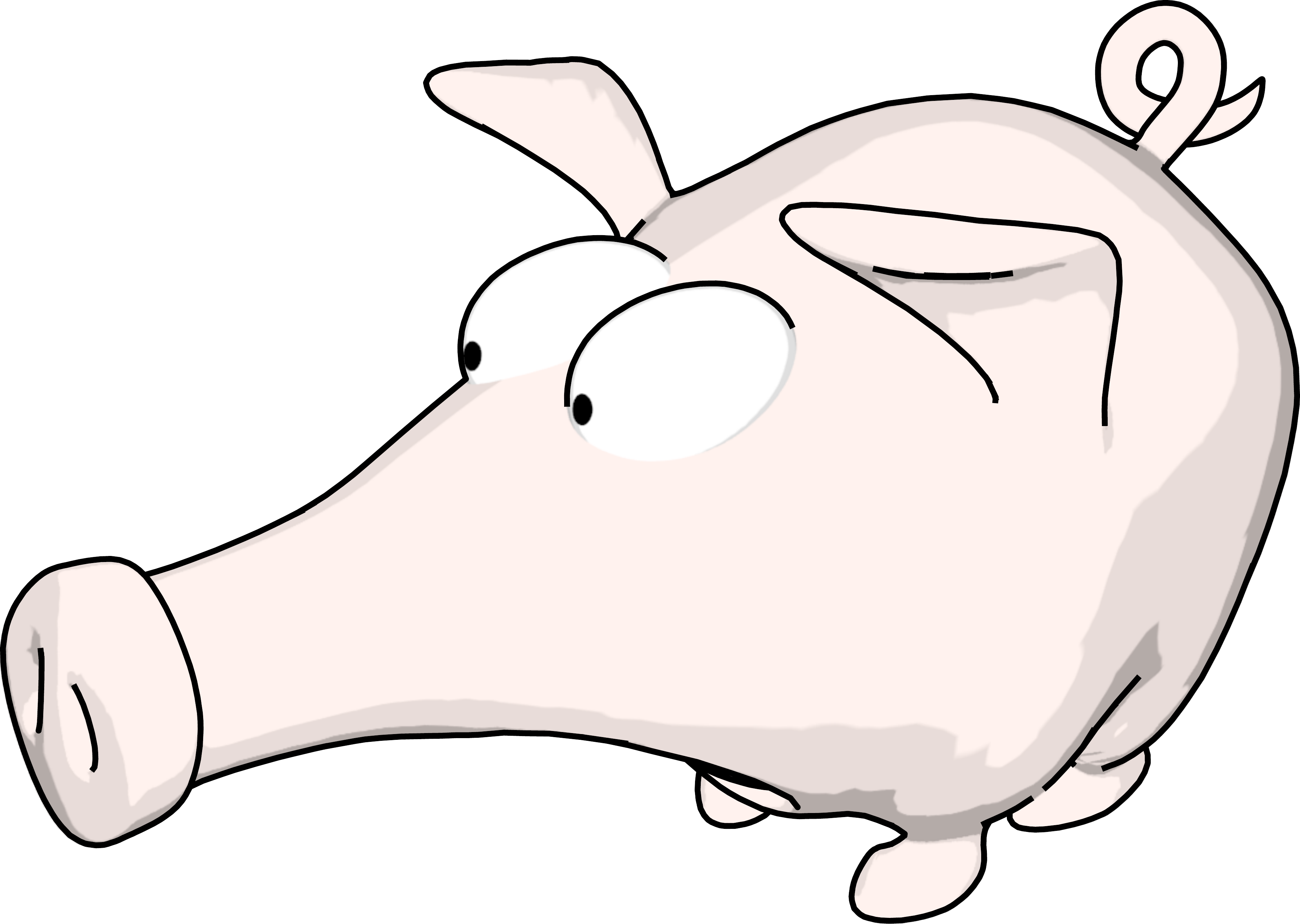}
		\caption{Contours with toon shading} \label{fig:contour_basic}
  \end{subfigure}
  \begin{subfigure}[b]{0.5\linewidth}
    \vspace{1em}
    \def\svgwidth{\textwidth}\import{figures/introduction/}{pig_side.pdf_tex}
    \caption{Side view of the scene} \label{fig:contour_facing}
  \end{subfigure}
  \begin{subfigure}[b]{0.45\linewidth}
    \includegraphics[width=\linewidth]{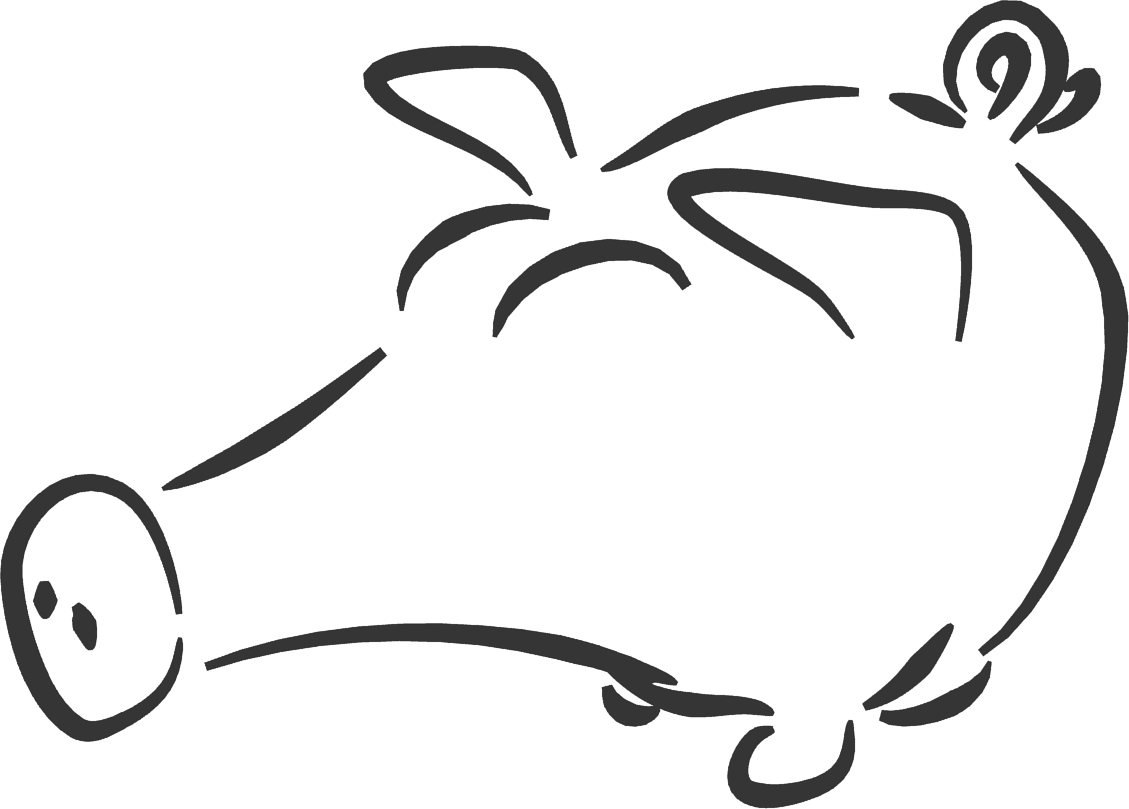}
    \caption{Stylized curves} \label{fig:contour_strokes}
  \end{subfigure}
  \caption{\textbf{Occluding contours} --- The occluding contours of the 3D model ``Origins of the Pig'' \ccCopy~Keenan Crane, shown in~\textbf{(a)} with diffuse shading, are depicted in~\textbf{(b)} composited with toon shading to produce a cel-like drawing. As illustrated in~\textbf{(c)} from a side view, they delineate the frontier between the front and back parts of the surface when seen from the camera. These contour curves can be further process to produce stylized imagery, such as the calligraphic brush strokes in~\textbf{(d)}.
  \label{fig:contours}}
\end{figure}

There are many different contour detection and rendering algorithms, and some significant tradeoffs between them. The most important tradeoff is between simple algorithms that produce approximate results, and more complex algorithms that give more precision, control, and stylization capabilities. Just rendering reasonable-looking contours as solid black lines is very straightforward for a graphics programmer. These most basic algorithms can be implemented in a few additional lines of code in an existing renderer, and have been implemented in many real-time applications, including many popular video games (one of the earliest was \textit{Jet Set Radio} in 2000). However, if we wish to stylize the curves, for example, by rendering the curves with sketchy or calligraphic strokes (\fig{contour_strokes}), things become more difficult. With a bit of perseverance, renderings with distinctive and lovely styles can be created. At times, these renderings may still contain topological artifacts that are not suitable for very high-end production. High-quality algorithms that remove these artifacts are more complex; in the extreme, no provably-correct algorithm exists for this problem. However, there are a number of partial solutions that are good enough to be used in many circumstances, and we discuss in detail the issues involved.

Note that, formally, the occluding contour and occluding contour generator are separate curves in 2D and 3D. However, we will frequently use the term ``contour'' to refer to each of them, since the correct terms are very cumbersome, and the meaning of ``contour'' is usually obvious from context.

\begin{figure}
  \centering
  \small
	\begin{subfigure}[b]{0.45\linewidth}
		\includegraphics[width=\linewidth]{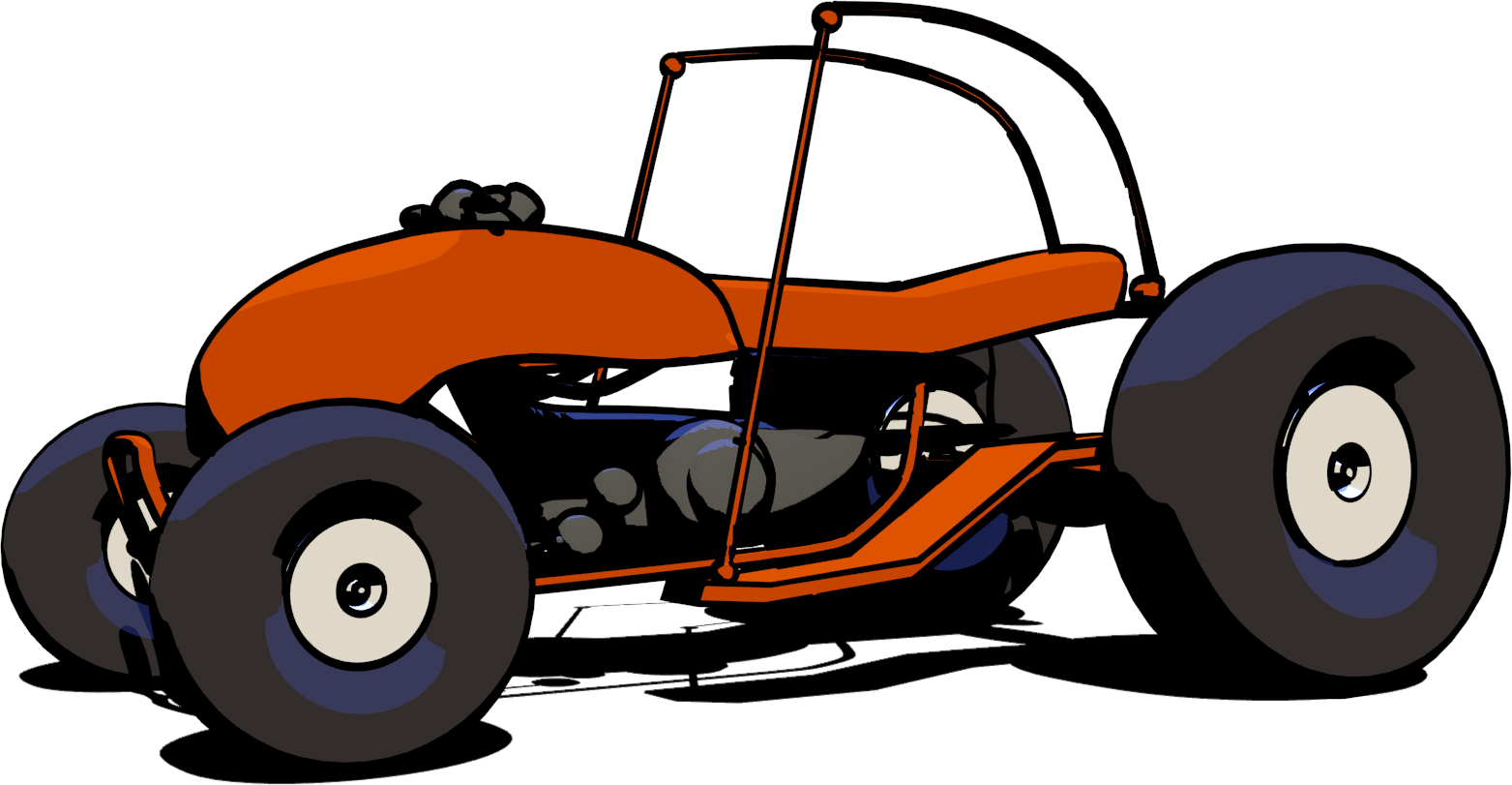}
		\caption{Buggy by Rylan Wright \ccby}
	\end{subfigure}
	\hspace{0.25pt}
	\begin{subfigure}[b]{0.5\linewidth}
		\includegraphics[width=\linewidth]{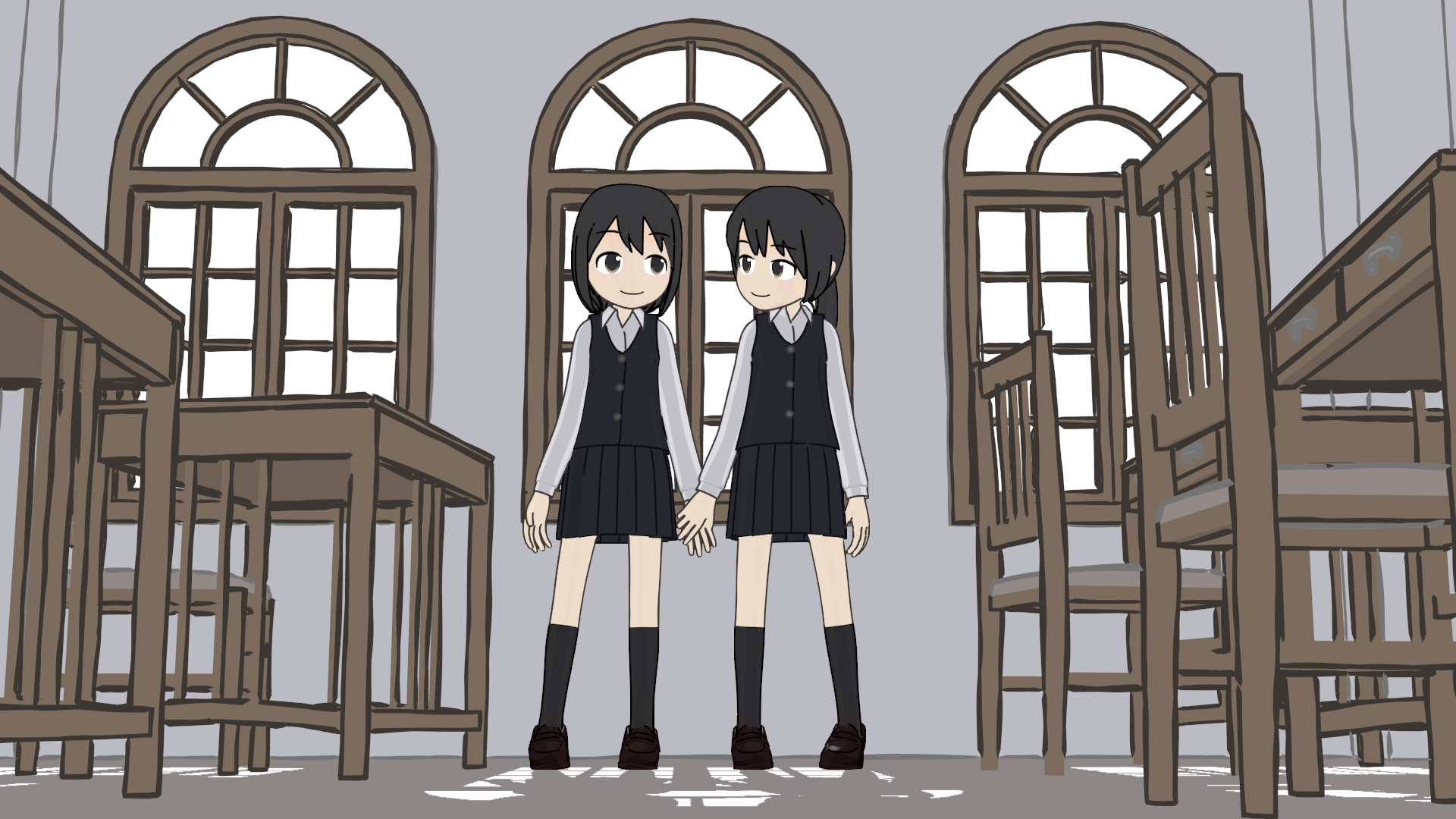}
		\caption{Anime by mato.sus304 \ccbysa}
	\end{subfigure}
	\par\vspace{0.5em}
  \begin{subfigure}[b]{0.7\linewidth}
		\includegraphics[width=\linewidth]{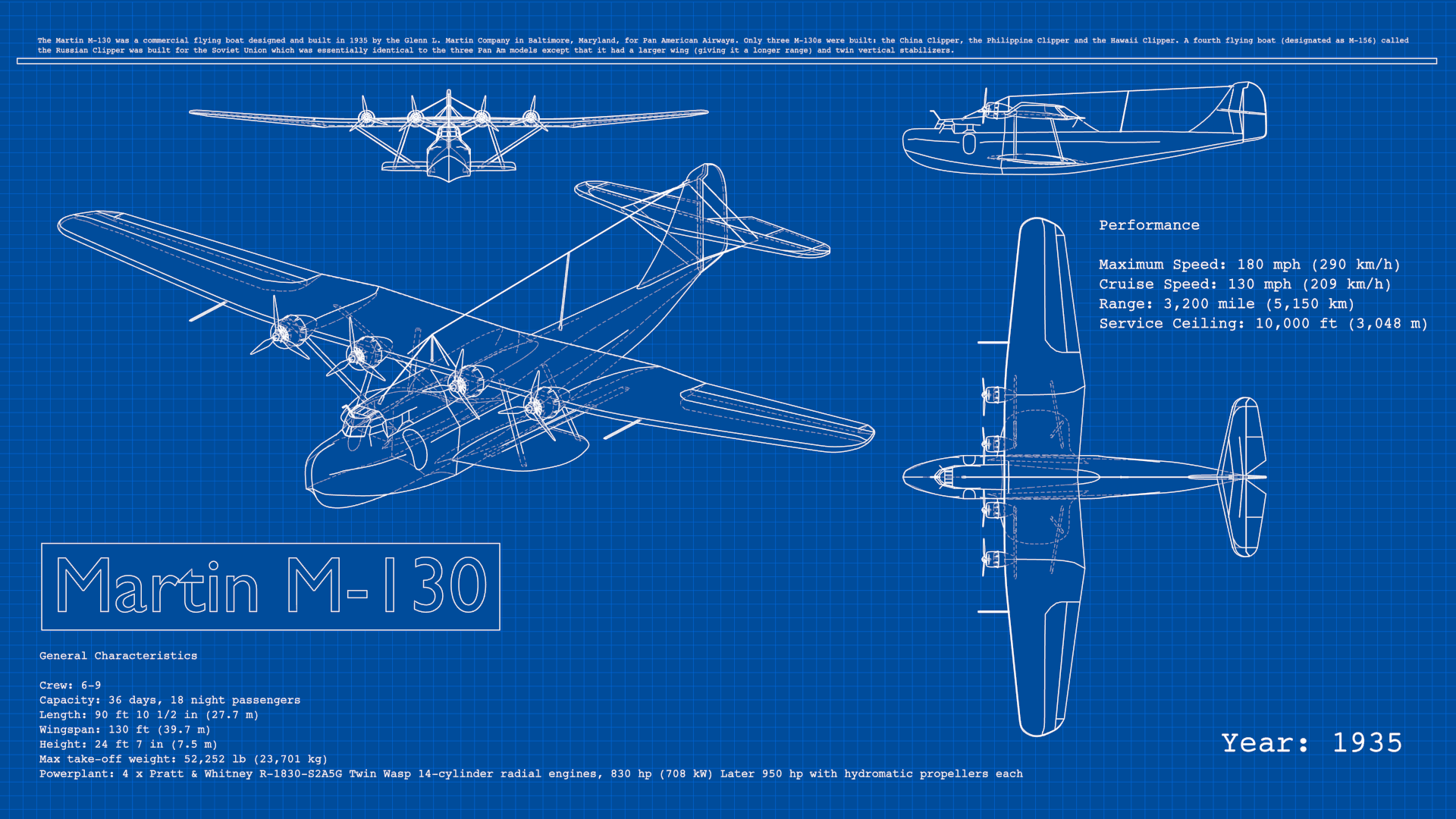}
		\caption{Martin M-130 blueprint by LightBWK \cczero}
	\end{subfigure}
	\par\vspace{0.5em}
  \begin{subfigure}[b]{0.9\linewidth}
    {\sf\def\svgwidth{\textwidth}\import{figures/introduction/}{Ryner.pdf_tex}}
		\caption{Ryner by Lucas Gogol \cczero}
	\end{subfigure}
	\caption{\textbf{Artworks created by artists using Blender Freestyle} --- Each of these is a non-photorealistic rendering, using the techniques described in this tutorial in different ways.
	}
	\label{fig:artworks}
\end{figure}

\section{How to use this tutorial}

This tutorial is two things: a detailed tutorial of the core contour algorithms, and a high-level survey of nearly all of 3D non-photorealistic rendering. We cover some core topics more thoroughly than any previous publication, and, for other topics, we mainly provide pointers to the literature.

Hence, reading the tutorial directly will give a good overview of the field, but one may skim through survey sections. Alternatively, a practitioner may wish to jump directly to the algorithms relevant to their task. 

Generally speaking, real-time image-based methods, especially based on graphics shaders, offer the best real-time performance and have been used in many games. These are described in the next Chapter, and pointers to further reading are provided there. This chapter can also help build intuitions for all readers.

The core chapters of the tutorial focus on contour detection and visibility on 3D models. We start with 3D mesh representations, and then apply the same ideas to different smooth surface representations in the subsequent chapters. 

We then cover the core topic of detecting contours on meshes (Chapter \ref{chap:mesh_contours}) and computing their visibility (Chapter \ref{chap:visibility}).  Contour detection and visibility on meshes is the most basic and well-understood problem, and we go into the most detail in algorithms here. We describe fast, approximate hardware based visibility in Chapter \ref{chap:fast_visibility}.

While it may be tempting to use mesh algorithms for smooth surfaces, in Chapter \ref{chap:smooth_as_meshes}, we explain some of the problems with doing so.  We then describe a method called Interpolated Contours that provides a compromise position, being almost as simple as mesh contours to implement, with relatively few inconsistencies.

We then discuss true contours on parametric surface representations (\chap{chap:smooth_contours}). Understanding these curves involves some differential geometry (reviewed in Appendix \ref{app:diff_geom}), and the resulting mathematics and theory is rather elegant. We describe detection and visibility algorithms, which are adapted from the mesh algorithms.  We describe the different strategies that have been applied to this problem and how they compare.
In the following chapter, we then discuss these algorithms as applied to implicit smooth surface representations (Chapter \ref{chap:implicit_surfaces}).

Finally, we discuss stylized rendering and animations algorithms (Chapter \ref{chap:rendering}), and conclude with a discussion of the state of research and applications in 3D non-photorealistic rendering (Chapter \ref{chap:conclusion}).

\section{The importance of visualizations}

Although we have done our best to explain contours in text, they can take some time to wrap your head around. Understanding how the 2D curves and 3D curves relate in an image like Figure \ref{fig:contours} can be challenging. It is worthwhile spending time with these figures, perhaps starting with simpler examples like different views of a torus, to understand how the 2D and 3D shapes relate, what the curves look like at singular points, and so on.

We provide an interactive viewer at \url{https://benardp.github.io/contours_viewer/}. Experimenting with this viewer can help give intuitions on contours.

Even better is to use, or build, a 3D visualization. If you implement a 3D contour rendering system, it is essential to also implement visualizations that let you zoom into the 2D drawing and rotate around the 3D model.  In each view, you should be able to render the different types of curves and singularities and their attributes. These visualizations are essential for deep understanding of these curves, as well as for debugging and algorithm development. You can start with simple 3D drawings, e.g., rendering contour edges on the 3D model, and coloring mesh faces according to facing direction as in \fig{contour_facing}. As your system becomes more sophisticated, you may eventually have visualizations like those in \fig{viewgraph}.

These visualizations are also useful in making certain design choices.  As we discuss, there is no current foolproof system for smooth contour rendering, and so there are some choices to be made, e.g., selecting heuristics.  Good visualizations can also be helpful in understanding how different heuristics behave.

\section{The science and perception of art}

The algorithms described in this tutorial provide a new level of insight and understanding into the science of art \citep{Hertzmann:2010}.
For centuries, artists, historians, philosophers, and scientists have sought a formal understanding of visual art: how do we make it, and how do we perceive it?  One of the first generative tools in art was the development of linear perspective during the Italian Renaissance.  The theory of occluding contours, which is the main subject of this tutorial, originated in perceptual psychology and computer vision, and was developed into the sophisticated algorithms we described here by computer graphics researchers. 

In the modern era, scientific discussions of visual art have been attached to the perceptual theories of the day.  Ernst \citet{Gombrich}, one of the most famous writers on the subject, wrote nearly an entire book arguing against the theory that art is just a replication of a fixed set of culturally-learned symbols. Rudolf \citet{Arnheim} wrote in the era of Gestalt psychology and thus applied Gestalt-like qualitative rules to drawings. John \citet{Willats} described a denotational semantics to describe different kinds of realistic styles --- expanded by \citet{WillatsDurand} to include insights from computer graphics.

Non-photorealistic rendering provides a generative theory for how artists create representational art. Like any theory, it does not cover every case or describe every phenomenon accurately, nor does it say anything about cultural, psychological, or other outside factors in the work. However, this generative theory provides considerable potential insight into how art is made.

We can compare the generative theory to the world before and after Newtonian mechanics. Before Newton, philosophers like Aristotle could make qualitative observations about how objects move (e.g., ``heavy objects like to fall'') but could make no real predictions. Newtonian mechanics is predictive, it generates insights, and leads to real understanding. Likewise, understanding the generative model of representational art provides a potentially compact way to understand many phenomena.

Two landmark studies validate and justify the use of line drawing algorithms developed in the non-photorealistic rendering literature.
\citet{Cole:2008} undertook a careful study of how artists depict 3D objects. They asked a collection of art students to illustrate several 3D models with line drawings, and compared how the artists' drawings related to the line drawing algorithms in this section (\fig{correlation}). They showed that roughly 80\% of a typical drawing could be explained by existing algorithms. This study helps show which of these algorithms are most useful, while also highlighting gaps in the literature.  In a follow-up paper, \citet{Cole:2009} showed that line drawing algorithms are also very effective at conveying 3D shape.

\begin{figure}
	
	\includegraphics[width=0.32\linewidth]{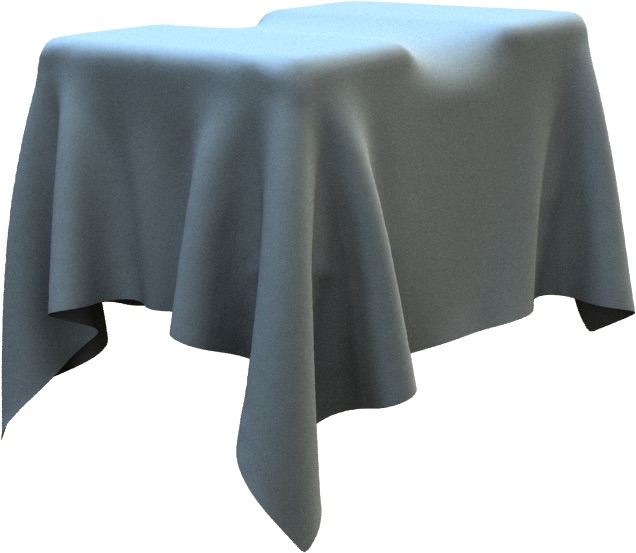}\hspace{0.25pt}
	\includegraphics[width=0.32\linewidth]{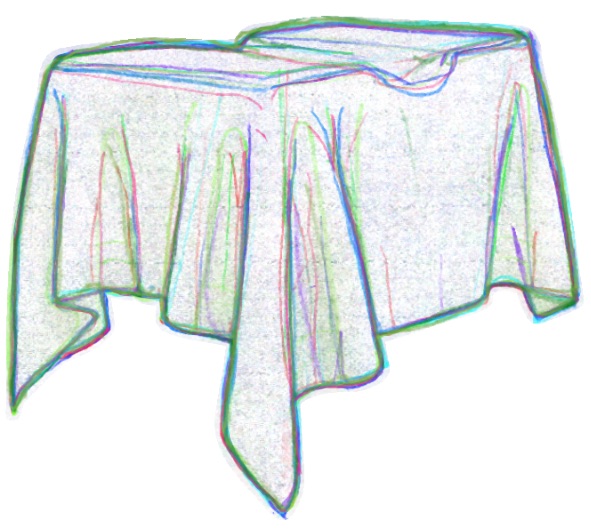}\hspace{0.25pt}
	\includegraphics[width=0.32\linewidth]{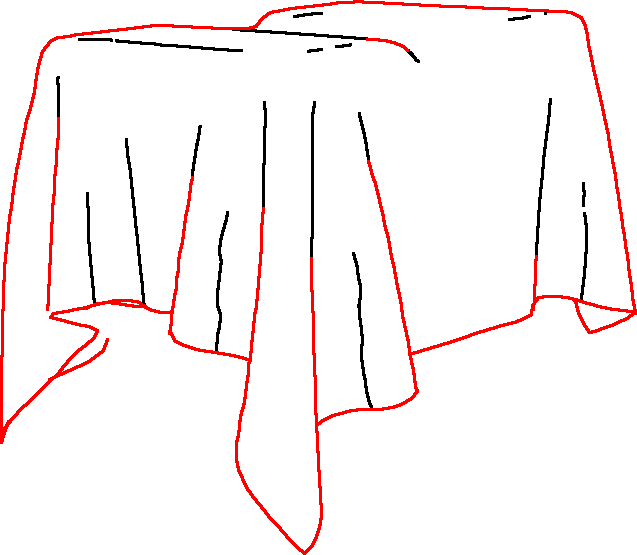}
	\caption{\textbf{Correlation between hand-drawn lines and contours} --- A 3D model rendered from a given viewpoint and illumination (left) has been hand-drawn by ten artists (center). Observe how consistent the drawings are, especially near the contours of the shape. The contours (in red) and suggestive contours (in black) extracted from the 3D model are depicted in the right. Images taken from the ``Javascript Drawing Viewer''\protect\footnotemark~of \citet{Cole:2008}.
	 \label{fig:correlation}
}
\end{figure}

\footnotetext{\url{http://gfx.cs.princeton.edu/proj/ld3d/lineset/viewer/index.html}}

\section{Survey of feature curves} \label{sec:survey}

This section surveys other important types of surface curves for line drawing, together with pointers to the relevant literature. The remaining chapters focus solely on contour, boundary, and surface-intersection curves.


Most of these curves have been developed both for artistic use and for visualization purposes \citep{Lawonn:2018}. However, some types of curves, such as ridges and valleys, seem useful for visualization without mimicking conventional artist curves as well.

\paragraph{Visibility-indicating curves.}
\textit{Contours} indicate where parts of the surface become visible and invisible, and also indicate where visibility changes. There are a few other important curves that are important for similar reasons.

\textit{Boundary curves} are simply the boundaries of the surface. Closed surfaces do not have boundaries. These curves are usually rendered when visible. Boundary curves can indicate change of visibility for curves that they intersect in image space, so they are important to handle, and we include them in the discussions of our algorithms in this tutorial.

\textit{Surface intersection} curves occur when two different sections of surface intersect. These do not occur in the clean models often used in computer graphics research. However, in professional 3D computer animation applications, modelers frequent connect different object parts this way (Figure \ref{fig:red_intersections}).  These curves can be detected with standard computer graphics algorithms, and are important to extract since they can indicate changes of visibility with curves that they intersect on the surface. We do not discuss them any further in this tutorial.

\begin{figure}
	\centering
	\small
  \textbf{(a)}\hspace{-10pt} \includegraphics[width=1.75in]{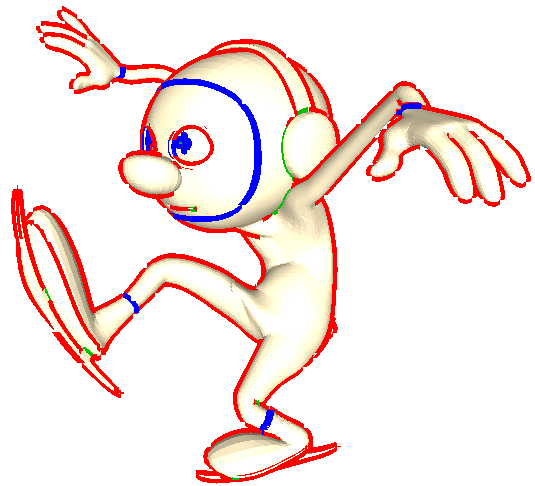}  
  \hspace{-10pt}\textbf{(b)}
	\includegraphics[width=1.75in]{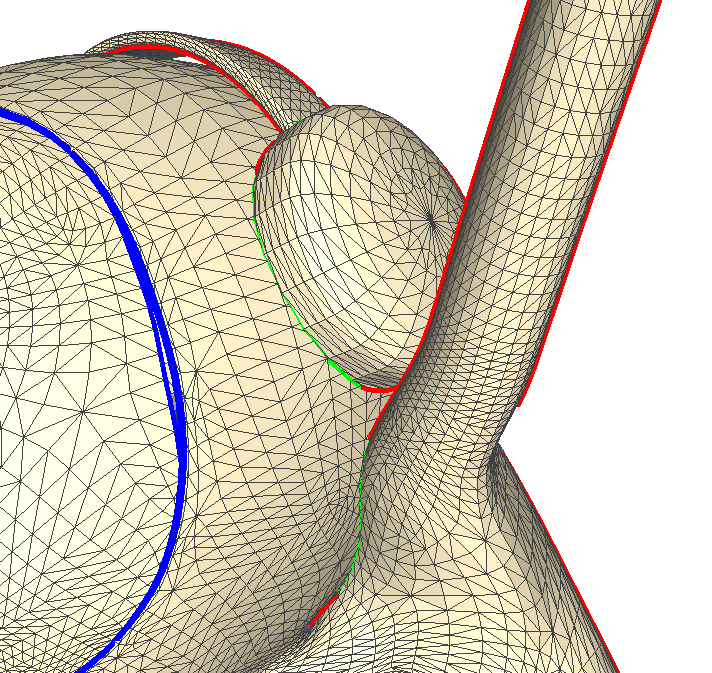}
	\hspace{-6pt}\textbf{(c)}~\includegraphics[width=1.8in]{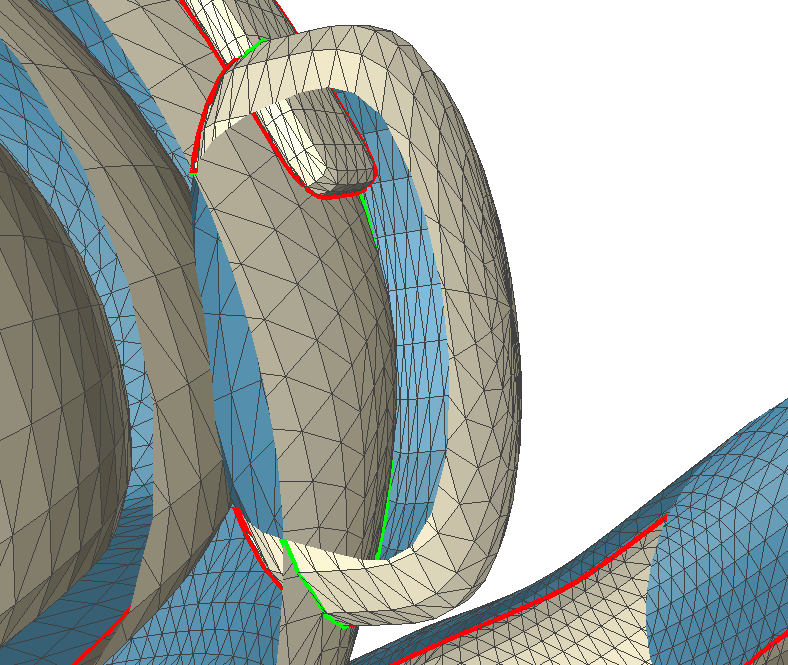}
	\caption{\textbf{Surface-surface intersection curves} (from \citet{Benard:2014}) ---
		Professionally-modeled surfaces include many intersections between surface, such as this ice-skating character.
		Surface intersection curves are shown in green, occluding contours in red, and boundaries in blue. Observe how the ear muffs intersects the headband and the hoodie; the shoulder also happens to intersect the hoodie in this animation frame. \textbf{(a,b)} Original surface. \textbf{(c)} Cross-section from a different viewpoint.
		(``Red'' character created at Pixar by Andrew Schmidt, Brian Tindall, Bernhard Haux and Paul Aichele, based on the original design of Teddy Newton.)
		\label{fig:red_intersections}}
\end{figure}

All other curves are essentially surface ``decorations''; computing them is optional for visibility computations.  They typically visualize the surface curvature rather than its outlines.

%

\paragraph{Contour generalizations.}
Perhaps the next most significant set of curves are those that generalize contours. These curves were first introduced by DeCarlo \etal~(\cite*{DeCarlo:2003,DeCarlo:2004}), who described a mathematically-elegant generalization of contours and the algorithms needed to render them.  Several other variants inspired by this idea were proposed, including apparent ridges \citep{Judd:2007}. 

The abstracted shading method \citep{Lee:2007} demonstrated how these and lighting-based variants could be computed in image-space.
Other variants based on image-space processing include Laplacian Lines \citep{Zhang:2009} and DoG lines (Zhang \etal~\cite*{Zhang:2012,Zhang:2014}). In addition to speed, image-space lines have the advantage that they automatically remove clutter as a function of image-space line density, although, like all image-based methods, they potentially lose some fine-scale precision and control. 

\fig{contour_generalization} shows some examples of these contour generalizations. 
Including some form of these curves seems essential for capturing how artists depict surfaces; these curves were essential in the study of \citet{Cole:2008}.

These curves have also been generalized to include highlights that illustrate shading on an object \citep{DeCarlo:2007}. \citet{DeCarlo:2012} provides a thorough survey and comparison of these different types of contour generalizations.

\begin{figure}
	
	\includegraphics[width=0.32\linewidth]{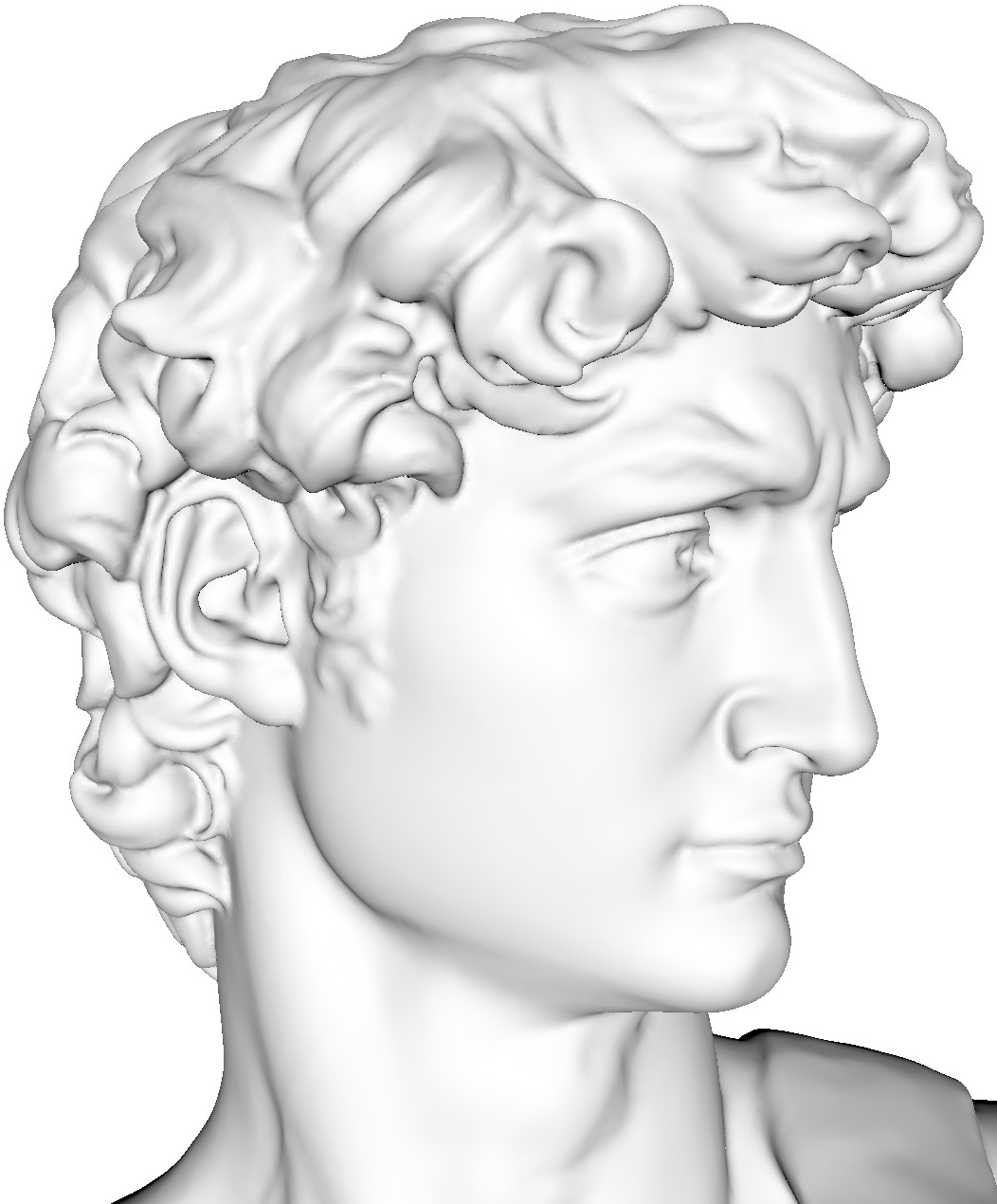}
	\includegraphics[width=0.32\linewidth]{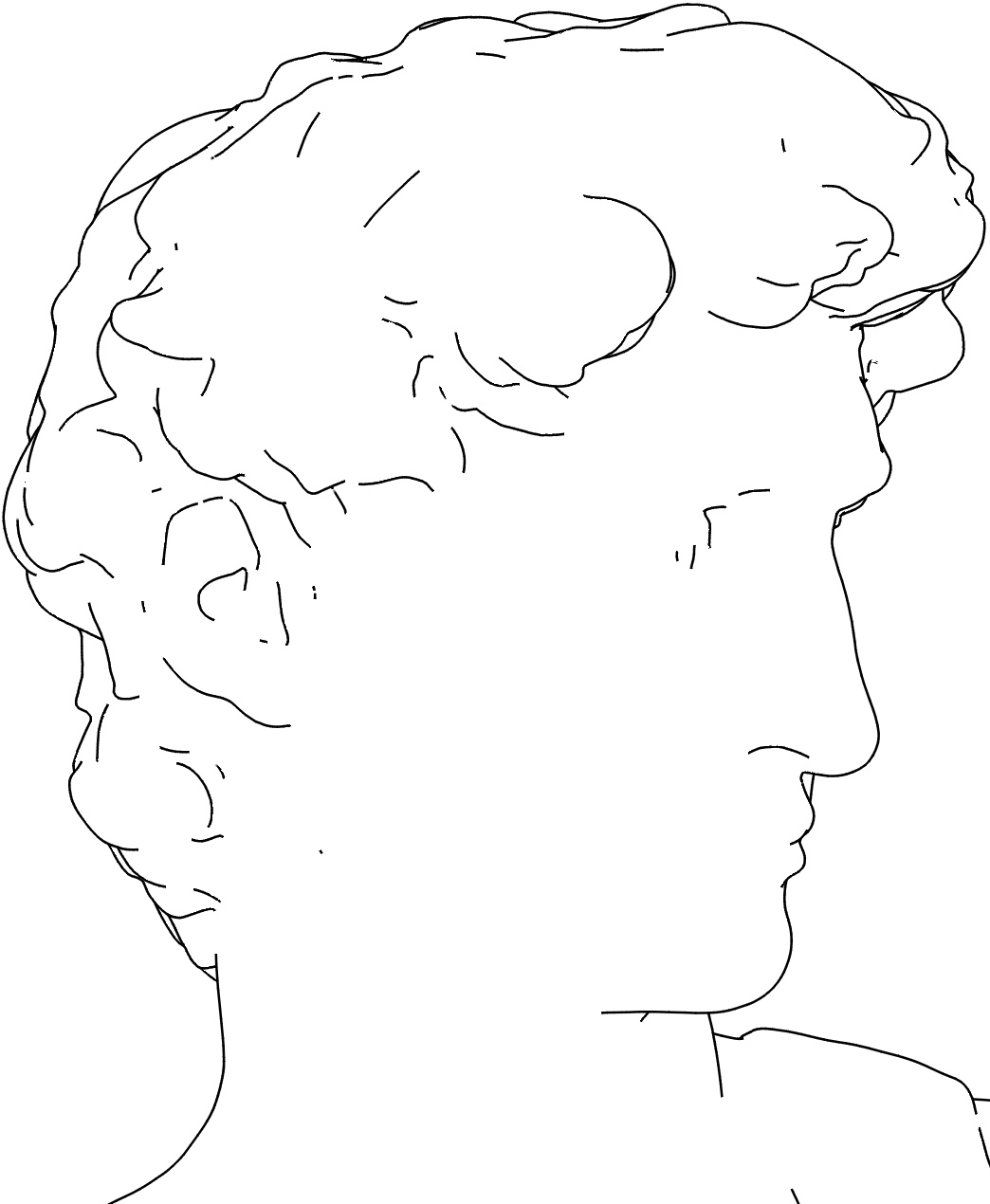}
	\includegraphics[width=0.32\linewidth]{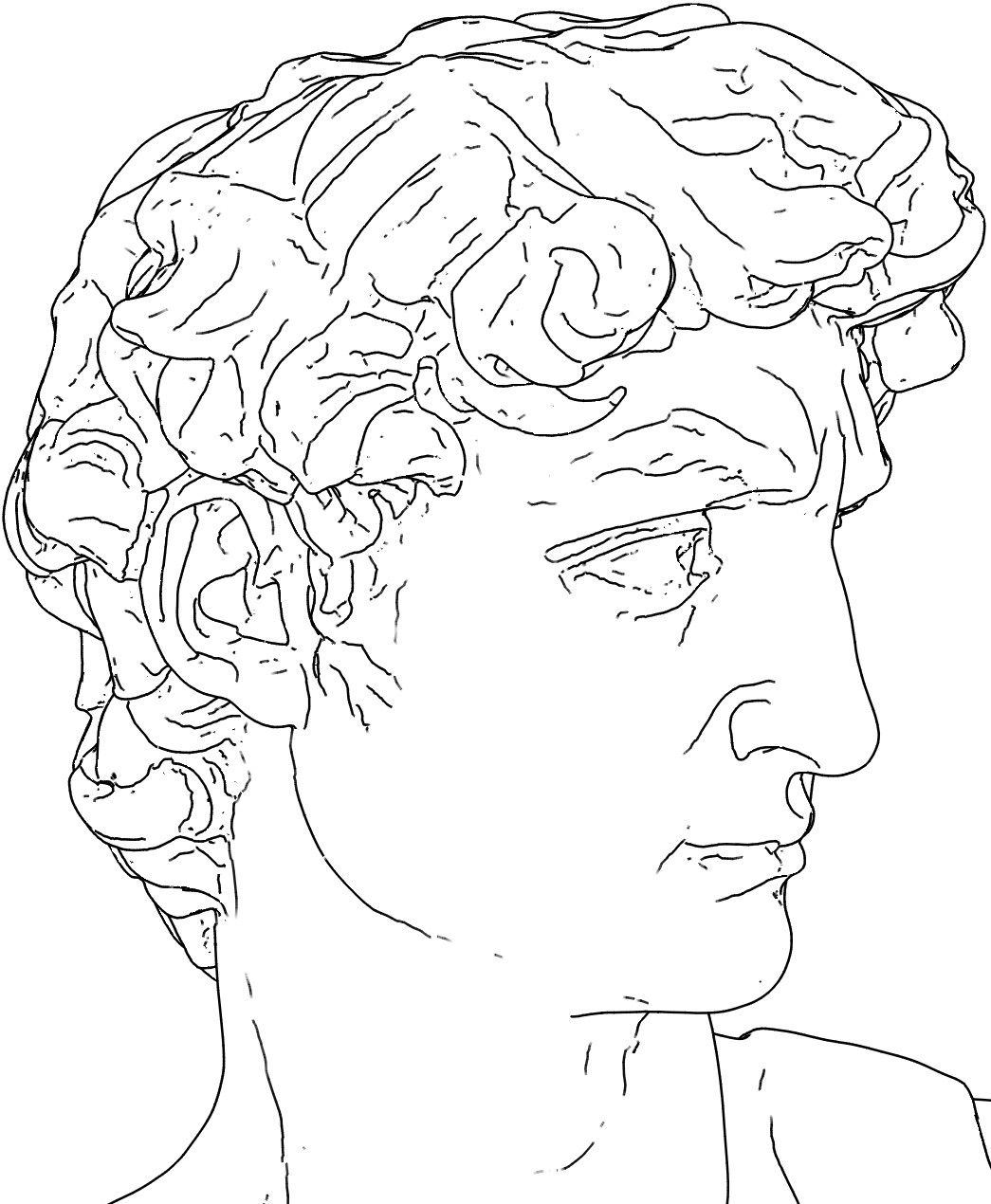}\\[0.5em]
	\includegraphics[width=0.32\linewidth]{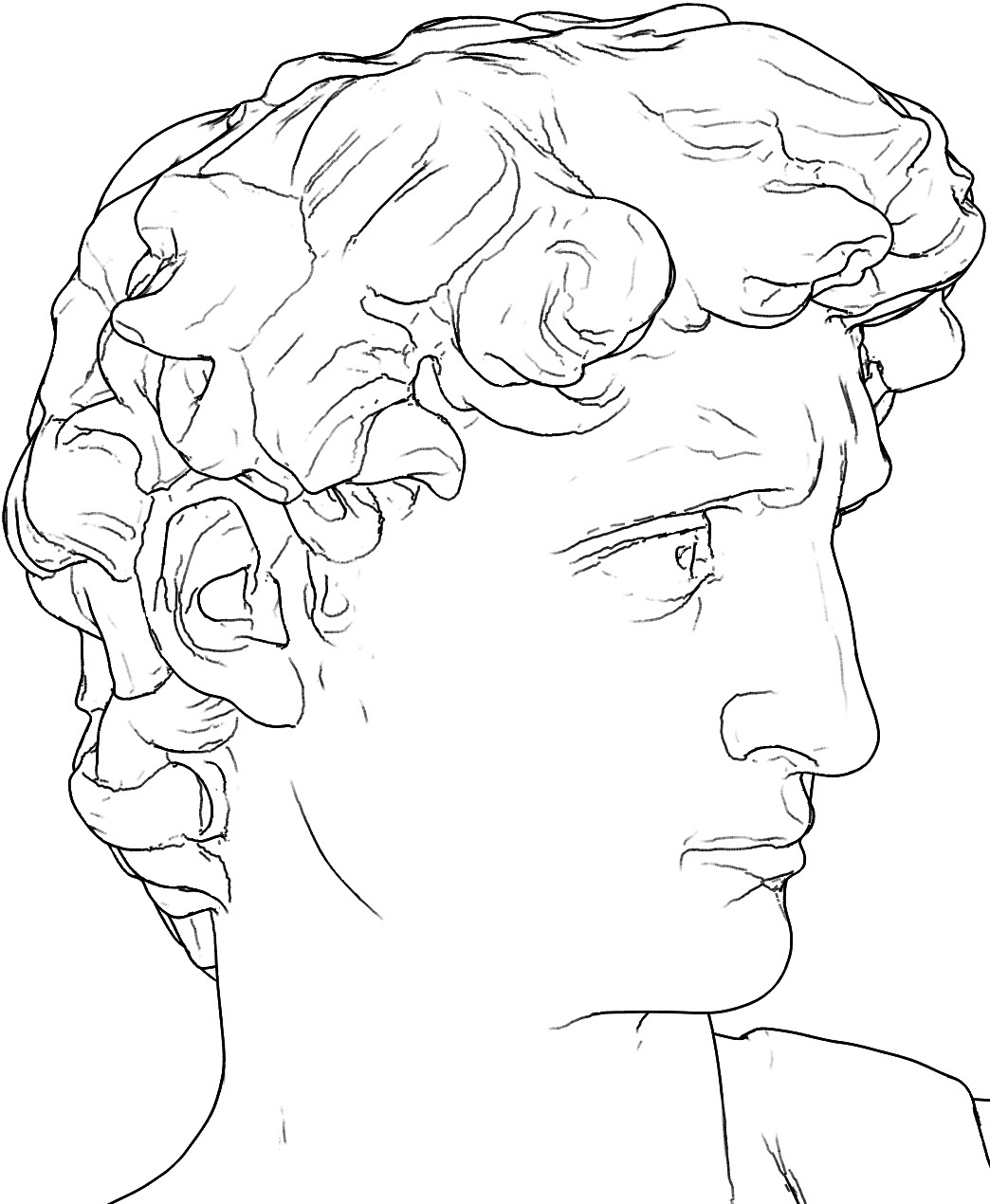}
	\includegraphics[width=0.32\linewidth]{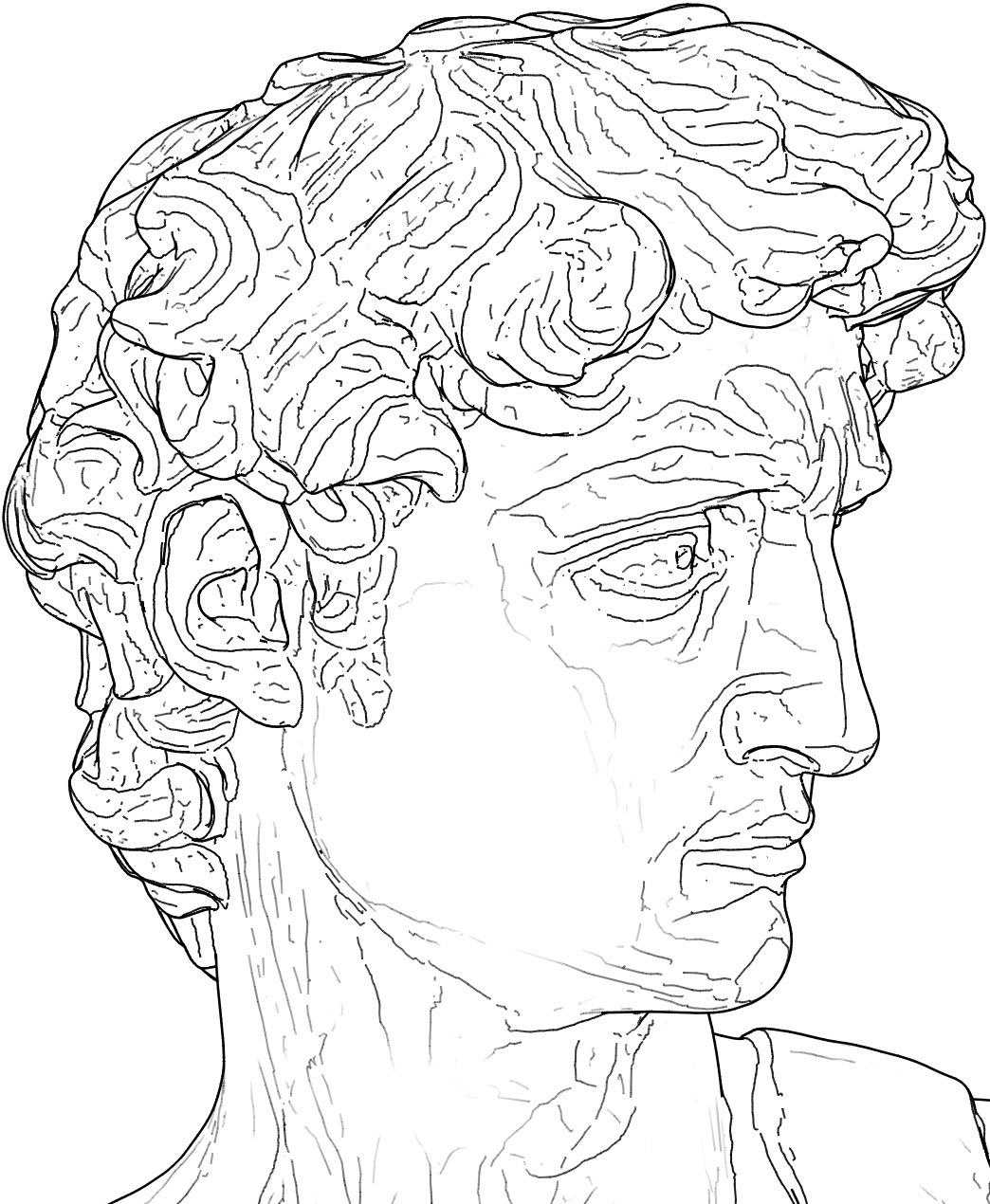}
	\includegraphics[width=0.32\linewidth]{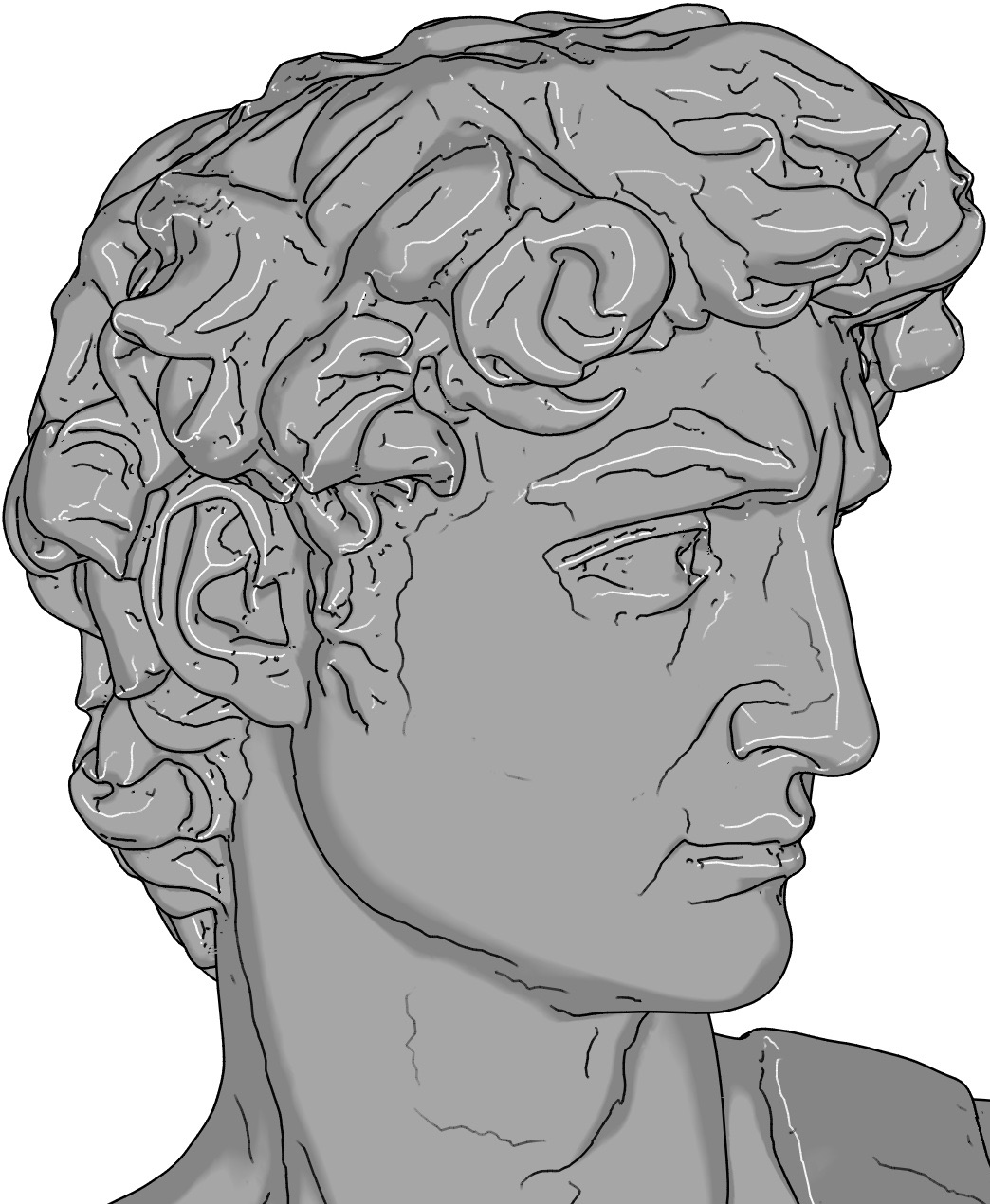}
	\caption{\textbf{Feature curve examples} --- From left to right, top to bottom: diffuse rendering of the 3D scanned David model by ``Scan The World'' (\url{http://mmf.io/o/2052}), occluding contours (OC), OC + suggestive contours (SC) \citep{DeCarlo:2004}, OC + apparent ridges \citep{Judd:2007}, OC + ridges \& valleys \citep{Rusinkiewicz:2004}, and OC + SC + principal highlights + toon shading \citep{DeCarlo:2007}.
		Images generated with ``qrtsc'' \citep{qrtsc}.
	}\label{fig:contour_generalization}
\end{figure}

\paragraph{Surface features/properties.}
Some intrinsic properties of the surface can be drawn, such as sharp creases on smooth surfaces \citep{Saito:1990}, as well as changes in shading \citep{Xie:2007}. When objects have assigned texture and materials, one may wish to draw the material boundaries or the texture itself.

\begin{figure}
	\centering
	\begin{tabular}{c|c}
		\includegraphics[width=0.21\linewidth]{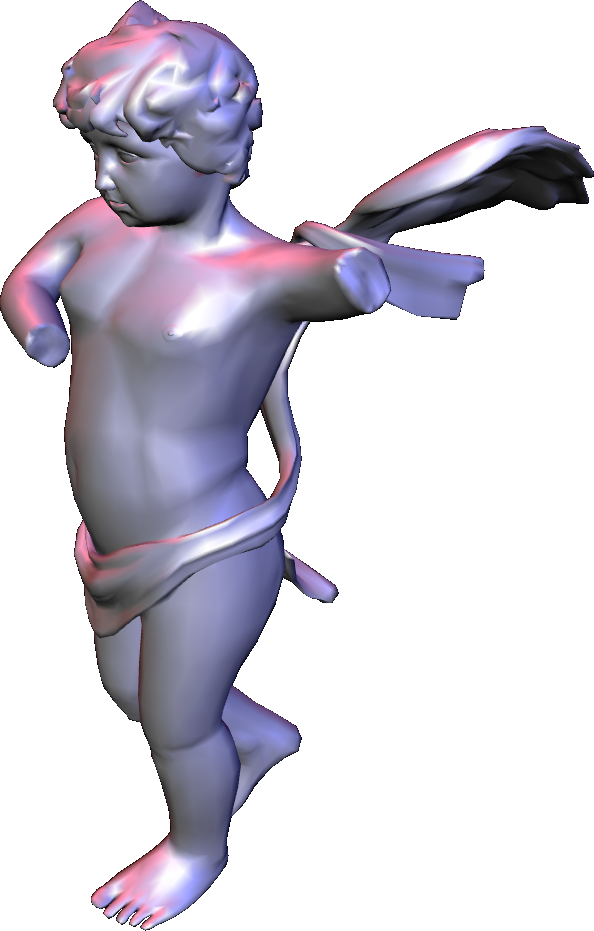}
		\includegraphics[width=0.21\linewidth]{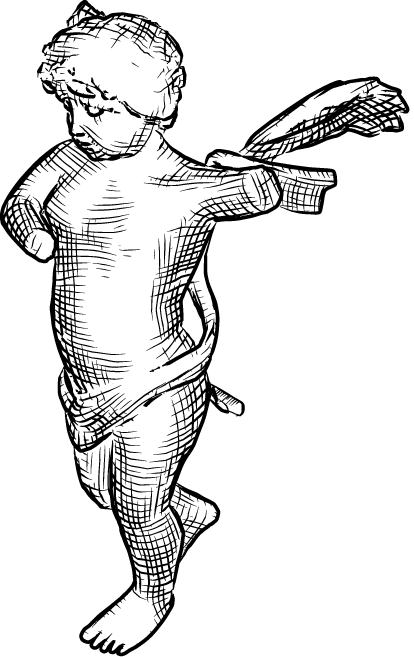} \hspace{1pt} & \hspace{1pt}
		\includegraphics[width=0.23\linewidth]{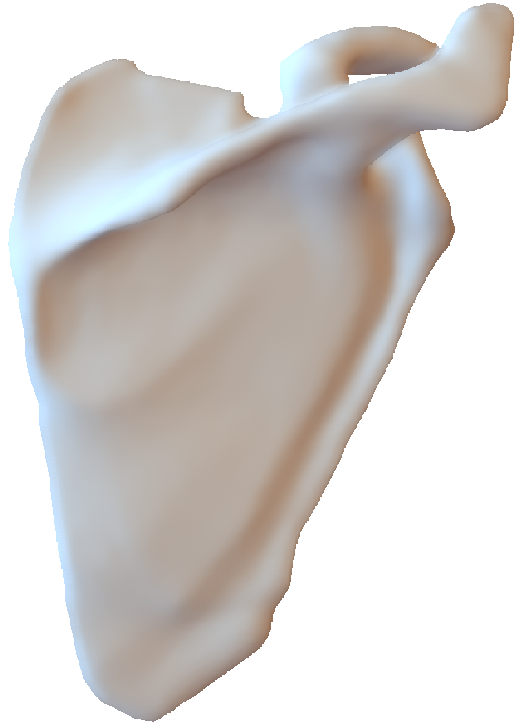}
		\includegraphics[width=0.23\linewidth]{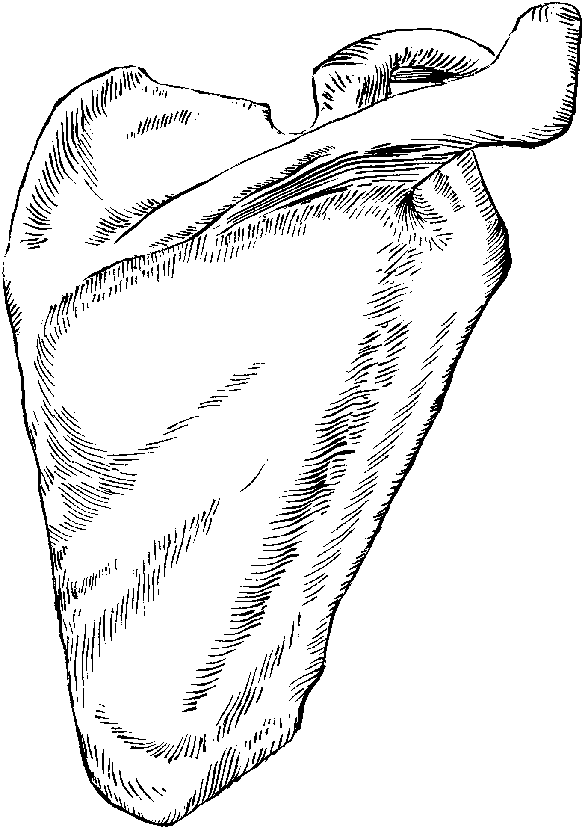}
	\end{tabular}
	\caption{\textbf{Hatching} --- Left, 3D Cupid model and hatching result obtain with the automatic method of \citet{Hertzmann:2000}; right, shoulder blade model and user-edited result of \citet{Gerl:2013}.}\label{fig:hatching}
\end{figure}

\paragraph{Hatching.}
Hatching strokes illustrate surface shape in line drawings.
\citet{Winkenbach:1994} use manually-authored hatching textures and orientations.
For more automation, one can use the iso-parametric curves of parametric surfaces \citep{Elber:1995a,Winkenbach:1996}. However, these lines depend on how the shape was authored, and do not generalize to other types of surfaces. \citet{Elber:1998} explored many different possible hatching directions, including principal curvature directions, texture gradients, and illumination gradients.
Principal curvature-based hatching is supported by perceptual studies suggesting that human perceive hatching strokes as curvature directions \citep{Mamassian:1998}.
\citet{Hertzmann:2000} refine principal curvature hatching for umbilic regions (\fig{hatching}, left). \citet{Singh:2010} describe hatching strokes that follow shading gradients.
Since artists draw different types of hatching curves in different situations,
\citet{Kalogerakis:2012} combine these ideas, describing a machine learning system for learning hatching directions, identifying which hatching rules are used in which parts of a 3D surface. \citet{Gerl:2013} additionally offer interactive tools to let the user dynamically control the placement and orientation of hatches (\fig{hatching}, right).

\paragraph{Surface extrema.}
Extremal curves, such as ridges and valleys, generalize the notion of ridges and valleys in terrain maps, identifying curves of locally maximal or minimal curvature. These types of curves are a visualization technique that can be useful in understanding surface shape; they do not typically correspond to artist-drawn curves otherwise.

Numerous algorithms have been developed to extract ridges and valleys from various types of geometric models \citep{Interrante:1995,Thirion:1996,Pauly:2003,Rusinkiewicz:2004,Ohtake:2004,Yoshizawa:2007,Vergne:2011}.  A variant, called Demarcating Curves \citep{Kolomenkin:2008,DeCarlo:2012} can help visualize shapes of different regions on a surface.

\section{Brief history of 3D Non-Photorealistic Rendering}

The earliest 3D computer graphics algorithms were hidden-line rendering algorithms \citep{Roberts:1963}, including methods that we discuss in this tutorial \citep{Appel:1967,Weiss:1966:VPI:321328.321330}. While the mainstream of computer graphics focused on photorealistic imagery, a few works aimed at adding artistic stroke textures to architectural drawings and technical illustrations\footnote{Many works are being omitted from this history. A much more comprehensive bibliography, up to 2011, can be found here: \url{https://www.npcglib.org}.}, e.g., \citep{Dooley:1990,Yessios:1979}; meanwhile a number of 2D computer paint programs were developed as well. Many of these papers argued for the potential virtues of hand-drawn styles in technical illustration.   

In 1990, the flagship computer graphics conference SIGGRAPH held a session entitled ``Non Photo Realistic Rendering,'' which seems to be the first usage of this term. In this session, two significant papers for the field were presented. \citet{Saito:1990} introduced depth-buffer based line enhancements (Chapter \ref{chap:image_space}), which started to create cartoon-like renderings of smooth objects by emphasizing contours and other feature curves. \citet{Haeberli:1990} introduced a range of artistic 2D image-processing effects; these papers together demonstrated a significant step forward in the quality and generality of non-photorealistic effects.  

\citet{Winkenbach:1994} demonstrated the first complete line-drawing algorithm from 3D models, including contours and hatching.
Their work was seminal in that their method automatically produced beautiful results from 3D models; one could, for the first time, be fooled into thinking that these images were really drawn by hand.
Perhaps even more importantly, their work  provided a model for one could develop algorithms by careful study of artistic techniques in textbooks and illustrations.

\citet{Meier:1996} demonstrated the first research paper focusing on 3D non-photorealistic animation, describing the problem of temporal coherence for animation.  Between the beautiful images of Winkenbach and Salesin~(\cite*{Winkenbach:1994,Winkenbach:1996}) and beautiful animations of \citet{Meier:1996}, non-photorealistic rendering was firmly established as an important research direction.

Research activity at SIGGRAPH increased significantly, and the inaugural NPAR symposium on Non-Photorealistic Animation and Rendering met in 2000, sponsored by the Annecy Animation Festival in France and chaired by David Salesin and Jean-Daniel Fekete.
Through the following decade, many improvements and extensions to the basic ideas were published, and, occasionally, techniques like toon shading and contour edges appeared in video games and movies. \citet{DeCarlo:2003} described Suggestive Contours, which substantially improved the quality of line renderings, while making deep connections to perception and differential geometry, notably the work of \citet{Koenderink:1984}. Several systems were created to help artists design artistic rendering styles, such as WYSIWYG NPR \citep{Kalnins:2002} and a procedural NPR system called Freestyle \citep{Grabli:2010}. 
 Cole \etal~(\cite*{Cole:2008,Cole:2009}) performed the scientific studies described in Section \ref{sec:survey} demonstrating that line drawing algorithms were quite good at capturing how artists draw lines.

Since then, research in 3D non-photorealistic rendering has tapered off, despite the presence of several significant open problems. In contrast, interest in image stylization has recently exploded, due to developments in machine learning.  
Still, 3D non-photorealistic rendering continues to appear in a few games and movies here and there.  
This tutorial aims, in part, to summarize the field and highlight open problems, to help researchers and practitioners make progress in this field in order to enable them to be more widely used.  We discuss future prospects for the field in the Conclusion (Chapter \ref{chap:conclusion}).


\chapter{Image-Space Curves}
\label{chap:image_space}

We begin in this chapter by describing simple image-space algorithms. This will provide some informal examples of contours. We will then formally define these curves in the next section.

Most computer graphics rendering systems, including OpenGL, have a way to extract a depth buffer from a rendered scene, such as the one shown in \fig{pig_depth}. The gray level of each pixel shows the distance of the object from the viewer.

\begin{figure}
	\centering
	\begin{subfigure}[b]{0.45\linewidth}
		\includegraphics[width=\linewidth]{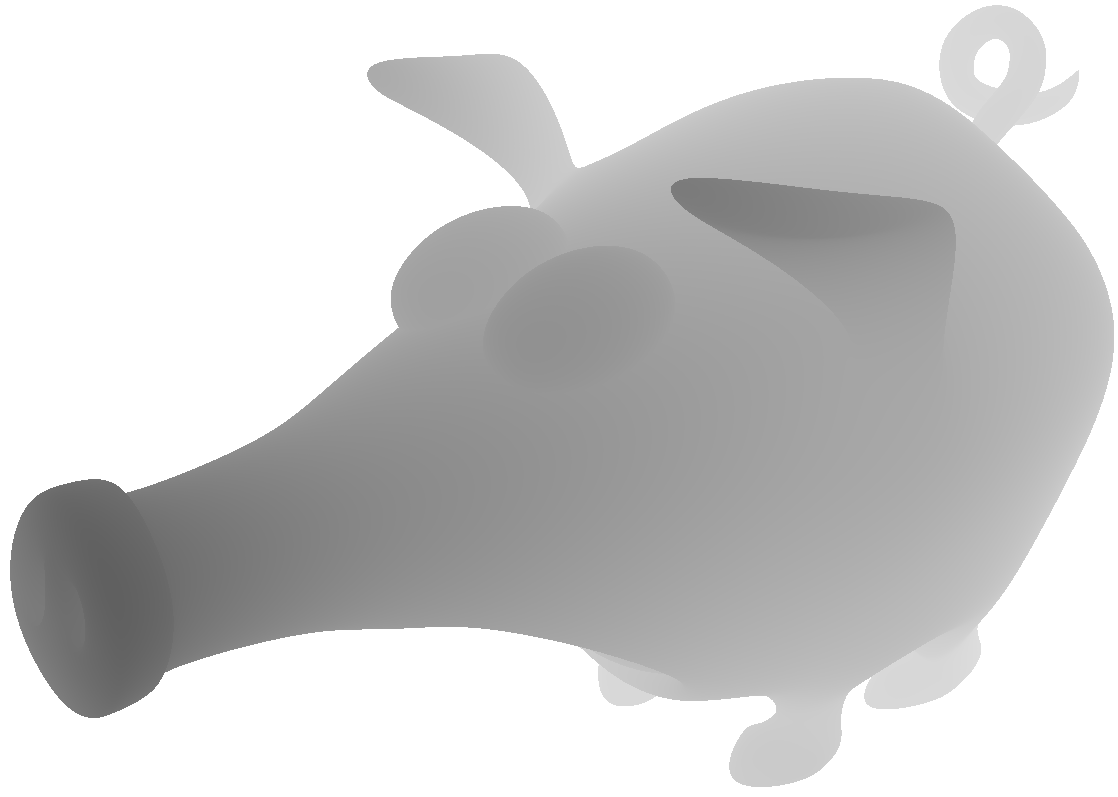}
		\caption{linearized depth}\label{fig:pig_depth}
	\end{subfigure}
	\quad
	\begin{subfigure}[b]{0.45\linewidth}
		\includegraphics[width=\linewidth]{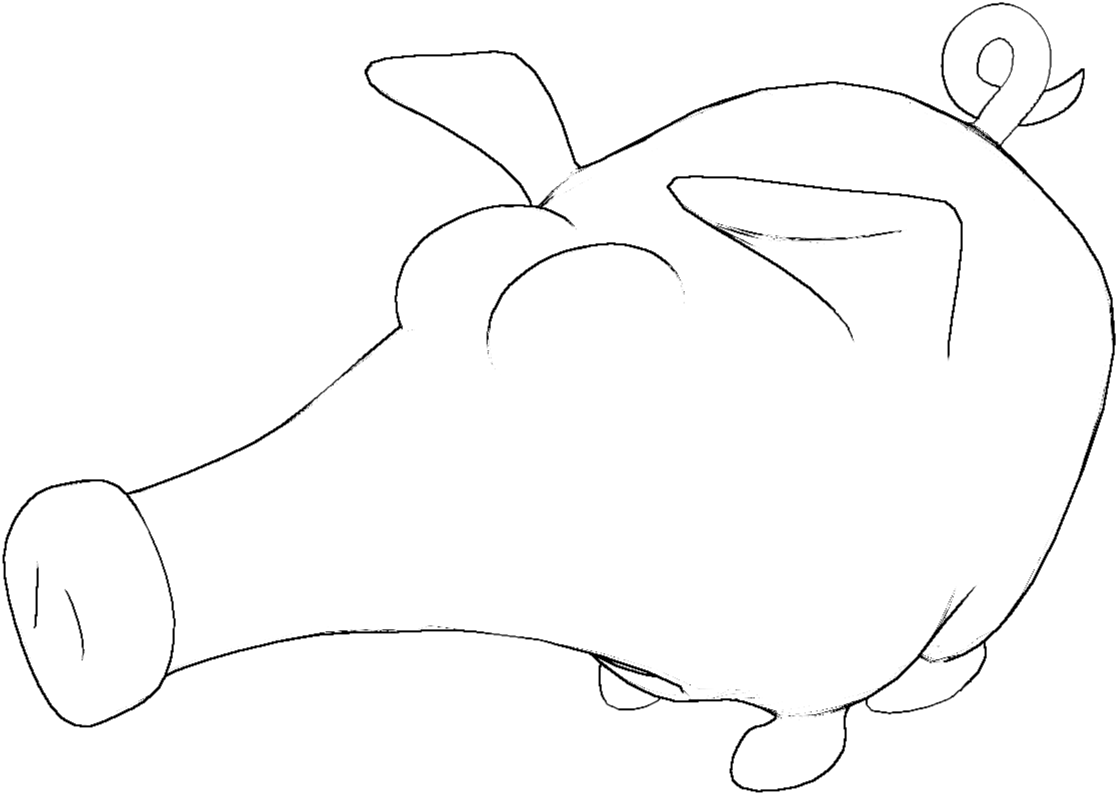}
		\caption{depth discontinuities}\label{fig:pig_edges}
	\end{subfigure}
	\begin{subfigure}[b]{0.45\linewidth}
		\includegraphics[width=\linewidth]{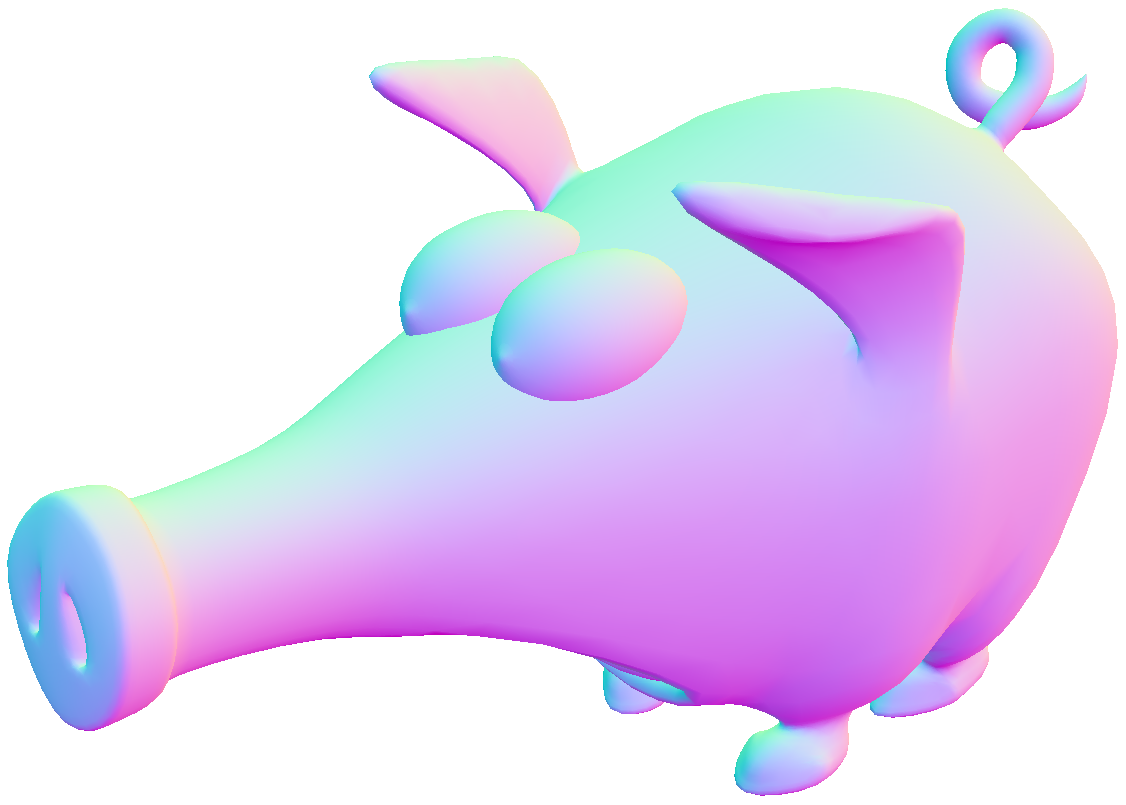}
		\caption{image-space normals}\label{fig:pig_normals}
	\end{subfigure}
	\quad
	\begin{subfigure}[b]{0.45\linewidth}
		\includegraphics[width=\linewidth]{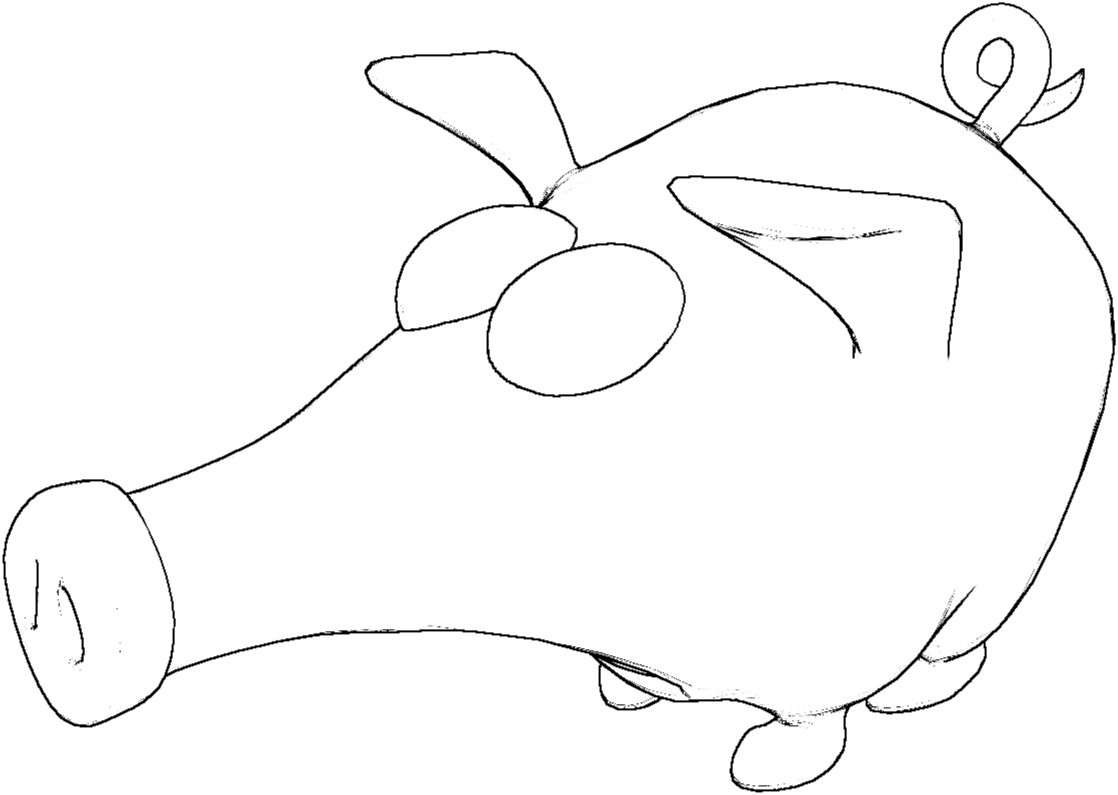}
		\caption{depth + normals discontinuities}\label{fig:pig_normals_edges}
	\end{subfigure}
		\begin{subfigure}[b]{0.45\linewidth}
		\includegraphics[width=\linewidth]{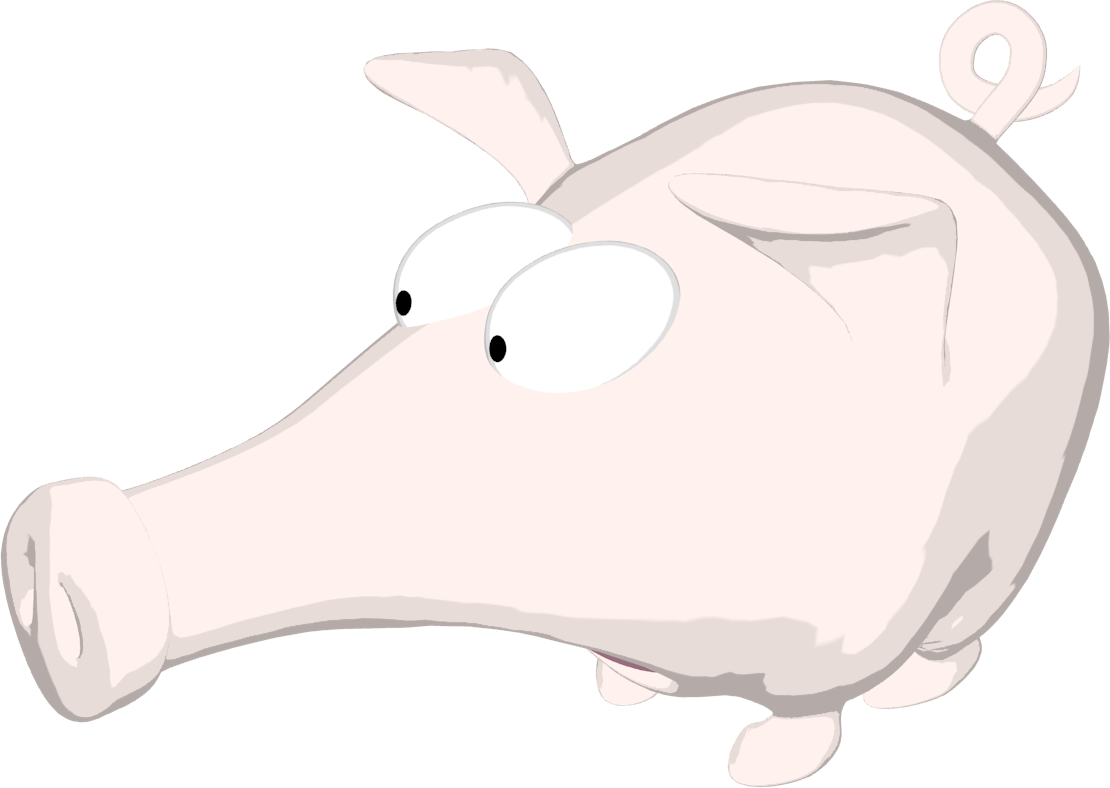}
		\caption{toon shading}\label{fig:pig_colors}
	\end{subfigure}
	\quad
	\begin{subfigure}[b]{0.45\linewidth}
		\includegraphics[width=\linewidth]{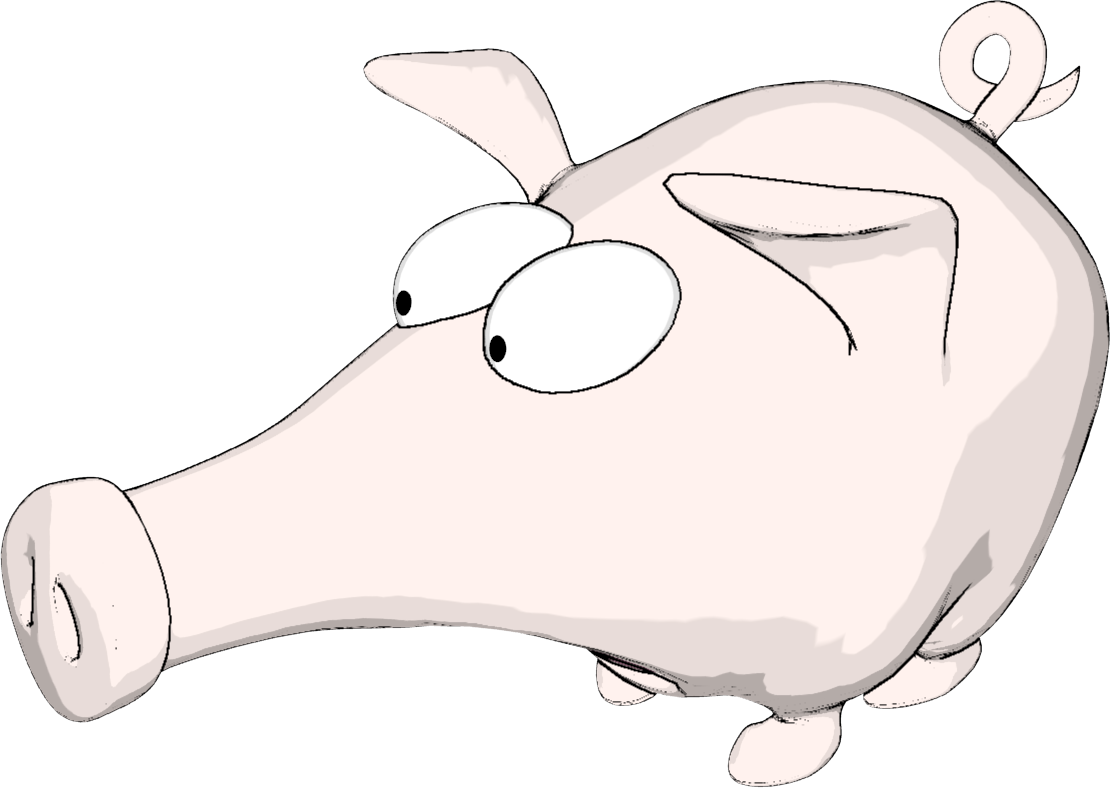}
		\caption{egdes composited with toon}\label{fig:pig_comp}
	\end{subfigure}
	\caption{\textbf{Image-space edges} --- The depth buffer of the scene (``Origins of the Pig'' \ccCopy~Keenan Crane) is obtained by rasterization and linearized \textbf{(a)}, depth discontinuities are then extracted by filtering, here, with a Laplacian of Gaussian filter \textbf{(b)}; normals discontinuities can also be considered \textbf{(c)} to extract creases; the final edges can eventually be re-composited with the color buffer \textbf{(f)}. Images computed with BlenderNPR Edge Node plugin~\citep{BNPREdge}. The pupils are added separately, as materials on the surface.}
	\label{fig:image_space}
\end{figure}
Depth discontinuities in this image correspond to contours. To find them, we can apply an edge detection filter to this image \citep{Saito:1990}, producing the image in \fig{pig_edges}: these are the occluding contours of the surface.
The key assumption of this method is that depth variations between adjacent pixels are small for continuous smooth surfaces, but become large at occlusions. 
We can also compute a separate normal map image \citep{Decaudin:1996} shown in \fig{pig_normals}, and compute its edges, which adds edges at surface creases (\fig{pig_normals_edges}).

Image-space algorithms work by performing image processing operations on buffers like these ones. They are simple to implement and can run in real-time on graphics hardware. However, they provide limited control over stylization. For example, one cannot easily draw a pencil stroke over the outlines, because there is no explicit curve representation; they are just pixels in a buffer. Furthermore, they can be incorrect, for example, missing contours at small discontinuities or falsely detecting them for highly foreshortened surfaces. 

These kinds of edges were used in the video game ``Borderlands''; some of the issues involved in getting them to work are described by \citet{Thibault:2010}. 




We now describe the depth edge detection algorithm in more detail. A standard choice from image processing is the Sobel filter, which computes approximate depth derivatives (2D gradients) by discrete convolution of the depth buffer $D$ with the following kernels:
\[
	S_x = \left[ \begin{array}{ccc}
		-1 & 0 & 1 \\
		-2 & 0 & 2 \\
		-1 & 0 & 1 
	\end{array}\right]
\hspace{1cm}
	S_y = \left[ \begin{array}{ccc}
		-1 & -2 & -1 \\
		0 & 0 & 0 \\
		1 & 2 & 1 
	\end{array}\right].
\]
The edge image is then obtained by computing their magnitude:
$$G(x,y) = \sqrt{(D(x,y) \otimes S_x)^2 + (D(x,y) \otimes S_y)^2},$$ 
and thresholding it by a user-defined threshold $\tau$:
\[
	Edge(x,y) = 
	\begin{cases} 
		1 & \quad \text{if } G(x,y) \geq \tau \\
		0 & \quad \text{if } G(x,y) < \tau
	\end{cases}
\]
The results are demonstrated in Figure \ref{fig:image_space}.
The $3 \times 3$ Sobel kernels are the most computationally efficient, but they tend to produce noisy results. As suggested by \citet{Hertzmann:1999}, they can favorably be replaced by the ``optimal'' $5 \times 5$ kernels of \citet{Farid:1997}.
%
Alternatively, second-order derivatives can also be considered, using, for instance, the Laplacian-of-Gaussian filter, or the separable approximation provided by the Difference-of-Gaussians filter~\citep{Marr:1980} and its artistic extensions~\citep{Winnemoller:2012}.

Note that the depth edge image contains not just contours, but also object boundaries. The normal edge image often includes contours and boundaries, as well as surface-intersection curves. Distinguishing these types of curves, if desired, would be difficult.

As noted by \citet{Deussen:2000}, GPU depth buffers store non-linear depth values in screen-space, hence depth gradients for remote objects correspond to much larger differences in eye coordinates. If this effect is not desirable, the depth value $d \, (d \in [0..1])$ first needs to be linearized according to the camera near $z_0$ and far $z_1$ clipping plane distances:
\begin{align*}
	z &= \frac{\frac{z_0 z_1(d_1-d_0)}{z_1-z_0}}{d-\frac{(z_1+z_0)(d_1-d_0)}{2(z_1-z_0)}-\frac{d_1+d_0}{2}}
\end{align*}
where $d_0$ and $d_1$ are the minimal and maximal values represented in the depth buffer. Alternatively, with modern graphics hardware, the linear camera z-value can directly be written into an offscreen buffer.

\section{Discussion and extensions}

Image-space algorithms only depend on the final image resolution, which is usually an advantage performance-wise, making this approach popular for real-time applications such as games \citep{Thibault:2010}. They naturally omit tiny, irrelevant details. However the results may not consistent and predictable when the resolution of the image changes.

\paragraph{Additional buffers.}
One limitation of depth-buffer algorithms is that they cannot detect edges between objects that are close in depth, such as a foot contact on the ground. However, they can be easily extended to other line definitions by rendering and filtering a \emph{G-buffer} containing, for instance, per-pixel object IDs and surface normals (\fig{pig_normals}). The former solves the depth ambiguity, whereas first-order normal discontinuities correspond to creases~\citep{Saito:1990,Decaudin:1996,Hertzmann:1999,Nienhaus:2004}, and second- and third-order screen-space tensors allow to extract view-dependent ridges and valleys as well as demarcating curves~\citep{Vergne:2011}.

\begin{figure}
	\centering
		\includegraphics[width=0.45\linewidth]{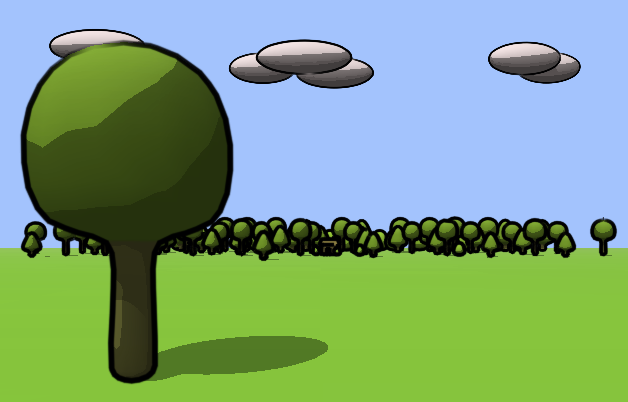}
		\quad
		\includegraphics[width=0.45\linewidth]{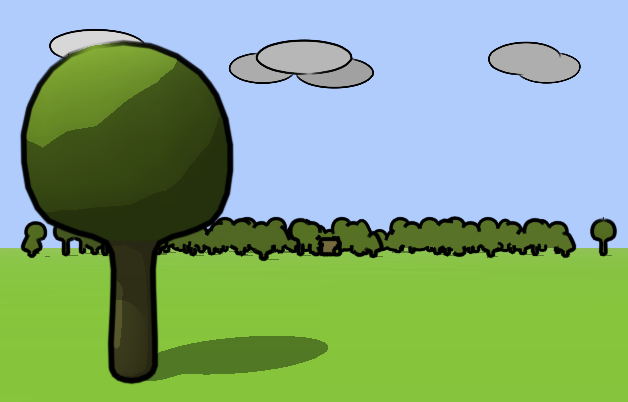}
		\caption{\textbf{Region segmentation} \citep{Kolliopoulos:2006} --- Toon rendering of a forest scene with no segmentation (left) exhibiting cluttering in the background. With segmentation (right), many of the background trees are grouped together. Contours are only drawn near segment boundaries, resulting in a cleaner image.}\label{fig:segment}
\end{figure}

\citet{Kolliopoulos:2006} render scenes by hardware ID buffers to determine pixel-wise object visibility. The scene is adaptively grouped into regions using a segmentation algorithm; the user may determine the grouping parameters so that small objects are grouped together. This is similar to computing planar maps, which will be discussed in more detail in Section \ref{sec:planar_map}. These planar maps are then stylized in image space (\fig{segment}).

\paragraph{Stroke stylization.}
These filtering techniques produce a set of disconnected pixels. Hence, modifying the appearance of strokes first requires extracting approximate curves from the buffer. 
One solution is to fit parametric curves to the edge image using vectorization algorithms (\eg{} \citep{Favreau:2016,Bo:2016}), but this tends to introduce inaccuracies and is often too slow for real-time applications. 

To create sketchy drawings, \citet{Curtis:1998} proposed particles that trace small line segments in the vicinity of the extracted contours. These particles are guided by a density image and a force field which can be obtained by calculating unit vectors perpendicular to the depth buffer's gradient. Although this technique is appealing for its dynamic behavior, the range of style that it can achieve is somewhat limited, and the particle simulation is computationally expensive. 

To produce lines of controllable thickness, \citet{Lee:2007} proposed to fit a simple analytic profile (degree-2 polynomial) to every pixel of a luminance image, viewed as a height field. This profile locally describes the shape of the closest illumination ridge (or valley). The thickness and opacity of the lines can then be computed based on the distance to the ridge or valley line and its first principal curvature. \citet{Vergne:2011} generalized this idea in two ways: first by fitting profiles to various surface features, and then by convolving these profiles with a brush footprint to produce various stylization effects (\fig{IB}). This method achieves real-time performance on the GPU and exhibits a natural coherence in animation. However, it is limited to brush styles which are independent of the contour's arc-length; for example, it does not support a brush stroke texture that curves around the object.

\begin{figure}
	\centering
		\includegraphics[width=0.45\linewidth]{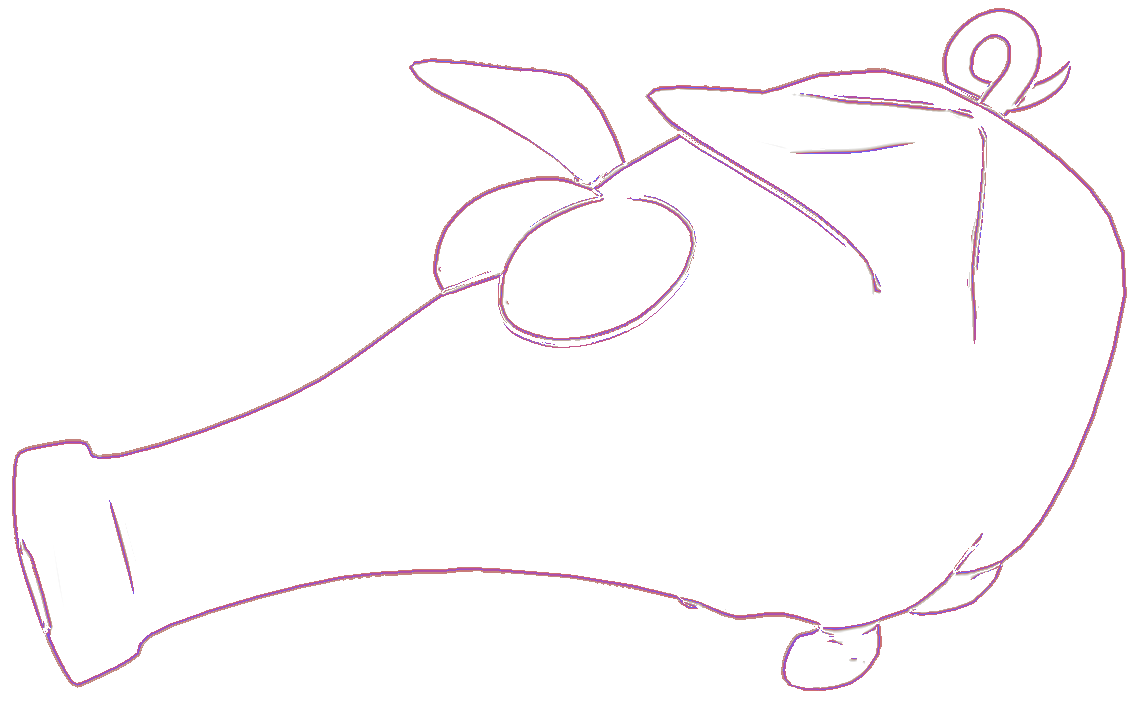}
		\quad
		\includegraphics[width=0.45\linewidth]{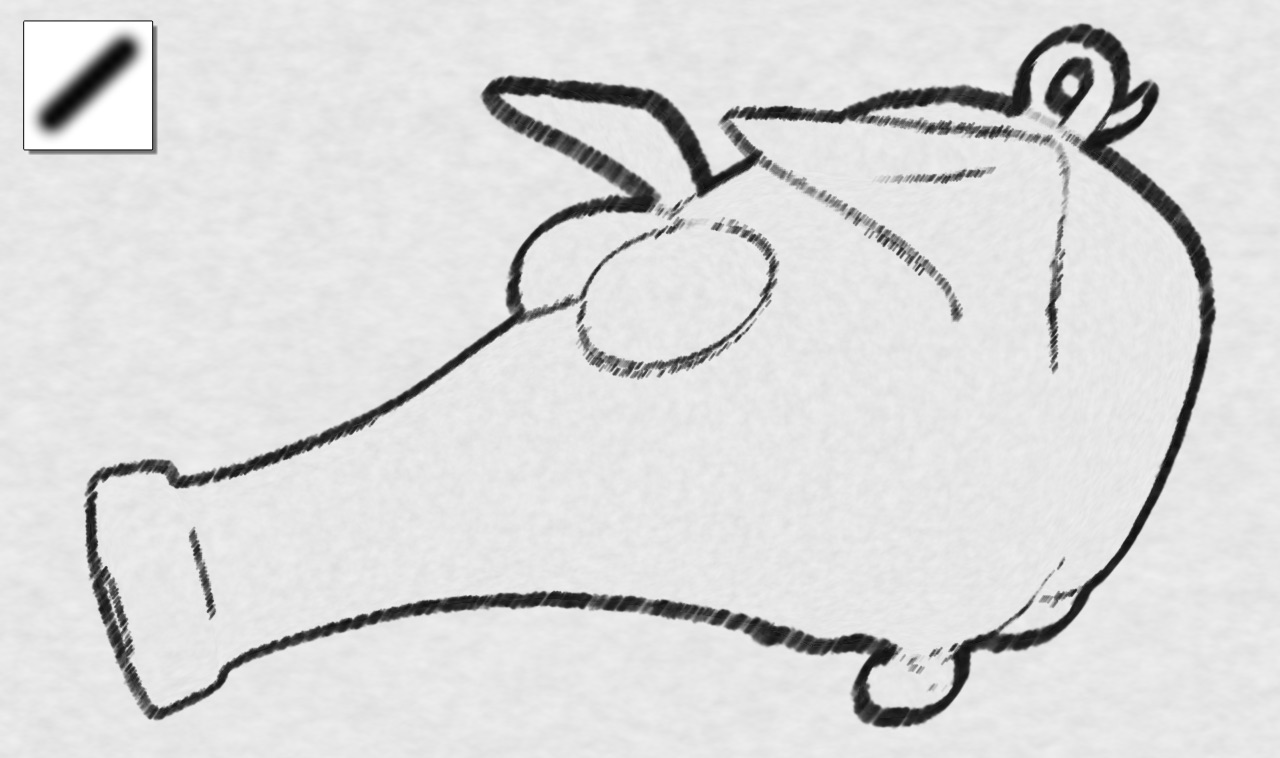}
		\caption{\textbf{Implicit brushes} \citep{Vergne:2011} --- Surface feature profiles (left) are extracted in image-space and fitted with polynomials; they are then convolved with a brush footprint (inset) to produce stylized lines (right).}\label{fig:IB}
\end{figure}

\paragraph{Raytracing framework.}
With a raytracer, a G-buffer can still be computed by casting a ray per pixel and storing the relevant information (\eg~distance to the camera, normal, etc.) at the closest hit point \citep{Leister:1994,Bigler:2006}. To avoid explicitly storing this buffer and allow the user to control the line width, \citep{Choudhury:2009} developed a method inspired by cone-tracing. For each per-pixel ray, they sample a set of concentric ``probe'' rays in an screen-space disc whose radius corresponds to the half line width, that hey call a ray stencil (\fig{samples}). Then, they compute an edge strength metric based on the proportion of probe samples falling on the same primitive as the central ray. The final pixel color is modulated by this edge factor, producing naturally anti-aliased lines.
\citet{Ogaki:2018} both simplify and extend this approach to better deal with line intersections, and allow line thickness and color variations. They also support drawing lines in specular reflections and refractions (\fig{pig_glass}), at the price of storing the tree of reflections and refractions events associated with every pixel ray.

\begin{figure}
	\centering
	\begin{subfigure}[b]{0.35\linewidth}
		\includegraphics[width=\linewidth]{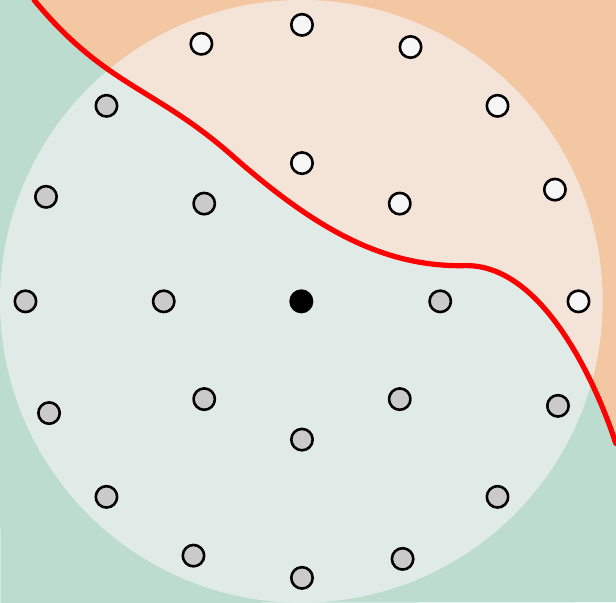}
		\caption{Ray stencil in screen-space \citep{Choudhury:2009}}
		\label{fig:samples}
	\end{subfigure}
	\qquad
	\begin{subfigure}[b]{0.55\linewidth}
		\includegraphics[width=\linewidth]{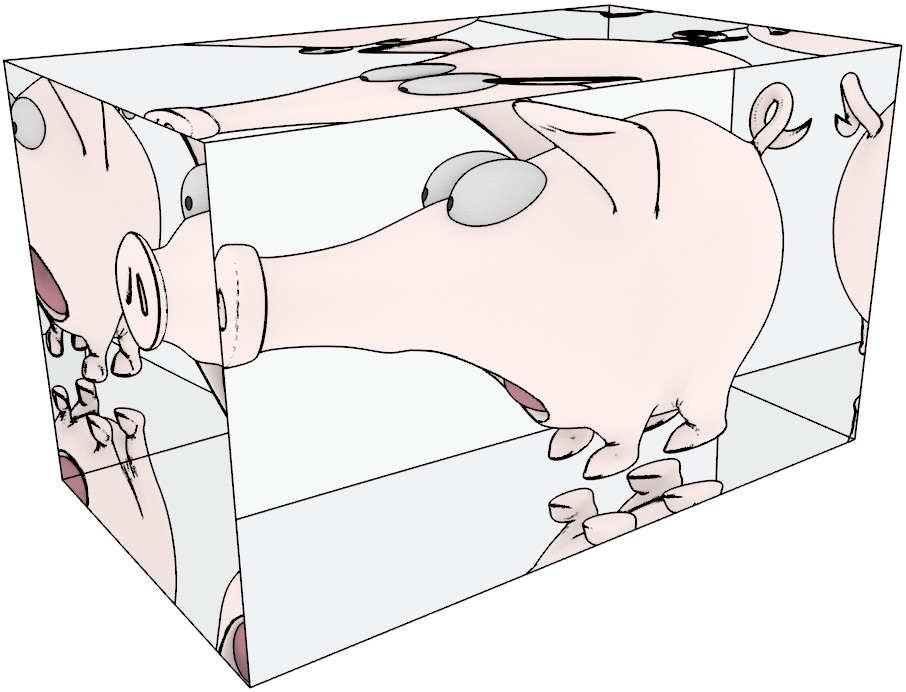}
		\caption{Image generated with Arnold Toon shader \citep{Ogaki:2018}.}
		\label{fig:pig_glass}
	\end{subfigure}
	\caption{\textbf{Ray-traced feature lines} --- \textbf{(a)} Around a central ray (black dot) a stencil of rays (grey and white dots) is cast to estimate the foreign primitive area, \ie~the proportion of samples intersecting a different primitive than the central one (orange vs. green surfaces).  \textbf{(b)} Image-space depth, ID and normals discontinuities extracted taking into account reflections and refractions.} \label{fig:raytracing}
\end{figure}


\chapter{Mesh Contours: Definition and Detection}
\label{chap:mesh_contours}

This chapter formally introduces the occluding contours of polyhedral meshes.  
We begin with some basic definitions of the mesh and viewing geometry, and then give formal definitions of contours. We then describe a range of extraction algorithms for faster detection. 
The following chapter will then discuss visibility computations.

Extracting contours from meshes can allow exact computation of the contour topology, allowing for more sophisticated curve stylization algorithms, while also fixing potential problems with the algorithms from the previous chapter.

\section{Meshes}

A polyhedral mesh comprises a list of vertices and a list of faces, each face containing three or more vertices (\fig{mesh}). Faces meet in mesh edges, each edge connecting two vertices. The mesh \emph{connectivity}, describes the incidence relations among those mesh elements, \eg{} adjacent vertices and edges of a face. The mesh \emph{geometry} specifies 3D position of each vertex: $\vec{p} = [p_x, p_y, p_z]^\top$. 
In computer graphics, most polyhedral meshes are either triangular or quadrilateral meshes (\fig{mesh}). In this tutorial, we focus on triangular meshes, although the definitions and algorithms presented below generalize to any polyhedral meshes with planar faces. Non-planar faces, such as non-planar quad faces, need to be subdivided into planar faces.

The normal of a face is the vector orthogonal to all edges of the face, which can be computed by the cross-product: $\vec{n} = (\vec{p}_3 - \vec{p}_1) \times (\vec{p}_2 - \vec{p}_1)$, where $\vec{p}_{1:3}$ are any three vertices of the face taken in clockwise order.
Note that the normal orientation depends on the order of the vertices (e.g., swapping vertices 1 and 2 would reverse the normal direction). The ordering of the three vertices is usually encoded in the mesh data structure.

We further assume that the mesh is \emph{manifold}, \ie{} (1) each edge is incident to only one or two faces,
and (2) the faces incident to a vertex form a fan, either open or closed.

\begin{figure}
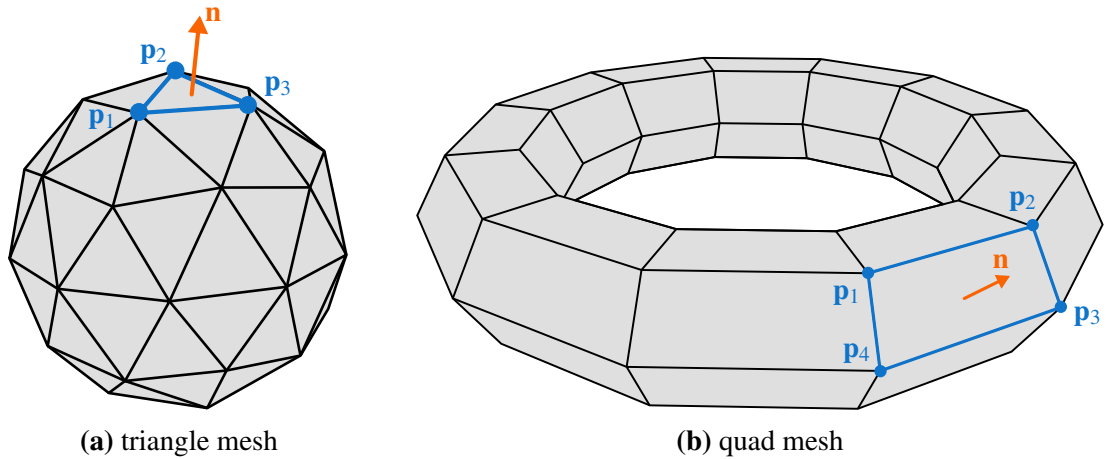

	\centering
	\small
	\begin{subfigure}[b]{0.30\linewidth}
		\def\svgwidth{\hsize}\import{figures/mesh_contours/}{mesh.pdf_tex}\caption{triangle mesh}\label{fig:trimesh}
	\end{subfigure}
	\qquad
	\begin{subfigure}[b]{0.60\linewidth}
		\def\svgwidth{\hsize}\import{figures/mesh_contours/}{quadmesh.pdf_tex}\caption{quad mesh}\label{fig:quadmesh}
	\end{subfigure}
	\caption{\textbf{Polygonal meshes} --- Since each face of a polygonal mesh is planar, its normal $\vec{n}$ can be computed as $(\vec{p}_3 - \vec{p}_2) \times (\vec{p}_2 - \vec{p}_1)$, where $\vec{p}_{1:3}$ are any three vertices of the face.} \label{fig:mesh}
\end{figure}

\section{Camera viewing}
The polyhedral mesh will be projected by either orthographic (parallel) or perspective (central) projection. 
For orthographic projection in the \emph{view direction} $\vec{v}$, a given scene point $\vec{p}$ is projected to the image plane by intersecting the line that passes through $\vec{p}$ in the direction $\vec{v}$ --- called the the \emph{visual ray} --- with the image plane (\fig{orthographic}). The point $\vec{p}$ is visible if the visual ray does not intersect any other surface point before reaching the image plane; otherwise it is invisible.

For perspective projection, the camera is defined by the position of its center $\vec{c}$ and an image plane (\fig{perspective}). In this case, the visual ray is the line from $\vec{c}$ to the scene point $\vec{p}$; the corresponding view direction $(\vec{c} - \vec{p})$ is not constant anymore; it depends on $\vec{p}$. The projection of $\vec{p}$ remains the intersection of the visual ray with the image plane. 

\begin{figure}
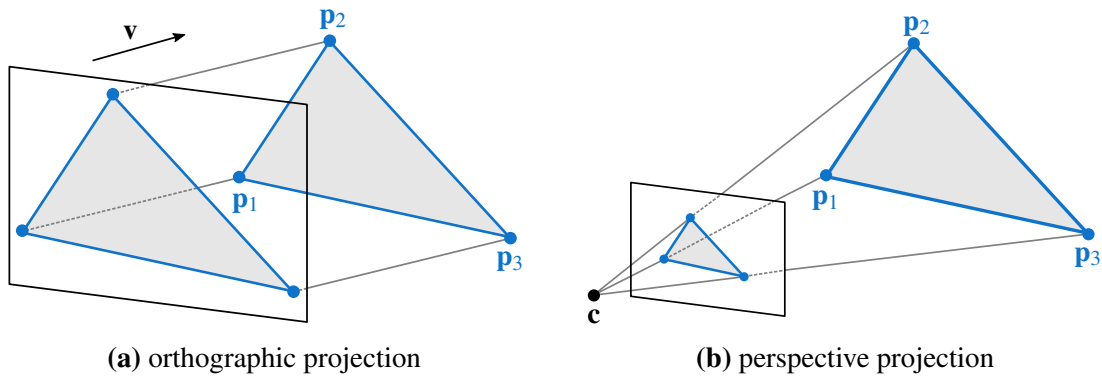

	\small
	\begin{subfigure}[b]{0.45\linewidth}
		\def\svgwidth{\hsize}\import{figures/mesh_contours/}{orthographic.pdf_tex}\caption{orthographic projection}\label{fig:orthographic}
	\end{subfigure}
	\qquad
	\begin{subfigure}[b]{0.45\linewidth}
		\def\svgwidth{\hsize}\import{figures/mesh_contours/}{perspective.pdf_tex}\caption{perspective projection}\label{fig:perspective}
	\end{subfigure}
	\caption{\textbf{Projections} --- The triangular face formed by the vertices $\vec{p}_1$, $\vec{p}_2$ and $\vec{p}_3$ is viewed \textbf{(a)}~under orthographic projection along the view direction $\vec{v}$, and \textbf{(b)}~under perspective projection of center $\vec{c}$.} \label{fig:projections}
\end{figure}

The mesh \textit{boundary} is the set of edges where each edge is adjacent to only one mesh face. A surface is \textit{closed} if it has no boundary, otherwise it is \textit{open}.

\section{Front faces and back faces}
\label{sec:front_back}

We assume that the mesh is \emph{orientable}. Informally, this requires that all adjacent pairs of faces have consistent normal directions, facing the same side of the surface. This gives the surface a consistent notion of ``inside'' and ``outside'', and rules out esoteric surfaces like the M\"{o}bius strip and the Klein bottle (\fig{klein_bottle}).  For an open surface, we can think of the surface as a subset of an orientable closed surface that will only be seen from certain viewpoints.

\begin{figure} 
	\centering
	\small
	\includegraphics[width=0.4\linewidth]{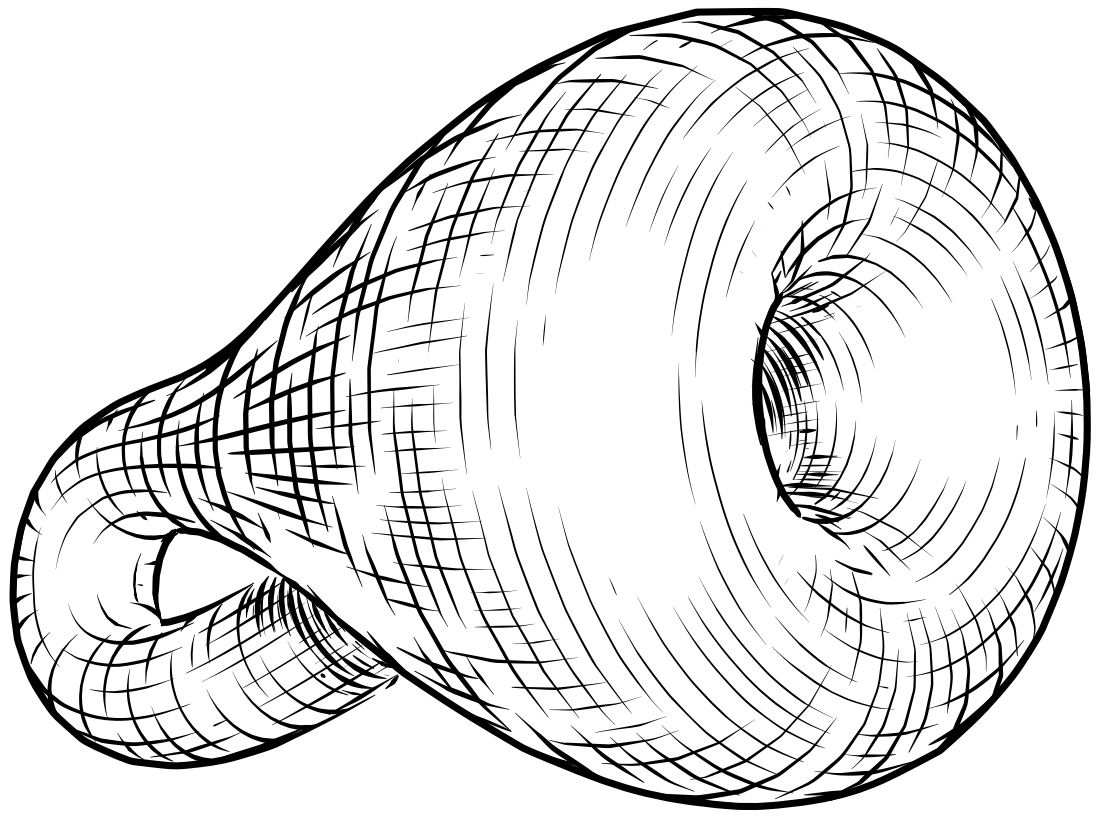}
	\qquad
	\includegraphics[width=0.45\linewidth]{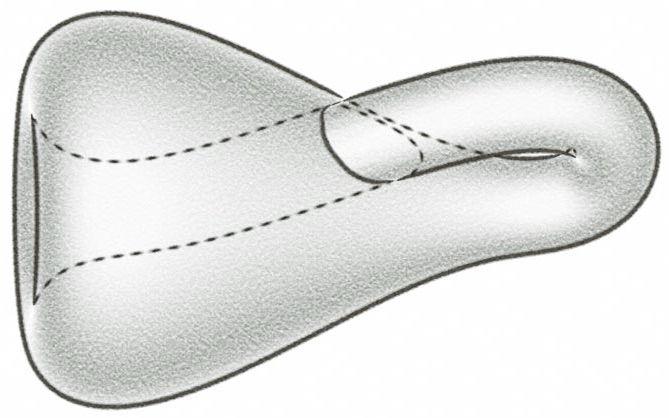}
	\caption{\textbf{Contour renderings of a non-orientable surface}, the Klein bottle, left by \citet{Hertzmann:2000} and right by \citet{Kalnins:2003}. Right image generated with ``Jot''~\citep{jot}.}
	\label{fig:klein_bottle}
	\end{figure}

More formally, orientability can be determined in a mesh data structure as follows. Each triangle in the data structure represents its vertices in a cyclic ordering. Two adjacent faces are consistent if the two vertices of their common edge are in opposite order.

A face is \emph{front-facing} if the camera position lies on the side of the face pointed to by the face's normal, \ie~$(\vec{c} - \vec{p}) \cdot \vec{n} > 0$ (Figure \ref{fig:ndotv}). It is \emph{back-facing} if the camera lies on the other side of this plane.  In orthographic projection, a face is front-facing when $\vec{v} \cdot \vec{n} < 0$.


\begin{figure}
	\centering
	\small
	\def\svgwidth{0.7\textwidth}\import{figures/mesh_contours/}{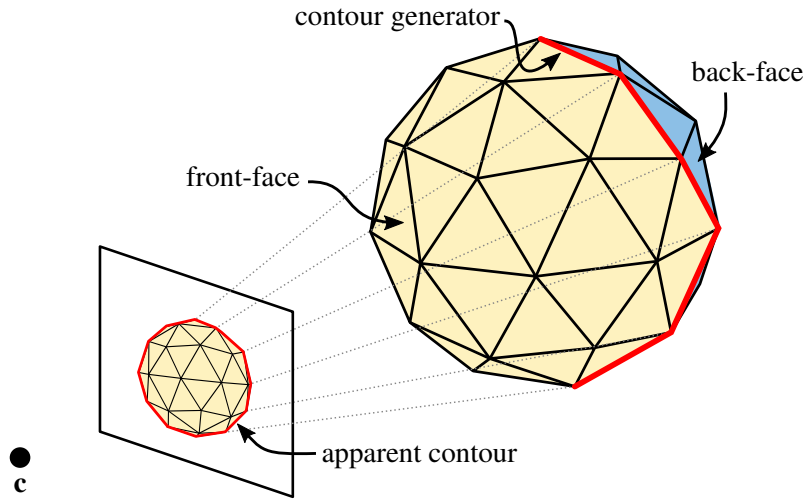}\caption{\textbf{Front faces, back faces, and contour} --- The front faces are shown in yellow, and are visible to the camera. The back faces are in blue, and are not visible to the camera. The contour generator separates the front facing regions from the back-facing regions of the surface. The apparent contour is the visible projection of this curve onto the image plane.}\label{fig:contour}
	\label{fig:ndotv}
\end{figure}

 We assume that only front-faces may be visible; back-faces must always be invisible.  This occurs in two ways.  First, when a closed mesh with outward-facing normals is viewed from the outside, the back-faces must all be occluded by front-faces. 
Second, in a professional animation setting, objects are often modeled with open surfaces, but with the camera movements constrained so that only the front facing regions will be visible.


\section{Mesh contours and boundaries}

The general definition of occluding contours, for all surfaces, is as follows.

\begin{definition}[occluding contour generator]\label{def:image-space-contours}
	For a given viewpoint, the \emph{occluding contour generator} is a curve on the surface that delineates the frontier between what is locally visible and invisible, that is, between front- and back-facing surface regions.
\end{definition}
As such, they mark any depth discontinuity either between the surface and the background in the image, or where parts of an object pass in front of itself.

\begin{definition}[occluding contour]\label{def:occluding_contour}
	For a given viewpoint, the \emph{occluding contour} (or \textit{apparent contour}) is the visible 2D projection of the occluding contour generator.
\end{definition}


These definitions, as applied to meshes, are:

\begin{definition}[mesh contour generator]\label{def:mesh_contours}
 The collection of all mesh edges that connect front-faces to back-faces are together called the \emph{occluding contour generator}~\citep{Marr:1977}. The visible projection of the contour generator onto the image plane is called the \emph{occluding contour}, or, \emph{apparent contour}.
\end{definition}

Despite the different terminology here, we will often simply use the term ``contour'' to refer to these different curves, where the meaning is obvious from context, simply because terms like ``occluding contour generator'' are rather cumbersome.

In the literature, there is considerable variation in how these terms are used. 
The \emph{silhouette} is the subset of the contour that separates an object from the background behind it. In the computer graphics literature, the word ``silhouette'' was often used to mean ``contour'', especially prior to 2003. 
Koenderink uses the term ``rim'' to refer to the occluding contour generator.  Some authors use the term ``contour'' to refer to any image curve.


\section{Generic position assumption}

We further assume that the mesh is in \emph{generic position}, which is a helpful trick for avoiding many tedious technicalities.

Loosely speaking, the generic position assumption implies that the mesh does not have any specific ``weird'' connectivity, nor does the camera's view of the mesh --- this frees us from handling many possible special cases. 

More precisely, in generic position, any relevant topological properties of the mesh and camera together are robust to infinitesimal perturbations. 

For example, it is theoretically possible for a face of the mesh to be exactly edge-on: $(\vec{c} - \vec{p}) \cdot \vec{n} = 0$. This is a face which is neither front-facing nor back-facing.  Correctly drawing the contours through this face, with correct contour topology for stylization, would require some extra effort on top of the basic algorithms we will describe. However, if we added an infinitesimal amount of random noise to $\vec{c}$ or any of the vertices $\vec{p}$, then the face would no longer be edge-on (with probability 1). Other non-generic cases that can cause problems include degenerate edges (adjacent vertices have the exact same coordinates), and coincident geometry (e.g., two distinct triangles lie in the same plane and overlap).

In general, handling non-generic cases like these require extra effort to implement, and they would be largely unenlightening to explain in this tutorial.  Even enumerating potential non-generic cases could be quite tedious and difficult.  Furthermore, the research literature has largely ignored non-generic cases. 

For real-valued geometry that is randomly-positioned, violations of the generic position assumption are zero-probability events. Even in floating-point computations, the odds of the assumption being violated are vanishingly rare.  
In some cases, genericity violations may be intentional, such as in mechanical illustration and industrial design applications, where edge-on faces are common. For these applications, some specific non-generic cases would need to be handled. 

A simple fix to violations of generic position is to randomly add a tiny random number to each vertex coordinate of the mesh; by definition, this will cure all non-generic cases.  For example, the edge-on face described above would become either front-facing or back-facing.
More principled handling is potentially application-dependent, and we do not discuss it further in this tutorial.

\section{Contours are sparse} \label{sec:sparse}

As noted by \citet{Markosian:1997,Kettner1997,Sander:2000,McGuire:2004a}, contour edges only represent a tiny percentage of the total number of mesh edges. For a reasonable polyhedral approximation of a smooth surface, \citet{Glisse:2006} showed that the contour length, averaged over all viewpoints, is in the order of $\sqrt{n}$ where $n$ is the number of faces in the mesh. In practice, for general man-made triangular meshes, \citet{McGuire:2004a} measured empirically a trend closer to $n^{0.8}$.  For example, the Buddha model has over 1 million faces but only around 50k contour edges on average from different views.  Of these, a large fraction are surely concave, and thus can trivially be marked as always invisible, as discussed in Section \ref{sec:concave_edge}.

Additionally, edges that are more convex are more likely to be contours than edges that are flatter \citep{Markosian:1997}. For example, a nearly-flat edge only becomes a contour from a narrow range of camera positions, unlike a very sharply convex edge.


\section{Extraction algorithms}
\label{sec:extraction}


We now survey different algorithms for detecting the set of contour edges on a mesh.  These range from the basic brute-force procedure, to more sophisticated data structures and algorithms.  Computing the apparent contours further requires determining the visibility of the contour generators; solutions to this challenging problem will be presented in \chap{chap:visibility} and \chap{chap:fast_visibility}.

\subsection{Brute force extraction}

The basic brute force algorithm directly stems from \defn{def:mesh_contours}. For a given viewpoint, the algorithm consists in iterating over every mesh edge, computing the normals of its two adjacent faces, and checking whether their dot-products with the view direction have opposite the signs. For perspective projection, any position on the edge can be used to define the view direction; the first vertex of the edge is commonly chosen. 

Representing the mesh with a half-edge data structure~\citep{Campagna:1998} makes these operations easier to implement.
To avoid redundant calculations, the face normals are usually precomputed and stored as face attributes. 
The iteration over the mesh edges must be performed every time the camera or object position changes, which is very expensive for complex models. \fig{mesh_contours} shows three results of this algorithm; the hidden contours are illustrated with dotted lines.

\begin{figure}
	\centering
	\small
	\includegraphics[width=\linewidth]{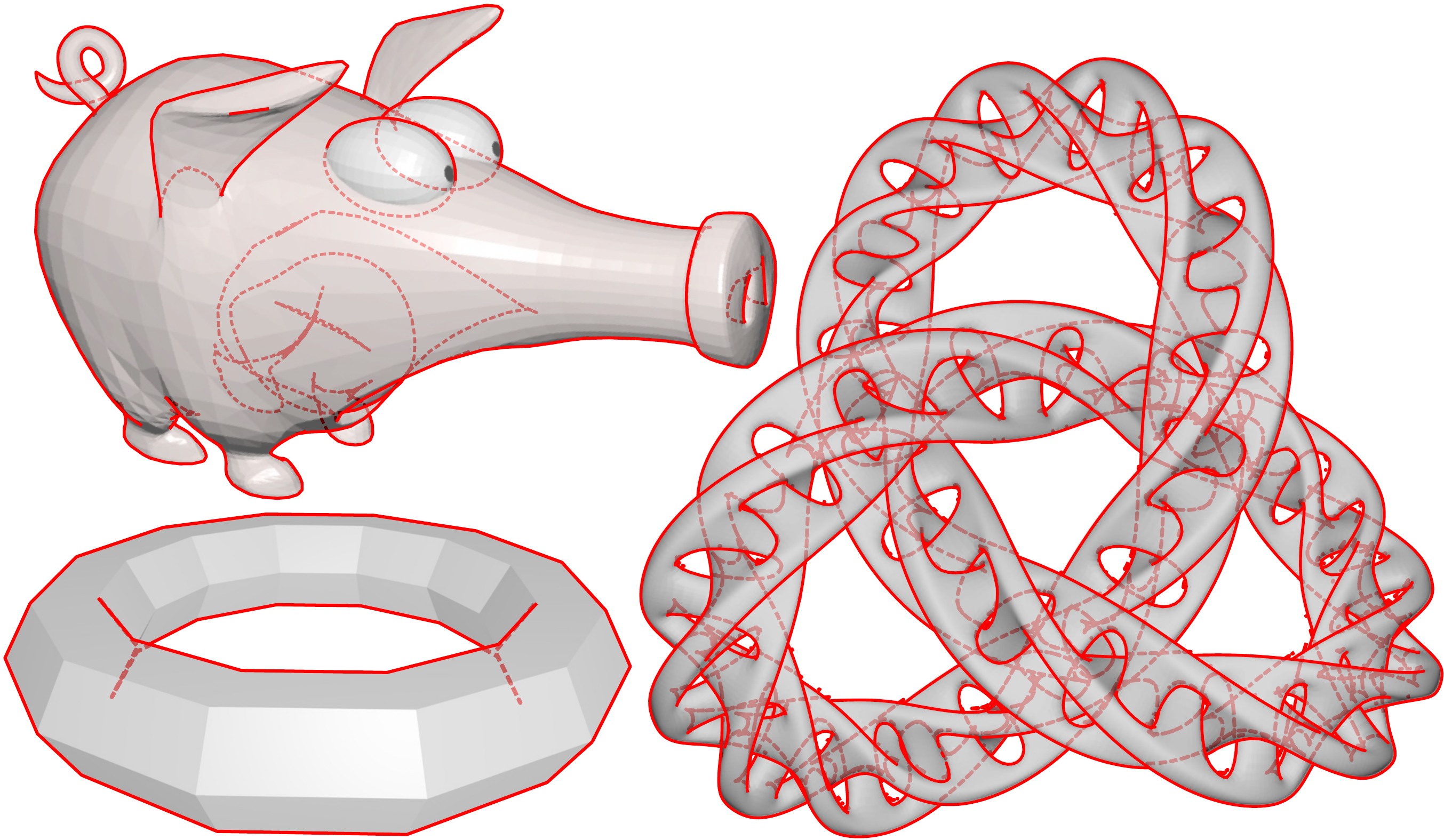}
	\caption{\textbf{Results} --- Mesh contours extracted from a low-resolution torus (bottom left), the ``Origins of the Pig'' \ccCopy~Keenan Crane (top left), and the  ``Moebius Torus Knot'' \ccCopy~Francisco Javier Ortiz V{\'a}zquez (right). Hidden contours are depicted with dotted lines (visibility algorithms are discussed in the next Chapter).}\label{fig:mesh_contours}
\end{figure}

\subsection{Pre-computation for static meshes} 

\label{sec:dataStructures}

In many applications, parts of the 3D scene are static, or rigid. In such a case, a data-structure can be built for each static mesh during an advance pre-process, so that the search is significantly accelerated at rendering time.  At run-time, contour extraction can be a function of the number of contour edges, rather than the number of mesh edges, yielding a substantial time savings due to the sparsity of contours (Section \ref{sec:sparse}).

\paragraph{Orthographic dual space.} 
For orthographic projection, \citet{Benichou:1999} and \citet{Gooch:1999} proposed a dual space for fast contour detection.
The dual space is a 3D coordinate system $\vec{s}=(s_1,s_2,s_3)$.
Each mesh face is mapped to a single point $\vec{s}$ in the dual space, with coordinates given by the face normal: $\vec{s}_i=\vec{n}_i = (n_x, n_y, n_z)$. Likewise, the orientation function, based on view direction $\vec{v}=(v_x,v_y,v_z)$, is mapped to a plane in the dual space: 
\begin{align}
g(\vec{s}) &= \vec{v} \cdot \vec{s}= v_x s_1 + v_y s_2 + v_z s_3 = 0
\end{align}
For front faces, $g(\vec{s})>0$, and $g(\vec{s})<0$ for back faces. A mesh edge between faces $(i,j)$ on the original surface corresponds in the dual space to a line segment $\overline{\vec{n}_i \vec{n}_j}$. In the dual space, when the orientation plane intersects a line segment, this line segment must correspond to a contour edge on the original surface. Hence, contour detection is reduced to intersecting a plane with a set of line segments, which can be accelerated by standard geometric data-structures, such as octrees or BSP trees.

In implementation, the 3D space does not need to be represented; a 2D space is sufficient.  Specifically, one can observe that the normals can be arbitrarily scaled without changing the results.  Scaling each point to have unit norm projects the points onto the Gaussian sphere, and, within the Gaussian sphere, line segments become arcs (\fig{gauss_sphere}). For computation, all points can be projected onto a unit cube \citep{Benichou:1999}, or a hierarchy of platonic solids \citep{Gooch:1999}. Hence, arcs are transformed into line segments, reducing the 3D intersection test to a set of 2D intersection tests, that can be further accelerated with standard 2D data-structures. 

\begin{figure}
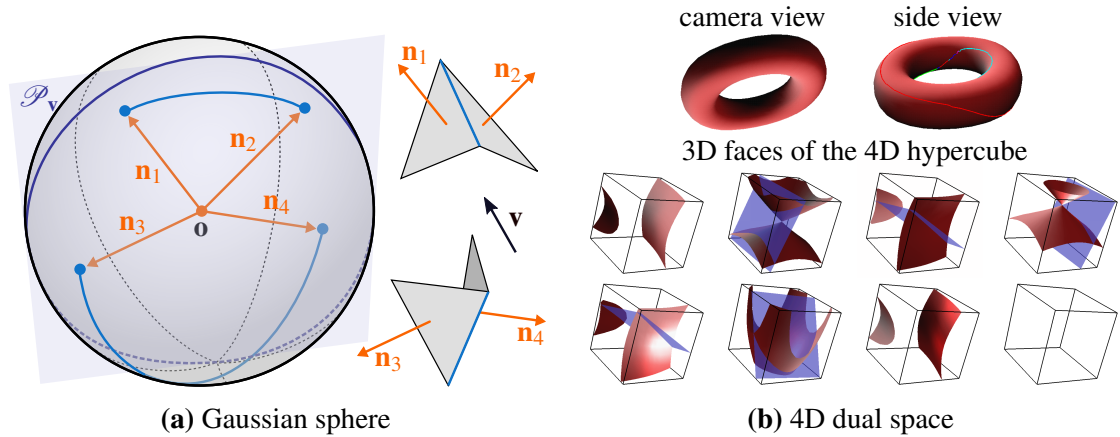

	\small
	\begin{subfigure}[b]{0.47\linewidth}
		\def\svgwidth{\hsize}\import{figures/mesh_contours/}{gauss_sphere.pdf_tex}\caption{Gaussian sphere}\label{fig:gauss_sphere}
	\end{subfigure}
	\quad
	\begin{subfigure}[b]{0.47\linewidth}
		\def\svgwidth{\hsize}\import{figures/mesh_contours/}{dual_space.pdf_tex}\caption{4D dual space}\label{fig:dual_space}
	\end{subfigure}
	\caption{\textbf{Dual spaces} --- Preprocessing by \textbf{(a)} projecting the normals of two adjacent faces onto the Gaussian sphere, or \textbf{(b)} constructing a representation of the mesh in 4D space based on the position and tangent planes of its vertices. At runtime, finding the contour edges consists in computing the intersection of the dual viewing plane (in blue) with \textbf{(a)} circular arcs or \textbf{(b)} the dual surface, which can be further accelerated by space-partitioning data-structures.} \label{fig:acceleration}
\end{figure}


\paragraph{Perspective dual space.}
The above approach can be generalized to perspective projection \citep{Hertzmann:2000}. In this case, a 4D dual space is used conceptually, but a 3D dual space is used in practice.  

A mesh face with position $\vec{p} = (p_x,p_y,p_z)$ and normal $\vec{n} = (n_x,n_y,n_z)$ is mapped to a dual point $\vec{s}=(s_1,s_2,s_3,s_4)=(-n_x,-n_y,-n_z,\vec{p}\cdot\vec{n})$.  (Any point on the face may be used.) Given the camera center $\vec{c}$, the orientation function is mapped to a dual hyperplane: $$g(\vec{s})=(c_x,c_y,c_z,1)\cdot{s}= 0.$$  
Hence, front-faces have $g(\vec{s})>0$ and back-faces have $g(\vec{s})<0$.
A mesh edge between faces $i$ and $j$ corresponds to a dual line segment $\overline{\vec{s}_i \vec{s}_j}$. Any line segment that intersects the dual hyperplane corresponds to a mesh contour.  Hence, finding all contour edges reduces to a 4D hyperplane intersection with a set of line segments. Orthographic cameras can also be handled in this dual space with $g(\vec{s})=[-v_x,-v_y,-v_z,0]\cdot{s}=0$.

As in the orthographic case, the dual points can be scaled arbitrarily without changing the results.  Hence, a 3D space can be used.
\citet{Hertzmann:2000} normalize each dual point $\vec{s}$ using the $l_\infty$ norm --- effectively projecting it on the surface of the unit hypercube. The surface of the unit hypercube can be represented as eight octrees.
Each dual point is stored in one of the octrees. At runtime, 
the viewpoint is converted into a dual plane, and the dual plane is intersected with the eight octrees. The expected complexity is linear to the number of contour edges.

\begin{figure}
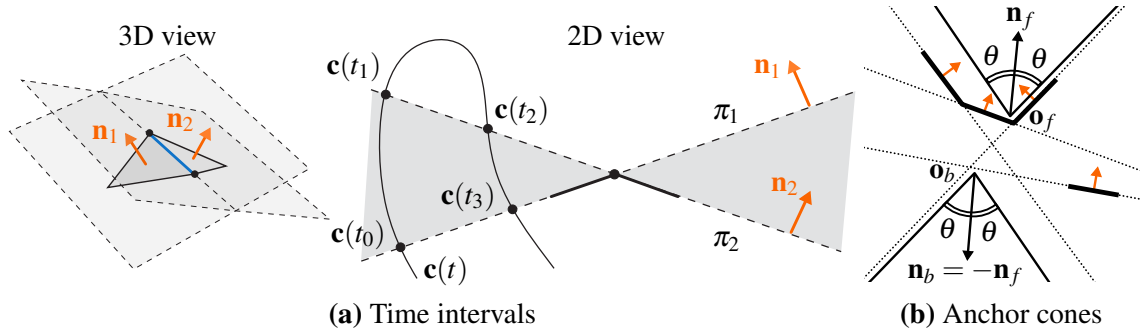

	\centering
	\small
	\begin{subfigure}[b]{0.74\linewidth}
		\def\svgwidth{\hsize}\import{figures/mesh_contours/}{time_intervals.pdf_tex}\caption{Time intervals}\label{fig:time_intervals}
	\end{subfigure}
	\begin{subfigure}[b]{0.24\linewidth}
		\def\svgwidth{\hsize}\import{figures/mesh_contours/}{cone_trees.pdf_tex}\caption{Anchor cones}\label{fig:cone_trees}
	\end{subfigure}
	\caption{\textbf{Spatial partitioning} --- \textbf{(a)} A given edge is on the contour generator if the viewpoint trajectory $\vec{c}(t)$ is inside the intersection (in grey) of one positive and one negative half-space defined by the face supporting plane $\pi_1$ and $\pi_2$, \ie{} during the time intervals $[t_0,t_1]$ and $[t_2,t_3]$ in this example. \textbf{(b)} The front and back-facing anchored cones are defined by their center $\vec{o}_{f|b}$, an opposite normal $\vec{n}_f=\vec{n}_b$ and a common half opening angle $\theta$, here visualized in 2D for four oriented segments.}
\end{figure}

\paragraph{Animation.} 
These data structures can further be exploited to accelerate detection during animation.
In dual space representations, when the camera makes small moves, it is possible to only visit a small portion of the dual space. \citet{Pop:2001} and \citet{Olson:2006} describe incremental methods that are able to update an existing set of contour edges when the camera moves.

If the viewpoint trajectory is known in advance and can be represented by a polynomial curve $\vec{c}(t)$ of degree $d$, \citet{Kim:2005} showed that there are at most $d+1$ time-intervals $[t_i,t_{i+1}]$ at which an edge can be a contour. Those intervals can thus be pre-computed for each edge, by intersecting the polynomial curve with the supporting planes $\pi_1$ and $\pi_2$ of the edge's adjacent faces (\fig{time_intervals}), and stored in an array or a tree data-structure. At runtime, the contour edges can then easily be updated incrementally during the camera motion along the prescribed trajectory. The incremental update mechanism of \citet{Pop:2001,Olson:2006} is more computationally demanding, but it is not constrained to a fixed camera path.

\paragraph{Cone trees.} \citet{Sander:2000} proposed accelerating contour extraction using a forest of search trees constructed over the mesh edges. Taking inspiration from previous work on back-face culling, their key idea is to build, for each edge of the mesh, a hierarchy of face clusters. At runtime, clusters whose faces are all front-facing or all back-facing can be fully discarded. To conservatively decide in constant time whether a cluster is front- or back-facing, \citet{Sander:2000} compute and store two open-ended anchored cones per cluster: one cone inside which any viewpoint would make the face cluster entirely front-facing, and another cone making the cluster back-facing (\fig{cone_trees}). They demonstrated that, experimentally, this approach also has linear complexity with respect to the number of contour edges.  However, their data structure construction can be extremely slow.

\subsection{Randomized search} 
\label{sec:randomized}

The above data structures are not useful for deforming meshes.  The following randomized algorithm, proposed by \citet{Markosian:1997}, works for any mesh, though it is not guaranteed to detect all edges. 


The method first selects a few mesh edges at random. Because contours are sparse, the probability of finding a first contour edge is rather low. However, since mesh contours form continuous chains of edges on the surface, once a first contour edge has been found, spatial coherence can be leveraged to explore adjacent edges in an advancing front manner and trace the full contour loop. By further assigning to each edge a probability inversely proportional to the exterior dihedral angle $\alpha$ (in radians) between its adjacent faces, the chance of finding contour edges is increased since, given a random view direction, the probability that an edge is a contour is $\alpha/\pi$. Derivations for this probability can be found in \citet{McGuire:2004a} (perspective case) and \citet{Elber:2006} (orthographic case).


In addition, for small viewpoint changes, \citet{Markosian:1997} observed that temporal coherence can also be leveraged by re-seeding the search in the new frame from the previous frame's contour, and by searching for contour edges in its vicinity, moving towards (resp.~away) from the camera if the edge is adjacent to back-faces (resp.~front-faces). 

This approach does not guarantee that all contour edges will be found, but it will usually detect the longest mesh contours. 
If the algorithm samples edges without replacement, then it will converge to the correct solution once it has visited every edge.


\chapter{Mesh Curve Visibility}
\label{chap:visibility}

Once we have found the curves on a mesh, we need to determine which portions of them are visible.  In doing so, we will also build a data structure, called the \textit{view graph}, that represents the topology of the visible curves.


This chapter introduces the algorithms used for efficiently computing correct visibility for edges on the surface.
This question is related to the more general hidden-line removal problem, which dates back to the earliest ages of Computer Graphics, at the beginning of the sixties. \citet{Roberts:1963,Weiss:1966:VPI:321328.321330} devised the first known solutions to this problem, using brute-force ray tests.
\citet{Appel:1967} introduced Quantitative Invisibility as a way to greatly decrease the number of ray tests required, and improve accuracy. 

It can be tempting to implement many of these algorithms with heuristics. However, if not implemented carefully, the visibility operations here can be very sensitive. Our goal is to compute global curve topology, and depending on implementation, the visibility of a large curve may depend on a single visibility test somewhere on the curve. If this visibility test is erroneous, an entire curve from the drawing may disappear.  Hence, it is important to formulate these algorithms to carefully track curve visibility and topology, rather than using heuristics. Even with mathematically correct operations, numerical instability can also cause errors.  Techniques for robust visibility computation are discussed in Appendix \ref{app:numerical}.



\section{Ray tests} \label{sec:ray_casting}

For a perspective camera, a point $\vec{p}$ on the surface is visible from the camera center $\vec{c}$ if the line segment $\overline{\vec{pc}}$ intersects the image plane and does not intersect any other surface point (\fig{occlusions}). Determining visibility this way is called a \emph{ray test}, since it amounts to casting a ray from $\vec{c}$ and checking if the tripling is the first object hit. (For an orthographic camera, the test involves a line segment from $\vec{p}$ to the camera plane along the ray $-\vec{v}$.) 
Ray tests can be accelerated by spatial subdivision data structures, such as a 3D grid or a bounding volume hierarchy \citep[Chapter~4]{Pharr:2016}.


\begin{figure}
	\centering
	\small
	\def\svgwidth{0.85\textwidth}\import{figures/mesh_contours/}{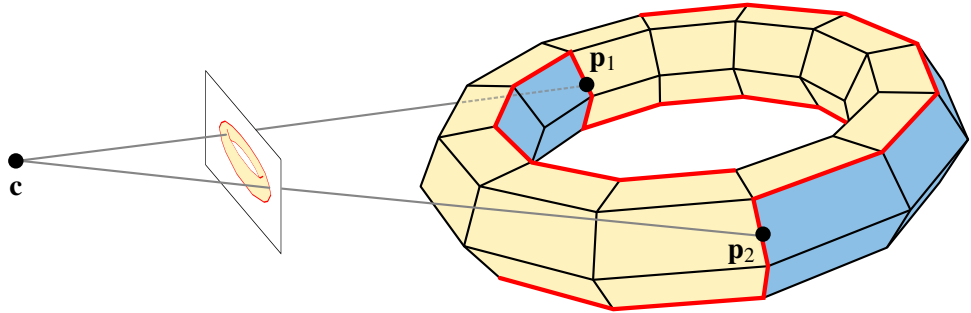}\caption{\textbf{Visibility, ray tests, and convex/concave contours} --- 
	A point is visible if the line segment from the camera to the point does not intersect any other surface point. Determining this is called a ray test.
	In the example here, the segment from $\vec{p}_1$ to the camera $\vec{c}$ intersects another part of the surface, so $\vec{p}_1$ is not visible. The segment from $\vec{p}_2$ does not intersect the surface so it is visible.
	One can avoid computing one of these ray tests: $\vec{p}_1$ lies on a concave contour point, so it must be invisible. The other point, $\vec{p}_2$, is on a concave contour, so a ray test is necessary to determine if it is visible.
\label{fig:occlusions}
	}\label{fig:occlusions}
\end{figure} 

In principle, the apparent contour could be rendered by separately testing the visibility of many points on the contour generator, and connecting the visible points. However, as noted by \citet{Appel:1967}, this would be both computationally expensive, because it would require testing a large number of points between which the visibility does not change, and inaccurate, since it would miss the points where curves transition between visible and invisible. 
Instead, we will use techniques to propagate visibility on the surface.







\section{Concave and convex edges}
\label{sec:concave_edge}

We can classify mesh edges as to whether they are concave and convex, which provides an additional visibility constraint, and will be helpful for identifying singularities in the next section \citep{Markosian:1997}. 
The content of this section is new for this tutorial, building on \citep{Markosian:1997,Koenderink:1984}. 

A mesh edge is concave if the angle between the front-facing sides of its two faces is less than $\pi$. It is convex if the angle is greater than $\pi$.  (Note that this is the angle on the outside of the surface, and so it is different from the dihedral angle.)  Equivalently, when the edge is convex, each face is on the back-facing side of the other face.

Contours on concave edges must always be invisible:  if a concave edge is viewed at a grazing angle --- where a contour appears --- the edge is hidden inside the surface from that viewpoint.  Only convex edges can produce visible contours.
Figure \ref{fig:occlusions} shows examples of convex/concave edges, and how they can be invisible or visible. We provide a new, formal proof of this in Appendix \ref{app:convex}.

As a result, ray tests are never necessary for contours on concave edges.  Additionally, for static surfaces, concave edges can be omitted from any detection data structure (Section \ref{sec:dataStructures}), if hidden lines will not be rendered.

Algorithms for determining whether an edge is convex or concave are given in Appendix~\ref{app:numerical}.

\section{Singular points}
\label{sec:singular}

\begin{figure}
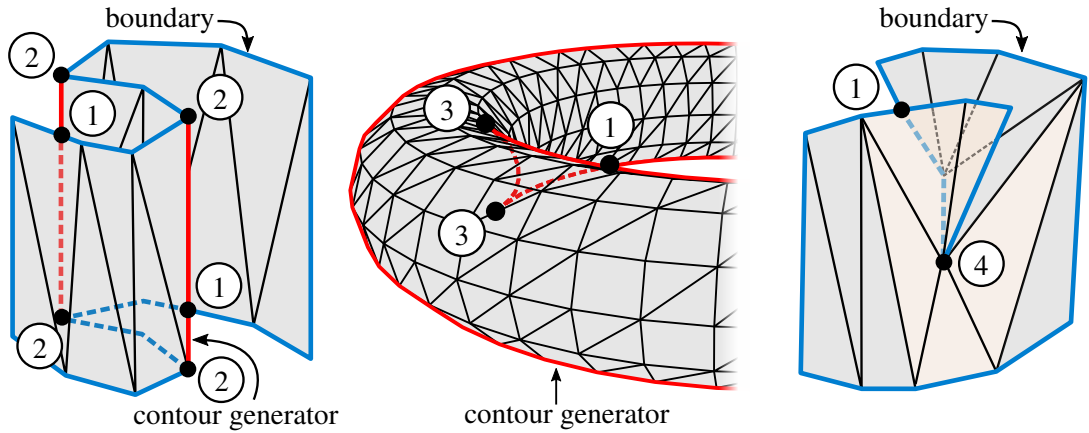

	\centering
	\small
	\setlength{\tabcolsep}{6pt}
	\begin{tabular}{ccc}
		\def\svgwidth{0.27\textwidth}\import{figures/visibility/}{is_intersection.pdf_tex} &
		\def\svgwidth{0.35\textwidth}\import{figures/visibility/}{curtain_fold.pdf_tex} &
		\def\svgwidth{0.27\textwidth}\import{figures/visibility/}{curtain_fold2.pdf_tex}
	\end{tabular}
	\caption{\textbf{Singular points} --- From the camera viewpoint, {\large\textcircled{\footnotesize 1}} T-junctions at image-space intersections, {\large\textcircled{\footnotesize 2}} Y-junctions between a contour generator and two boundary edges, {\large\textcircled{\footnotesize 3}} contour generator curtain folds, {\large\textcircled{\footnotesize 4}} boundary curtain-folds.  Contour generator edges are drawn in red, and boundaries in blue. Curtain folds are visualized in more detail in Figure~\ref{fig:curtain_fold}.
	}\label{fig:singular_points}
\end{figure}


The contour curves are the set of contour edges, and the boundary curves are formed from the set of boundary edges (\fig{singular_points}).  There are a few types of points on these curves where visibility may change. We call these points \textit{singular points}, or \textit{singularities}.  Two points that are connected by a curve that does not pass through any singularities must have the same visibility.  So we can perform a ray test at one curve point, and then propagate the result each direction along the curve until reaching singularities.  This will drastically reduce the number of required ray tests.   Singularities indicate places where visibility \textit{might} change.

These singularity data structures are also be used to record the 2D topology of the set of curves, \ie{} which curves connect to which, which will later be useful for stylization of the line drawing.

There are only a few different types of singular points:

\begin{enumerate}
\item[{\Large\textcircled{\footnotesize 1}}]
The visibility of a mesh curve may change
at an \textbf{image-space intersection}. Specifically, when a contour or boundary edge overlaps another curve in image space (\fig{singular_points} {\large\textcircled{\footnotesize 1}}), it indicates that part of the far curve is obscured by the surface closer to the camera. This splits the far curve into two segments, one of which must be invisible, and creates a T-junction between the near and far edges in image space. (The other segments may be invisible as well, if some other part of the surface occlude them.) 
Note, that while all intersections on the 3D surface are also 2D intersections, it is more robust to detect and handle them as a separate case, below.

\item[{\Large\textcircled{\footnotesize 2}}]
Curve visibility may change when two curves
 \textbf{intersect on the 3D surface}. For example, a contour generator may intersect a boundary curve (\fig{singular_points} {\large\textcircled{\footnotesize 2}}). This intersection can only occur at a mesh vertex, producing a Y-junction in image space \citep{Grabli:2010} if the three curve segments are visible, or may appear to form a continuous curve if one segment is hidden by the surface. 
In this case, the contour generator or the boundary curve may change visibility at the vertex. 
Surface intersection curves, on the other hand, lie within mesh faces, and thus intersect the contour generator within mesh edges.

\item[{\Large\textcircled{\footnotesize 3}} - {\Large\textcircled{\footnotesize 4}}]
\textbf{Curtain folds} \citep{Blinn:1978} occur where the surface sharply folds back on itself in image space, causing the curve to become occluded by local geometry (\fig{curtain_fold}). 
More precisely, \textbf{a curtain fold occurs at a vertex connecting two curve edges, when one of the adjacent curve edges is occluded by another face connected to the vertex, and the other curve edge is not}. This definition can be used directly to identify boundary curtain folds. On contour generators, a simpler rule can be used:
\textbf{a curtain fold occurs at any vertex where a convex contour edge meets a concave contour edge}, since concave contour edges must be invisible, and convex contour edges are locally visible.   Details on detecting boundary curtain folds are given in Appendix \ref{app:numerical}.
Curtain folds are not important on other types of curves.

\begin{figure}
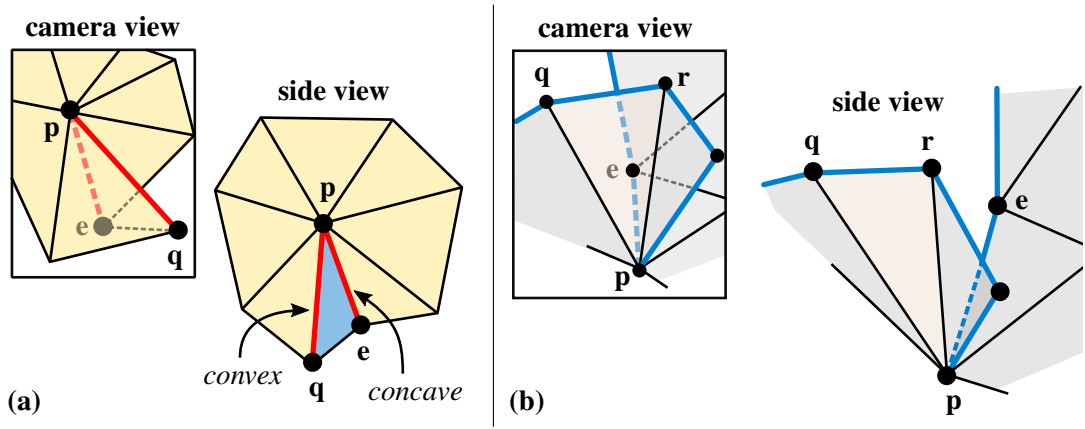

	\centering
	\small
	\begin{tabular}{c|c}
		\def\svgwidth{0.41\textwidth}\import{figures/visibility/}{curtain_fold1.pdf_tex} &
		\def\svgwidth{0.5\textwidth}\import{figures/visibility/}{curtain_fold3.pdf_tex}
	\end{tabular}
	\caption{\textbf{Curtain folds} --- A vertex $\vec{p}$ is a curtain fold if \textbf{(a)} it connects a \emph{convex} contour generator edge to a \emph{concave} edge, or \textbf{(b)} the edge $\vec{pe}$ is occluded by a face of the one-ring neighborhood of $\vec{p}$ (here, the triangle $\vec{pqr}$ in brown).  (The former  case \textbf{(a)} is a special case of the latter \textbf{(b)}.)
	}\label{fig:curtain_fold}
\end{figure}

\item[{\Large\textcircled{\footnotesize 5}}]
A vertex may also connect more than two contour generator edges, in which case we call this vertex a \textbf{bifurcation} (Figure \ref{fig:bif}). 
In this case, there are no constraints on how the visibility can change; any adjacent edge of such a vertex can be visible or invisible.
Boundaries can also exhibit bifurcations.
 \end{enumerate}

 \begin{figure}
 \centering
 \includegraphics[width=0.85\linewidth]{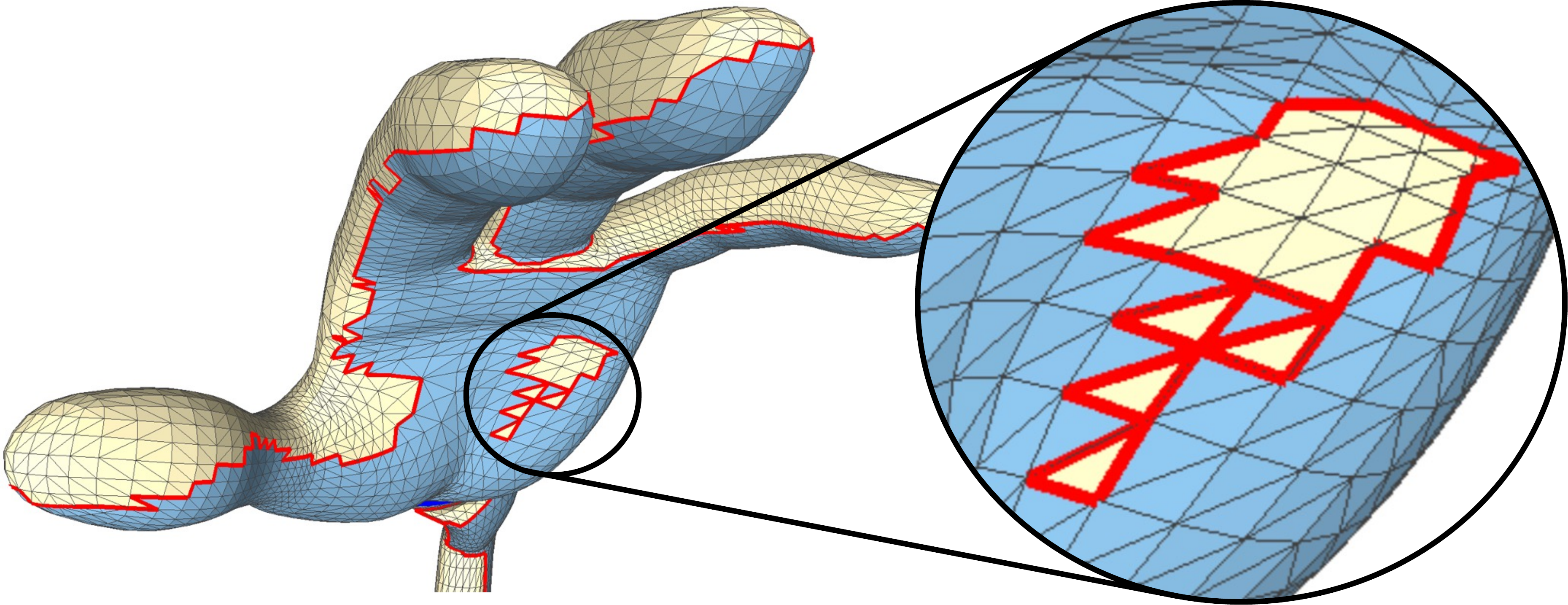}
 \caption{\textbf{Bifurcations in the contour generator} occur when more than two contour generator curves meet at a vertex. Closeup view on the mesh contours extracted from the smooth surface shown in \fig{red_intersections}. ``Red'' \ccCopy Disney/Pixar
 }\label{fig:bif} 
 \end{figure}

The above four cases are the only kinds of singular points for any surface curves. In implementation, different types of curves are handled separately, e.g., intersections on the surface need to be implemented with different cases for different kinds of intersecting curve, and intersections at vertices are handled separately from intersections within faces.

\section{Visibility for other curve types}

The above discussion is mainly for visibility-indicating curves: contours, boundaries, and surface-surface intersections.
Computing visibility for other surface curves is generally simpler.  The cases are the same: visibility may change when overlapped by a contour generator or boundary, or when intersecting a contour/surface-intersection on the surface.  Curtain folds occur only for contours and boundaries, and not other curves. Convex/concave determination is only useful for contours and not other curves.  Furthermore, ray tests for other curves are generally more numerically stable, if they are not themselves near contours.


Since we have assumed that back-faces are always invisible, any curve that lies within a back-face must also be invisible, as must an edge connected only to back-faces. For example, a boundary edge on a back-face is always invisible.






\section{View Graph data structures}\label{sec:view_graph}

In order to propagate visibility, we will build a data structure called a View Graph (\fig{pig_viewmap}). Later, this View Graph will be used to represent the image curves for stylization.

\begin{figure}
	\centering
	\small
	\def\svgwidth{\linewidth}\import{figures/visibility/}{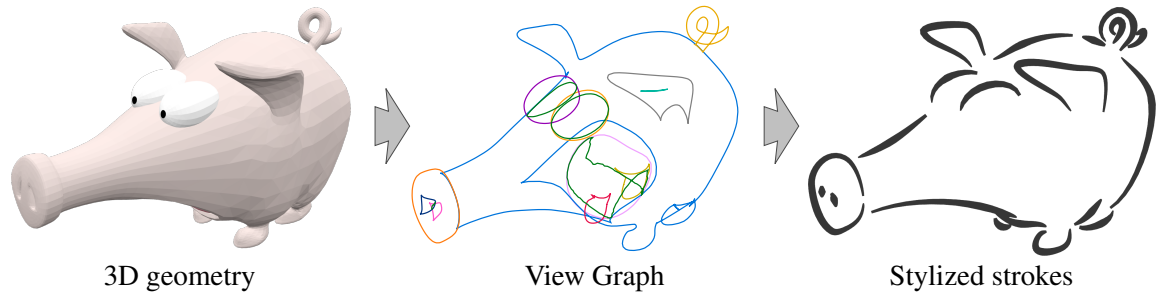}\caption{\textbf{View Graph} of \fig{contours} --- The View Graph structures the extracted contour curves by storing their topology and geometry. It serves as a support for both determining their visibility efficiently and generating stylized strokes.}\label{fig:pig_viewmap}
\end{figure}

The View Graph stores the complete topology and geometry of the curves, in both 2D and 3D (\fig{viewgraph} and \ref{fig:fishtail}). It is composed of the line segments of each curve, and the singularities that connect them.

\begin{figure}
	\centering
	\small
	\begin{subfigure}[b]{0.8\linewidth}
		\def\svgwidth{\linewidth}\import{figures/smooth_contours/}{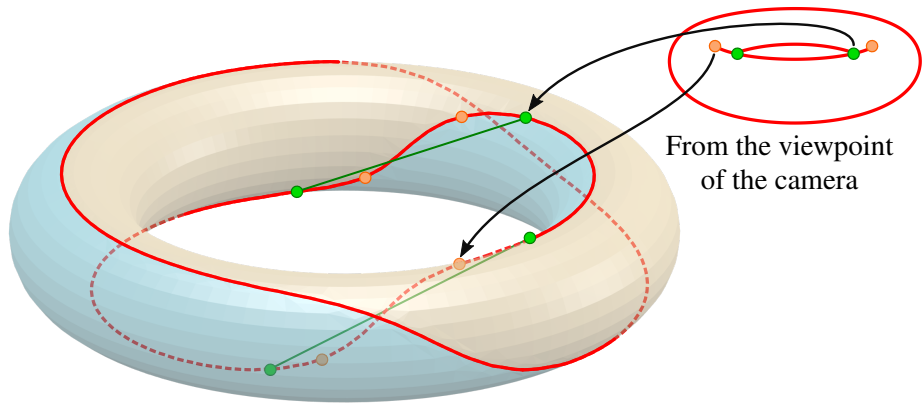}
		\caption{View Graph of a torus}
	\end{subfigure}
	\par\vspace{1em}
	\begin{subfigure}[b]{0.9\linewidth}
		\def\svgwidth{\linewidth}\import{figures/smooth_contours/}{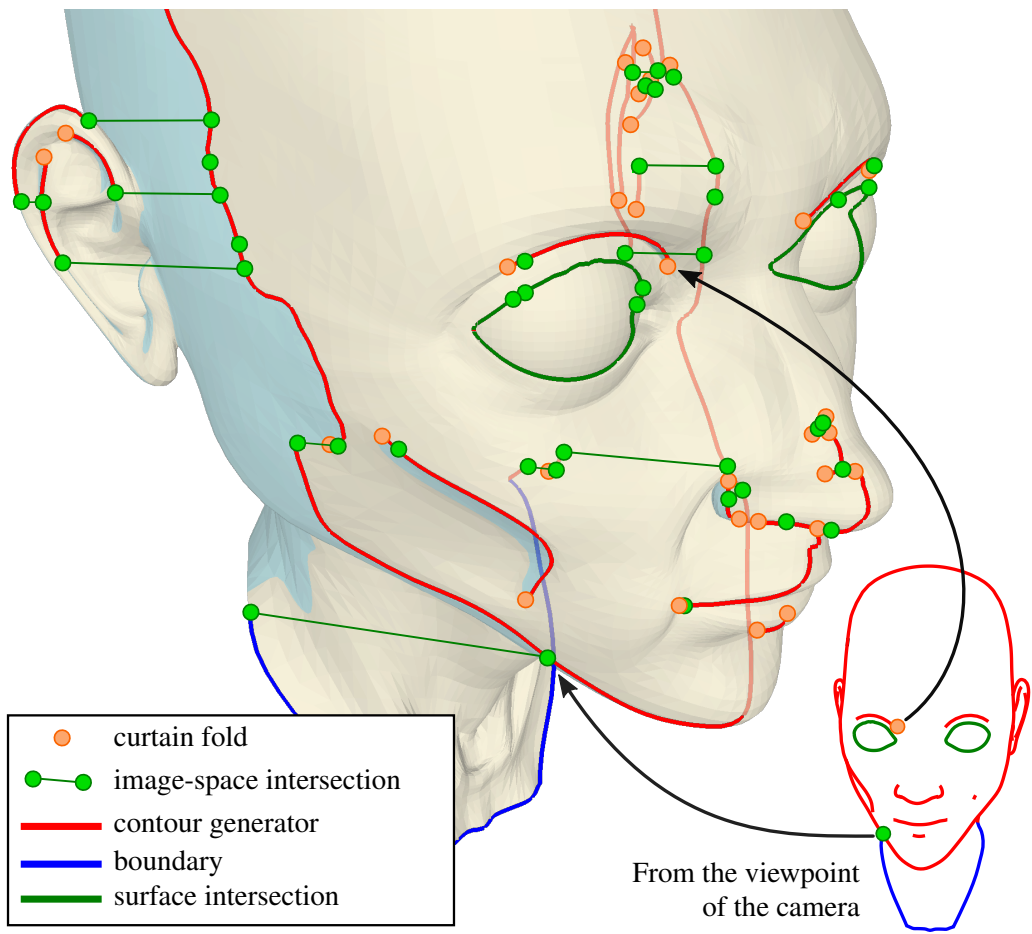}
		\caption{View Graph of the ``Angela'' model \ccCopy~Chris Landreth}
	\end{subfigure}
	\caption{
		\textbf{View Graph} from \citep{Benard:2014} --- Line segments (contour generators, boundaries and surface
		intersections) are combined into chains that terminate at singular points (curtain folds, image-space intersections). This network of chains is called the View Graph. The graph on the right shows only the visible chains.} \label{fig:viewgraph}
\end{figure}

\begin{figure}
	\centering
	\small
	\begin{subfigure}[b]{0.28\linewidth}
		\def\svgwidth{\linewidth}\import{figures/visibility/}{fishtail1.pdf_tex}
		\caption{Camera View}
	\end{subfigure}
	\quad
	\begin{subfigure}[b]{0.4\linewidth}
		\def\svgwidth{\linewidth}\import{figures/visibility/}{fishtail2.pdf_tex}
		\caption{Overhead view} 
	\end{subfigure}
	\quad
	\begin{subfigure}[b]{0.21\linewidth}
		\includegraphics[width=\linewidth]{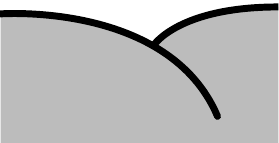}
		\caption{Rendering}
	\end{subfigure}
\caption{
	\textbf{The View Graph around a  ``fishtail'', a common contour shape} --- Studying renderings like these, and how the 2D figure relates to the 3D drawing, is very helpful in understanding a specific rendering. \textbf{(a)} The ``fishtail'' shape in image space includes two cusps, and a partially-occluded contour.
	\textbf{(b)} The overhead view shows the contour's path over the surface.  The contour separates front-facing and back-facing, and curtain fold cusps appear when the path switches direction. The curtain folds also separate convex from concave contours, which are always invisible.
	\textbf{(c)} A stylized rendering of this path produces this overlap drawing. This shape can occur at a large scale (e.g., a pair of hills or a puffy cloud), or at a subpixel level. In the latter case, we may wish to trim the extra bit of curve, as discussed in Section~\ref{sec:simplification}.
}
\label{fig:fishtail}
\end{figure}

The View Graph is built implicitly, in the algorithm described in the next section.  Each edge on the surface is converted to a \emph{line segment} data structure. Each line segment stores both the 2D and 3D positions of its endpoints, and the curve type (e.g., contour, boundary, etc.). The segment stores a pointer to the mesh face or edge that it lies on; each endpoint stores the vertex or edge it came from, if appropriate.  

Each line segment stores ``head'' and ``tail'' pointers, like a doubly-linked list.  The ``head'' pointer points either to the next edge in the list, or else to a singularity object; likewise, the tail pointer points the other way.  Each line segment also records whether or not it is visible; initially, visibility for all segments is marked as ``unknown.''

Each singularity records its type, and information specific to the singularity, e.g., a curtain fold points to the near and far line segments that it connects; an image-space intersection records the four line segments (near and far) that it connects.

Since any edge may have arbitrary numbers of overlapping curves, line segments may be broken repeatedly into smaller and smaller segments during construction of the View Graph.


As a simple example, once visibility is computed, one can draw a curve by starting at an arbitrary visible segment, and following pointers forward and backward, and continuing through singularities when possible, stopping only when the curve becomes invisible.  Concatenating the 2D positions visited along the way yields a curve to draw. 

\section{Curve-based visibility algorithms}\label{sec:vis_algo}

The  basic visibility algorithm, is as follows:
\begin{enumerate}
	\item \textbf{Detect all edges and project them to the image plane.} Each edge stores both the 2D and 3D coordinates of its vertices.
	\item \textbf{Optionally, mark locally-invisible curves}: mark concave contour edges as invisible.  Mark curves that lie entirely on, or adjacent to, back-faces as invisible.
	\item \textbf{Insert a singularity at each curtain fold vertex}.
	\item \textbf{Detect intersections on the surface}, i.e., when a boundary and contour edge pass through the same vertex, by iterating over all boundary vertices. Insert a singularity at the intersection point. Intersections involving two non-visibility-indicating curves can be ignored. 
	\item \textbf{Compute image-space intersections} between all pairs of edges; this can be done using a sweep-line algorithm in $O(n \log(n))$ time with $n$ edges, \eg{} \citep{Bentley:1979}.  Intersections where the near curve is not visibility-indicating can be ignored.  Intersections on the surface should be ignored, since they are handled in the previous step. 
	\textbf{Split the edges} at the intersection point and insert a singularity.  The near edge does not need to be split since its visibility will not change at the intersection; doing so may still be useful for later stylization.
	\item \textbf{For each edge where visibility is not yet marked, determine visibility} using a ray test to the center of the edge. Optionally, \textbf{propagate visibility} to adjacent edges, as described in the next section.
\end{enumerate}
As a reminder, the visibility-indicating curves are contours, boundaries, and surface intersections.   The above computations can be sped up by combining steps, e.g., the first four steps can all be performed with a single iteration over the mesh (or over the edges, for static meshes with one of the data structures of Section \ref{sec:dataStructures}).

The above algorithm assumes that all curves lie on edges. For curves within edges, such as surface-intersections and hatching curves, the same basic procedure is used as well.

In the above computations, if invisible edges will never been drawn, then steps involving edges already known to be invisible can be skipped, to speed up computations.  For example, concave contour edges can be omitted from the image-space intersection step.  However, invisible edges are necessary for hidden-line rendering, and useful for QI propagation (described in the next Section); they are also very useful for visualization and debugging.

\citet{Markosian:1997} also point out that curves adjacent to the scene's bounding box in image space must be visible. This is only useful for situations such as viewing only a single object in isolation, as opposed to entering a full 3D environment. A simple way to use this observation is to  
find the contour or boundary points with the maximum and mininimum $x$ and $y$ values; those points must be visible.

\paragraph{Visibility propagation.}
In the most basic version of the above algorithm, we perform a ray test for each edge that is not a concave contour. However, ray tests are computationally expensive, and we would like to perform as few of them as possible.

When two curve edges are connected at a shared endpoint, and there is no singularity, then the two edges must have the same visibility. Using this observation, we can propagate visibility after each ray test, following connections between edges until reaching a singularity. This simple propagation substantially reduces the number of ray tests required.

\paragraph{Implementation choices and numerics.}

In practice, there are many different ways to implement the algorithms in this section. One might first implement the vanilla ray-test algorithm. One then might implement grouping sequences of edges into singularity-free \textit{chains}, and do one ray test per chain, then implement visibility propagation between chains. One can then add in additional constraints, e.g., concave contours must be invisible, and far edges of intersections must be invisible. At each phase of implementation, the algorithm should work correctly; each additional piece then accelerates the computation.  On the other hand, implementing chains in the visibility pipeline adds considerable implementation complexity for a questionable amount of benefit.

There are multiple constraints on visibility that can be exploited for debugging. For example, if a ray test marks a concave contour as visible, then there is a bug or numerical error in either the concavity test or the ray test.  

In practice, any of these tests can be corrupted by numerical errors.  One heuristic is to ignore tests that are close to a threshold, or to vote among multiple tests (e.g., multiple ray tests at different points on a chain). Numerical issues are discussed more in Appendix \ref{app:numerical}.

\section{Quantitative Invisibility} 

We can propagate visibility information even further --- and thus reduce the number of ray tests --- by using the concept of \emph{Quantitative Invisibility} (QI)~\citep{Appel:1967,Markosian:1997}. 
The QI value of a point is the number of occluders of the  point. A visible point has QI of zero (\eg{} point $\vec{p}_2$ in \fig{occlusions}). A point blocked by two surfaces has QI of two. In practice, the QI of a point $\vec{p}$ can be computed by counting the number of mesh faces that intersect the line segment $\overline{\vec{pc}}$, excluding the face containing $\vec{p}$.

\begin{figure}
	\centering
	\small
	\def\svgwidth{\textwidth}\import{figures/visibility/}{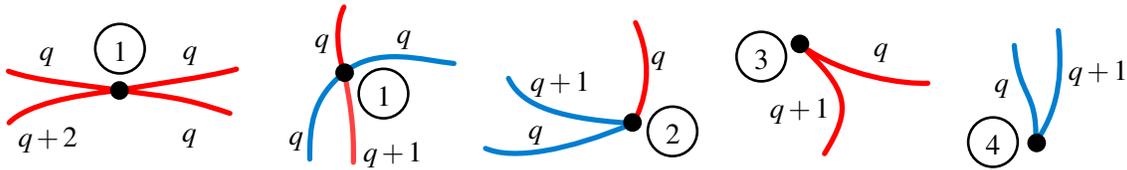}
	\caption{\textbf{View Graph \& QI propagation} --- QI values can be propagated at image-space intersections and curtain folds.
	{\large\textcircled{\footnotesize 1}} T-junctions at image-space intersections, {\large\textcircled{\footnotesize 2}} Y-junctions between a contour generator, two boundary edges, {\large\textcircled{\footnotesize 3}} contour generator curtain folds, {\large\textcircled{\footnotesize 4}} boundary curtain folds.
	}\label{fig:view_graph}
\end{figure}

Each type of singularity in the View Graph imposes different constraints on the QI. QI values may be propagated over the surface by the following rules:

\begin{itemize}
\item \textbf{Front/back, convex/concave}: Points on back-faces, and concave contour edges, must have QI greater than zero. 

\item \textbf{The image-space bounding box of the scene must be visible}. For example, if a single object is viewed in isolation, all of the outer edges (the outer silhouette) must be visible. A practical test is to find all 2D edges with minimum and maximum $x$ and $y$ coordinates.

\item \textbf{Image-space intersection}: suppose that the nearest edge to the camera is a contour with QI value $q$.  The occluded far edge must have QI of $q+2$, and the other side must have QI of $q$ (\fig{view_graph}~{\large\textcircled{\footnotesize 1}}). If the occluding curve is a boundary, then the occluded far edge must have QI of $q+1$.

\item \textbf{Intersection on the surface where a contour terminates at a boundary}.
The boundary curve and the near contour generator must have the same QI of $q$. The far boundary edge may have QI of either $q$ or $q+1$ (\fig{view_graph}~{\large\textcircled{\footnotesize 2}}).  The specific value can be determined by a local overlap test, similar to the boundary curtain fold detection test.

\item \textbf{At a curtain fold:} if the near edge has a QI of $q$, the far edge will have a QI of at least $q + 1$ (\fig{view_graph}~{\large\textcircled{\footnotesize 3}},{\large\textcircled{\footnotesize 4}}).  However, the far edge's QI could be higher in some exotic, unusual cases.  For example, in a boundary curtain fold where the one-ring neighborhood spirals multiple times around the vertex like a fusilli pasta, the QI could increase more than 1. Hence, a local overlap test is necessary to count how many triangles in the vertex's one-ring overlap the far edge.



\item \textbf{Other cases where multiple curves meet, including  bifurcations}: the differences in QI between the adjacent curves can be determined using local overlap tests, similar to the boundary curtain fold detection test. For example, two curves that meet at a bifurcation that are not occluded by any triangles in the one-ring must have the same QI.

\end{itemize}

Hence, the resulting algorithm begins by first building the View Graph. The QI for most edges is initially marked as ``unknown,'' though some edges can also be marked as ``invisible'' ($q>0$), such as concave contours.
 A single ray test is performed at some edge with unknown QI. By propagating this value through the view map, the QI can be determined for every edge in this edge's connected component. Hence, at most one ray test is necessary for each connected component. It is possible to determine some connected components' visibility without any ray tests at all. Propagating lower-bounds on QI increases the number of cases where this works. For example, a concave edge has $q>0$; if it is the near edge at a curtain fold, then the far edge has $q>1$.

\section{Planar Maps} \label{sec:planar_map}

The Planar Map is a generalization of the View Graph that provides a more complete representation for artistic rendering: it represents not just the curves in a drawing, but also the regions between them.  Given a Planar Map, one could theoretically stylize regions, strokes and their relationships in a more coherent way.

The Planar Map is a concept originally from graph theory.  
Given a set of 2D curves $\mathcal{C}$, the Planar Map corresponds to the \emph{arrangement} $\mathcal{A}(\mathcal{C})$ of those curves, that is the partition of the plane into 0-, 1- and 2-dimensional cells (\ie{} vertices, edges and faces) induced by the curves in $\mathcal{C}$. This partitioning is coupled with an incidence graph which allows navigation between adjacent cells.

Intuitively, the Planar Map is constructed by the following procedure.
Specifically, all mesh faces are projected into the image plane. Faces that are completely occluded are discarded; faces that are partially occluded are subdivided in their visible and invisible parts, and their image-space adjacency information is updated.


\begin{figure}
	\centering
	\small
	\def\svgwidth{\linewidth}\import{figures/visibility/}{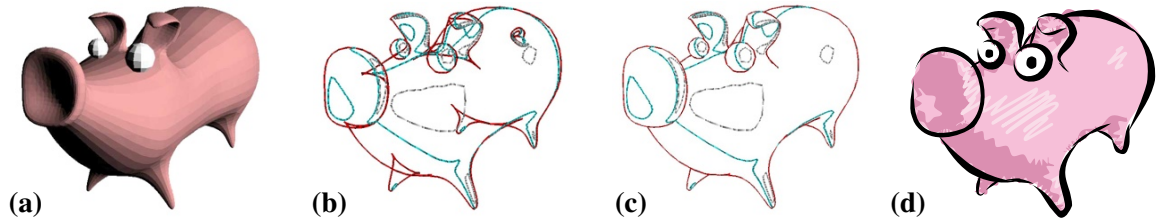}\caption{\textbf{Planar Map} (from \citet{Eisemann:2008}) --- Starting from a 3D model \textbf{(a)}, contours and isophotes are first extracted \textbf{(b)} and their visibility is computed by constructing a Planar Map \textbf{(c)} that yields a base vector depiction for stylization \textbf{(d)}.
Isophotes are curves with constant shading, i.e., $(\vec{c}-\vec{p})\cdot \vec{n}=\alpha$ for some constant $\alpha$. 
	}\label{fig:clipartmesh}
\end{figure}

If $\mathcal{C}$ contains the projection into the image plane of the contour generator and boundary curves, then the Planar Map $\mathcal{A}(\mathcal{C})$ corresponds to a generalization of the view graph presented in \sect{sec:view_graph}, one that includes the 2D regions bordered by the curves. By only keeping the closest cells to the camera, visibility can be determined (\fig{planar_map}).

\citet{Winkenbach:1994} introduced the use of Planar Maps for stylized rendering.  They used a 3D BSP tree to compute the visibility of the mesh faces~\citep{Fuchs:1980}, and a 2D BSP tree to build a partition of the image plane according to the visible faces. From this 2D BSP tree, they construct the Planar Map to have direct access to 2D adjacency information. They showed that this representation allows one not just to stylize contours, but to stylize the regions between the contours.  

Computing visibility with BSP trees is both very expensive and numerically sensitive, even for simple models. A modern implementation of 2D arrangements is offered by the CGAL library~\citep{Fogel:2012}.

\citet{Eisemann:2008} compute an approximate Planar Map by first computing the View Graph, and then joining regions between curves, identifying 3D correspondence using hardware buffers (\fig{clipartmesh}). This method is sufficient when a precise mapping between geometry and image space is not needed.

\section{Non-orientable surfaces}
\label{sec:non-orientable}

As stated in Section \ref{sec:front_back}, this tutorial assumes that all surfaces are orientable, and that only front-faces may be visible. It is also possible to generalize these algorithms to handle non-orientable surfaces (Figure \ref{fig:klein_bottle}), though with some additional complexity.  This section outlines some of modifications to the algorithms of the last two chapters, though not all details will be spelled out.

We begin with contour detection between two triangles $\triangle \vec{a} \vec{b} \vec{c}$ and $\triangle \vec{a} \vec{b} \vec{d}$. We cannot directly use the front/back-facing test for contours, because facing direction is not defined on non-orientable surfaces.

For each triangle, there are two possible normals. For the first triangle, the possible normals are $$\vec{n}_1 = \pm \frac{(\vec{b}-\vec{a})\times (\vec{c}-\vec{a})}{|| (\vec{b}-\vec{a})\times (\vec{c}-\vec{a}) ||}.$$ The possible normals for the second triangle are similarly plus or minus the face normal.  

In order to determine whether an edge is a contour, we must determine a locally-consistent pair of normals, that is $\hat{\vec{n}}_1$ which is either $\vec{n}_1$ or $-\vec{n}_1$, and $\hat{\vec{n}}_2$ which is either $\vec{n}_2$ or $-\vec{n}_2$. Consistent and inconsistent cases are visualized in Figure \ref{fig:invalid_edge}.
The pair of normals is consistent as long as both normals agree as to whether the edge is convex or concave; see Appendix \ref{app:convex}.

\begin{figure}
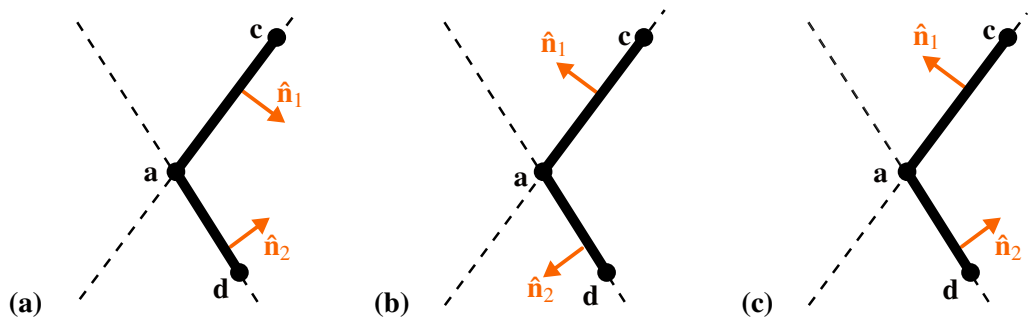

	\centering
	\small
	\textbf{(a)}~
	\def\svgwidth{0.22\textwidth}\import{figures/visibility/}{consistent_edge1.pdf_tex}	
	\qquad
	\textbf{(b)}~
	\def\svgwidth{0.22\textwidth}\import{figures/visibility/}{consistent_edge2.pdf_tex}	
	\qquad
	\textbf{(c)}~
	\def\svgwidth{0.22\textwidth}\import{figures/visibility/}{inconsistent_edge.pdf_tex}	
	\caption{\textbf{Valid/invalid configurations} ---
	Two adjacent triangles, shown in cross-section, with their assigned normals $\vec{\hat{n}}_1$ and $\vec{\hat{n}}_2$. The cross-section is some 3D plane perpendicular to the edge between the two triangles.
	\textbf{(a)} and \textbf{(b)} are valid configurations of the normals and \textbf{(c)} is invalid.
	}
	\label{fig:invalid_edge}
\end{figure}

Once we have computed the locally-oriented normals $\hat{\vec{n}}_1$ and $\hat{\vec{n}}_2$, we can use the local sign test to determine if the edge is a contour, checking if the sign of $(\vec{a} - \vec{c}) \cdot \hat{\vec{n}}_1$ is the same as the sign of $(\vec{a} - \vec{c}) \cdot \hat{\vec{n}}_2$.

These local orientations cannot be reused when looking at other faces; when determining whether $\triangle \vec{a} \vec{b} \vec{c}$ has a contour with one of its other neighbors, a local pair of normals must be computed for this pair of edges.

In general, any steps of the visibility algorithm that rely on the definition of front-facing or back-facing (or of convex or concave edges) must (a) be skipped if they are optional; (b) compute locally-consistent orientations before use; or (c) be replaced with a more general computation. For example, to determine if a vertex is a contour curtain fold, one could either compute locally-consistent orientations for the vertex's entire one-ring, or one could use the image-space self-overlap rule instead. One cannot assume that back-faces are invisible, as one can with oriented surfaces.  

\chapter{Fast Hardware-Based Extraction and Visibility}
\label{chap:fast_visibility}

 This chapter describes algorithms that use graphics hardware, such as multipass rendering and Graphics Processing Units (GPUs), to perform real-time contour detection and visibility.  This improves performance over the CPU algorithms of the past few chapters, which can be very slow, especially for very complex geometry. Hardware-based methods can be very fast; in return, they do not guarantee correctness in all cases.


The earliest hardware methods directly produce visible contours using two rendering passes on the graphics card (\sect{sec:twopass}), yet with limited stylization capabilities. Subsequent approaches massively parallelized the contour detection step (\sect{sec:gpu_extraction}) or the visibility computation (\sect{sec:gpu_visibility}) separately. They can be combined to render stylized lines at interactive framerates for complex 3D scenes.

\section{Two-pass hardware rendering} \label{sec:twopass}

 The basic idea of these approaches is to render the geometry twice: first, to fill the depth buffer, and then second, using modified geometry, to make the contours emerge from the rasterization.

\begin{figure}
	\centering
	\includegraphics[width=0.45\linewidth]{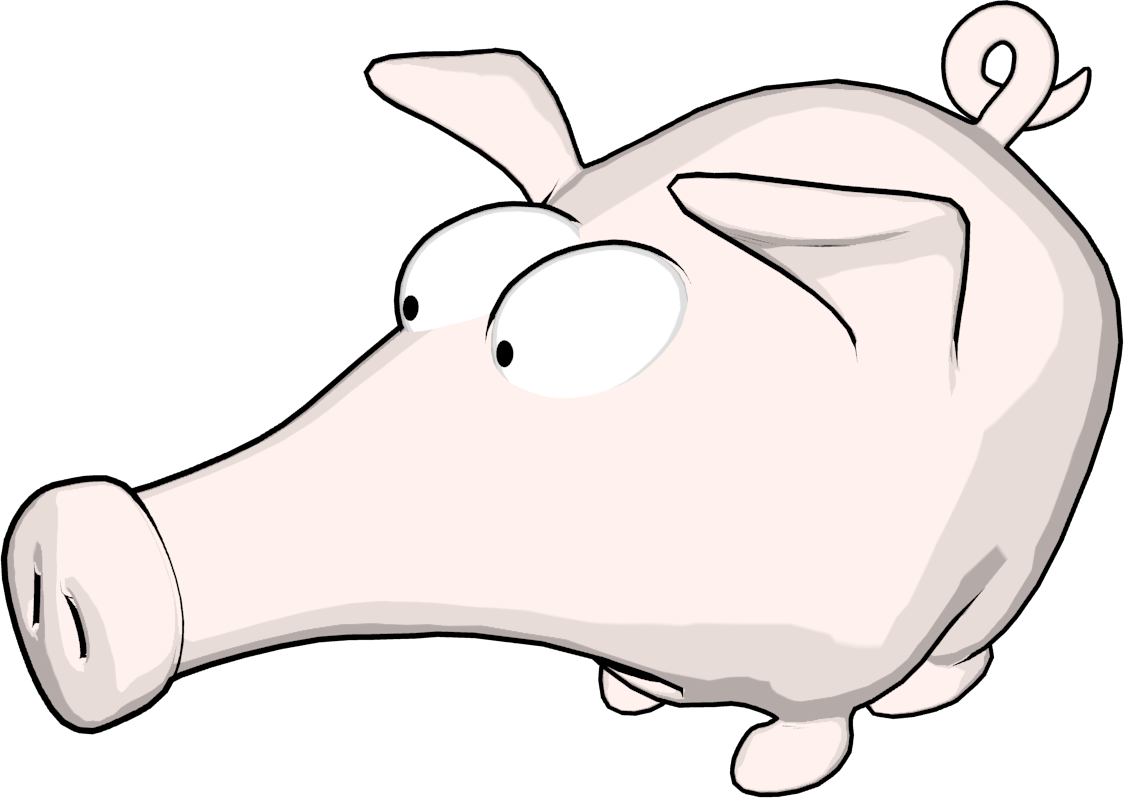}
	\quad
	\includegraphics[width=0.45\linewidth]{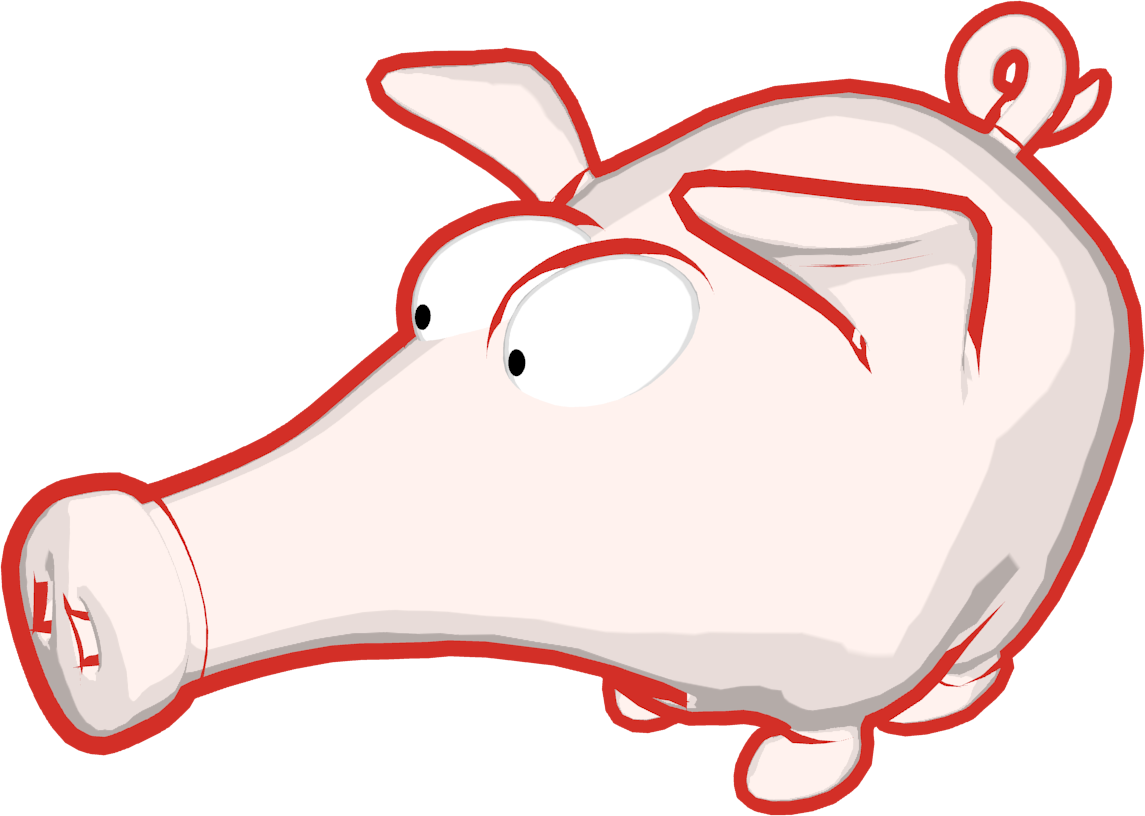}
	\caption{\textbf{Two-pass hardware rendering} --- After a first standard rendering pass, the scene is enlarged and only back-faces are rendered; the visible parts of the back-faces produce the black edges (left). The back-faces scaling factor and color allow to control the line width and color (right). Image computed with Blender Solidify modifier~\citep{BNPRSolidify}.}
	\label{fig:hybrid}
\end{figure}

For instance, \citet{Rossignac:1992} render the mesh in wireframe mode with thick lines, after a translation slightly away from the viewpoint. Alternatively, only back-facing polygons offset towards the camera can be rendered, such that they show through front-faces \citep{Raskar:1999,Gooch:1999}. Using the first generation of programmable hardware, a similar effect is achievable in one pass by enlarging all back-facing polygons in the Vertex Shader~\citep{Raskar:2001,Chen:2015:RTAS}.

Unlike the image-space filtering approaches (\chap{chap:image_space}), two-pass methods provide more stylization options, namely, more control over line thickness and color.  However, they do not provide stylization capabilities beyond these features. Furthermore, the need for two passes may make the method too slow for large models; nonetheless, they have been used in video games for simple models \citep{St-Amour:2010}. Finally, these methods are especially useful for rendering contours of unstructured geometric representation such as point clouds~\citep{Xu:2004}.

Subsequent approaches independently accelerate the contour extraction or the visibility computation using the graphics card.

\section{Contour extraction on the GPU} \label{sec:gpu_extraction}

Once the graphics pipeline offered programmable stages with Vertex and Fragment Shaders, GPU implementations mirroring the brute force CPU algorithm described in Section~\ref{sec:extraction} started to be possible. However, face adjacency information was not available initially. \citet{Card:2002,Brabec:2003,McGuire:2004b} circumvented this limitation by drawing every edge of the mesh as a quadrilateral fin, storing as vertex attributes their two adjacent face normals. The dot product of these normals with the view direction can then be performed in the Vertex Shader, and non-contour vertices can be discarded. 

With the introduction of the Geometry Shader stage, this is not required anymore~\citep{Stich:2007}. Regular mesh geometry with adjacency information can be sent to the GPU, and each face is then processed in parallel in the Geometry Shader. Some care must be taken not to detect the same contour edges twice, \eg{} by discarding back-faces~\citep{Hermosilla:2009}. \citet{Sander:2008} even obtained a speedup close to $2 \times$ compared to this naive GPU implementation using a scheme for efficient traversal of mesh edges. 

If the edges detected by the Geometry Shader are needed for a second rendering pass, a transform feedback operation can be used to read them back to the main GPU memory or even to the CPU.

\section{Hardware-accelerated visibility computation} \label{sec:gpu_visibility}

In this section, we present three hardware-based visibility techniques. In each case, the 3D curves are first detected either on the CPU or the GPU. 
The first two visibility techniques work at image-space pixel precision and thus tend to suffer from aliasing artifacts, whereas the last method is mostly resolution independent. However, the first method is extremely simple to implement; in the simplest version, it just requires adding some 3D line segments in a 3D renderer.

\subsection{Direct rendering with the depth buffer}
\label{sec:buffer}

Once the contour edges have been extracted either in software by one of the methods in Sections \ref{sec:extraction} and \ref{sec:interpolated_contours_detection} or on the GPU with the previous technique, the simplest and fastest solution to determine their visibility is to use the standard depth buffer algorithm. First, the 3D scene is rendered with depth writes enabled --- potentially disabling color writes if only lines should be rendered. The contour generator is then drawn as a 3D polyline with the less-or-equal depth function (\eg{} with \verb|glDepthFunc(GL_LEQUAL)| in OpenGL) and depth writes disabled. This way, line fragments occluded by the mesh geometry are automatically discarded during the depth test (\fig{depth_buffer}). To avoid ``depth fighting'' between the polyline and the underlying surface, a small offset can be applied to the fragments' depth  when rendering the mesh (\eg{} with \verb|glPolygonOffset| in OpenGL). Occlusion queries can be used if the result of the depth test needs to be read back on the CPU \citep{Eisemann:2008}.

\begin{figure}
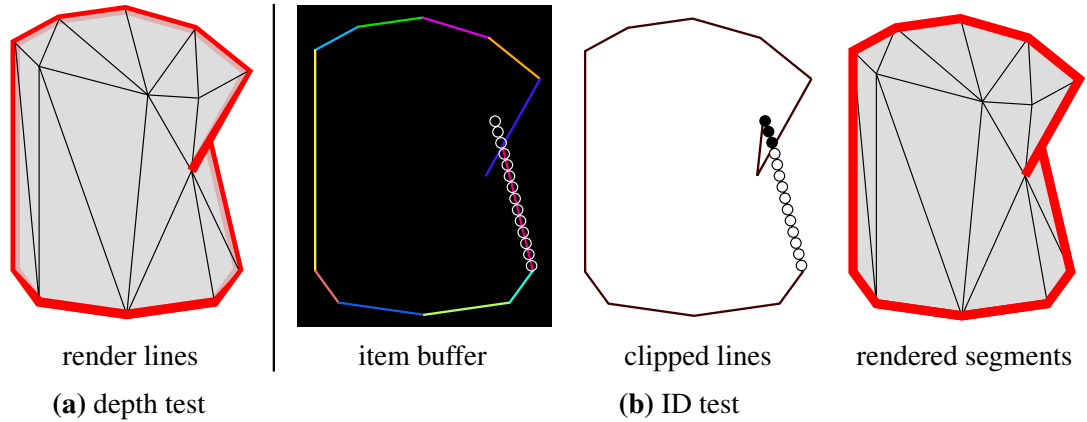

	\centering
	\small
	\begin{subfigure}[b]{0.22\linewidth}
		\def\svgwidth{\textwidth}\import{figures/visibility/}{depth_buffer.pdf_tex}\caption{depth test}\label{fig:depth_buffer}
	\end{subfigure}
	\hspace{1pt}
	\begin{subfigure}[b]{0.71\linewidth}
		\def\svgwidth{\textwidth}\import{figures/visibility/}{item_buffer.pdf_tex}\caption{ID test}\label{fig:item_buffer}
	\end{subfigure}
	\caption{\textbf{Buffer-based visibility} --- \textbf{(a)} Thick lines rendered with a simple depth test are irregularly occluded by the mesh geometry; \textbf{(b)} by first computing an item buffer with thin lines, and then probing visibility in this buffer (black and white circles) along the clipped lines, thick or stylized lines are correctly rendered. (The black wireframe is depicted for illustration purposes.)}\label{fig:buffers}
\end{figure}

This technique works well for pixel-wide line rendering, but thicker lines may sometimes partially disappear in the geometry. It is difficult to stylize curves this way, other than using thick lines. In addition, this depth test is unstable because contour edges are often adjacent to faces that are almost parallel to the viewing direction, and thus both edges and faces project to the same depth buffer pixels, but the faces, rendered first, wrote depth values closer to the viewer. To address this issue, \citet{Isenberg:2002} suggested modifying the depth test of a line fragment, by not only considering the single pixel of the depth buffer to which it projects, but also its $8 \times 8$ neighborhood. \citet{Cole:2010} proposed a full GPU implementation of such an approach, called the \emph{spine test}, also suggesting computation of the depth buffer at a higher resolution to reduce artifacts due to undersampling.
For detailed tutorial with a GLSL implementation of antialiased lines, see \citet{Rideout:2010}.

\section{Item buffer}

An alternative solution proposed by \citet{Northrup:2000} is based on an \emph{item buffer}, which had previously been used to accelerate ray-tracing~\citep{Weghorst:1984}. The idea is to render each line into an off-screen buffer (\eg{} a Framebuffer Object in OpenGL) with a thickness of one pixel and a unique color (ID). Each pixel of the item buffer eventually contains the unique color of a single visible line fragment at that pixel (\fig{item_buffer}). By scan-converting each line and reading the ID at the corresponding location in the item buffer, the visible portions of the line, called \emph{segments}, can be determined. Each segment, or chain of segments, can then be rendered with thick lines, or even more complex stylization effects (\sect{sec:stylization}). \citet{Kaplan:2007} showed that this approach can be extended to compute Quantitative Invisibility, thus allowing hidden line rendering with different styles, \eg~dotted lines. However, the item buffer suffers from two major limitations: it cannot be trivially anti-aliased, since line IDs cannot be averaged, and multiple lines cannot project to the same pixels even though they might all be partially visible. \citet{Cole:2008b} improved on those two aspects by computing a \emph{partial visibility} for each line, using super-sampling and ID peeling, but with a significant memory overhead (typically $\times 12$ to 16).

\section{Segment Atlas}

\begin{figure}
	\centering
	\small
	\def\svgwidth{\textwidth}\import{figures/visibility/}{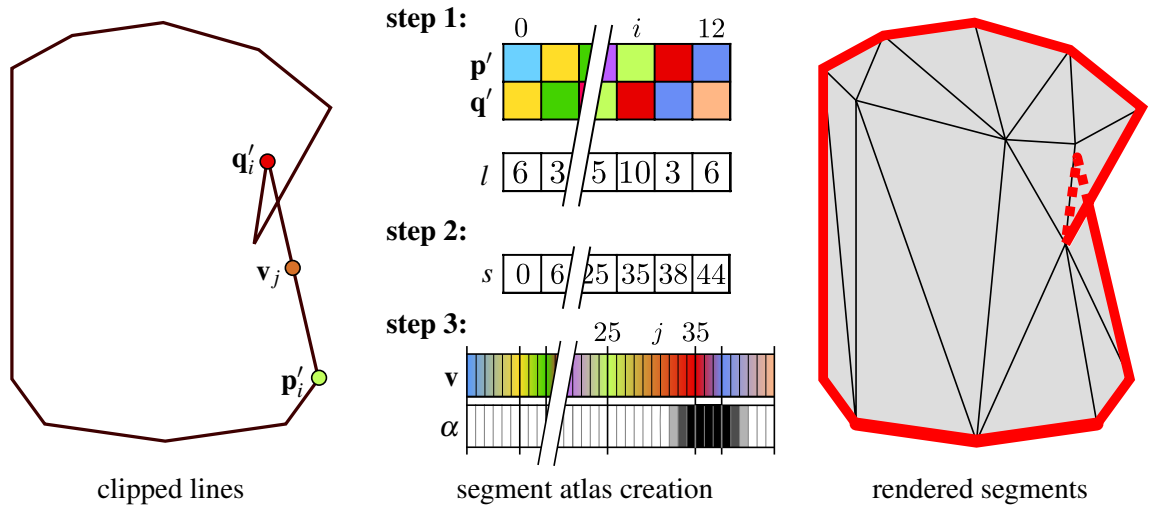}\caption{\textbf{Segment atlas} --- Each input 3D line $\overline{\vec{p}_i\vec{q}_i}$ is projected and clipped by a fragment shader, which also computes its associated number of visibility samples $l_i$. These samples are then converted into segment atlas offsets $s_i$ by a running sum. The positions $(\vec{p}_i',\vec{q}_i')$ are eventually interpolated and the resulting sample positions $\vec{v}_j$ are used to compute the partial visibility $\alpha_j$ at this location by reading the depth buffer. The clipped segments $\overline{\vec{p}_i'\vec{q}_i'}$ can then be rendered in screen-space leveraging partial visibility to modulate the style of the line. (The black wireframe is depicted for illustration purposes.)}\label{fig:segment_atlas}
\end{figure}

\citet{Cole:2010} circumvented the limitations of the item buffer by introducing a novel data-structure, called the \emph{segment atlas}, that stores visibility samples along each line segment independently of their actual screen position. The segment atlas is created on the GPU in three steps (\fig{segment_atlas}). First, the input 3D lines are projected and clipped to the camera frustum with a dedicated fragment shader. For each 3D line $\overline{\vec{p}_i\vec{q}_i}$, the position of its endpoints $(\vec{p}_i',\vec{q}_i')$ in homogenous clip space are stored inside a GPU buffer along with a number $l_i$ of visibility samples proportional to the screen space length of the line (potentially equal for maximum precision). During a second pass, a running sum turns the sample counts $l_i$ into segment atlas offsets $s_i$. In a third step, the sample positions $\vec{v}_j$ are effectively created. Each clipped segment $\overline{\vec{p}_i'\vec{q}_i'}$ is discretized by generating a line from $s_i$ to $s_i + l_i$ in a geometry shader and letting the rasterizer interpolate the endpoint positions. For each generated fragment, a shader performs the perspective division and viewport transformation to produce the screen-space coordinate $\vec{v}_j$ of the sample. The depth buffer is then probed at this position and the returned value is compared with the sample own depth value; the partial visibility resulting from this test $\alpha_j$ is written in the segment atlas, by construction, at the proper location. Finally, the line segments (or chains of segments with little modifications) can be rendered with arbitrary thickness and style using the fragment-level visibility information provided by the segment atlas (Figure \ref{fig:segmentAtlas_stylization}). This method is up to $4 \times$ slower than direct OpenGL rendering (\sect{sec:buffer}) for small 3D models, but $2\times$ slower (or better) for complex meshes.

\begin{figure}
	\centering
	\includegraphics[width=0.45\linewidth]{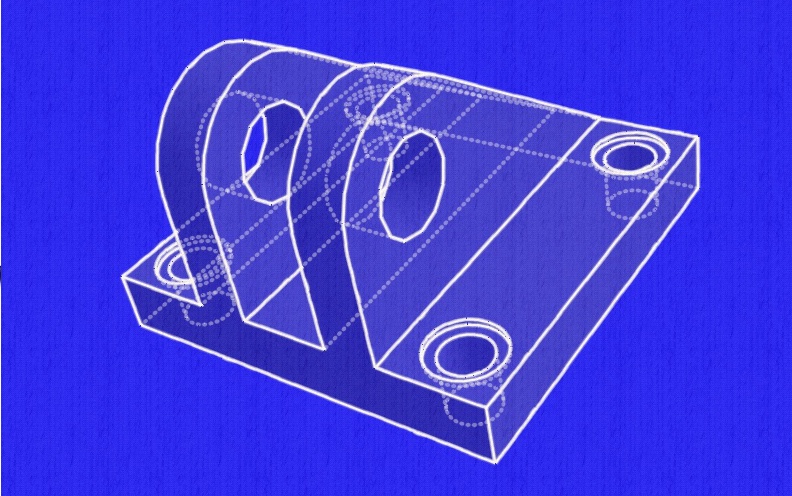}
	\quad
	\includegraphics[width=0.45\linewidth]{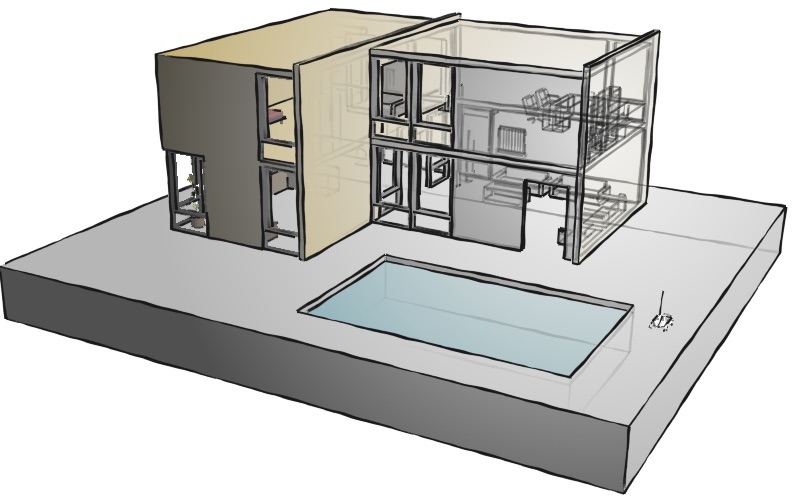}
	\caption{\textbf{Stylized line renderings using the Segment Atlas algorithm} \citep{Cole:2010} --- Hidden lines are included in these renderings. Images generated with ``dpix''~\citep{dpix}.}
	\label{fig:segmentAtlas_stylization}
\end{figure}


\chapter{Smooth Surfaces as Meshes}
\label{chap:smooth_as_meshes}

The algorithms described in the previous chapters work for polyhedral meshes. 
In this chapter, we describe heuristics for treating smooth surfaces as meshes for rendering.  In Chapter~\ref{chap:smooth_contours}, we will begin formal discussion of the theory and algorithms for smooth surfaces; using this theory avoids the problems with these heuristics.

\section{The ups and downs of mesh rendering for smooth surfaces}
\label{sec:ups_downs}

Many previous researchers have taken the approach of converting their smooth surface into a triangle mesh, and then computing the contours of that mesh.  This may seem like a sensible strategy, as it is common in computer graphics to tessellate a smooth surface and simply render the tessellation.
If the contours will be directly rendered as line segments, e.g., thick black lines (as in Chapters \ref{chap:image_space}), then this method produces good results.

Unfortunately, for stylized curves, this strategy leads to numerous artifacts, as illustrated in Figure \ref{fig:red_mesh}.  In fact, the topology of the contours will invariably get worse: smooth contours cannot exhibit bifurcations, but mesh contours can exhibit arbitrary branching. 

\begin{figure*}
  \centering
  \small
  \begin{subfigure}[b]{0.36\linewidth}
  \includegraphics[width=\textwidth]{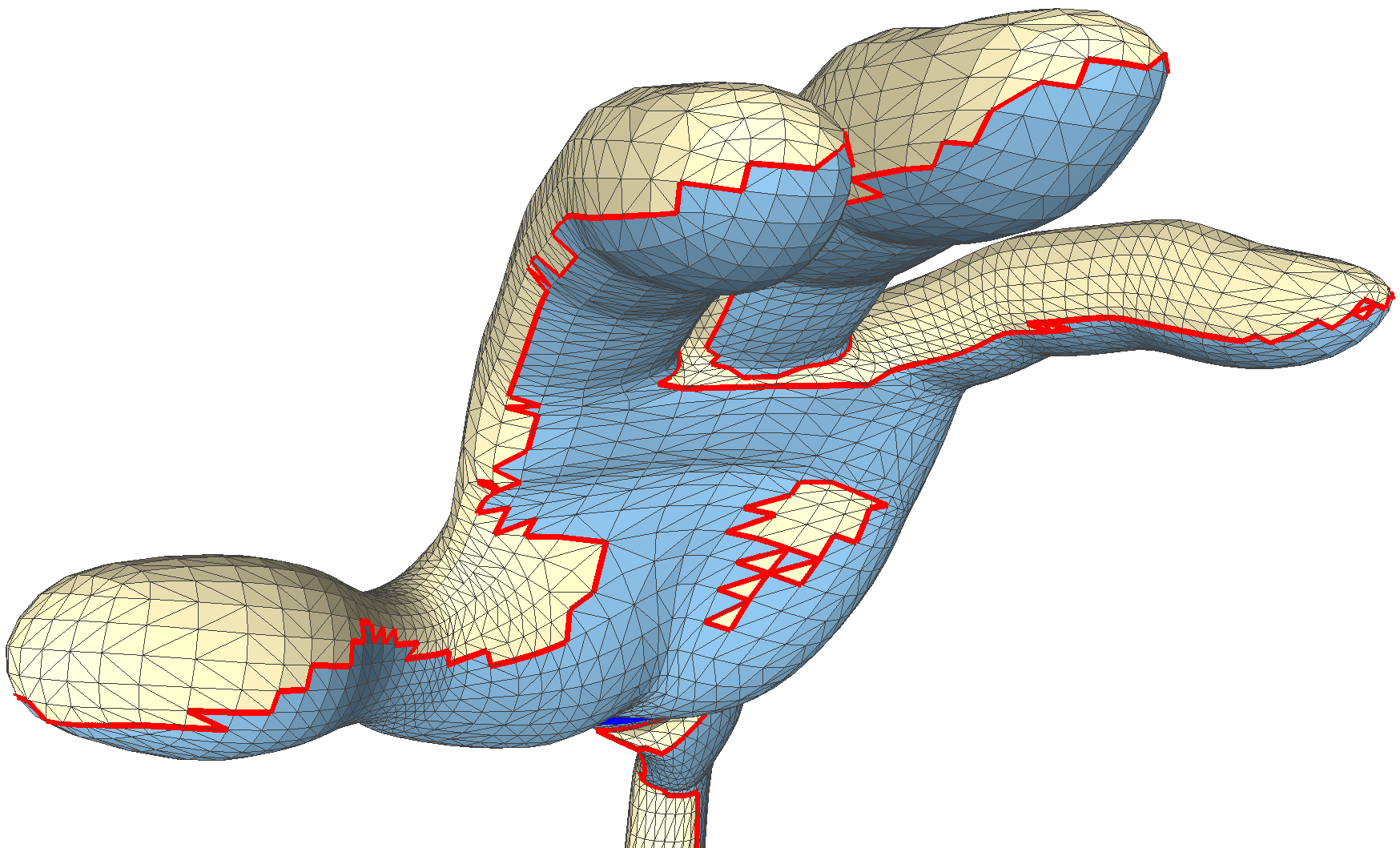}
	  \caption{Mesh contours}
  \end{subfigure}
  \begin{subfigure}[b]{0.3\linewidth}
  \includegraphics[width=\textwidth]{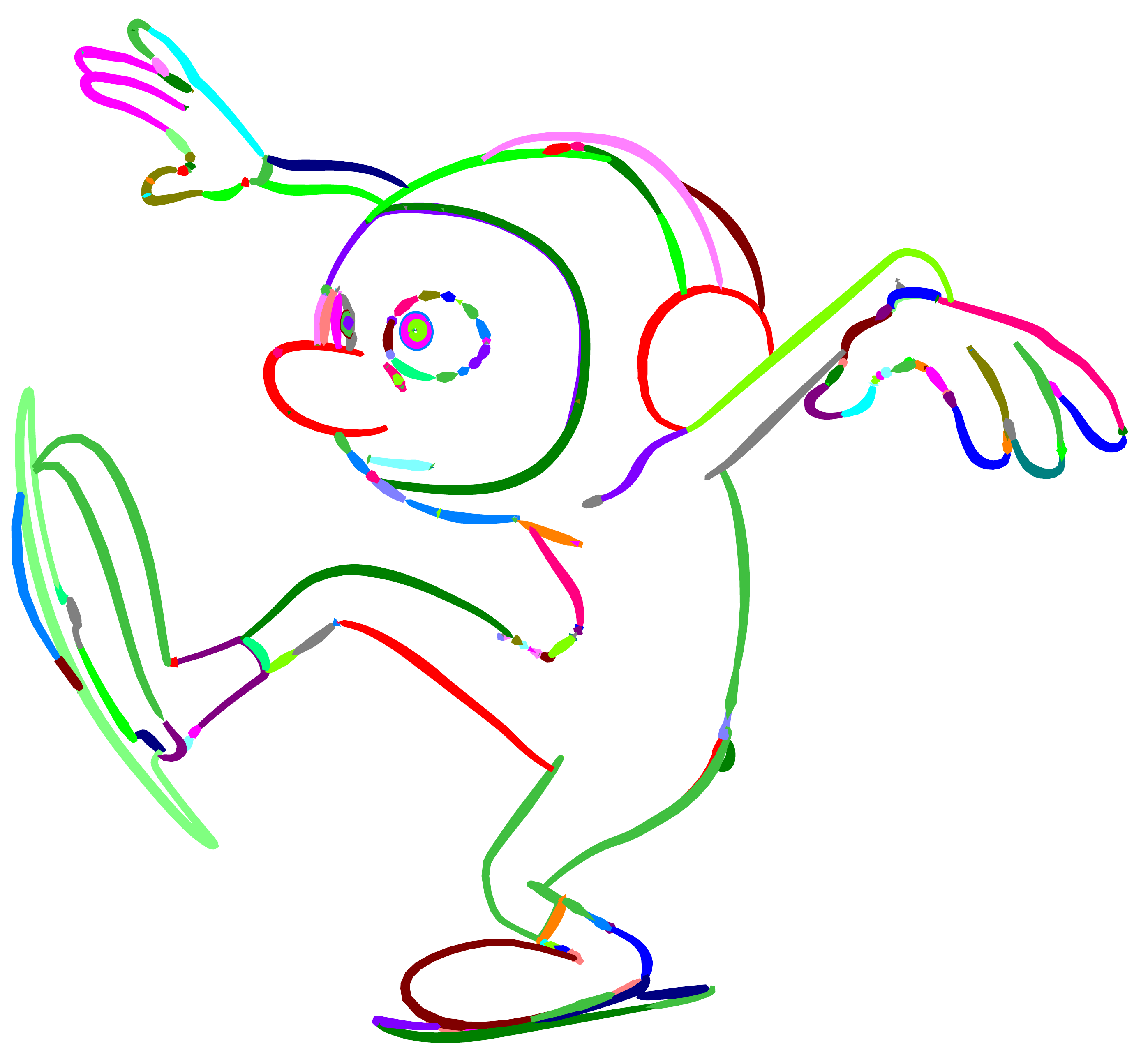}
  \caption{Chained curves}
  \end{subfigure}
  \begin{subfigure}[b]{0.32\linewidth}
    \includegraphics[width=\textwidth]{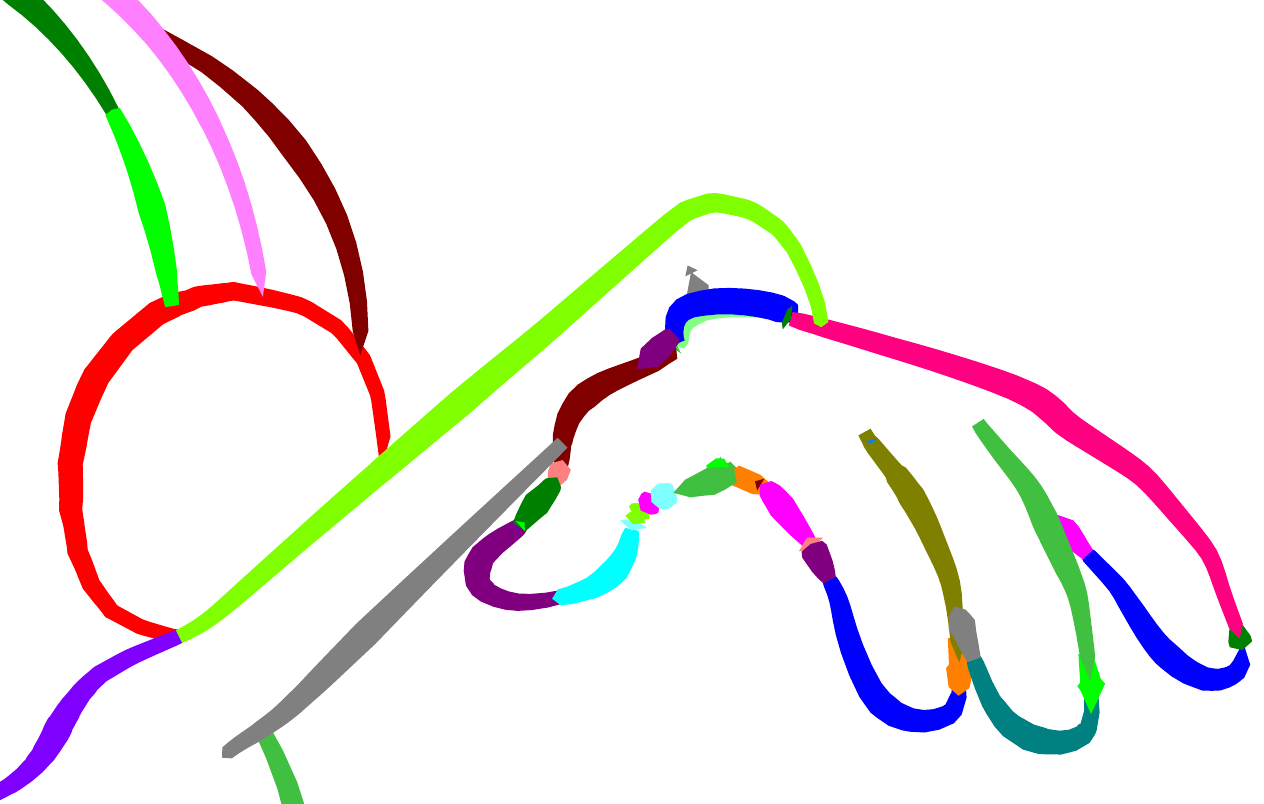}
    \caption{Closeup view on \textbf{(b)}}
  \end{subfigure}
  \caption{ \label{fig:red_mesh} \textbf{Mesh contours of a smooth surface} \citep{Benard:2014} --- 
  The original surface is shown in \fig{red_intersections}.
  \textbf{(a)} Converting the surface to a triangle mesh and extracting contours produces overly complex topology, including many bifurcations not present in the smooth surface's contours.
  \textbf{(b-c)} Connecting chains of edges that end at singularities produces many small chains, not directly suitable for stylization. Each chain is shown in a different color. ``Red'' \ccCopy Disney/Pixar
  }
\end{figure*}

Sometimes one starts from a mesh that begins polyhedral, such as geometry from web repositories or range scanners.  Nonetheless, the contours may be unexpectedly messy, because the underlying surface is smooth, even if the representation is not.  This may surprise the practitioner who is used to using triangle meshes for smooth objects.

Over the years, many researchers have observed the problems with mesh contours and developed heuristics to address them.   Using these kinds of heuristics can be effective when speed and simplicity is valued over perfection. This chapter describes heuristics for identifying cleaner contours, and Section \ref{sec:simplification} describes heuristics for cleaning up mesh contours as a post-process.

Because these methods are heuristic, animating these curves usually produces some flickering artifacts, where curve sections appear and disappear. It is often argued that these artifacts are allowable, because hand-drawn animation typically exhibits some flickering that gives it a sense of imperfection and life. However, the flickering artifacts are often qualitatively different from hand-drawn animation: these artifacts are errors that humans would not normally make.

In our experience, every researcher who encounters these problem expects there to be a simple fix.  They immediately propose their own clever solution. Efforts to turn these clever solutions into perfect results have always failed. 

We caution the reader that these heuristics will never produce perfectly correct curve topology. If you convert a smooth surface to a mesh, then you have discarded information about the true contour topology, and cannot recover it from the mesh. Using these heuristics, one should expect that curves will occasionally be connected incorrectly, and outlines might temporarily vanish. 
If you are working in an environment where visual perfectionism is valued over expediency, then trying eliminate errors from mesh contour rendering will lead to endless frustration. 

On the other hand, for most cases, these heuristics are usually good enough.
Accurate visibility, as discussed in Chapter~\ref{chap:smooth_contours}, is more difficult.

\section{Interpolated Contours}
\label{sec:interpolated_contours}

The Interpolated Contours approach \citep{Hertzmann:2000} produces smooth contours from meshes, by using some ideas from smooth surface contours, but applied on meshes. It produces cleaner contours by design, e.g., bifurcations are not possible.  The approach is analogous to Phong shading in computer graphics, in which a mesh is treated as if it has smoothly-varying normals.  Interpolated Contours has been used in many research projects, including Freestyle \citep{Grabli:2010}, now in Blender.

The general outline of the method is the same as for mesh contours: detect contours, detect singularities, and then compute visibility by ray tests and visibility propagation. The main difference is that Interpolated Contours pass within faces rather than on mesh edges.

\subsection {Contour definition and detection}
\label{sec:interpolated_contours_detection}

\begin{figure}
  \centering
  \small
  \def\svgwidth{0.4\textwidth}\import{figures/smooth_contours/}{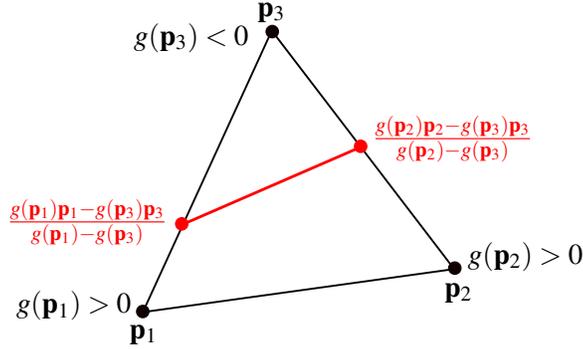}\caption{\textbf{Linear interpolation} --- The function $g(\vec{p})$ is linearly interpolated along the oriented edges $(\vec{p}_3-\vec{p}_1)$ and $(\vec{p}_2-\vec{p}_3)$ of the triangle to find the endpoints of the line segment approximating the contour generator within the face.
  }\label{fig:linear_interpolation}
  \label{fig:linearContour}
\end{figure}

The approach is as follows. First, we assign a ``fake'' normal vector to each mesh vertex.  As in Phong shading, this normal vector is a weighted average of the normals $\vec{n}_j$ of the adjacent faces, weighted by triangle areas $A_j$. This vector should be normalized to be a unit vector:
\begin{align*}
\vec{\hat{n}}_i &= \frac{\sum_{j \in N(i)} A_j \vec{n}_j} {\sum_{j \in N(i)} A_j} \\
\vec{n}_i &= \vec{\hat{n}}_i / || \vec{\hat{n}}_i ||
\end{align*}
with $N(i)$ the one-ring face neighborhood of vertex $i$. One may also use the more robust weights of \citet{Max:1999}.

For a given camera center $\vec{c}$, we define the orientation function $g$ at a vertex $i$ as:
$$g(\vec{p}_i) = (\vec{p}_i - \vec{c}) \cdot \vec{n}_i.$$
A vertex with $g(\vec{p}) > 0$ is considered to be front-facing, and with $g(\vec{p}) < 0$ is back-facing. The generic position assumption implies that we cannot have $g(\vec{p}) = 0$ at a vertex.

We then linearly interpolate $g(\vec{p})$ within the face, which is equivalent to linearly interpolating the normals within a face.  This defines the orientation function over the entire surface as a piecewise linear function.

A face in which the orientation function has opposite signs at two vertices contains a contour.  This contour is a line segment within the face (Figure \ref{fig:linearContour}).  The endpoints of the line segment are found as follows.

On the edge between the vertex $i$ and the vertex $j$, the linear interpolation is:
$$g(t) = (1-t)g(\vec{p}_i) + t g(\vec{p}_j).$$
A contour crosses this edge when the sign of $g(\vec{p}_i)$ is opposite the sign of $g(\vec{p}_j)$. Solving for $g(t) = 0$ gives the contour point position as (\fig{linear_interpolation}):
\begin{align}\label{eq:interpolation}
  t          & = \frac{g(\vec{p}_i)}{g(\vec{p}_i)-g(\vec{p}_j)},                                                     \nonumber   \\
  \vec{p}(t) & = (1-t)\vec{p}_i + t \vec{p}_j = \frac{g(\vec{p}_i)\vec{p}_i - g(\vec{p}_j)\vec{p}_j}{g(\vec{p}_i)-g(\vec{p}_j)}.
\end{align}

As illustrated in Figures \ref{fig:interp_torus} and \ref{fig:interp_red}, these contours typically have much smoother and coherent topology than the mesh contours.

\begin{figure}
  \centering
  \small
  \begin{subfigure}[t]{0.48\linewidth}
    \includegraphics[width=\linewidth]{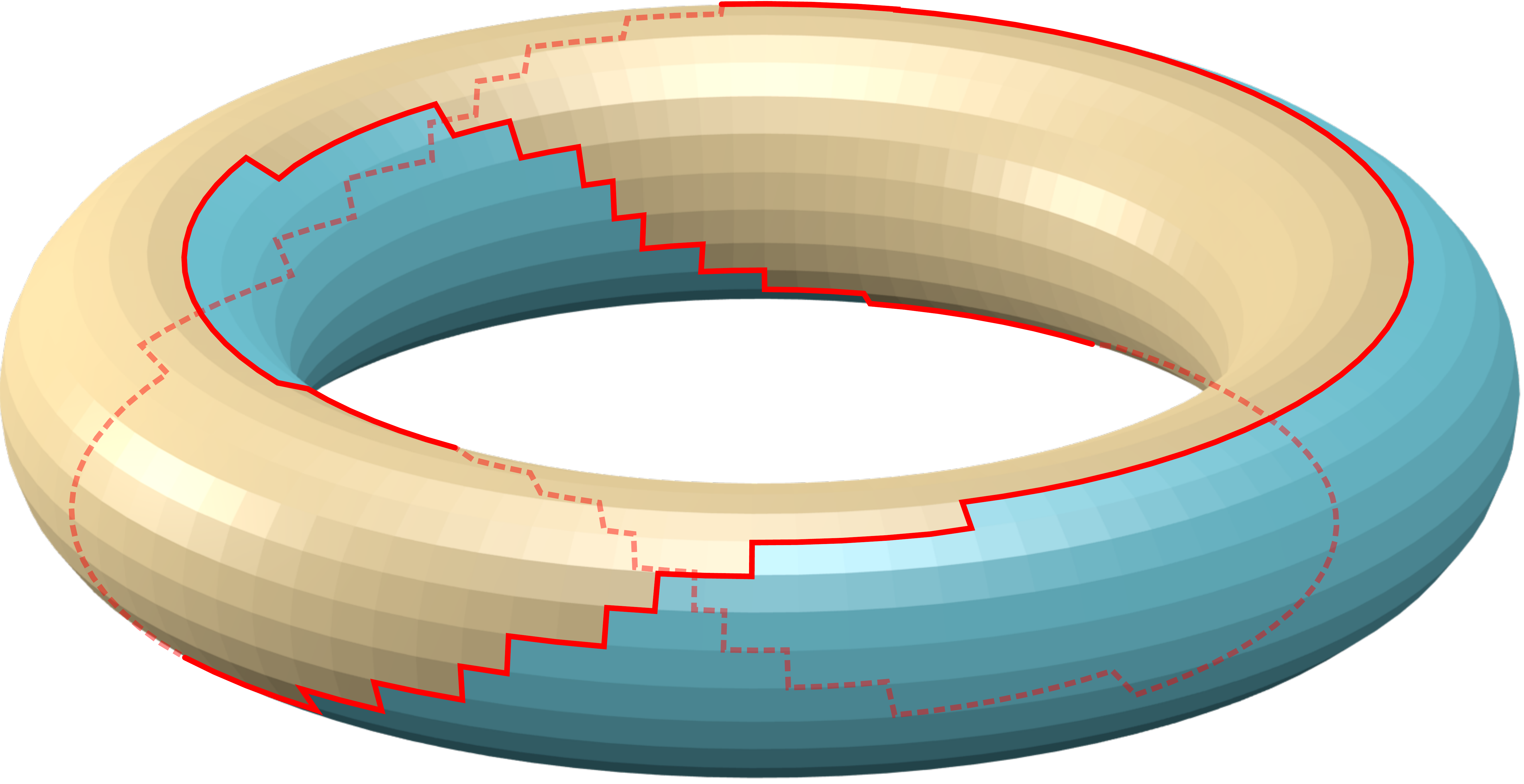}
    \caption{mesh contours}
  \end{subfigure}
  \quad
  \begin{subfigure}[t]{0.48\linewidth}
    \includegraphics[width=\linewidth]{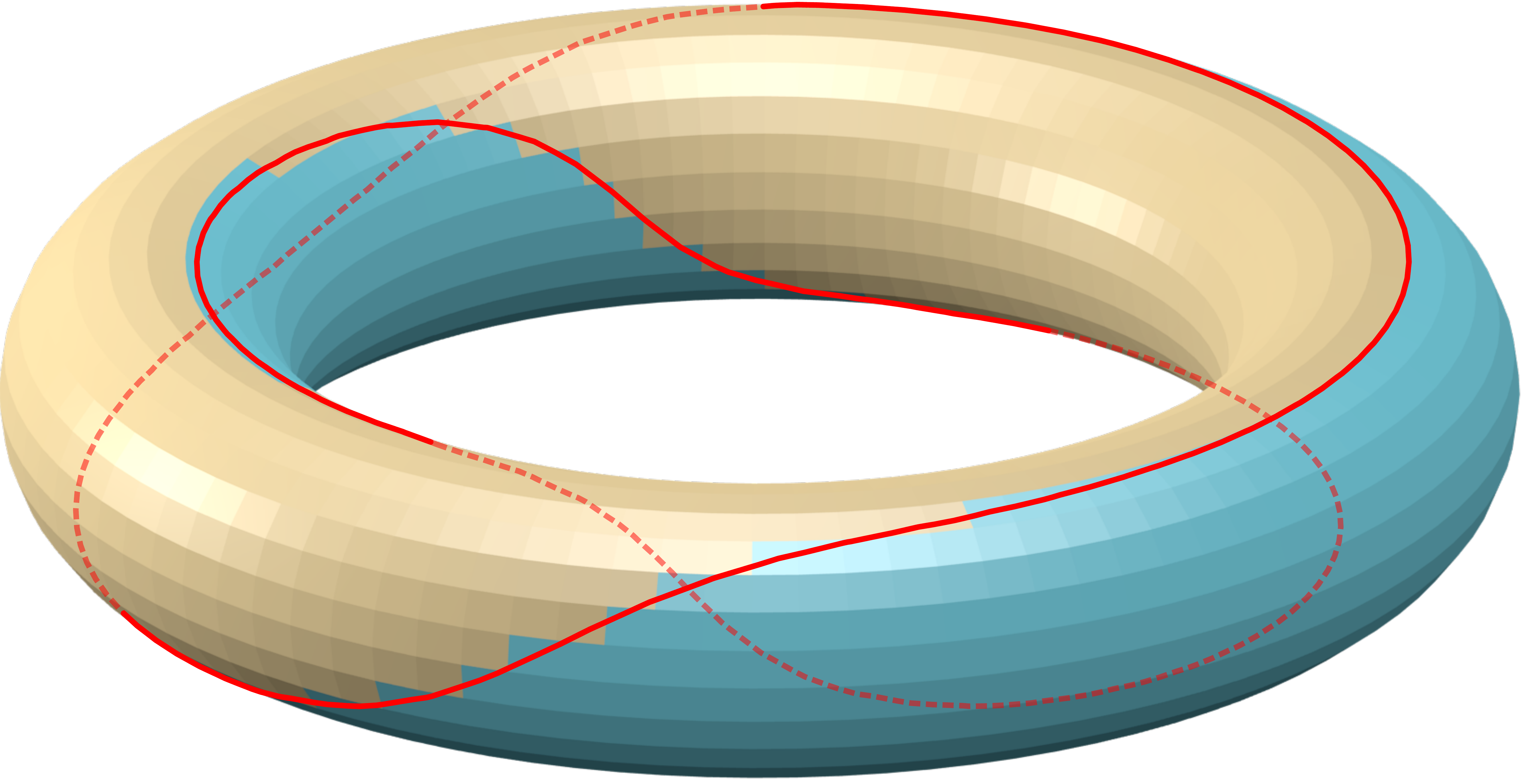}
    \caption{Interpolated Contours}\label{fig:interp_contours}
  \end{subfigure}
  \caption{\textbf{Mesh vs. Interpolated Contours} --- Unlike mesh contours \textbf{(a)}, the piecewise linear approximation of the contour generator \textbf{(b)} crosses back-faces of the polygonal mesh and may thus be hidden by front-faces closer to the camera.
  \label{fig:interp_torus}
  }
\end{figure}

\begin{figure*}
  \centering
  \begin{subfigure}[b]{0.36\linewidth}
    \includegraphics[width=\textwidth]{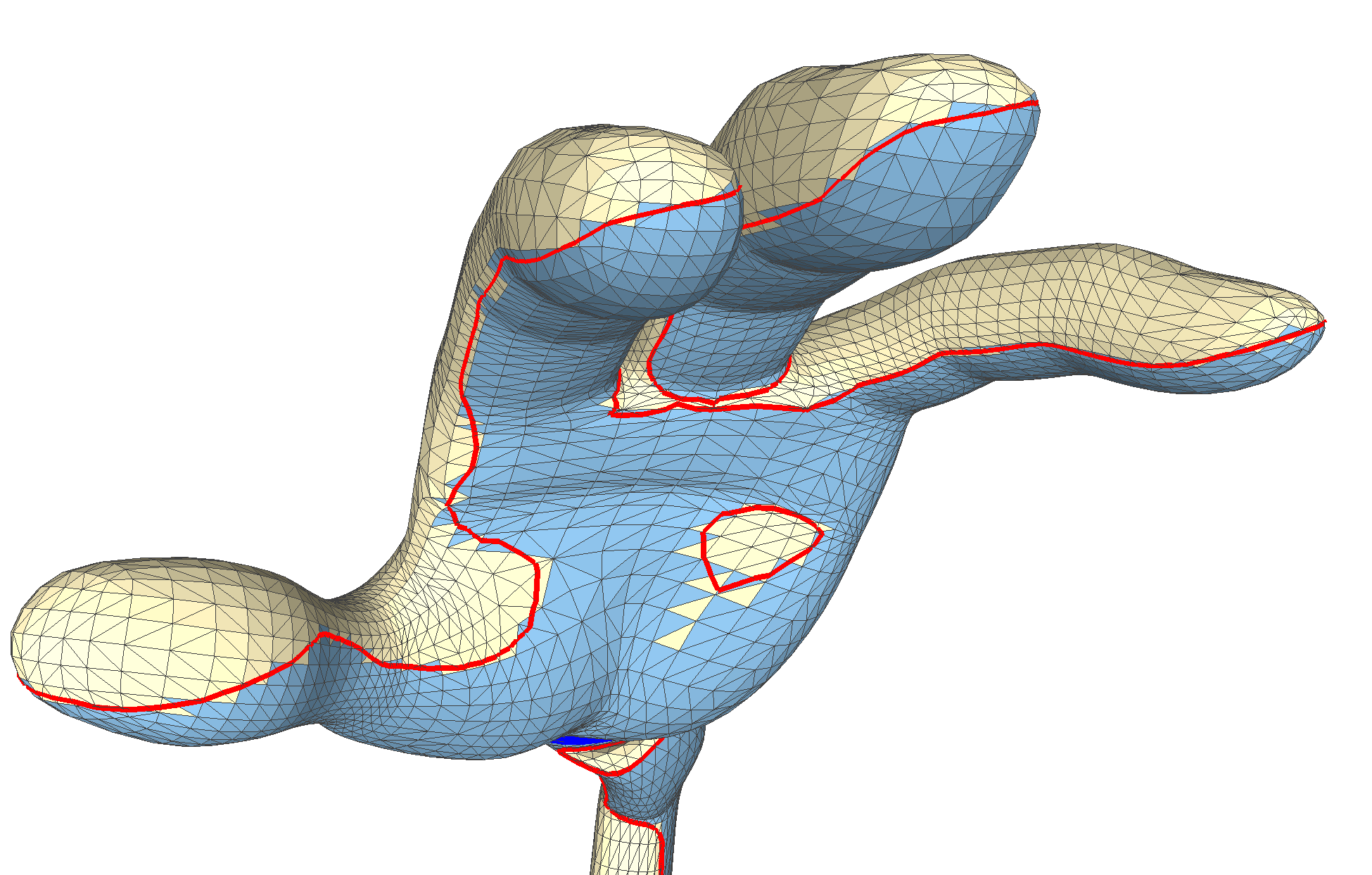}
    \caption{Interpolated Contours}
  \end{subfigure}
  \begin{subfigure}[b]{0.3\linewidth}
    \includegraphics[width=\textwidth]{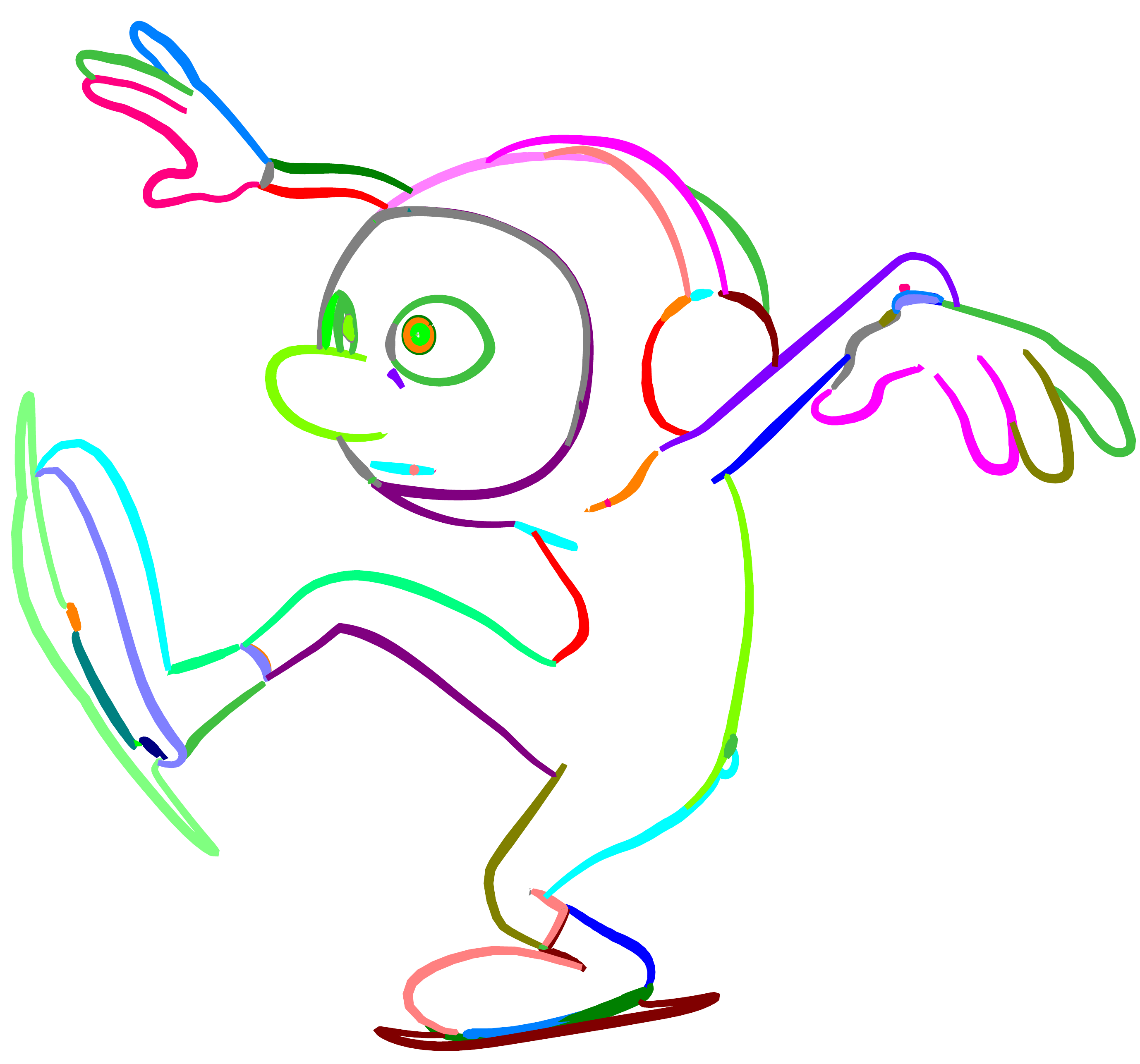}
    \caption{Chained curves}
  \end{subfigure}
  \begin{subfigure}[b]{0.32\linewidth}
    \includegraphics[width=\textwidth]{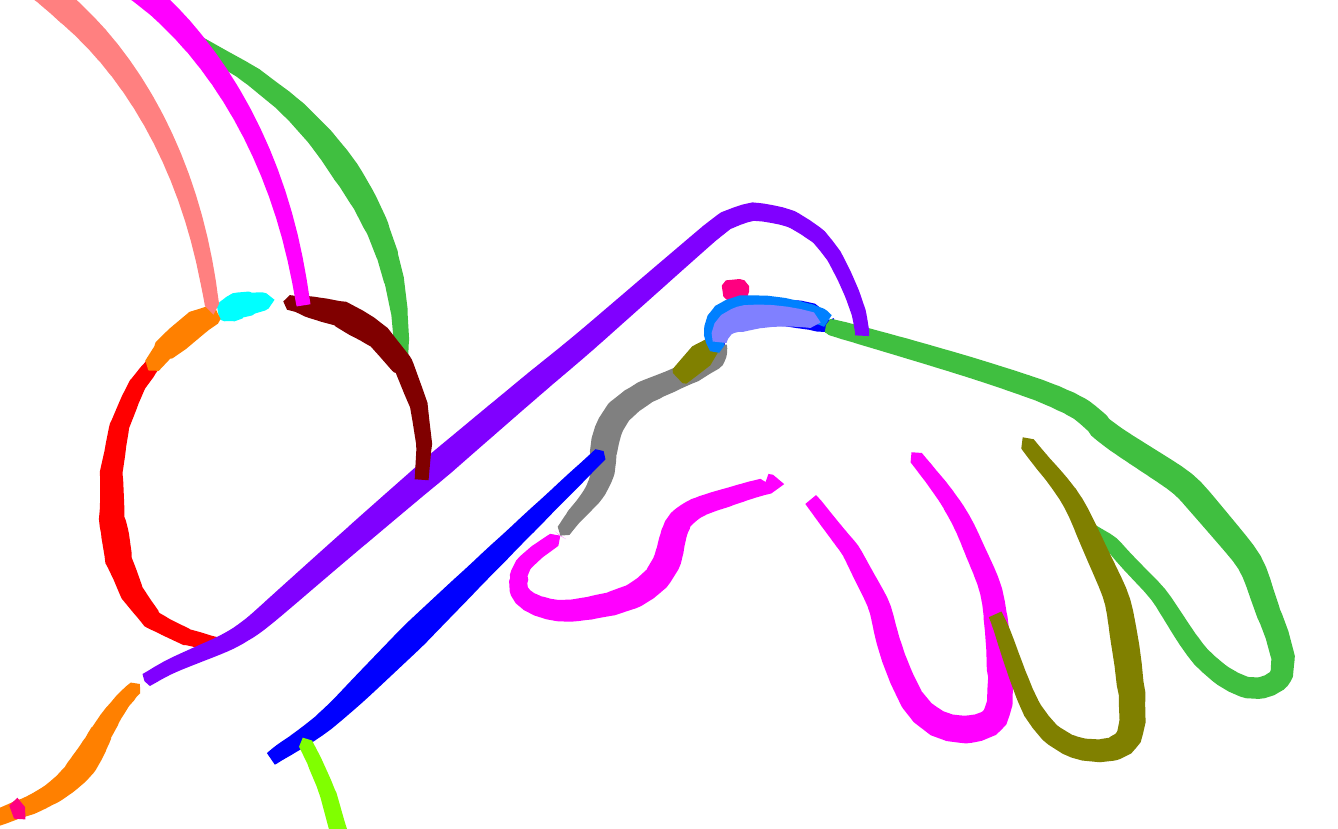}
    \caption{Closeup view on \textbf{(b)}}
  \end{subfigure}
  \caption{\textbf{Interpolated Contours of a smooth surface} \citep{Benard:2014} ---
  The original surface is shown in \fig{red_intersections}.
  \textbf{(a)} Interpolated Contours are much smoother and have approximately correct topology (compared with Figure \ref{fig:red_mesh}).
  \textbf{(b)} Chaining these curves gives much smoother, coherent curves.
  \textbf{(c)} Visibility is not well-defined for these curves, and gaps and other small errors may appear.
  ``Red'' \ccCopy~Disney/Pixar
    \label{fig:interp_red}
  }
\vspace{-0.75ex}
\end{figure*}

\subsection{Fast detection for static surfaces}

The dual space data structures of \sect{sec:dataStructures} can be adapted to find Interpolated Contours on static meshes.  For example, in the perspective dual space method of \citet{Hertzmann:2000}, each mesh vertex maps to a dual point $\vec{s}=(s_1,s_2,s_3,s_4)=(-n_x,-n_y,-n_z,\vec{n}\cdot\vec{p})$. As before, the camera maps to dual plane $g(\vec{s})=(c_x,c_y,c_z,1)\cdot{s}= 0$.  A dual edge is drawn between each pair of adjacent vertices; a dual edge contains a contour point if the edge crosses the camera dual plane. Hence, finding all edges with contour points is again reduced to intersecting a plane with a set of line segments.

\subsection{Singularities}
\label{sec:singularities}

The singularities of Interpolated Contours are similar to those of mesh contours. However, they behave somewhat differently. Interpolated Contours cannot exhibit bifurcations. Intersections on the surface lie within faces, since these contours lie within faces.

Defining curtain folds for Interpolated Contours requires the theory for smooth contours, which we will describe in the next chapter. For now, we will simply assume that we have a way to compute a function $\kappa_r$ at each mesh vertex.  This function, called the radial curvature, will be defined later in \sect{sec:radial_curvature}.
Linearly interpolating this function across each face gives a function $\kappa_r(\vec{p})$ over the face, and a line segment with $\kappa_r(\vec{p})=0$ can be computed by linear interpolation across the face edges, just as was done for the contour generator. For a mesh face that contains zero crossings in both $g(\vec{p})$ and $\kappa_r(\vec{p})$, the curtain fold lies at the intersection of these two line segments, if they intersect \citep{Hertzmann:2000,DeCarlo:2003}.
For methods to compute curvature from meshes, see \citep{Vasa:2016}.

When detecting a curtain fold this way, there will often be a spurious image-space intersection between the contour and itself near the curtain fold, and one may need a heuristic to clean up this case.  This can get tricky if other image-space intersections occur between these singularities.

\subsection{Visibility}

For ray tests, we use the original triangle mesh to determine when the smooth contour is occluded by the mesh \citep{Hertzmann:2000}.
These computations are necessarily heuristic, because the Interpolated Contours are not the contours of the mesh.  Some of the visibility techniques for mesh contours do not apply for Interpolated Contours; for example, it is unclear whether there is a useful analogue to ``concave'' and ``convex'' contours for Interpolated Contours, or whether QI can be propagated safely.  As a result, the simplest choice is to use multiple ray tests per curve for visibility, whenever possible.

Unfortunately, the approximate contour generator is not the mesh contour generator. About half the segments of the approximate contour generator lie on back-faces of the triangle mesh (whatever the tessellation density of the mesh), and they are thus hidden by front-faces closer to the camera (\fig{interp_contours}). This leads to a number of visibility errors (\fig{interp_red2}).
Several heuristics have been proposed to mitigate this problem. 

\begin{figure}
  \centering
  \includegraphics[width=0.6\textwidth]{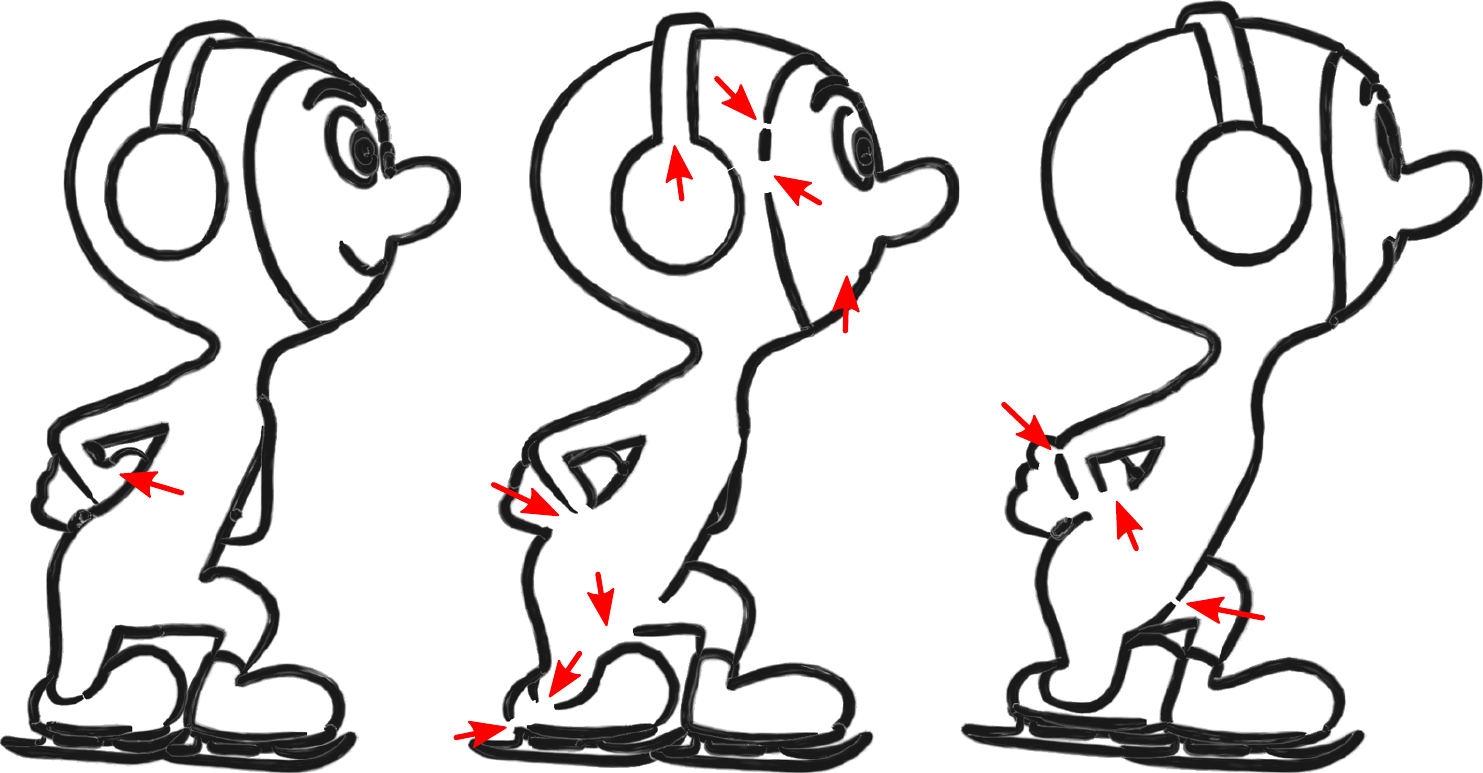}
  \caption{\textbf{Stylized interpolated contours of a smooth surface} \citep{Benard:2014} ---  
  The original smooth surface is shown in \fig{red_intersections}. After visibility computation, the contours exhibit many breaks and gaps (red arrows) which lead to objectionable temporal artifacts after stylization. ``Red'' \ccCopy Disney/Pixar}
  \label{fig:interp_red2}
\end{figure}

\citet{Hertzmann:2000} use a voting scheme.  Line segments between singularities are combined into chains, and then multiple ray tests are performed for each chain. These ray tests occur at the mesh vertices on faces with contours, using the vertices nearest to the viewer. The visibility is determined by a vote of these ray tests.
In addition, \citet{Grabli:2010} ignore occlusions from triangles adjacent to the contour's face. However these heuristics are not robust in every configuration. 

\begin{figure}
	\centering
	\small
	\def\svgwidth{0.55\linewidth}\import{figures/smooth_as_mesh/}{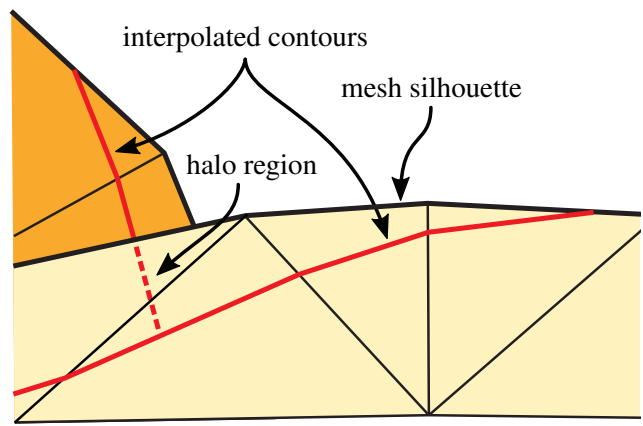}\caption{
		\textbf{An example of the problem with interpolated contour visibility}  \citep{Benard:2014} ---
		In image space, the interpolated contour lies within the mesh contour.  This creates a ``halo region'' between the two contours in which the surface occludes other surfaces but the mesh is invisible. Ray tests to the rear surface in this region will say the rear surface is invisible. In other configurations, such as nearly-flat, bumpy surface, a surface can ``halo'' other curves nearby on the same surface.}\label{fig:haloing}
\end{figure}

A more subtle issue is that the mesh silhouette does not line up with the smooth contour \citep{Eisemann:2008,Benard:2014}. Ray tests are performed against the mesh, and the part of the mesh outside the interpolated contour can occlude other surfaces, making them incorrectly invisible (\fig{haloing}).  Other problem cases are discussed in \citet{Benard:2014}.

\section{Planar Maps} \label{sec:planar_map_interp}

Planar Maps, as introduced in Section \ref{sec:planar_map}, represent all the visible strokes and regions within a graph.  For mesh rendering from smooth surfaces, they offer the appeal that, even if there might be errors in the mesh approximation, the resulting drawing will still be internally consistent; it cannot have, e.g., giant holes in the outline.

\citet{Eisemann:2009} presented a method to construct a Planar Map of the visible contours; their method involves a hybrid of mesh contours and Interpolated Contours. They first compute the view graph of the input scene, and then backproject it onto the mesh to define regions of constant visibility. Performing a ray test through each region center, they build an adjacency-occlusion graph by inserting links between successively intersected region pairs, as well as between adjacent regions. Contour edges at the frontier of a visible and an occluded region in this graph should be invisible in the Planar Map. The backprojection step is inspired by 3D BSP tree construction and involves many plane-triangle intersections that may thus suffer from the same numerical issues, requiring special tolerancing.  

\begin{figure}
	\centering
	\small
	\def\svgwidth{\textwidth}\import{figures/visibility/}{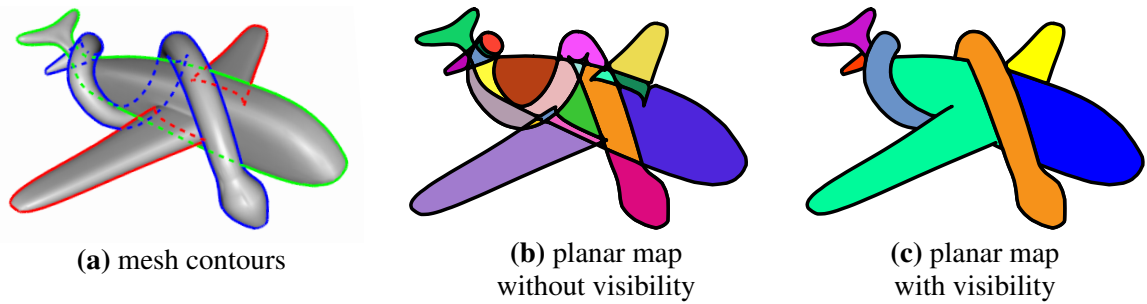}\caption{\textbf{Planar Map} --- Starting from the projected mesh contours \textbf{(a)}, the Planar Map corresponds to the partition into cells of the 2D plane induced by these curves \textbf{(b)}. If only visible mesh faces are inserted in the Planar Map or snaxels are used, occluded contours will be discarded \textbf{(c).}}\label{fig:planar_map}
\end{figure}

\citet{Karsch:2011} avoided these robustness complications by computing jointly the partition on the mesh and in the image plane. They use \emph{snaxels}, \ie~active contours~\citep{Kass:1988} whose vertices lie on mesh edges, to delineate the zero sets of an implicit contour function defined over the mesh surface. While seeking to minimize the energy functional, the snaxels obey topology rules for splitting and merging. By detecting snaxels collisions both on the 3D surface and in the image plane, and designing proper merging rules, \citet{Karsch:2011} ensure that no contours are intersecting if they belong to separated regions of the mesh. The main problem of this approach is its initialization: one snaxel front per Planar Map region should be seeded on the 3D geometry, but these regions are not known a priori. Multiple passes, introducing additional fronts, might thus be required to converge to the correct solution.


\chapter{Parametric Surfaces: Contours and Visibility}
\label{chap:smooth_contours}

In the previous chapters, we only considered polygonal meshes as input. We will now present the theory of contours on smooth surfaces. Most of this theory mirrors that of mesh contours, but using the tools of differential geometry, whose fundamentals are briefly summarized in \app{app:diff_geom}. This chapter will focus on parametric surfaces; implicit surfaces and volumes will be described in \chap{chap:implicit_surfaces}. Algorithms to extract the apparent contour of smooth surfaces yield mostly-correct results for most surfaces. However, as we discuss, perfectly computing smooth surface contours remains an open research problem.

\section{Surface definition} \label{sec:parametric}

In the following, we will assume that all surfaces are at least $C^1$ smooth everywhere, though it is conceptually straightforward to generalize to surfaces with creases, since they behave like mesh edges (\chap{chap:mesh_contours}).
The theory in this chapter applies to any surface with a parametrization $\vec{u}$, and a surface function with position $\vec{p}(\vec{u})$ and normals $\vec{n}(\vec{u})$, but the algorithms are designed for spline patches and subdivision surfaces.  We briefly review these surfaces.

\paragraph{Spline patches.} 
Spline patches (also called ``freeform surfaces'') are parametric functions from a 2D domain to a surface in 3D: $f:\mathbb{R}^2 \rightarrow \mathbb{R}^3$.
Specifically, each input coordinate $\vec{u} = (u,v)$ maps to a 3D point:
$$\vec{p}(\vec{u}) = \left[
		\begin{array}{c}
			f_x(u,v) \\
			f_y(u,v) \\
			f_z(u,v)
		\end{array}\right]$$
on the surface. In a spline patch, these functions are defined as linear combinations of basis functions applied to control points.  
For example, in the nonuniform rational B-spline (NURBS) patch. The shape of such a patch is parameterized by a grid of $(m+1) \times (n+1)$ control points $\vec{p}_{i,j} \in \mathbb{R}^3$ and their associated scalar weights $w_{i,j} \in \mathbb{R}$. The 3D surface is then given by:
\begin{equation}\label{eq:NURBS}
	\vec{p}(u,v) = \frac{\sum_{i=0}^{n} \sum_{i=0}^{m} N_i^k(u)\, N_j^l(v)\, w_{i,j}\, \vec{p}_{i,j}}{\sum_{i=0}^{n} \sum_{i=0}^{m} N_i^k(u) N_j^l(v) \, w_{i,j}},
\end{equation}
where $N_i^k$ is the B-spline basis function of degree $k$ for the $i^{\mbox{th}}$ control point (\fig{NURBS}). Details can be found in most computer graphics textbooks. 

Regardless of the specific type of surface used,
the 
surface normal at a point $\vec{p}$ can be computed as follows. The two 3D vectors:
$$\vec{t}_u(\vec{u}) = \left.\frac{\partial \vec{p}}{\partial u}\right|_{\vec{u}} = \left[\frac{\partial f_x}{\partial u}, \frac{\partial f_y}{\partial u}, \frac{\partial f_z}{\partial u} \right]^\top
	\quad
	\vec{t}_v(\vec{u}) = \left.\frac{\partial \vec{p}}{\partial v}\right|_{\vec{u}} = \left[\frac{\partial f_x}{\partial v}, \frac{\partial f_y}{\partial v}, \frac{\partial f_z}{\partial v} \right]^\top$$
are tangent vectors at $\vec{p}$. A surface normal at that point is: $$\vec{n}(\vec{u}) =\vec{t}_u(\vec{u}) \times \vec{t}_v(\vec{u}),$$

\begin{figure}
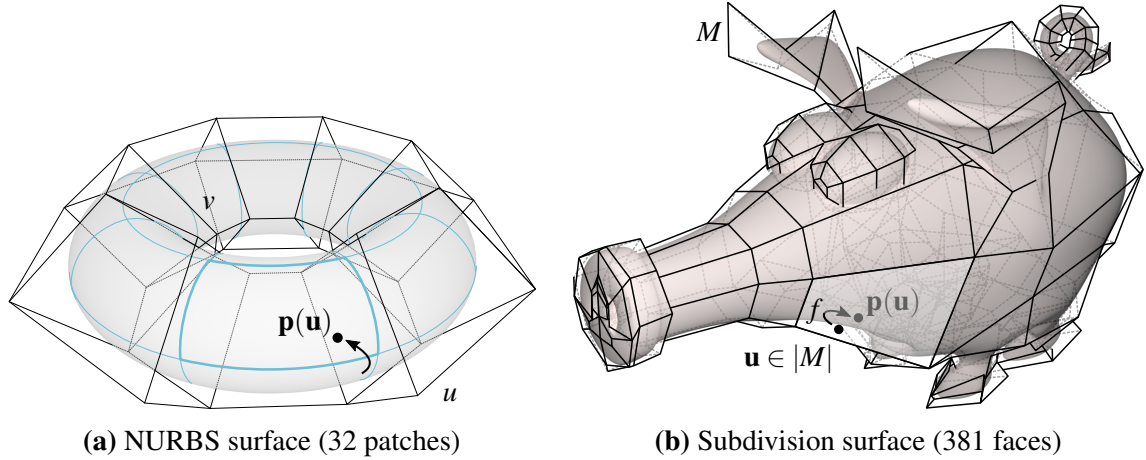

	\centering
	\small
	\begin{subfigure}[b]{0.46\linewidth}
		\def\svgwidth{\hsize}\import{figures/smooth_contours/}{NURBS_torus.pdf_tex}\caption{NURBS surface (32 patches)}\label{fig:NURBS}
	\end{subfigure}
	\quad
	\begin{subfigure}[b]{0.5\linewidth}
		\def\svgwidth{\hsize}\import{figures/smooth_contours/}{subdiv_pig.pdf_tex}\caption{Subdivision surface (381 faces)}\label{fig:subdiv}
	\end{subfigure}
	\caption{\textbf{Parametric surfaces} --- The map $f$ from the parameter plane $[0,1]^2$ or the control mesh surface $|M|$ to $\mathbb{R}^3$ defines the surface of a NURBS patch \textbf{(a)} or subdivision surface \textbf{(b)} respectively.}\label{fig:param_surfaces}
\end{figure}

\paragraph{Subdivision surfaces.} \label{sec:subd}
Modeling surfaces of general topology is quite difficult with patches. Subdivision surfaces are a generalization of splines that are popular for modeling surfaces of arbitrary topology \citep{Zorin:2000}.


A subdivision surface is defined by a polygonal mesh and a refinement scheme. The input polygonal mesh is called the control mesh. The corresponding smooth surface, called the \emph{limit surface}, is defined from the control mesh by recursively applying the refinement scheme an infinite number of times. 

The topology of the control mesh, denoted $M$, provides a piecewise parameterization of the limit surface. (Since the control mesh must be a simple polyhedron, it may be deformed or even lifted to $\mathbb{R}^4$ to remove all self-intersections.) In particular, let $\vec{u} \in M$ be a point on the control mesh. Then the subdivision surface may be viewed as a function $\vec{p}(\vec{u}) : M \rightarrow \mathbb{R}^3$, defined by the subdivision scheme and the positions of the control vertices. The point $\vec{u}$ is called the \emph{preimage} of a point $\vec{p}(\vec{u})$ on the surface.
Analytic representations of $\vec{p}(\vec{u})$ and its normals $\vec{n}(\vec{u})$ have been derived for popular schemes, such as Loop~\citep{Loop:1987,Stam:98} and Catmull-Clark~\citep{Catmull:1978,Stam:1998}.   
The open source library ``OpenSubdiv'' \citep{NieBner:2012,OpenSubdiv} supports direct evaluation of limit positions, normals and curvatures for both Loop and Catmull-Clark surfaces.



\section{Contour definition}

\begin{figure}
	\centering
	\small
	\def\svgwidth{0.85\textwidth}\import{figures/smooth_contours/}{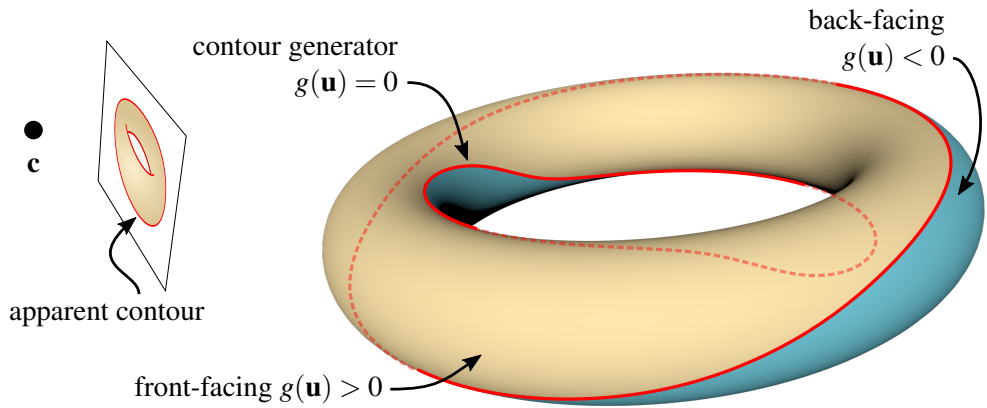}\caption{\textbf{Smooth surface contour} --- The contour generator is the zero-set of the implicit orientation function $g(\vec{u})$ and thus the boundary between the front-facing and back-facing parts of a surface, as seen from a camera center $\vec{c}$. The apparent contour is the visible projection of the contour generator onto the image plane.}\label{fig:smooth_contour}
\end{figure}

As before, we assume that the surface is oriented, in generic position, and that only front-facing points may be visible. The surface is viewed from a camera center $\vec{c}$. We define the \emph{orientation function} (\fig{smooth_contour}):
$$g(\vec{u}) = (\vec{p}(\vec{u}) - \vec{c}) \cdot \vec{n}(\vec{u}).$$
A point with $g(\vec{u}) > 0$ is front-facing and a point with $g(\vec{u}) < 0$ is back-facing, the contour is the boundary between these regions: $g(\vec{u})=0$.  More formally, following Definition \ref{def:image-space-contours}. The contour is defined by:

\begin{definition}[parametric contour generator]\label{def:smooth_contours}
 The collection of all points $\vec{p}(\vec{u})$ for which the preimages $\vec{u}$ satisfy $g(\vec{u})=0$ is called the \emph{contour generator}~\citep{Marr:1977}. The visible projection of the contour generator onto the image plane is called the \emph{apparent contour}, or, simply, \emph{contour}.
\end{definition}

Interestingly, the smooth contour can also be interpreted in terms of shading, in two different ways. In the first way, we imagine photorealistic rendering of the surface with Lambertian shading ($\vec{n} \cdot \vec{v}$), with white texture, against a white background. The black pixels of this rendering ($\vec{n} \cdot \vec{v} \approx 0)$ are the contour (Figure \ref{fig:cow}).  Generalizing this idea of finding the darkest pixels (not necessarily black) of the Lambertian image motivates contour generalizations like the Suggestive Contours \citep{DeCarlo:2003,Lee:2007} and isophote stroke thickness \citep{Goodwin:2007}. And, an inverse interpretation is that the contours appear to have the same shape as rim lighting, in which an object is illuminated by a ring of lights perpendicular to the camera direction (\fig{rim_lights}(a)). 

\begin{figure}
\centering
\small
\def\svgwidth{\linewidth}\import{figures/smooth_contours/}{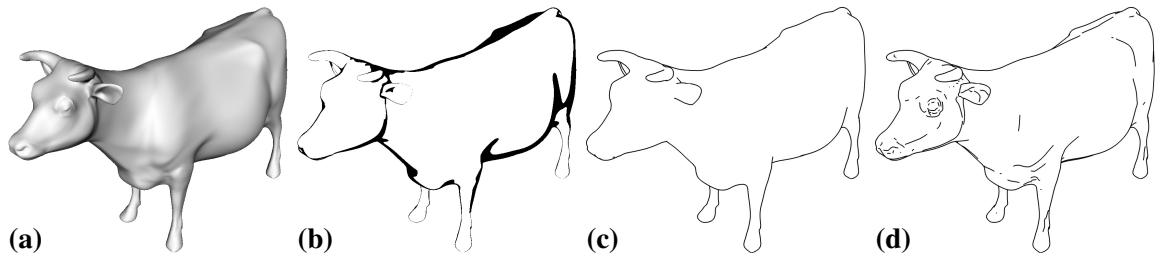}\caption{ \textbf{Shading interpretation of contours} --- \textbf{(a)} Lambertian shaded white object with light at viewpoint, so that shading is $\vec{n} \cdot \vec{v}$.
\textbf{(b)} Thresholded rendering, for visualization
\textbf{(c)} The contours are the black points in the shading image, where $\vec{n} \cdot\vec{v}=0$.
\textbf{(d)} Identifying dark ridges in the shading image produces the contours and Suggestive Contours \citep{DeCarlo:2003,Lee:2007}. Images generated with ``qrtsc'' \citep{qrtsc}.
}\label{fig:cow}
\end{figure}

\begin{figure}
	\centering 
	\small
	\textbf{(a)}
	\includegraphics[height=1.5in]{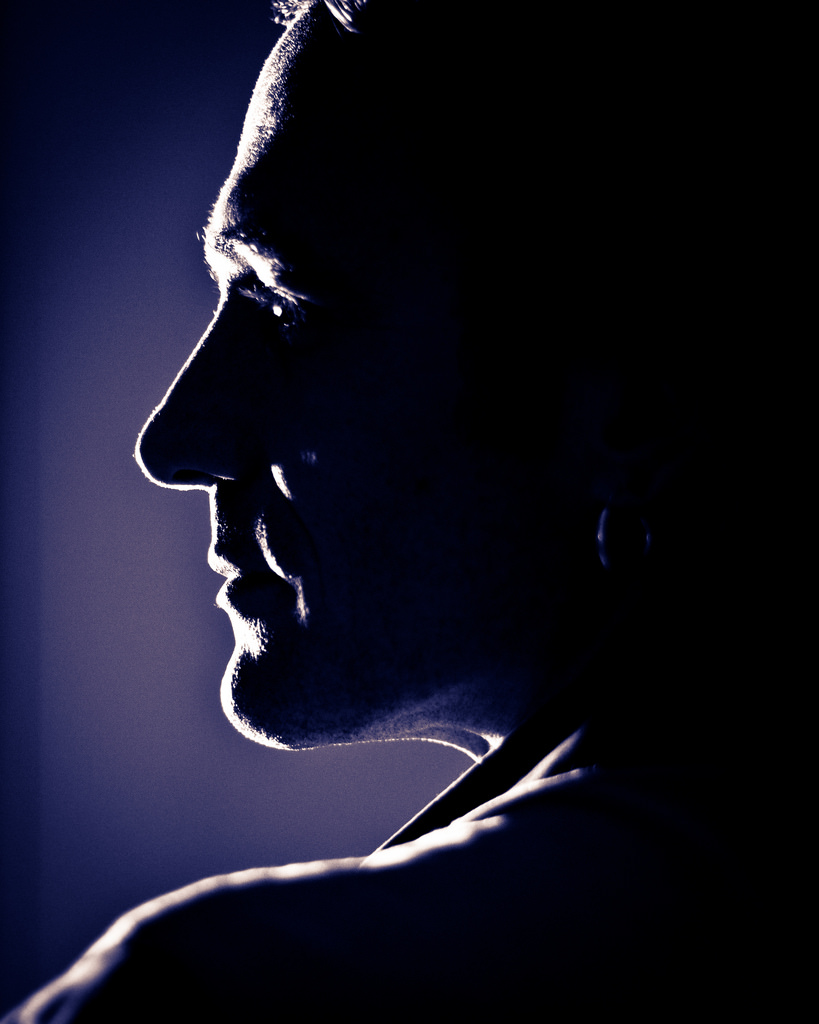}
	\qquad	
	\textbf{(b)}
	\includegraphics[height=1.5in]{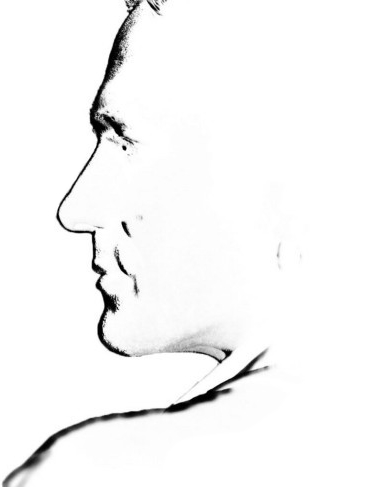}
	\qquad
	\textbf{(c)}\hspace{-15pt}
	\includegraphics[height=1.6in]{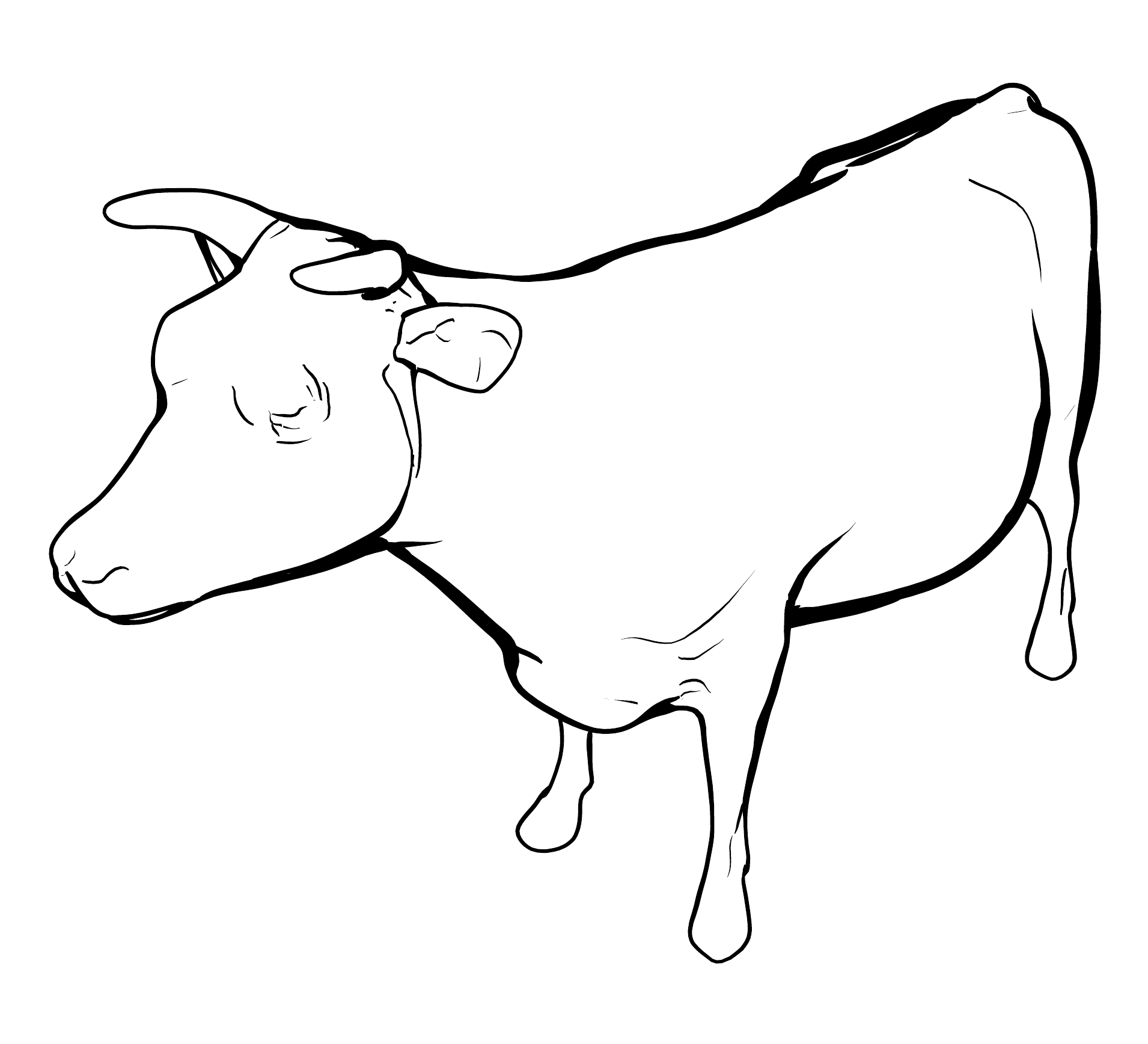}
	\caption{\textbf{Rim lighting and rendering} --- \textbf{(a)} Photograph taken  with rim lighting, \ie~a ring of lights perpendicular to the camera direction. Rim  lighting approximates the occluding contour. (Photo by Flickr user japrea \ccbysa)  
\textbf{(b)} Rim light photograph inverted and converted to grayscale, with the background removed.
	\textbf{(c)}~Computer-generated line drawing using contours, Suggestive Contours \citep{DeCarlo:2003} and isophote thickness \citep{Goodwin:2007}. The stroke thickness varies in the same way as it would for contours produced by rim lighting. 
	\label{fig:rim_lights} 
}
\end{figure}

\section{Contour extraction}

Because the contour generator is an implicit polynomial function, we cannot directly compute it. Instead, we must numerically approximate it; existing methods approximate it by piecewise linear curves.  While an early method proposed marching along the contour in parameter space \citep{Hornung:1985}, more modern methods identify patches with sign changes of $g(\vec{u})$ \citep{Elber:1990,Gooch:1998,Benard:2014}, similar to the treatment of Interpolated Contours in \chap{chap:smooth_as_meshes}.

\begin{figure}
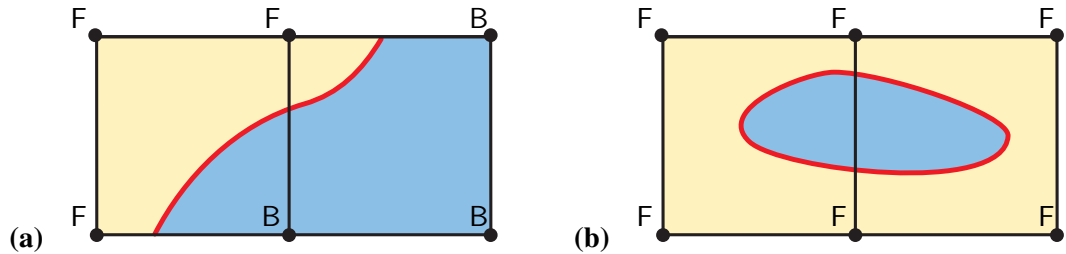

	\centering
	\small
	\textbf{(a)} \def\svgwidth{0.4\textwidth}\import{figures/smooth_contours/}{baseMesh.pdf_tex}
	\qquad
	\textbf{(b)} \def\svgwidth{0.4\textwidth}\import{figures/smooth_contours/}{simplification.pdf_tex}
	\caption{\textbf{Orientation function evaluated at the control vertices} --- When $g(\vec{u})$ has opposite signs ($\sF$ and $\sB$) on both ends of a control polygon edge \textbf{(a)}, a contour point must exist on that edge. When there is no sign change ($\sF$ and $\sF$ or $\sB$ or $\sB$), there may be zero contour points, or a larger even number of contour points per edge \textbf{(b)}.
	}
	\label{fig:orientation_basemesh}
\end{figure}

Specifically, we first evaluate the orientation function $g(\vec{u})$ at all control vertices (\fig{orientation_basemesh}). For a spline patch, these control vertices can be visualized on a regular grid; for a subdivision surface, they live on the control polygon.  We denote points with $g(\vec{u})>0$ as $\sF$ and $g(\vec{u})<0$ as $\sB$. 

For any edge on the control polygon with opposite signs on the edge ($\sF$ and $\sB$), there must be a contour point somewhere on that edge (\fig{orientation_basemesh}a). For edges with the same sign ($\sF$ and $\sF$ or $\sB$ or $\sB$), there \textit{might} be contour points, in some even number, such as a small loop centered on this edge (\fig{orientation_basemesh}b).  In the following algorithms, we generally assume that no small loops like this occur, and assume that no sign change indicates that there is no contour on the edge. (For the special case of rational splines under orthographic projection, \citet{Elber:1990} showed  a sign test that may be used to quickly identify that some patches cannot contain contours.)

For each edge that must contain a contour, the edge can be parameterized as a 1D function $\vec{u}(t) = (1-t)\vec{u}_0 + t(\vec{u}_1)$ where $\vec{u}_0$ and $\vec{u}_1$ are the preimages of the control points. The preimage and position of the contour may be found using a root-finding algorithm on $g(\vec{u}(t))=0$ \citep{Elber:1990,Benard:2014}, such as the secant method or bisection search.  A simpler approach is to linearly interpolate to approximate the contour location \citep{Gooch:1998}, similar to \eqn{eq:interpolation}.

The identified contour locations may be connected to produce a piecewise linear approximation to the contour (Figure \ref{fig:comp_contours}). A more precise curve may be found by subdividing control faces and repeating the root-finding process.

\begin{figure}
	\centering
	\small
	\begin{subfigure}[t]{0.48\linewidth}
		\includegraphics[width=\linewidth]{figures/smooth_contours/torus_interpC_side.pdf}
		\caption{interpolated contours}
	\end{subfigure}
	\quad
	\begin{subfigure}[t]{0.48\linewidth}
		\includegraphics[width=\linewidth]{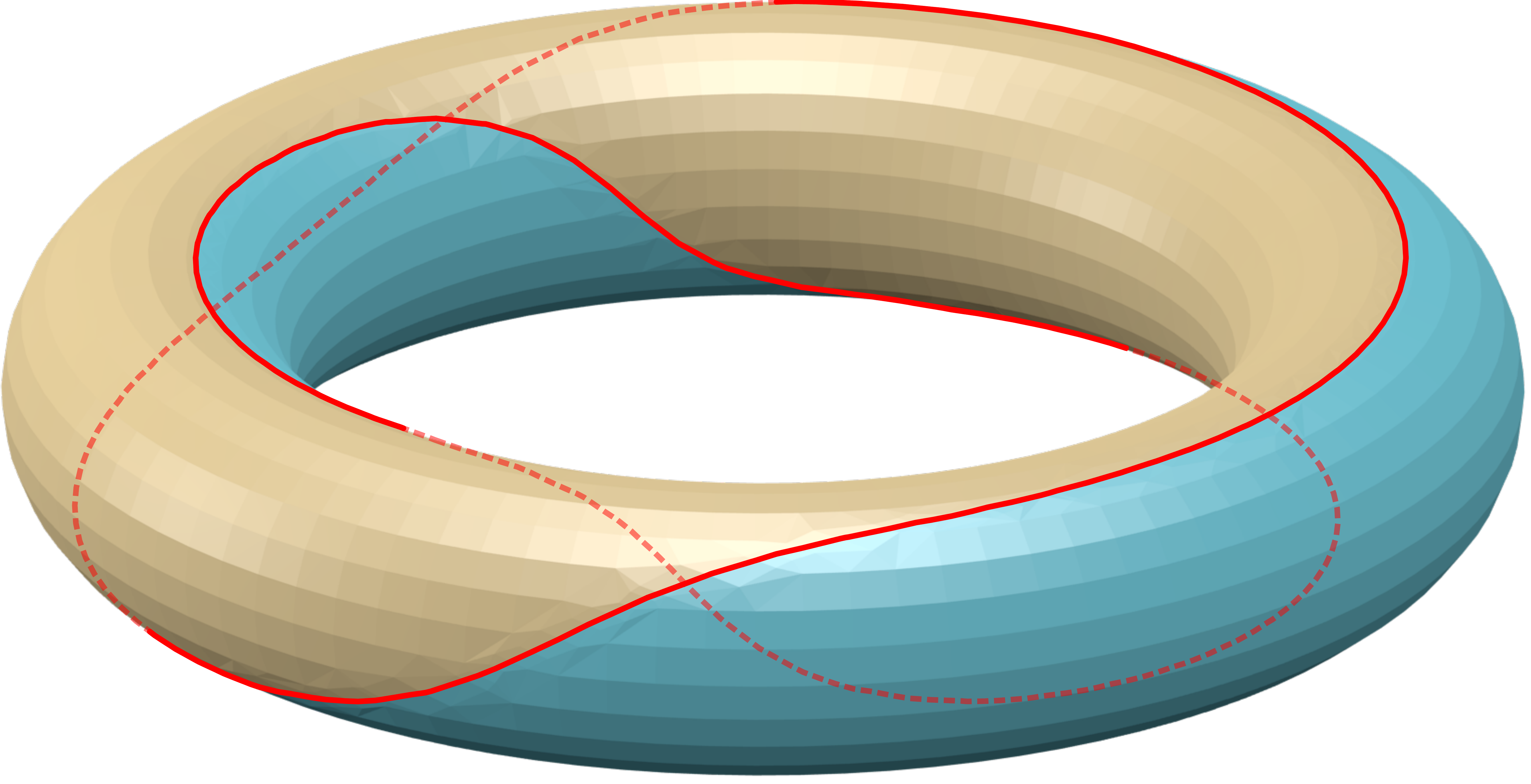}
		\caption{mesh contours after consistent tessellation}\label{fig:consistent_contours}
	\end{subfigure}
	\caption{\textbf{Contour generator approximation} ---  An input smooth torus represented as a Catmull-Clark subdivision surface is uniformly tessellated with one round of subdivision. With contour-consistent tessellation \textbf{(b)}, the mesh contours of the polygonal mesh is both topologically equivalent to the smooth surface contour and at the boundary of visible and invisible. Interpolated contours \textbf{(a)} do not have this property, leading to problems with visibility.
	}\label{fig:comp_contours}
\end{figure}

\section{Contour curvature} \label{sec:radial_curvature}

We now discuss the curvature of 2D contours and 3D contour generators.  
This analysis is necessary to identify curtain folds, and also gives insight into the relationship between surface curvature and apparent contour curvature.

In order to analyze image contours, it is useful to consider the following tangent direction at contour point $\vec{p}$. 
The direction $\vec{w}$ is defined as the (unnormalized) projection of the view vector $\vec{v} = \vec{p}-\vec{c}$ onto the tangent plane at $\vec{p}$. For contour points, $\vec{w}=\vec{v}$ since $\vec{v}$ is already in the tangent plane. The normal curvature along $\vec{w}$ is called the \emph{radial curvature} $\kappa_r(\vec{p})$ (\fig{radial_curvature}a)~\citep{DeCarlo:2003,Koenderink:1984}.  (See
Appendix \ref{app:curvature} for the definition of normal curvature.)

\begin{figure}
	\centering
	\small
	\def\svgwidth{\textwidth}\import{figures/smooth_contours/}{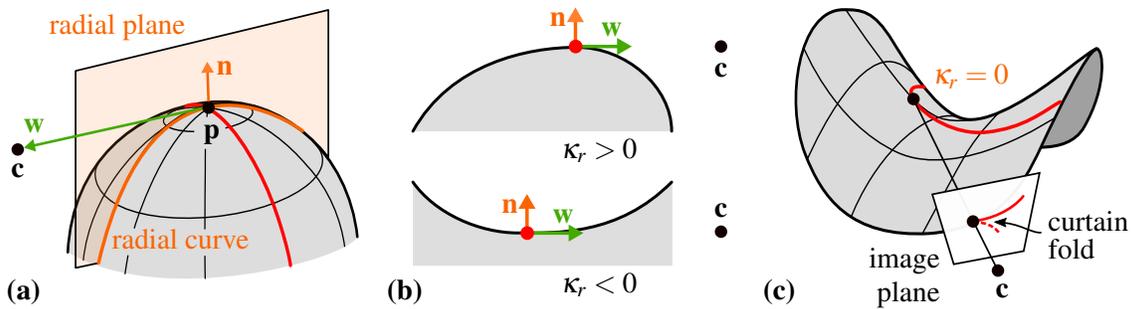}\caption{\textbf{Radial curvature} --- \textbf{(a)} The radial curvature $\kappa_r(\vec{p})$ is the curvature of the radial curve at $\vec{p}$; \textbf{(b)} $\kappa_r$ is necessarily positive for visible contours (top) otherwise it would be locally occluded by the surface (bottom), and zero at curtain folds \textbf{(c)}.}\label{fig:radial_curvature}
\end{figure}

Another way to state the definition is as follows. The radial curvature is based on the \emph{radial plane}, the plane that contains the point $\vec{p}$, the surface normal $\vec{n}$, and the view vector $\vec{v}$. The \emph{radial curve} is the intersection of the surface with the radial plane. The radial curvature $\kappa_r(\vec{p})$ is then defined as the curvature of the radial curve at $\vec{p}$.

A contour can only be visible when it has a positive radial curvature ($\kappa_r > 0$) --- otherwise the contour generator would locally lie in a surface concavity (\fig{radial_curvature}(b)). This exactly parallels the concepts of \textit{concave} and \textit{convex} contours on meshes: a contour with positive radial curvature is a convex contour (in the radial direction).

The \emph{apparent curvature} $\kappa_p(\vec{p})$ of the contour curve is  the curvature of the apparent contour at $\vec{p}$. Under perspective projection, \citet{Koenderink:1984} demonstrated that the Gaussian curvature $K$ of the surface at $\vec{p}$ is related to the radial and apparent curvatures by:
$$K = \frac{\kappa_r(\vec{p}) \kappa_p(\vec{p})}{\norm{\vec{p}-\vec{c}}}.$$

The above observations allow us to relate the image-space curvature of the 2D contour with the corresponding 3D region.
For visible contours, the sign of the apparent curvature is thus the same as the sign of the Gaussian curvature. If the surface is elliptical ($K>0$), the fact that visible contours cannot have $\kappa_r < 0$, implies that $\kappa_p$ is necessarily positive, and thus the apparent contour displays a convexity. Conversely, if the surface is hyperbolic ($K<0$), $\kappa_p < 0$ and thus the apparent contour is concave. This leads to the following general rule illustrated in \fig{pig_curvature}: a convex apparent contour corresponds to a convex surface, a concave contour implies a saddle-shaped surface, and an inflection on the contour ($\kappa_p=0$) implies a parabolic point on the surface ($K=0$).  

\begin{figure}
	\centering
	\small
	\def\svgwidth{0.65\textwidth}\import{figures/smooth_contours/}{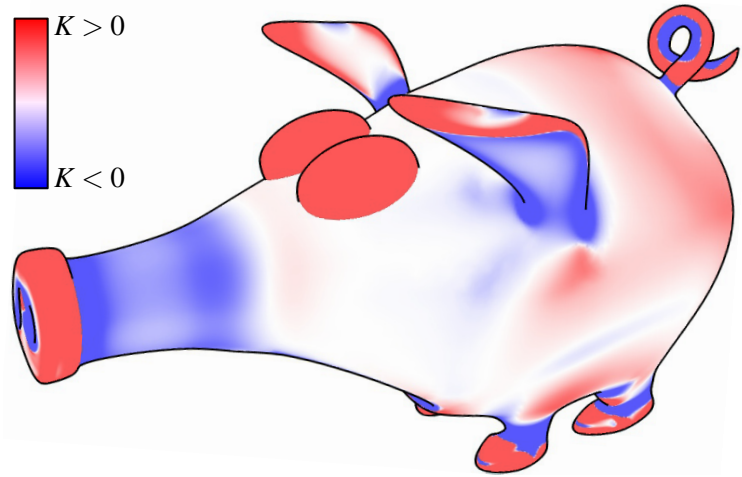}\caption{\textbf{Relationship between the surface Gaussian curvature and the contour apparent curvature} --- Concave apparent contours originate from hyperbolic regions (in blue) and convex ones from elliptic regions (in red); their inflection coincides with parabolic points on the surface. Image generated with ``qrtsc'' \citep{qrtsc}.}\label{fig:pig_curvature}
\end{figure}

\section{Singular points}\label{sec:smooth_singular}

Curves on smooth surfaces exhibit similar \textit{singular points} --- that is, points where visibility might change --- as on meshes. Image-space intersections, intersections on the surface, and curtain folds are all possible singularities. However, smooth surface contours may not exhibit bifurcations in generic position \citep{Hertzmann:2000}.

\paragraph{Intersections.}
Image-space intersections and intersections on the surface create singularities for curves on smooth surfaces. 

Finding intersections on the surface between boundaries and other curves typically involve root-finding along each boundary edge, e.g., for boundary-contour intersection, find the boundary edge point with $g(\vec{p}(t))=0$. 

Image-space intersections involving contours are more difficult to find, because contours are implicitly defined.  These intersections must be detected numerically, and there is no simple data-structure for accurately accelerating the search without the possibility of missing some intersections.  
There are two general strategies one can take. First, one can convert the contours into polylines, and compute the intersections of these polylines. This is simple but may often be incorrect in some cases. Second, one may use an adaptive subdivision approach, in which bounding boxes for each curve are subdivided until either an intersection is found or the absence of an intersection can be proven \citep{Elber:1990}.

Computing intersections between smooth surfaces is also difficult (e.g., \citep{Houghton:1985}), and these intersection curves must then be intersected with other curves on the surface.

\paragraph{Contour curtain fold definition.}
\citet{Koenderink:1984} demonstrated that the radial curvature vanishes at a \emph{curtain fold cusp} ($\kappa_r = 0$), the contour transitioning from invisible to potentially visible. At a curtain fold, the 3D tangent of the contour generator exactly coincides with the view vector. As a result, the projection is not smooth ($\kappa_p$ is infinite); hence, curtain folds correspond to cusps in the apparent contour (\fig{radial_curvature}c \& \fig{smooth_cusp}). It can be shown that these points are the only generic cusps of smooth surface contours. For this reason, many previous authors use the term cusp instead of curtain fold; we use the latter terminology to emphasize the correspondence with curtain folds on mesh contours.

\begin{figure}
	\centering
	\small
	\def\svgwidth{0.85\textwidth}\import{figures/smooth_contours/}{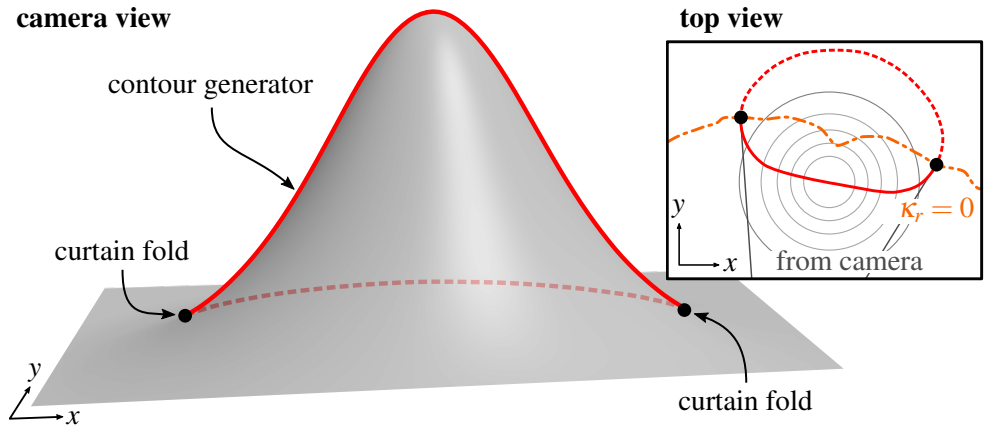}\caption{\textbf{Contour generator curtain folds on a smooth surface} --- As shown on the top view, curtain folds are at the intersection of the contour generator and radial curvature zero-isocurve.  At each curtain fold, the curve tangent is aligned with the view direction.}\label{fig:smooth_cusp}
\end{figure}

As with meshes, curtain folds occur at the transition from concave ($\kappa_r < 0$) to convex ($\kappa_r > 0$) contours: when the contour transitions from locally occluded (concavity) to locally visible (convexity).
In the vicinity of curtain folds, the surface is necessarily hyperbolic; the visible branch of the apparent contours must thus be concave in image space \citep{Koenderink:1982}.

For parametric surfaces, radial curvature can be computed directly from the definition.

\paragraph{Contour curtain fold detection.} 
Detecting curtain folds on smooth surface contours entails finding surface points where both $g(\vec{u})=0$ and $\kappa_r(\vec{u}) = 0$. The simplest approach is to perform linear interpolation within a face, as in Section \ref{sec:singularities}. However, this may not be sufficiently accurate.


A more precise procedure is as follows \citep{Benard:2014}. The algorithm first searches for triangles where both the functions $g(\vec{u})$ and $\kappa_r(\vec{u})$ exhibit sign changes among the triangle vertices. When such a triangle is found, it is bisected along the edge that is longest in parameter ($\vec{u}$) space (Figure \ref{fig:cusp_detection}). This sign-change test and splitting process is repeated in the two new triangles. When the recursion detects a very small triangle with sign crossings in both functions, the centroid of that triangle is a  curtain fold preimage location.  (Note that the triangle splitting is not applied to the surface; the new triangles are stored only during this recursion and new vertex locations are computed by exact evaluation of $\vec{p}(\vec{u})$).

\begin{figure}
	\centering
	\small
	\def\svgwidth{\linewidth}\import{figures/smooth_contours/}{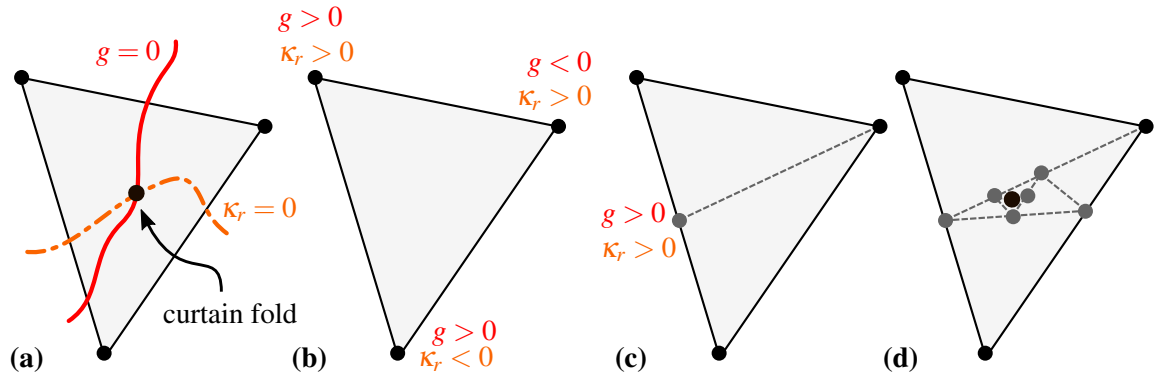}\caption{
		\textbf{Contour curtain fold detection on smooth surfaces} ---
		\textbf{(a)} Given a base mesh triangle, we wish to find all points that satisfy both 
		$g(\vec{u})=0$ and $\kappa_r(\vec{u}) = 0$.
		\textbf{(b)} We find triangles that contain sign changes of both functions between the vertices.
		\textbf{(c)} For each such triangle, we split the triangle into two, by bisecting the long edge, and recursing into each of these triangles.
		\textbf{(d)} When this process leads to a very small triangle with sign changes, a contour curtain fold is marked at the centroid of this triangle.
	}\label{fig:cusp_detection}
\end{figure}

\paragraph{Curtain folds on other curves.}
Other types of curves may also have curtain folds.  In general, for all curves, a curtain fold occurs when the 3D tangent to the curve is aligned to the view vector.  This implies that the surface is normal to the view vector. Hence, at a curtain fold, the curve also intersects a contour generator.  Hence, curtain folds do not need to be specially handled for non-contour curves, because they will be detected as curve-contour intersections.

\section{Visibility computation}

Determining the visibility of the smooth surface contour is a significant, and very challenging problem. For mesh contours, the algorithms can rely on exact computations (\sect{sec:vis_algo}), up to numerical precision.  For example, a simple ray test can be used to determine the visibility of any point.  On the other hand, performing an exact ray test for a parametric surface's contour involves a  computation that is numerically unstable, since the true contour lies exactly on the boundary between visible and invisible. Hence, the mesh processing algorithms cannot be directly applied.

Previous authors have applied four different strategies to visibility for smooth surface contours.  The first, described in \chap{chap:smooth_as_meshes}, is to convert the surface to a mesh, and use heuristics (such as Interpolated Contours) to clean up the contours; this approach is simple, but can exhibit artifacts.  The other three strategies are described next.

We also note that, in the past few decades, new methods for ray-tracing subdivision surfaces have been developed, e.g., \citep{Kobbelt:1998,Tejima:2015,Benthin:2015}, and it may be time to revisit their usefulness for this problem.



\subsection{Ray-casting the smooth surface} \label{sec:smooth_ray_test}

The first approach is to directly apply ray-casting and singularity detection on the smooth surface, generalizing the procedure for meshes.  Because ray tests are unstable on the contour, one may perform them elsewhere on the surface, and then propagate visibility based on image-space relationships between curves. \citet{Elber:1990} perform those tests along a subset of isoparametric curves. While this method is demonstrated for simple surfaces (\fig{param_contours}), there are a few theoretical issues that suggest it may not work robustly for general surfaces. First, it assumes that all visible contours touch one of the isoparametric curves, which may not always be the case for complex models. Second, propagating visibility information depends on robustly computing image-space intersections between smooth curves. Once again, numerical instabilities are likely to arise, since curves may often be nearly tangent to each other in projection. 
As will all visibility propagation schemes, a single visibility computation error can propagate to create many erroneous visibility errors, such as a large silhouette curve that disappears. 

\begin{figure}
	\centering
	\small
	\includegraphics[width=0.43\linewidth]{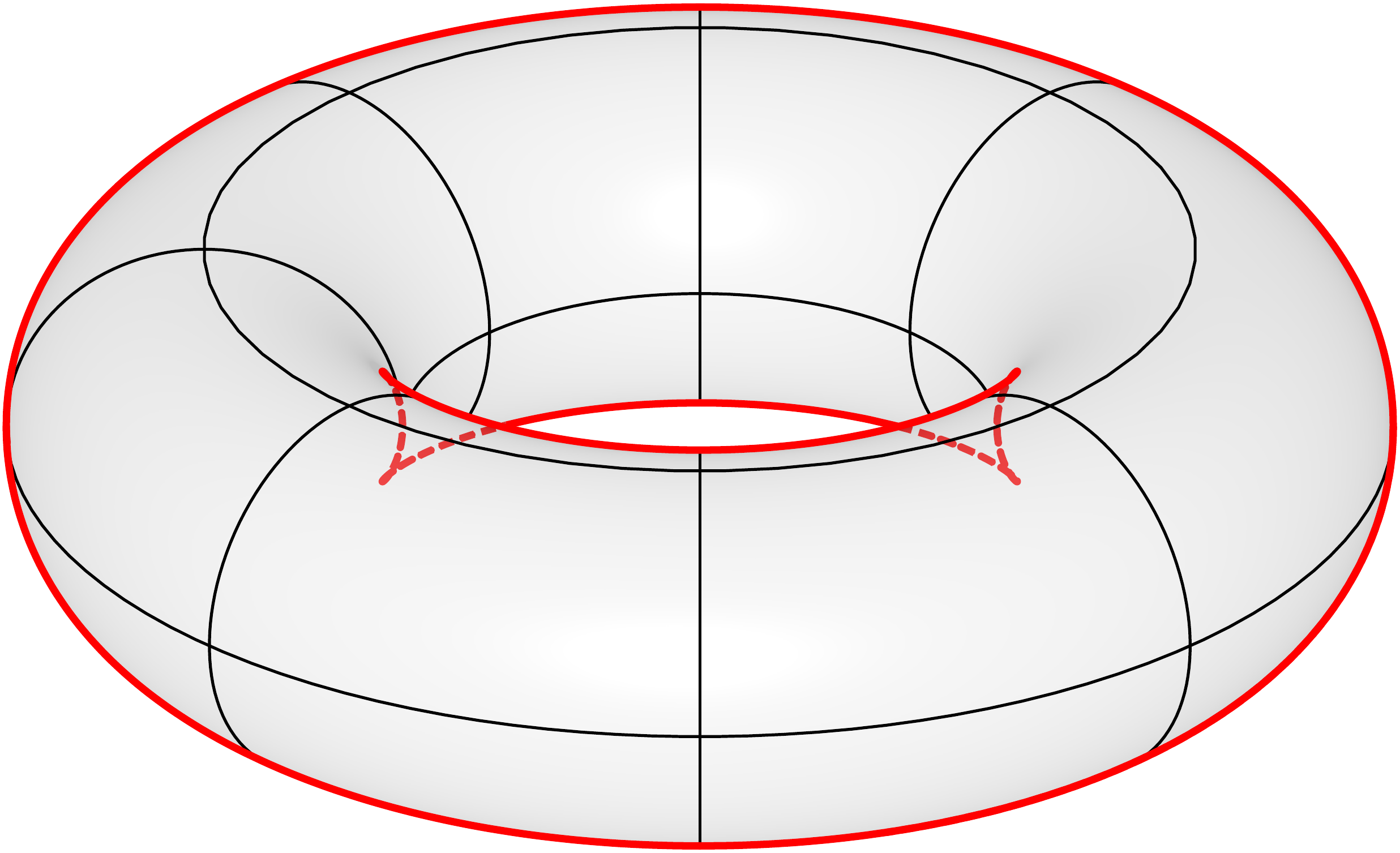}
	\quad
	\includegraphics[width=0.22\linewidth]{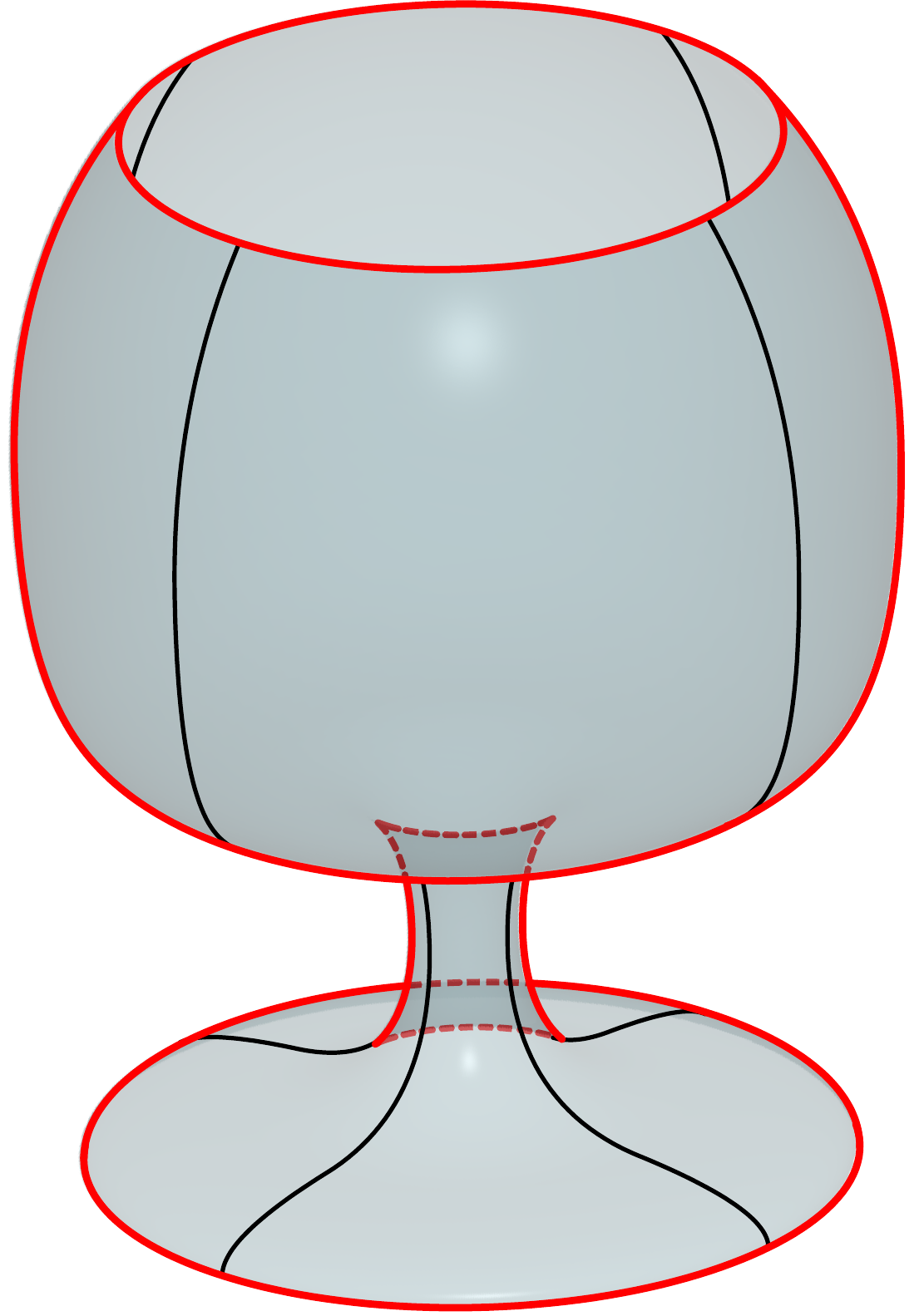}
	\quad
	\includegraphics[width=0.21\linewidth]{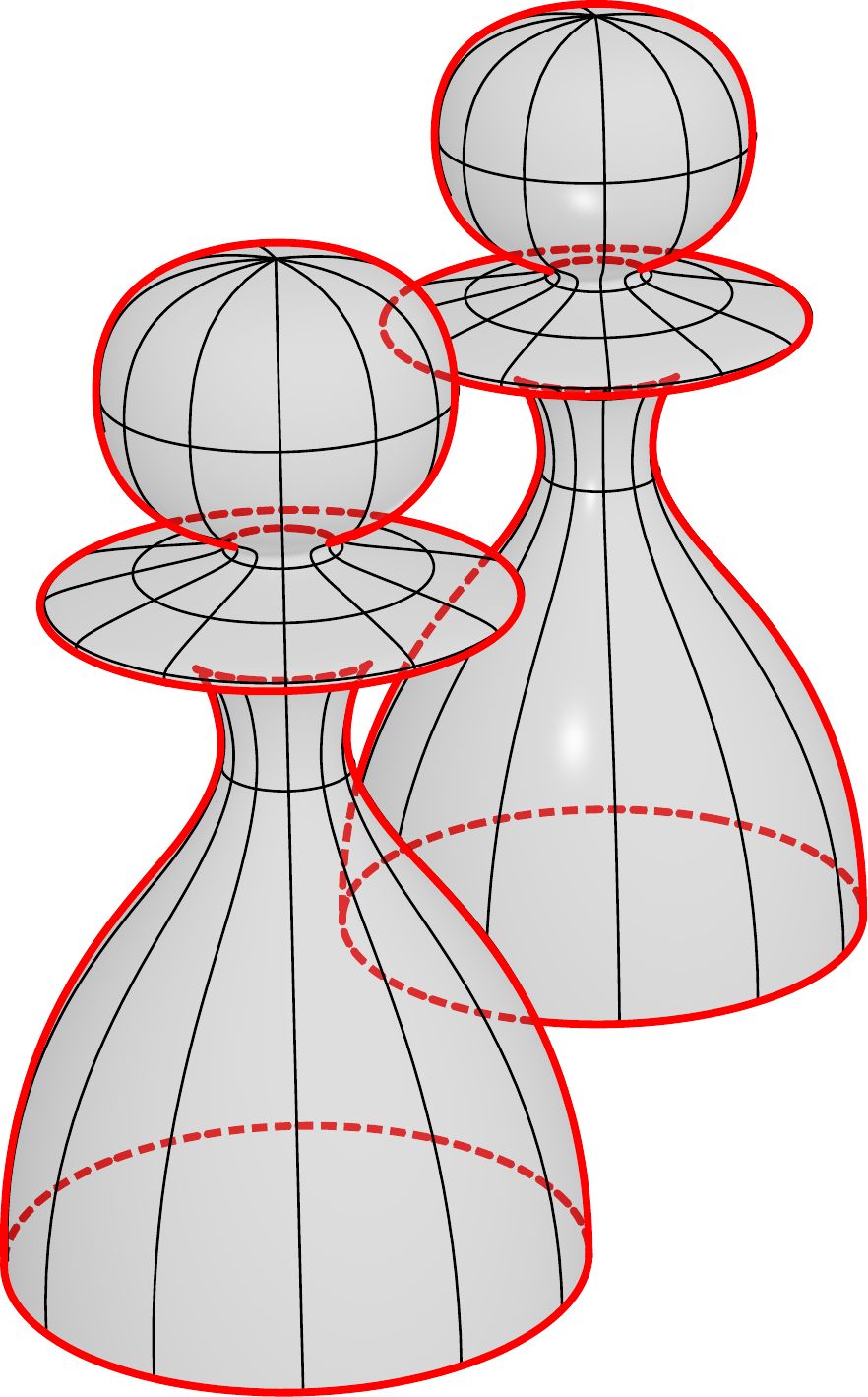}
	\caption{\textbf{Results of \citet{Elber:1990}'s patch-based contour extraction algorithm} --- Images generated with the IRIT modeling environment~\citep{IRIT}. }\label{fig:param_contours}
\end{figure}

One can imagine more robust versions of this procedure.  To our knowledge, this strategy has not been explored since \citet{Elber:1990}.

\subsection{Planar Maps}

\citet{Winkenbach:1996} used a Planar Map for visibility computations and stylization.  A Planar Map can potentially ensure a consistent topology, even in the presence of errors, and allows more control over stylization of the regions.

Their Planar Map was computed from a mesh tessellation of the surface. However, they also numerically refine the contours separately from the Planar Map, and their contours might not exactly match the Planar Map visibility, potentially causing small visibility errors.
Following the work of \citet{Gangnet:1989}, they restrict all the edge endpoints to have integer coordinates and use infinite-precision rational arithmetic to compute exact intersections which may mitigate some of the mismatch.

To our knowledge, Planar Maps for true smooth surface representations have not further been explored since \citet{Winkenbach:1996}.

\subsection{Contour-consistent tessellation} \label{sec:polygon_ray_test}

Finally, we summarize an approach that we developed, in collaboration with Michael Kass \citep{Benard:2014}.
Our approach is, for a given viewpoint, to tessellate the smooth surface into a triangle mesh for which the contour generators have the same topology as they do for the original smooth surface. The triangle mesh is also geometrically close to the smooth surface. Because we have effective and exact algorithms for contour rendering on meshes (Chapters \ref{chap:mesh_contours} and \ref{chap:visibility}), this guarantees a valid contour rendering, approximating that of the smooth surface.

The algorithm creates a new mesh initialized by copying the smooth surface's control mesh (\fig{consistent_tess}).
Throughout, the algorithm maintains a pointwise correspondence between the smooth surface and its polygonal approximation.  We provide  conditions that can be used to prove that the new mesh has achieved topologically-equivalent contours. The main goal is that front-faces on the mesh should correspond to front-facing regions on the surface, and back-faces should correspond to back-facing regions. The algorithm performs a sequence of local transformations on the mesh until these conditions are met.

\begin{figure}
	\centering
	\small
	\def\svgwidth{\hsize}\import{figures/smooth_contours/}{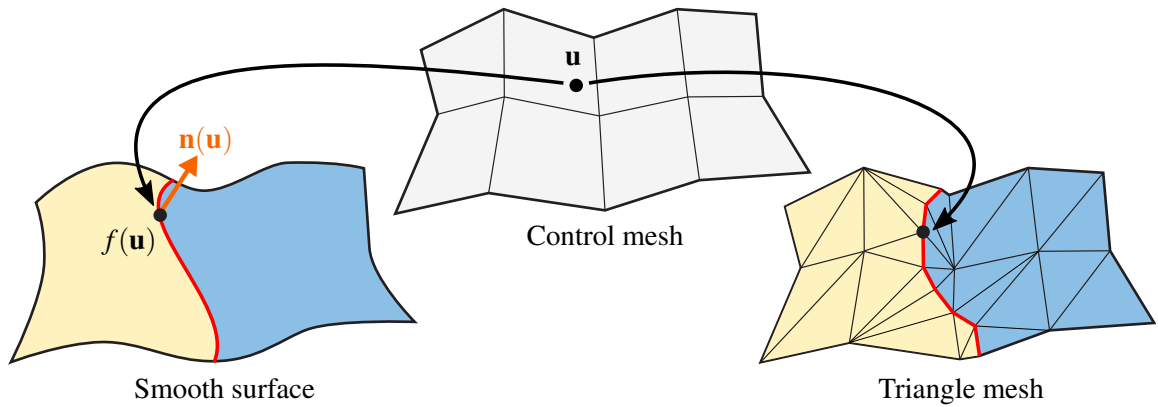}\caption{\textbf{Contour-Consistent tessellation} --- Each vertex on the output triangle mesh produced by \citet{Benard:2014} corresponds to a preimage point $\vec{u}$ on the control mesh that maps to a point $f(\vec{u})$ on the smooth surface. The orientation (front- or back-facing) of each face of the triangle mesh is consistent with the smooth surface orientation $g(\vec{u})$.}\label{fig:consistent_tess}
\end{figure}

The method still does not have all the formal guarantees that one would like; in principle, there are a few ways we discuss where it could go wrong. These do not seem to be a problem in practice. However, it is complex to implement and the computation is slow.  Improving this is an area for future work.

This method produces a triangle mesh whose contour edges are topologically equivalent and geometrically close to the contour generators of the smooth surface (\fig{consistent_contours} \& \ref{fig:contour_consistent_red}).
Hence, it is the only current method that computes accurate contours and visibility for smooth surfaces, avoiding flickering artifacts when animated (\fig{contour_consistent_stylized}).


\begin{figure}
	\centering
	\begin{subfigure}[b]{0.36\linewidth}
		\includegraphics[width=\textwidth]{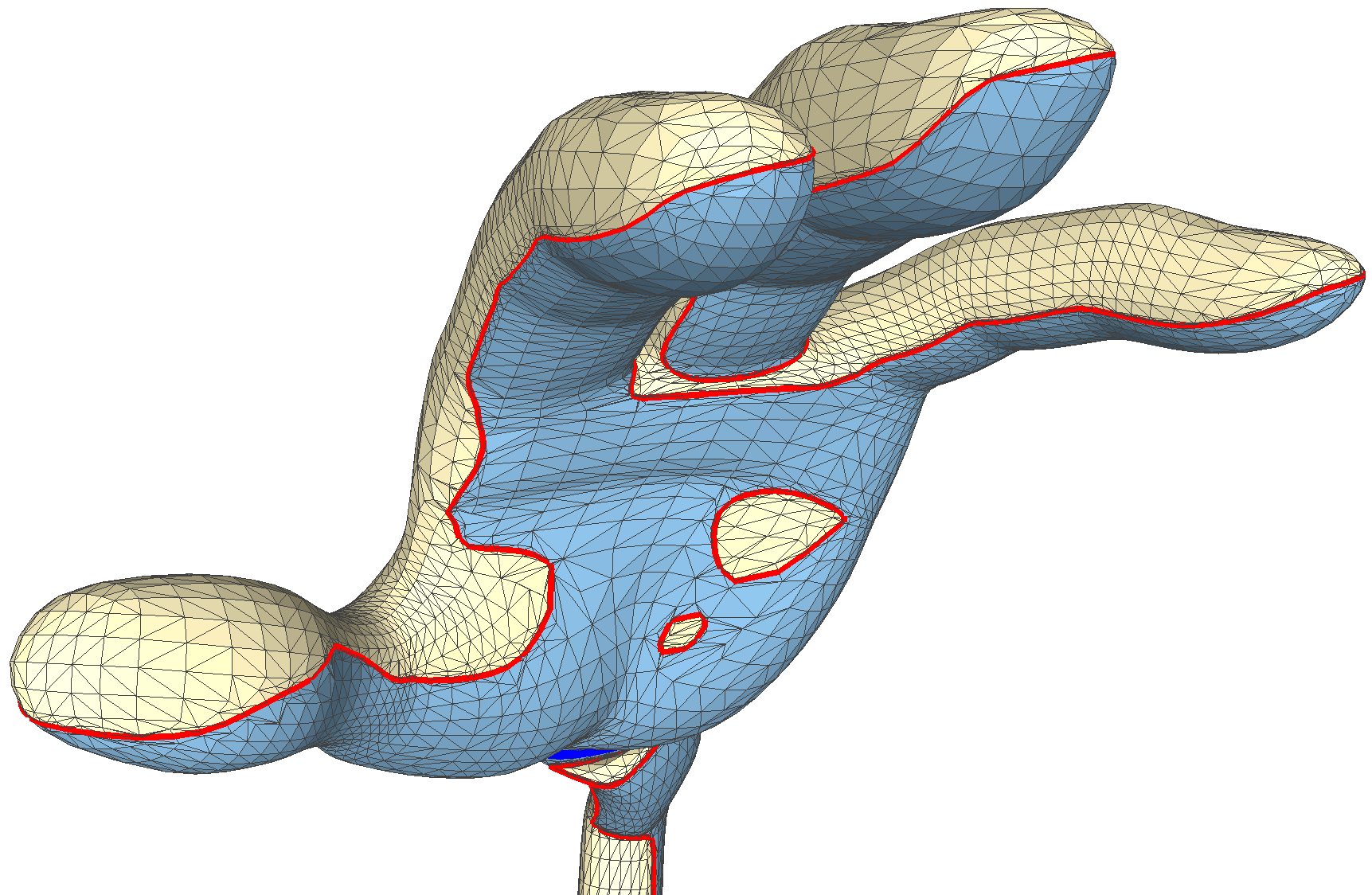}
		\caption{Contours}
	\end{subfigure}
	\begin{subfigure}[b]{0.3\linewidth}
		\includegraphics[width=\textwidth]{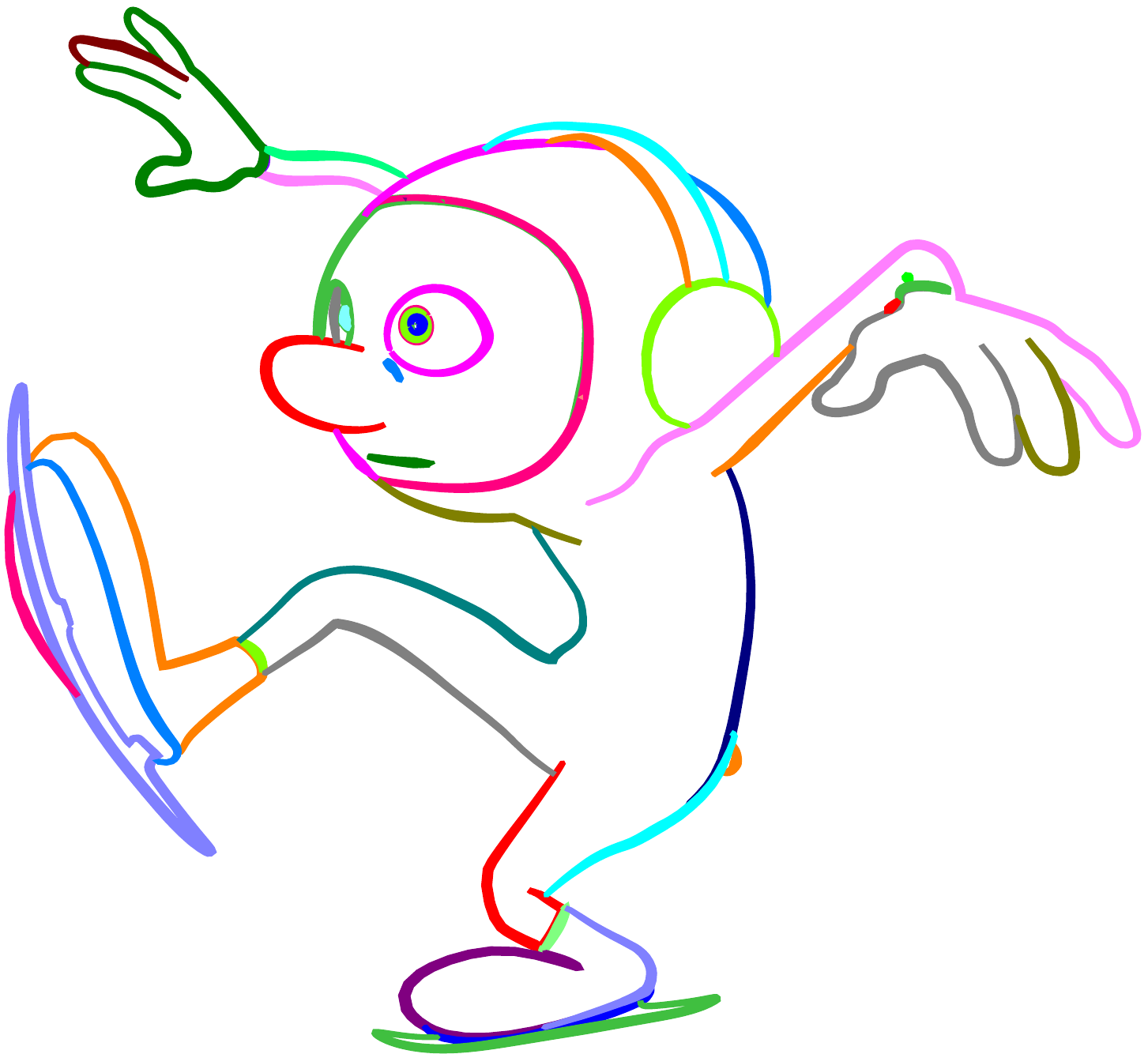}
		\caption{Chained curves}
	\end{subfigure}
	\begin{subfigure}[b]{0.32\linewidth}
		\includegraphics[width=\textwidth]{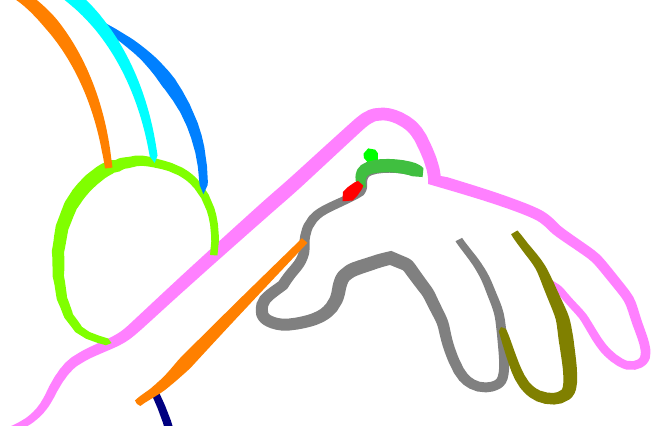}
		\caption{Closeup view on \textbf{(b)}}
	\end{subfigure}
\caption{\textbf{Contour-consistent contours of a smooth surface} \citep{Benard:2014} ---
	Compare with the contours in Figures \ref{fig:red_mesh} and \ref{fig:interp_red}.
 	\textbf{(a)} Contours computed with the contour-consistency algorithm correctly represent the contours of the original smooth surface.
  \textbf{(b)} Chaining these segments gives smooth, coherent curves.
	\textbf{(c)} Visibility is well-defined for these curves.
	``Red'' \ccCopy~Disney/Pixar
    \label{fig:contour_consistent_red}
  }
\end{figure}

\begin{figure}
  \centering
    \includegraphics[width=0.6\textwidth]{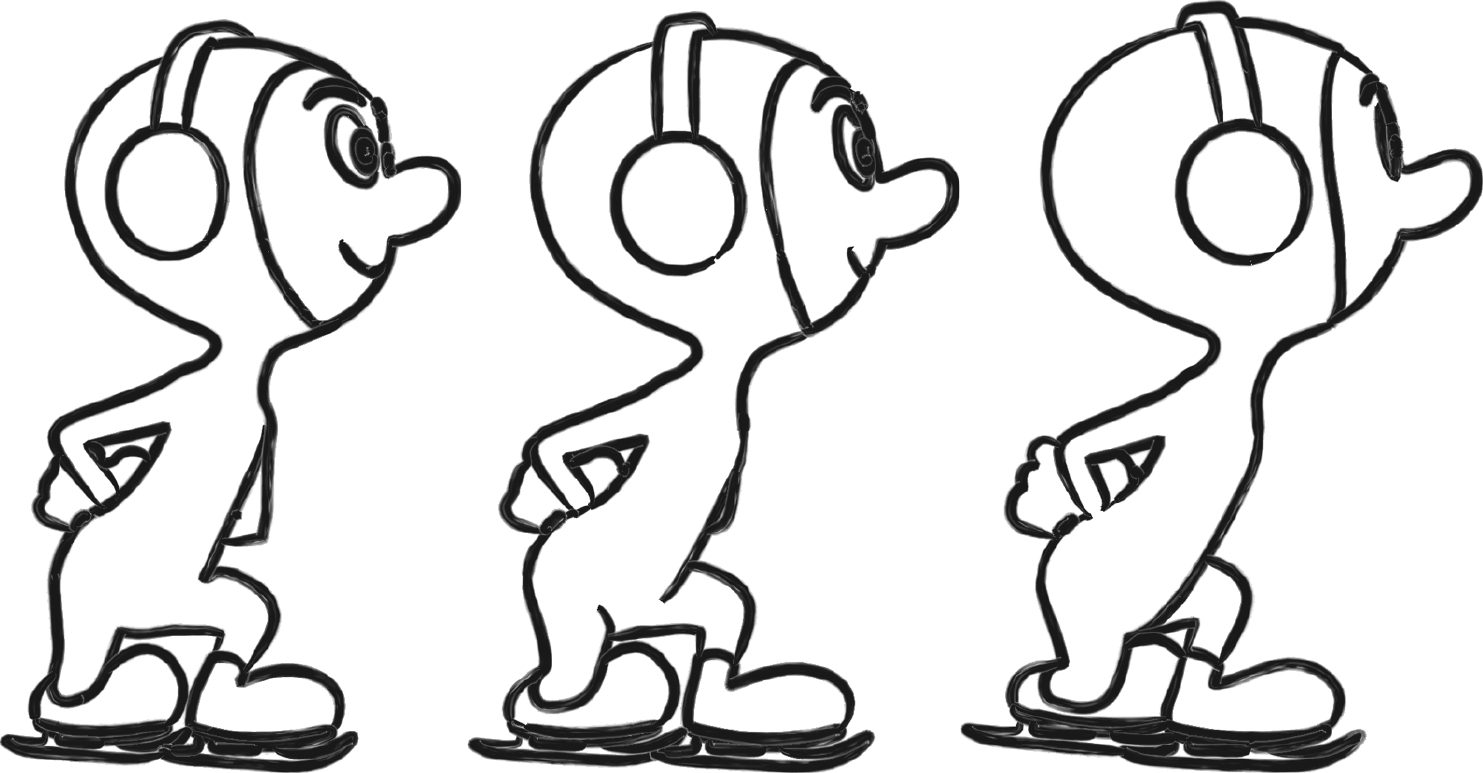}
	\caption{\textbf{Contour-consistent contours stylized with tapered strokes} \citep{Benard:2014} --- 
	Compare with the contours in Figure \ref{fig:interp_red2}. Contour-consistent contours do not suffer from breaks and gaps, producing more coherent animated strokes.
  ``Red'' \ccCopy Disney/Pixar}
  \label{fig:contour_consistent_stylized}
\end{figure}


\chapter{Implicit Surfaces: Contours and Visibility}
\label{chap:implicit_surfaces}

We now survey contour extraction and visibility algorithms for implicit surfaces. 
Implicit surfaces are smooth surfaces, so much of the theory from the previous section applies to them, but implementation is different.
 Implicit surfaces are used less frequently in computer graphics, but are often used in 3D imaging and for other kinds of volumetric data. 




\section{Surface definition}

An implicit surface $\mathcal{S}$ is defined as the isocontour of a scalar function $f: \mathbb{R}^3 \rightarrow \mathbb{R}$,
$$\mathcal{S} = \{ \vec{p} \in \mathbb{R}^3 | f(\vec{p}) = \rho \},$$
where $\rho \in \mathbb{R}$ is a target \emph{isovalue}. It is thus also called an \emph{isosurface}. An implicit surface is well-defined if $f$ does not have any \emph{singular} points, \ie{} its gradient $\nabla f$ is defined and non-zero everywhere. 
The isosurface forms a 2D manifold, partitioning the space into itself and two connected open sets: the \emph{interior}, where $(f - \rho) < 0$, and the \emph{exterior}, where $(f - \rho) > 0$ by convention.

The surface normal at $\vec{p}$ is the normalized gradient of $f$:
$$\vec{n} = \frac{\nabla f(\vec{p})}{\norm{\nabla f(\vec{p})}},$$
where $$\nabla f(\vec{p}) = \left(\left.\frac{\partial f}{\partial x}\right|_{\vec{p}}, \left.\frac{\partial f}{\partial y}\right|_{\vec{p}}, \left.\frac{\partial f}{\partial z}\right|_{\vec{p}} \right)^\top.$$
The normal curvature in any tangent space direction $\vec{t}$ is given by:
$$\kappa_\vec{n}(\vec{t}) = \frac{\vec{t}^\top (\nabla \vec{n}) \vec{t} }{ \norm{\vec{t}}^2 }$$
with $\nabla \vec{n}$ the gradient of the normal, \ie~the projection of the normalized Hessian (matrix of second partial derivatives) $\mathrm{H}f = \nabla^2 f$ onto the tangent plane.

Many approaches have been taken to compute the contour generator.

\section{Contour extraction} \label{sec:implicit_extraction}

The contour generator $\mathcal{C}$ of an implicit surface $\mathcal{S}$ seen from a camera center $\vec{c}$ is defined as the set of points $\vec{p}$ such that:
\begin{alignat*}{2}
	                          &  & (f(\vec{p}) - \vec{c}) \cdot \vec{n}           & = 0  \\
	\Leftrightarrow\mkern20mu &  & (f(\vec{p}) - \vec{c}) \cdot \nabla f(\vec{p}) & = 0,
\end{alignat*}
which is itself an implicit function. 

\subsection{Contour tracing} 
The most basic contour tracing algorithm is the generic algorithm described by \citet{Dobkin:1990} for tracing the contour of any smooth function from $\mathbb{R}^n$ to $\mathbb{R}^k$ ($k < n$). A drawback of this method is that it only extracts a fixed resolution piecewise-linear approximation of the contour. If the implicit function $f$ is at twice continuous, we can directly work with the function $f$ and trace an approximation of the contour based on its analytical tangent vector by numerical integration \citep{Bremer:1998,Foster:2005,Plantinga:2006}.

Assuming (for simplicity) an orthographic projection along the direction $\vec{v}$, a parametric curve $c: \mathbb{R} \rightarrow \mathbb{R}^3$ lies on the contour generator of an implicit surface $\mathcal{S}$ if:
\begin{equation}\label{eq:implicit_contour}
	\left\{\begin{aligned}
		f(c(t))                       & = 0                       \\
		\vec{v}^{\top} \nabla f(c(t)) & = 0. \text{\footnotemark}
	\end{aligned} \right.
\end{equation}
\footnotetext{Recalling that $\vec{a} \cdot \vec{b} = \vec{a}^\top \vec{b}$.}Denoting $\vec{w} = c'(t)$ the tangent vector of the curve and differentiating each equation with respect to $t$, we get:
\begin{equation*}
	\left\{ \begin{aligned}
		\nabla f(c(t)) \cdot \vec{w}             & = 0  \\
		\vec{v}^{\top} \mathrm{H}f(c(t)) \vec{w} & = 0.
	\end{aligned}\right.
\end{equation*}
This implies that $\vec{w}$ is proportional to the cross-product of the gradient at its basepoint and the product of the Hessian at the basepoint with the view direction, that is:
$$\vec{w} \propto \nabla f(c(t)) \times \vec{v}^\top \mathrm{H}f.$$
If we know a starting point $\vec{p}_0$ on the contour generator, we can progressively trace the full curve by taking small steps in the direction of the tangent $\vec{w}$. This corresponds to a numerical Euler integration scheme where, at each step:
\begin{alignat*}{2}
	                      &  & \vec{p}_{i+1} & = \vec{p}_i + \epsilon F(\vec{p}_i),                          \\
	\text{with}\mkern10mu &  & F(\vec{p})    & = \nabla f(\vec{p}) \times \vec{v}^\top \mathrm{H}f(\vec{p}).
\end{alignat*}
for small step size $\epsilon$.
Because 
the tangent vector at a curtain fold vanishes, tracing stagnates when it reaches a curtain fold.

The full process can be summarized as follows:
\begin{itemize}
	\item Find a starting point on the contour generator.
	\item Trace out the contour curve by Euler integration.
	\item Stop when the curve returns to the starting point or stagnates.
	\item If it stagnates, return to the starting point and trace in the opposite direction.
\end{itemize}

\paragraph{Stabilized integration.}
Euler integration is known to be unstable, that is, the position $\vec{p}$ might quickly leave the contour, even with a small step sizes, in critical configurations. To improve convergence, two correction terms can be added to the vector field $F(\vec{p})$. The first correction enforces the computed position to lie on the implicit surface by pointing towards the surface at all point of space:
$$F_{\text{surface}}(\vec{p}) = \frac{-f(\vec{p}) \nabla f(\vec{p}))}{\norm{\nabla f(\vec{p})^2}}.$$
The second correction ensures that the integration follows the contour generator. In the same way as $-f\nabla f$ tends to drive $f$ to zero, $-g \nabla g$ with $g(\vec{p}) = \vec{v} \cdot \nabla f(\vec{p})$ tends to drive $g$ to zero, \ie~towards the contour generator, leading to:
$$F_{\text{contour}}(\vec{p}) = \frac{-(\vec{v} \cdot \nabla f(\vec{p})) \vec{v}^\top \mathrm{H}f(\vec{p})}{\norm{\nabla f(\vec{p})^2}}.$$
The final Euler step is simply the weighted sum of the vector fields:
\begin{align*}
	\vec{p}_{i+1} & = \vec{p}_i + \epsilon \left( F(\vec{p}_i) + F_{\text{surface}}(\vec{p}_i) + k F_{\text{contour}}(\vec{p}_i) \right)                                            \\
	              & = \vec{p}_i + \frac{\epsilon}{\norm{\nabla f(\vec{p}_i)}^2} (\nabla f(\vec{p}_i) \times \vec{v}^\top \mathrm{H}f(\vec{p}_i) - f(\vec{p}_i)\nabla f(\vec{p}_i))) \\
	              & \hspace{.32in} -k(\vec{v} \cdot \nabla f(\vec{p}_i)) \vec{v}^\top \mathrm{H}f(\vec{p}_i))
\end{align*}
with $k$ a user-defined scalar value --- \citet{Bremer:1998} recommend choosing $k=0.5$. With those correction terms, the vector field does not vanish at curtain folds anymore, which may prove problematic if the tracer overshoots.

\begin{figure}
	\centering
	\small
	\def\svgwidth{0.9\textwidth}\import{figures/smooth_contours/}{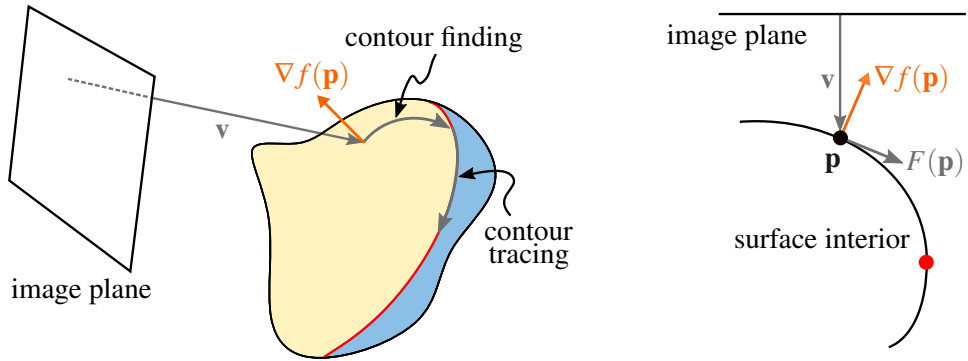}\caption{\textbf{Contour finding} --- To find a starting position for tracing the occluding contour, \citet{Bremer:1998} shoot random rays in the view direction $\vec{v}$ from the image plane (left), and march along the surface in the tangent direction aligned with the projection of the gradient (right) until $\vec{v} \cdot \nabla f(\vec{p})$ changes sign.}\label{fig:implicit_tracing}
\end{figure}

\paragraph{Finding starting points.}
Different approaches have been proposed to find starting positions on the contour generator. \citet{Bremer:1998} shoot random rays from the orthographic camera and, when an intersection is found, they march along the surface in a tangent direction whose image-space projection is in the same direction as that of the gradient, \ie
$$F(\vec{p}) = \frac{\nabla f(\vec{p}) \times (\vec{v} \times \nabla f(\vec{p}))}{\norm{\nabla f(\vec{p})}^2},$$
by Euler integration, until the sign of $\vec{v} \cdot \nabla f(\vec{p})$ changes (\fig{implicit_tracing}). The term $F_\text{surface}$ can be added to the vector field to ensure that the integrated positions remain close to the surface.

Instead of casting rays each time the viewpoint changes, \citet{Foster:2005} precompute a dense set of seed points using the surface-constrained ``floater'' particles of \citet{Witkin:1994}, and select the points that are close enough to the contour generator, \ie~such as those where $\dotPabs{\vec{v}}{\nabla f(\vec{p})}$ is below a threshold.

\begin{figure}[t]
	\centering
	\small
	\begin{subfigure}[t]{0.5\linewidth}
		\def\svgwidth{\hsize}\import{figures/smooth_contours/}{implicit_topology.pdf_tex}\caption{fixed step size issues}\label{fig:implicit_topology}
	\end{subfigure}
	\qquad
	\begin{subfigure}[t]{0.3\linewidth}
		\includegraphics[width=\linewidth]{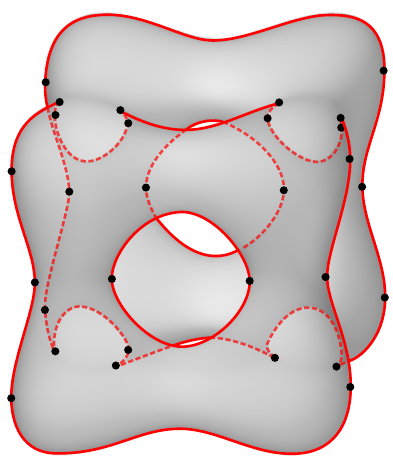}
		\caption{contours traced with a dynamic step size}\label{fig:tangle_cube}
	\end{subfigure}
	\caption{\textbf{Implicit surface contour generator tracing} --- With a fixed step size \textbf{(a)}, the traced contour may jump to another component of the contour (left) or skip a part of it (right). \textbf{(b)} With a dynamic step size integration scheme, \citet{Plantinga:2006} accurately trace the contour generator of a complex implicit tangle cube: $x^4-5x^2+y^4-5y^2+z^4-5z^2+10=0$. The starting points are indicated by the black dots.
	\label{fig:implicit_contours}
}
\end{figure}

\paragraph{Tracing with interval arithmetic.}
The above methods are neither guaranteed to find all contour generators, nor to trace them with accurate topology. With the fixed step size $\epsilon$, the tracing may accidentally jump to another component of the contour generator or skip a part of it (\fig{implicit_topology}). 

\citet{Plantinga:2006} describe a method that does offer topological guarantees under orthographic projection.
They provide a detailed theoretical analysis of the implicit surface contour.  They then use this to derive an
interval arithmetic algorithm to identify points on the contour, and then trace the contour generator with a dynamic step size. After each Euler integration step, they perform an interval test to check whether the segment $\vec{p}_i\vec{p}_{i+1}$ is a good approximation of the contour generator (\fig{tangle_cube}).  They further show how to accurately evolve the contour over time.

\subsection{Extraction as surface-surface intersection}

\eqn{eq:implicit_contour} can be interpreted slightly differently: the contour generator can be viewed as the curve at the intersection of two implicit surfaces, the object surface $\mathcal{S}$ and the \emph{contour surface} (\fig{implicit_intersection}) defined implicitly as the zero-set of $\nabla f(\vec{p}) \cdot \vec{v}$.
(\citet{Stroila:2008} called it the ``silhouette surface.'')
%

\begin{figure}[t]
	\centering
	\small
	front-view\hspace{-2.5em}
	\includegraphics[width=0.4\linewidth]{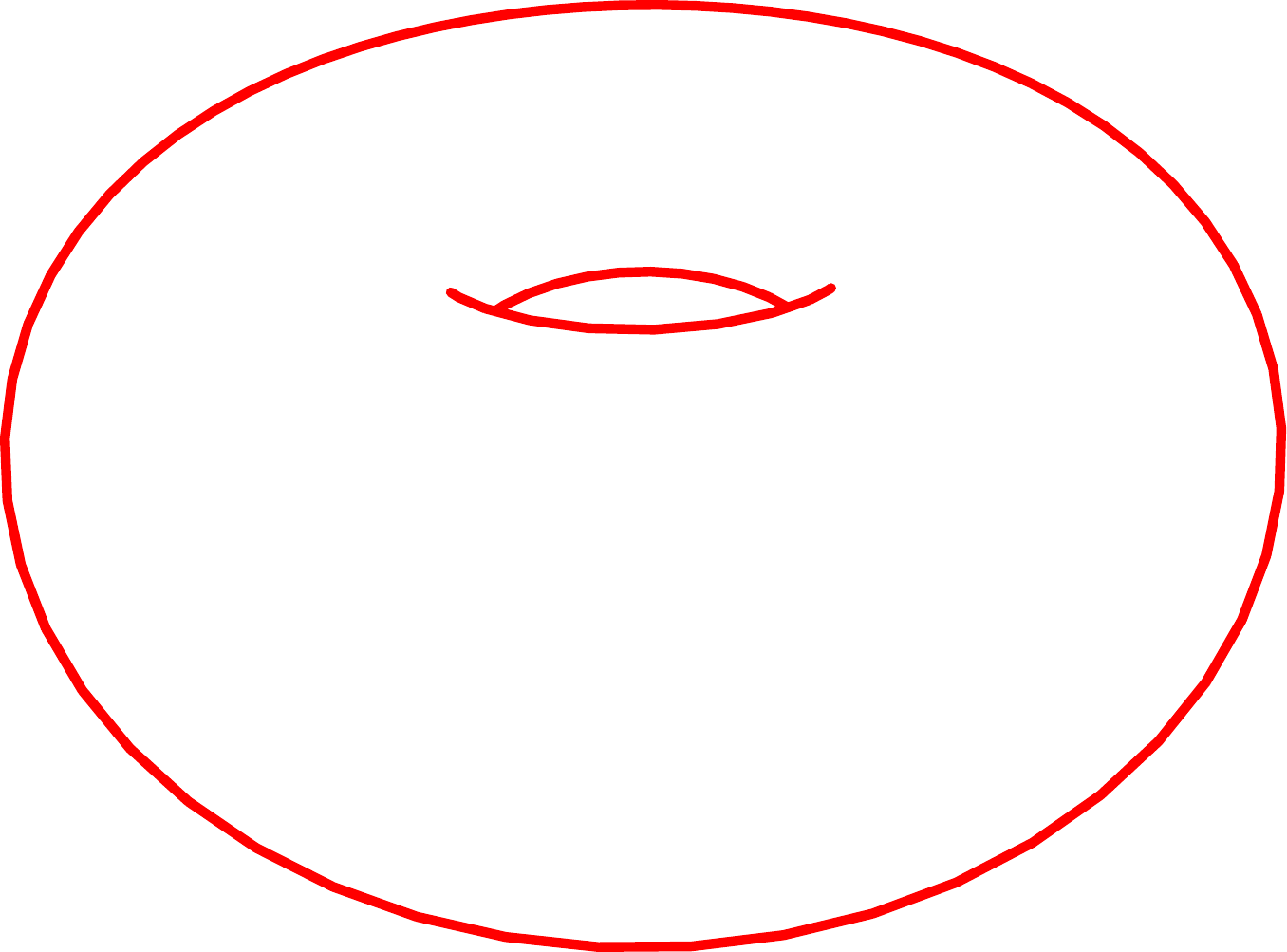}
	side-view
	\includegraphics[width=0.37\linewidth]{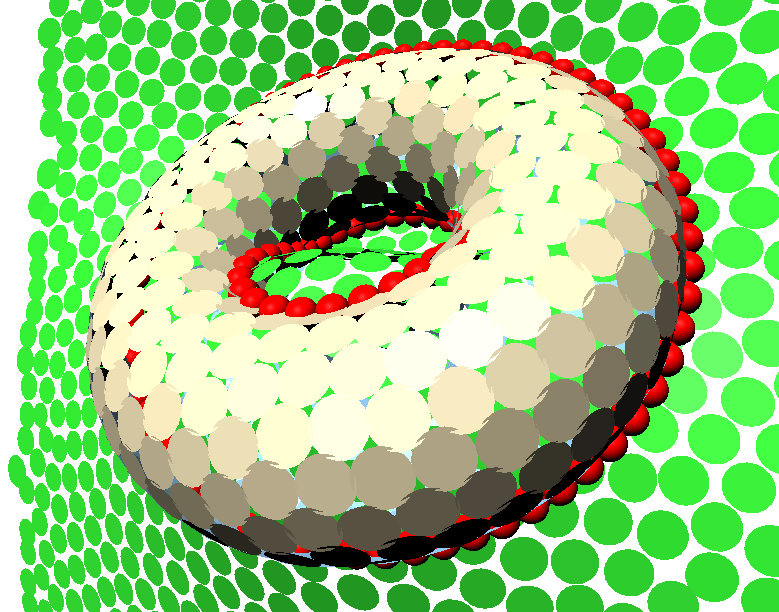}
	\caption{\textbf{Surface-surface intersection} --- The apparent contour (left) along the view direction $\vec{v}$ is the projection of the contour generator (in red), shown from a side-view (right), at the intersection of the implicit surface $\mathcal{S}$ (in yellow) and the implicit contour surface $\nabla f \cdot \vec{v} = 0$ (in green). Images generated with the ``Wickbert'' particles library~\citep{wickbert}.}\label{fig:implicit_intersection}
\end{figure}

To delineate this intersection, \citet{Stroila:2008} simultaneously constrain the ``floater'' particles of \citet{Witkin:1994} to lie on the implicit surface $\mathcal{S}$ and on the contour surface. After optimization, the particles can be connected together to form closed loop on the implicit surface. The differential properties of the contour curve can be leveraged to properly select each particle neighbors, although this does not guarantee accurately
tracing the contour, nor finding all contours.

\begin{figure}[t]
	\centering
	\small
	front-view\hspace{-2em}
	\includegraphics[width=0.35\linewidth]{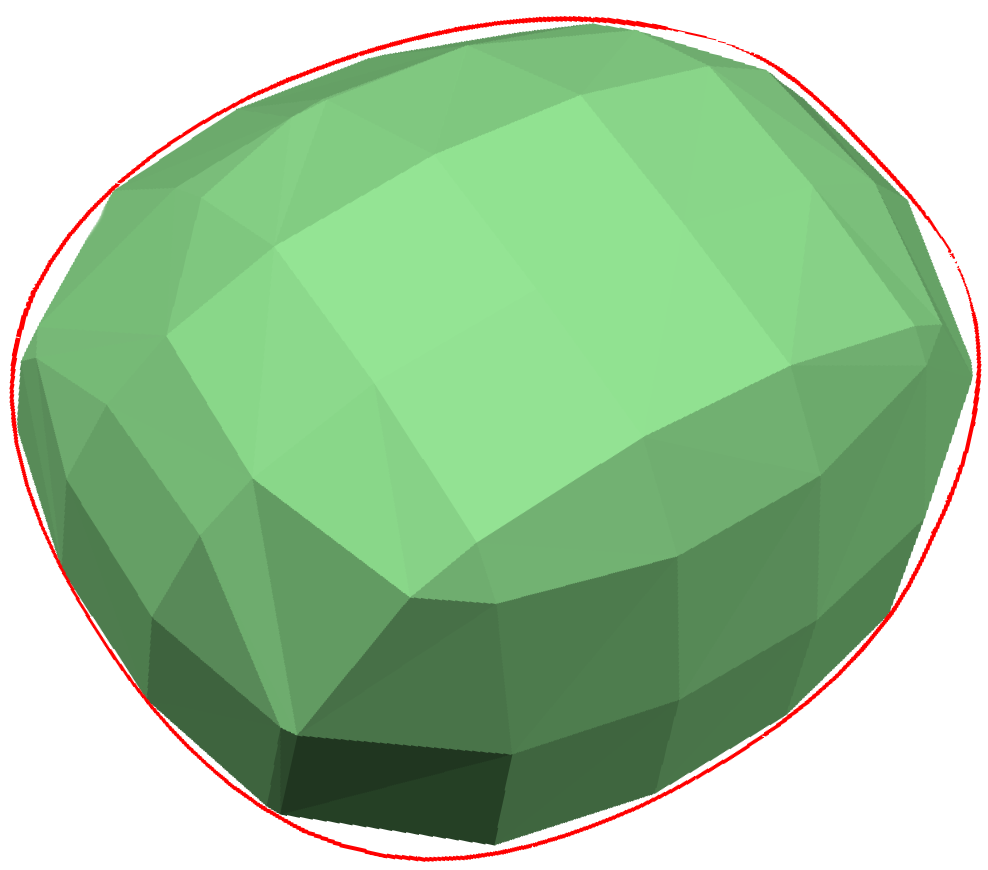}
	\qquad
	side-view\hspace{-1.75em}
	\includegraphics[width=0.35\linewidth]{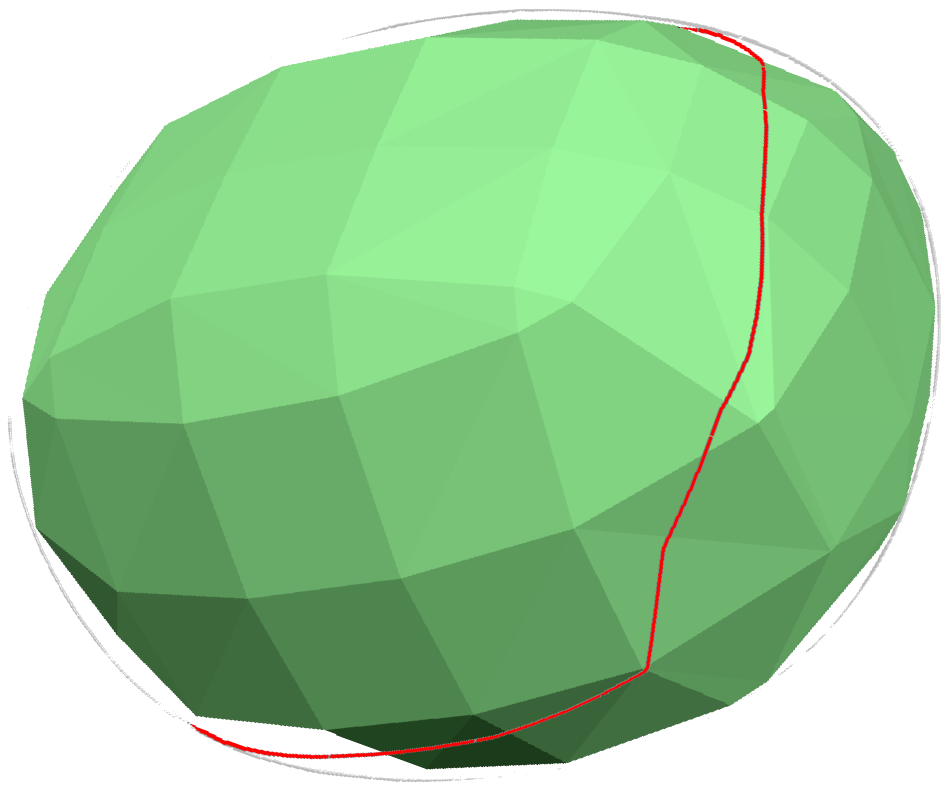}
	\caption{\textbf{Proxy-based method} --- The smooth surface occluding contour is approximated by computing the interpolated contours of on a coarse base mesh, projecting the vertices of this coarse contour onto the actual smooth surface and subdividing it. Images generated with ``ShapeShop''\citep{shapeshop}.
	}\label{fig:implicit_proxy}
\end{figure}

\subsection{Extraction on a mesh} 

A even simpler, but very approximate, approach consists in first extracting a polygonal mesh from the implicit surface, \eg{} with the Marching Cubes algorithm~\citep{Lorensen:1987} and its extensions~\citep{deAraujo:2015}, computing its interpolated contours (Section \ref{sec:interpolated_contours}), and projecting those onto the implicit surface after view-dependent subdivision~\citep{Schmidt:2007} (\fig{implicit_proxy}). Since the implicit function $f$ is defined everywhere in $\mathbb{R}^3$, the projection of a point $\vec{p}$ on the implicit surface $f(\vec{p}) = \rho$ can be computed by walking along the gradient of $f$, updating $\vec{p}$ with the following convergence iteration:
$$\vec{p} \leftarrow \vec{p} + \frac{(f(\vec{p}) - \rho)\nabla f(\vec{p})}{\norm{\nabla f(\vec{p})}}.$$
This scheme is a form of gradient descent on $(f(\vec{p})-\rho)^2$.
It usually leads to the iso-value after a few iterations if $f$ is smooth enough, though it may get stuck when there are discontinuities.

\section{Visibility} \label{sec:implicit_visibility}

As with mesh contours, two families of approaches can be used to determine the visibility of the extracted contour generators. The first one is based on ray-casting and thus more accurate but only suitable for offline computation. The second one uses the depth buffer algorithm and is well suited for real-time applications, even though the implicit nature of the surface makes the creation of the depth buffer more complex.

\paragraph{Ray-casting and propagation.} The simplest, but most time-consuming, method consists in testing the visibility of every point $\vec{p}_i$ on the discretized occluding contour by casting a ray from the camera towards $\vec{p}_i$ and performing a ray-surface intersection test. Due to numerical issues, $\vec{p}_i$ may not exactly lie on the implicit surface; therefore it may be locally occluded by the surface leading to a spurious ray intersection. To mitigate this problem, \citet{Bremer:1998} discard surface intersections that are too close to $\vec{p}_i$ which, in turn, may reveal contours that are barely obscured by nearby pieces of surface.

To reduce the number of ray tests, one can first check every $n^\text{th}$ point for occlusion, and then refine by testing the intermediate points when the visibility changes between $\vec{p}_i$ and $\vec{p}_{i+n-1}$ \citep{Bremer:1998,Foster:2005}. To reduce the number of tests even further, a view graph can be built and the visibility information can be propagated along the contour chains (\sect{sec:ray_casting}). However the propagation rules slightly differ from those for polygonal meshes since implicit surfaces are closed by construction. To reduce the number of ray-tests further, \citet{Stroila:2008} describe methods to propagate Quantitative Invisibility (Section \ref{sec:vis_algo}) on contours of implicit surfaces.

\paragraph{Depth buffer.} The alternative solution is to render the 3D scene into a depth buffer, and then to use this buffer to determine the visibility of the apparent contour. Besides the problems mentioned in \sect{sec:buffer}, an additional difficulty of such an approach is that, unlike polygonal meshes, implicit surfaces cannot be rasterized directly. One could again extract a polygonal mesh from the implicit surface, but a highly refined tessellation is required to avoid visual artifacts -- otherwise the mesh contours will largely disagree with the smooth contours -- which is computationally expensive.

Instead, we can discretize the implicit surface using \emph{surfels} \citep{Pfister:2000}, oriented ellipses that are traditionally used to render 3D point clouds. For instance, \citet{Foster:2005} generate a surfel for every ``floater'' particle distributed on the implicit surface and orient them according to the surface gradient. Each surfel is then projected into the 2D image plane and rasterized with depth writes activated. To provide accurate results, the implicit surface needs to be suitably covered by surfels, which may require a large number of particles in areas of high curvature~\citep{Meyer:2005}.

On deforming surfaces, dynamically recomputing such particle distribution is too costly for real-time applications. In the same spirit as the painterly rendering technique of \citet{Meier:1996}, \citet{Schmidt:2007} first distribute surfels on a low resolution mesh (previously used for contour extraction), and then project them onto the implicit surface. However, after projection, the surfels may not properly cover the surface. Even thought a simple heuristic is proposed to non-uniformly scale them, it cannot guarantee that the surface will be accurately approximated by the distribution of surfels.

\section{Volumetric data} \label{sec:volumes}

Volumetric data can be seen as a special case of implicit surfaces, one where the implicit function $f$ is discretized on a regular 3D grid (voxel grid) into density values:
\begin{equation}
v_{ijk} = f(\vec{p}_{ijk}) = f(x_i,y_j,z_k).
\end{equation}
Different methods make different assumptions on the density between these values. In this case, the local curvature and scale of small features is limited by the discretization, making it possible to make better guarantees about contour and intersection detections, unlike with arbitrary implicit surfaces.

There are two general approaches to contour visualization. The first entails defining an isosurface, and then extracting contours of the isosurface, similar to what was done in the general implicit surface case.  The second approach is analogous to image-space contour rendering (\chap{chap:image_space}): it uses a conventional volume rendering method, but with transfer functions designed to emphasize contours

\subsection{Isosurface extraction} 

To compute the contours corresponding to a given iso-value $\rho$, we could first extract the corresponding iso-surface $f(\vec{p}_{ijk}) = \rho$ with, \eg~the Marching Cubes algorithm \citep{Lorensen:1987}, compute its normals as $\vec{n} = \nabla f$, and extract the surface Interpolated Contours (\sect{sec:interpolated_contours}). However, as discussed previously for implicit surfaces, occluding contours can be seen as the zero-set of two implicit functions:
\begin{equation*}
	\left\{ \begin{aligned}
		f(\vec{p}_{ijk}) - \rho               & = 0 \\
		\nabla f(\vec{p}_{ijk}) \cdot \vec{v} & = 0
	\end{aligned}\right.
\end{equation*}
or, geometrically, as the intersection of two implicit surfaces.

\begin{figure}[t]
	\centering
	\small
	\def\svgwidth{\hsize}\import{figures/smooth_contours/}{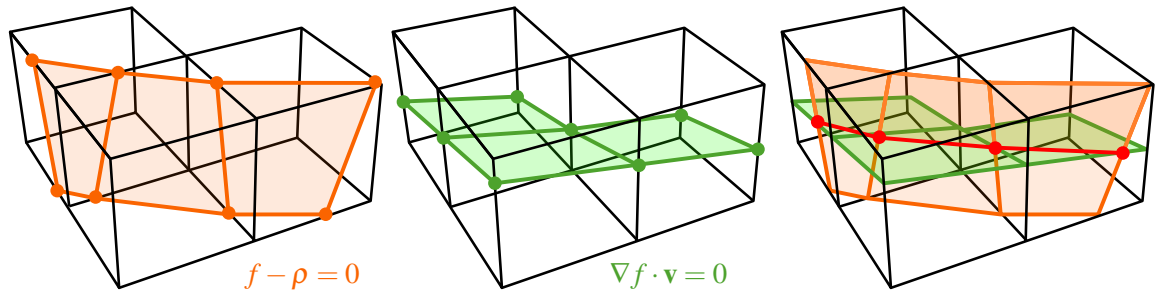}\caption{\textbf{Marching line algorithm} --- The intersections of the isosurface (left) and contour surface (center) with each face of every voxel are first computed. The resulting segments are intersecting on the faces at contour points (right).}\label{fig:volume_marching}
\end{figure}

The simplest solution to extract a piecewise linear approximation of the intersection curves of these implicit functions is the marching lines algorithm \citep{Thirion:1996,Burns:2005}. For each voxel, this method first computes the line segments at the intersection of both functions with the voxel faces, using linear interpolation of the density and gradient values at the voxel corners (\fig{volume_marching}, left and center). It then finds the intersection points between these two set of lines on each face. These contour points are eventually connected to produce contour chains (\fig{volume_marching}, right).

The piecewise linear approximation of both the implicit functions and their intersection might be too crude. Assuming that the input scalar field is smooth enough, \citet{Schein:2004} use trivariate B-spline functions to model the volumetric data. The scalar values are used as control points of a trivariate tensor product B-spline function $D :\mathbb{R}^3 \rightarrow \mathbb{R}$:
$$D(u,v,w) = \sum_i \sum_j \sum_k f(\vec{p}_{ijk}) N_i^d(u) N_j^d(v) N_k^d(w),$$
where $N_i^d(u)$, $N_j^d(v)$ and $N_k^d(u)$ are the B-spline basis functions of degree $d$ that controls the smoothness of the representation. Extracting the contour then boils down to resolve the following system of equations:
\begin{equation*}
	\left\{ \begin{aligned}
		D - \rho               & = 0 \\
		\nabla D \cdot \vec{v} & = 0
	\end{aligned}\right.
\end{equation*}
The multidimensional Newton-Raphson solver of \citet{Elber:2001} can be used to solve this system to a desired accuracy starting from a dense set of seed points. If quadratic or higher basis functions are used, the gradient field of the trivariate B-spline function $\nabla D$ is continuous, and thus the contours are smooth and continuous. In addition, since the solver can refine the data at any parametric location $(u,v,w)$ and not just discrete grid points, the contour approximation is better adapted to the input data. However, with this method, the connectivity of the extracted contour points cannot be trivially inferred for the voxel grid anymore, limiting the rendering and stylization possibilities.

\paragraph{Acceleration strategies.}
A naive implementation of the two previous approaches would have an $O(n^3)$ complexity for a dataset of size $n \times n \times n$ voxels. In practice, a typical isosurface will have surface area $O(n^2)$ and thus its contour will have size $O(n)$ edges (\sect{sec:sparse}). 

To speed-up contour extraction in volumes, strategies similar to those used for mesh contours can be employed.

\citet{Burns:2005} adapt randomized search with temporal coherence (Section \ref{sec:randomized}). They first test random voxels until a contour is found. Then, they move to the voxel adjacent to the face containing one of the intersection points and repeat until the initial voxel is reached, forming a contour loop. The random sampling strategy to find a starting voxel can be further improved by leveraging temporal coherence, \ie~searching nearby contour-containing voxels from the previous frame, and by gradient descent, alternatively walking along the gradient of the isosurface and contour functions until a new contour-containing voxel is found.

An acceleration data-structure can be built during a preprocessing step. \citet{Elber:2001} construct a 2D lookup table whose first dimension corresponds to isovalue ranges and second dimension corresponds to bounding cones covering the unit sphere. The trivariate B-spline function $D$ is then clustered into ``singletons'' based on its isovalue and gradient, and stored them in the table. At runtime, given a view direction $\vec{v}$ and isovalue $\rho$, only the relevant singletons can be efficiently retrieved from the table, defining the seed points for the Newton-Raphson solver.

\paragraph{Visibility.}
Finally, the visibility of the contour can be computed with respect to the target isosurface by tracing a ray from each contour point towards the camera, similarly to \citet{Bremer:1998}. For each voxel traversed by the ray, we need to test whether the isosurface is intersected. It is achieved by checking if the sign of $f$ changes at the intersection of the ray with the face by which it leaves the voxel. If it does, the ray passes from outside to inside the isosurface, and the contour point is thus occluded.

\subsection{Direct volume rendering} 

\begin{figure}[t]
	\centering
	\small
	\def\svgwidth{0.9\linewidth}\import{figures/smooth_contours/}{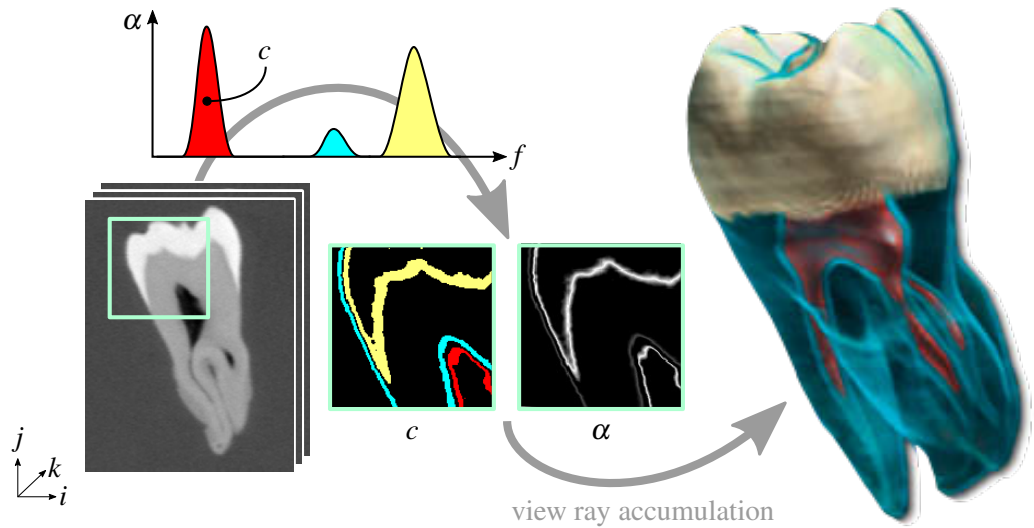}\caption{\textbf{Direct volume rendering} \citep{Ikits:2004} --- The density value $f$ of each voxel $p_{ijk}$ is mapped through a transfer function (top) into a color $c$ and an opacity $\alpha$; those are then accumulated along each view ray to produce an image of the data.}\label{fig:volume_direct}
\end{figure}

In this approach, we use conventional volume rendering, and modify the transfer function to visualize contours.

In conventional volume visualization \citep{Kaufman:2005}, the initial density value $v_{ijk} = f(\vec{p}_{ijk})$ of each voxel $\vec{p}_{ijk}$ of the discrete volume is first mapped through a transfer function into a color value $c_{ijk}$ and an opacity $\alpha_{ijk}$. 
Rays are then cast from each pixel of the camera, along which colors and opacities are regularly sampled with appropriate interpolation (often tri-linear) in the voxel grid. Those samples are eventually composited in front-to-back order to yield a single color per pixel. For shading computation, a normal at each voxel is computed as the normalized gradient of $f$, which is usually approximated with central differences:
\begin{equation*}
	\nabla f(\vec{p}_{ijk}) = \nabla f(x_i,y_j,z_k) \approx \frac{1}{2} \left[
		\begin{aligned}
			 & f(x_{i+1},y_j,z_k) - f(x_{i-1},y_j,z_k) \\
			 & f(x_i,y_{j+1},z_k) - f(x_i,y_{j-1},z_k) \\
			 & f(x_i,y_j,z_{k+1}) - f(x_i,y_j,z_{k-1})
		\end{aligned} \right],
\end{equation*}
although more advanced gradient estimation operators have been proposed~\citep{Lichtenbelt:1998}. This approach is called the ``pre-classified model'' since voxel densities are mapped to colors and opacities prior to interpolation. An alternative solution is the ``post-classified model'' that interpolates the voxel densities before mapping the resulting values to colors and opacities, and thus tends to produce sharper results.

To enhance occluding contours, we can increase the opacity of voxels whose gradient is near perpendicular to the view direction~$\vec{v}$. Given input opacity $\alpha_{ijk}$, \citet{Ebert:2000} suggest  using the following opacity transfer function:
$$\alpha_{ijk}' = \alpha_{ijk} (k_{sc} + k_{ss}(1-\abs{\nabla f(\vec{p}_{ijk}) \cdot \vec{v}}^{k_{se}}), $$
where $k_{sc}$ controls the scaling of non-contour regions, $k_{ss}$ controls the amount of contour enhancement, and $k_{se}$ controls the sharpness of the contour ``curve''.

When one wishes to only visualize data variations and ignore scalar values, voxel color can directly be defined based on the gradient magnitude. For instance, \citet{Csebfalvi:2001} propose the following color transfer function:
$$c_{ijk} = w(\abs{\nabla f(\vec{p}_{ijk})}) (1-\abs{\nabla f(\vec{p}_{ijk}) \cdot \vec{v}})^{k_{se}},$$
with $w$ a windowing function selecting the range of interest in the gradients. Opacity modulations can also be computed with the same transfer function and combined with colors with standard front-to-back compositing (\fig{direct_stylized1}).

Alternatively, maximum intensity projection can be used. It consists in only keeping the sample with the highest intensity (color times opacity) along each ray. This tends to reduce the visual overload since a single (but potentially different) iso-value is represented per pixel, but discards any depth information, which may make the interpretation of the resulting image difficult. The depth ordering perception can be improved by local maximum intensity projection that selects the closest sample to the camera which is above a threshold.

\begin{figure}[t]
	\centering
	\small
	\begin{subfigure}[t]{0.3\linewidth}
		\includegraphics[width=\linewidth]{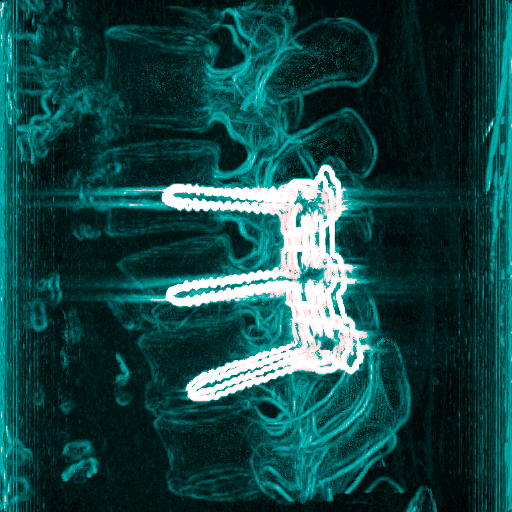}
		\caption{Vertebra rendered with the transfer function of \citet{Csebfalvi:2001}.}
		\label{fig:direct_stylized1}
	\end{subfigure}
	\quad
	\begin{subfigure}[t]{0.6\linewidth}
		\includegraphics[width=\linewidth]{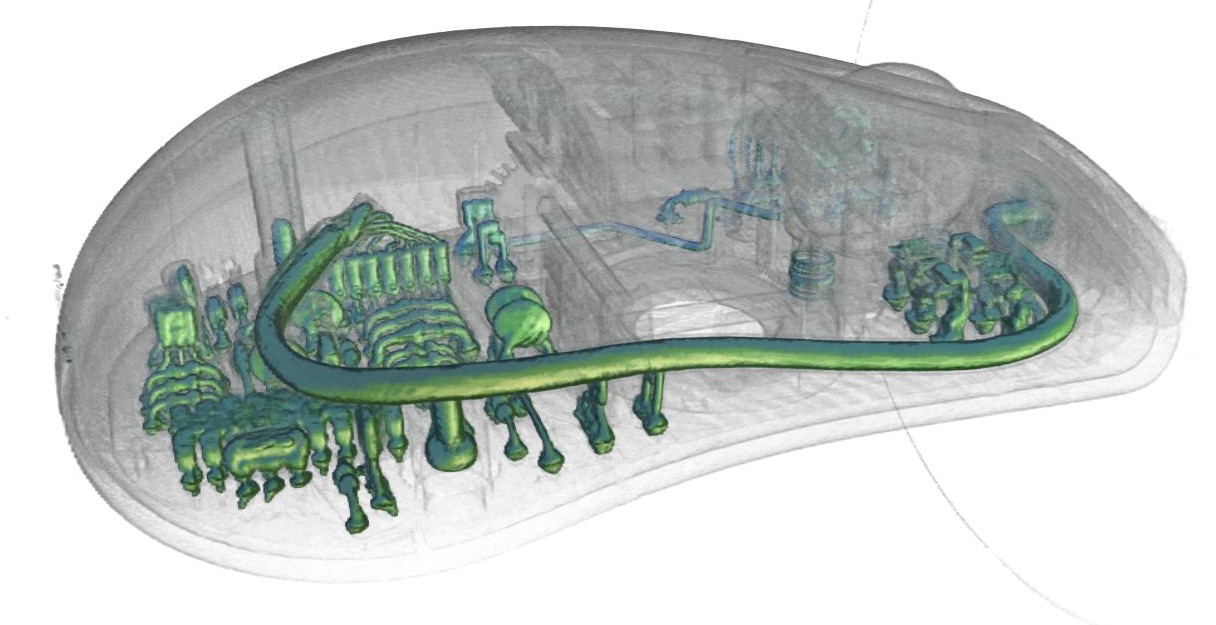}
		\caption{Computer mouse visualized with the interactive technique of \citet{Lum:2002}.}
		\label{fig:direct_stylized2}
	\end{subfigure}
	\caption{\textbf{Contour enhanced direct volume rendering.}}\label{fig:direct_stylized}
\end{figure}

\paragraph{Graphics hardware acceleration.}
The above methods involve expensive ray marching, which prohibits real-time rendering of large volumetric data sets. Graphics hardware can be used for acceleration \citep{Lum:2002,Nagy:2002}. In particular, the volume is sliced with polygons aligned with the view, and progressively accumulated in the image plane. Significant speed-ups are obtained by storing the voxel grid into a 3D texture and leveraging  graphics hardware for the slicing and re-sampling operations. For contour rendering, the normalized gradient is stored as an additional 3D texture. It is accessed at each fragment of every slice and used to modulate the color and/or opacity of the fragment by its dot product with the view direction (\fig{direct_stylized2}).

\begin{figure}[t]
	\centering
	\small
	\includegraphics[width=0.31\linewidth]{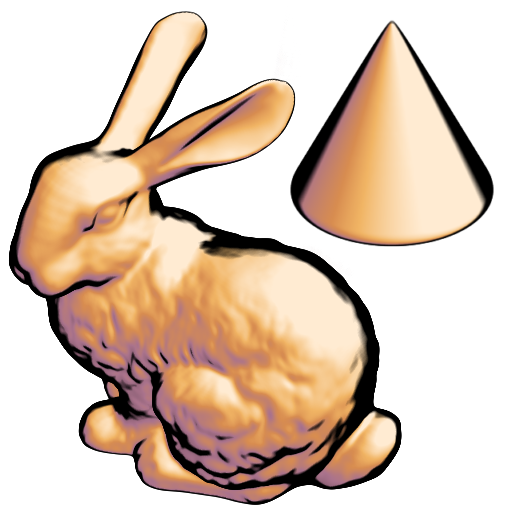}
	\includegraphics[width=0.31\linewidth]{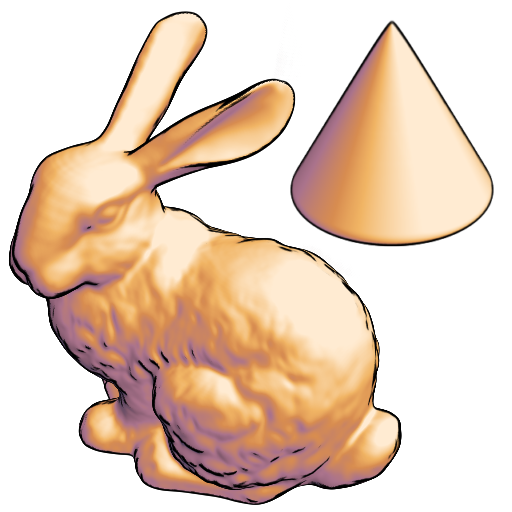}
	\includegraphics[width=0.31\linewidth]{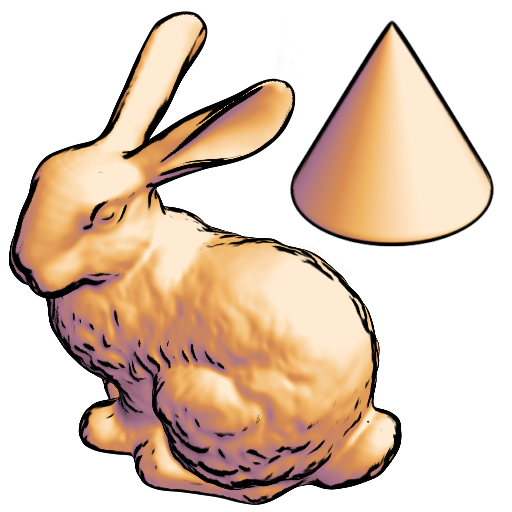}
	\caption{\textbf{Contour thickness variations in volume rendering} --- From left to right, apparent contours based solely on $\vec{v} \cdot \vec{n}$, with controlled thickness using $T=1$ and $T=2.5$. Images rendered with ``miter'' courtesy of Gordon Kindlmann, more information at \url{http://www.sci.utah.edu/~gk/vis03/}.}\label{fig:implicit_contour}
\end{figure}

\paragraph{Line thickness control.}
In the above methods, the thickness of the apparent contour varies in image-space (\fig{implicit_contour} left). These thickness variations are related to the radial curvature $\kappa_r$ of the selected iso-surfaces. For a given threshold on $\abs{\vec{n} \cdot \vec{v}}$, the resulting contour is thicker on areas of low radial curvatures since the normal is nearly perpendicular to the view direction in this large, almost flat region. Conversely, in areas of high curvature, the normal quickly changes resulting on a very thin contour.

To solve this problem, \citet{Kindlmann:2003} derive a 2D transfer function that takes the radial curvature $\kappa_r$ into account and guarantees that the apparent contour will approximately have a constant, controllable thickness $T$ in image-space (\fig{implicit_contour} right). 


To use this transfer function, we need to accurately estimate the curvature in the voxel grid and thus the Hessian of the 3D scalar field. Simply computing finite differences over the previously estimated normals would provide a very crude approximation. Instead, \citet{Kindlmann:2003} use axis-aligned 1D continuous convolution filters for zero-, first- and second-derivative estimation. They show that the cubic B-spline and its derivatives provide a good tradeoff between accuracy and robustness to the noise.

\subsection{Particle distribution} 

The particle-based implicit surface visualization technique of \citet{Foster:2005} can also be adapted to volumes~\citep{Busking:2008}. During a pre-process, particles are distributed on a given isosurface by sampling the volume data with linear interpolation on a user-defined grid and applying some relaxation steps~\citep{Meyer:2005}. At runtime, particles whose normal is almost orthogonal to the view direction are selected for rendering. From each of those, a short line segment is traced in a direction locally parallel to the contour. A linear approximation of this direction can be computed as follows. In the local coordinate system formed by the two principal directions $\vec{e}_1$ and $\vec{e}_2$ and the the normal $\vec{n}$, the behavior of the normal can be linearly approximated by:
$$\tilde{\vec{n}}(u,v) = (-\kappa_1 u, -\kappa_2 v, 1)^\top,$$
with $\kappa_1$ and $\kappa_2$ the principal curvatures in the corresponding principal directions.
The view vector expressed in the same coordinate frame is:
$$\tilde{\vec{v}} = (\vec{e_1} \cdot \vec{v}, \vec{e_2} \cdot \vec{v},\vec{n} \cdot \vec{v})^\top.$$
The contour lime in this frame can thus be defined as the set of parametric locations $(u,v)$ such as:
$$\tilde{\vec{v}} \cdot \tilde{\vec{n}}(u,v) = 0 \Leftrightarrow -\kappa_1(\vec{e_1} \cdot \vec{v})u - \kappa_2(\vec{e_2} \cdot \vec{v}) v + \vec{n} \cdot \vec{v} = 0,$$
from which a parallel direction in world space can be derived:
$$\vec{d} = -\kappa_2(\vec{e_2} \cdot \vec{v})\vec{e}_1 + \kappa_1(\vec{e_1} \cdot \vec{v})\vec{e}_2.$$

\begin{figure}[t]
	\centering
	\small
	\def\svgwidth{\linewidth}\import{figures/smooth_contours/}{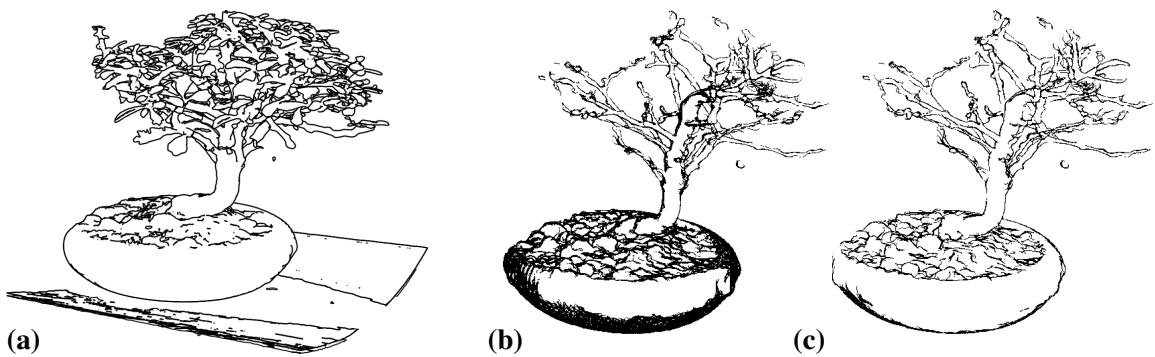}\caption{\textbf{Bonsai dataset ($\mathbf{256^3}$ voxels)} --- Contours extracted with the marching line method of~\citet{Burns:2005} \textbf{(a)} and the particle system of~\citet{Busking:2008} without \textbf{(b)} and with \textbf{(c)} thickness control. Volumetric dataset courtesy of Stefan~\citet{TheVolumeLibrary}.}\label{fig:volume_bonsai}
\end{figure}

In addition, to avoid thickness variations of the apparent contour (\fig{volume_bonsai}), \citet{Busking:2008} use a thresholding function that depends on the image-space distance $T$ of the particle to the contour line approximation, \ie~assuming orthogonal projection:
$$T = \frac{(\vec{n} \cdot \vec{v})^2}{\sqrt{(\kappa_1(\vec{e_1} \cdot \vec{v}))^2 + (\kappa_2(\vec{e_2} \cdot \vec{v}))^2}}.$$
Since principal curvature information is view-independent, it can be pre-computed for all particles.


\chapter{Stylized Rendering and Animation}
\label{chap:rendering}

We now come to the reward for all the hard work of curve extraction: rendering these curves in an attractive, artistic style. This chapter describes algorithms for stylized rendering and animation.

The basic steps are to combine visible contour line segments into longer curves (\sect{sec:view_graph_planar_map}), and then to render those curves with stroke textures, such as a pen, pencil, or paint strokes (\sect{sec:stylization}).  In many cases, it will be necessary to simplify the curves before rendering, for example, to remove unnecessarily detail that an artist would never draw (\sect{sec:simplification}).

Together with the strokes, we usually wish to draw the model with shading or texturing. We briefly survey shading and texturing methods in \sect{sec:shading}.

Finally, when rendering these models in animation, we often want to render strokes with coherent stylization over time. We survey coherent stylization in \sect{sec:animation}.

\begin{figure}
	\centering
	\small
	\def\svgwidth{\linewidth}\import{figures/rendering/}{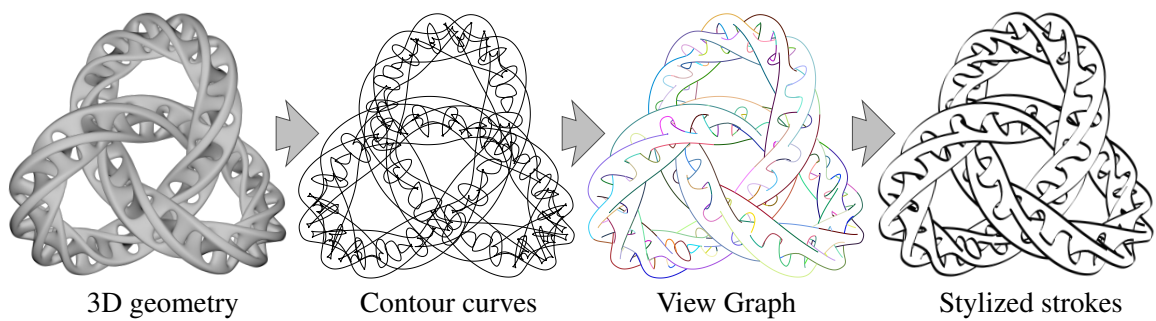}\caption{ \textbf{Stylized contour rendering pipeline} --- Starting from a 3D geometry, the contour line segments are first extracted. Their visibility is then computed by building their View Graph; here the visible chains are shown, each with a separate color. Topological simplification can be optionally applied for legibility or artistic purposes. Finally, each chain of the View Graph is rendered with stylized strokes. Images generated with Blender Freestyle.
	}\label{fig:knot_pipeline}
\end{figure}

\section{Stroke extraction}\label{sec:view_graph_planar_map}

The first stage of processing is to extract smooth curves from the View Graph (\sect{sec:view_graph}).  As a reminder, the View Graph refers to the connected line segments extracted from the 3D model. Most line segments have pointers at each ends indicated which line segments they connect to. Some line segments are connected at singular points (\sect{sec:singular}), such as a T-junctions, where three visible segments and one invisible segment connect. Line segments between junctions must all have the same visibility.  The line segment and singularity data structures  record information about where they came from. For example, a line segment may record that it was extracted from a contour edges, and would include a pointer to the mesh edge that it came from.
Some topological filtering on the View Graph may be necessary in order to extract clean strokes, such as by removing tiny loops. This filtering is described in \sect{sec:simplification}.


One may also use a Planar Map instead of the View Graph \citep{Winkenbach:1994,Winkenbach:1996,Eisemann:2008}.  The Planar Map allows stylization to take into account the relationship between the curves and the shapes of the regions they enclose.

\paragraph{Chaining.}
From the View Graph, we need to extract a set of curves for stylization. These curves are simply chains of line segments, which can be smoothed and rendered.  Normally, we will extract only the visible curves, but invisible curves can also be extracted for hidden-line renderings, in which the invisible curves are rendered in a different style from the visible curves. In the rest of this chapter, we will assume that only the visible curves are being extracted.

The basic approach to extracting these curves is greedy.  We call this process ``chaining.''  We pick some visible line segment at random, and follow pointers from one end of the line segment, following pointers from line segment to line segment, concatenating them into a list of line segments. If this chaining process returns to the original segment, then it is recorded as a closed loop. If the chaining process reaches a singularity, then the behavior depends on the singularity (refer to Figure \ref{fig:singular_points}). At a curtain fold cusp, chaining stops. At a T-junction, if the chain is in the foreground, then the chaining continues through the junction, otherwise it stops (\fig{chaining}(a)). At a Y-junction, the chaining process continues through the chain to connect silhouette edges, and stops for boundary edges (\fig{chaining}(b)).

\begin{figure}[t]
	\centering
	\small
	\def\svgwidth{0.9\linewidth}\import{figures/rendering/}{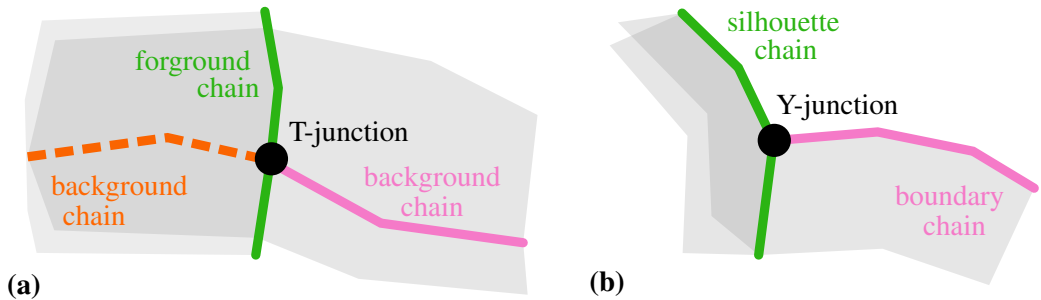}\caption{\textbf{Segment chaining at junctions} --- \textbf{(a)} At a T-junction, a foreground chain continues through the junction, whereas a background chains stops. \textbf{(b)} At a Y-junction, a silhouette chain crosses the junction, whereas a boundary chain stops.}\label{fig:chaining}
\end{figure}

The above process is repeated until all visible line segments have been added to a chain. Then, each chain can be rendered separately.

Depending on the rendering style, we may wish to use different chaining rules. For example, a style that renders a single thick silhouette around the entire object would extract a silhouette chain that follows the image-space object boundary. To do this, at each singularity, the chain follows whichever outgoing edge is on the silhouette. Determining whether a mesh edge is a silhouette edge could be determined with a ray test. \citet{Sousa:2003} explored other chaining strategies, and \citet{Grabli:2010} allow the user to programmatically define how edges should be chained together. With this approach, a given edge of the View Graph can even belong to multiple chains, allowing to produce sketchy drawing with overlapping strokes (\fig{stroke_style}).

\paragraph{Smoothing.}
A chain is a polyline, i.e., a set of connected line segments. Converting a chain into a stroke typically entails smoothing polyline in some way. For example, one may merge redundant control points (e.g., control points that are fewer than 2 pixels apart), set a number of control points proportional to the stroke's arc-length, and then perform a least-squares B-spline fit. The spline can then be converted back to a list of finely-spaced control points. The specific amount of smoothing to apply is a stylistic choice.

The topology of the View Graph also must be taken into account when smoothing curves.
For example, a common stylization would be to smooth every stroke independently. However, this can cause a T-junction to break, with the far stroke either penetrating the near stroke, or separating from it; note how the junction lines overlap in Figure \ref{fig:stroke_style}(lower-right).  This disconnection can be prevented either by constraining the smoothing algorithm or postprocessing it.

\section{Stroke rendering} \label{sec:stylization}

\begin{figure}
	\centering
	\small
	{\def\svgwidth{0.5\linewidth}\import{figures/rendering/}{skeletal_stroke.pdf_tex}}
	\quad
	\includegraphics[width=0.42\linewidth]{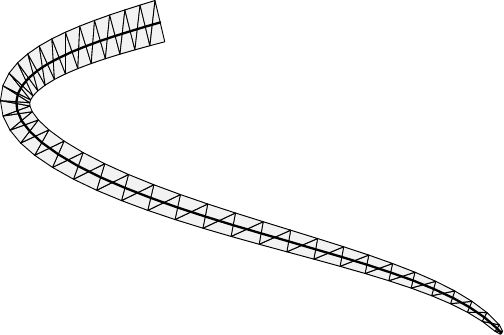}
	\caption{\textbf{Skeletal stroke} --- For each vertex $\vec{v}_i$ of the stroke path, a rib vector $\vec{r}_i$ is generated along the angle bisector to give breadth to the stroke (left). The skeletal stroke can then be rendered as a triangle strip (right).}\label{fig:skeletal_stroke}
\end{figure}

Once smooth strokes have been computed, they can be rendered. Strokes are typically parameterized with a skeletal stroke representation \citep{Hsu:1994}. Skeletal strokes describe a parameterization of the region around the stroke. As illustrated in \fig{skeletal_stroke}, for each 2D vertex $\vec{v}_i$ of the stroke path, a ``rib'' vector $\vec{r}_i$ is constructed orthogonally to the direction defined by its previous $\vec{v}_{i-1}$ and next $\vec{v}_{i+1}$ vertices (if they exist): $$\vec{r}_i = w_i \begin{bmatrix}
		0 & -1 \\
		1 & 0
	\end{bmatrix}\frac{\vec{v}_{i+1} - \vec{v}_{i-1}}{\norm{\vec{v}_{i+1}-\vec{v}_{i-1}}},$$
scaled by the half thickness of the stroke $w_i$. Special treatments are required at places where the radius of curvature of the strokes is smaller than its half thickness, otherwise the ribs will cross each other producing folds, which is especially problematic at sharp corners since the radius of curvature is zero \citep{Asente:2010}. The skeletal strokes can then be rendered as a series of triangular strips~\citep{Northrup:2000}, or as quads with caps~\citep{McGuire:2004b}.

\begin{figure}
	\centering
	\small
	\begin{tabular}{cc}
		\includegraphics[width=0.46\linewidth]{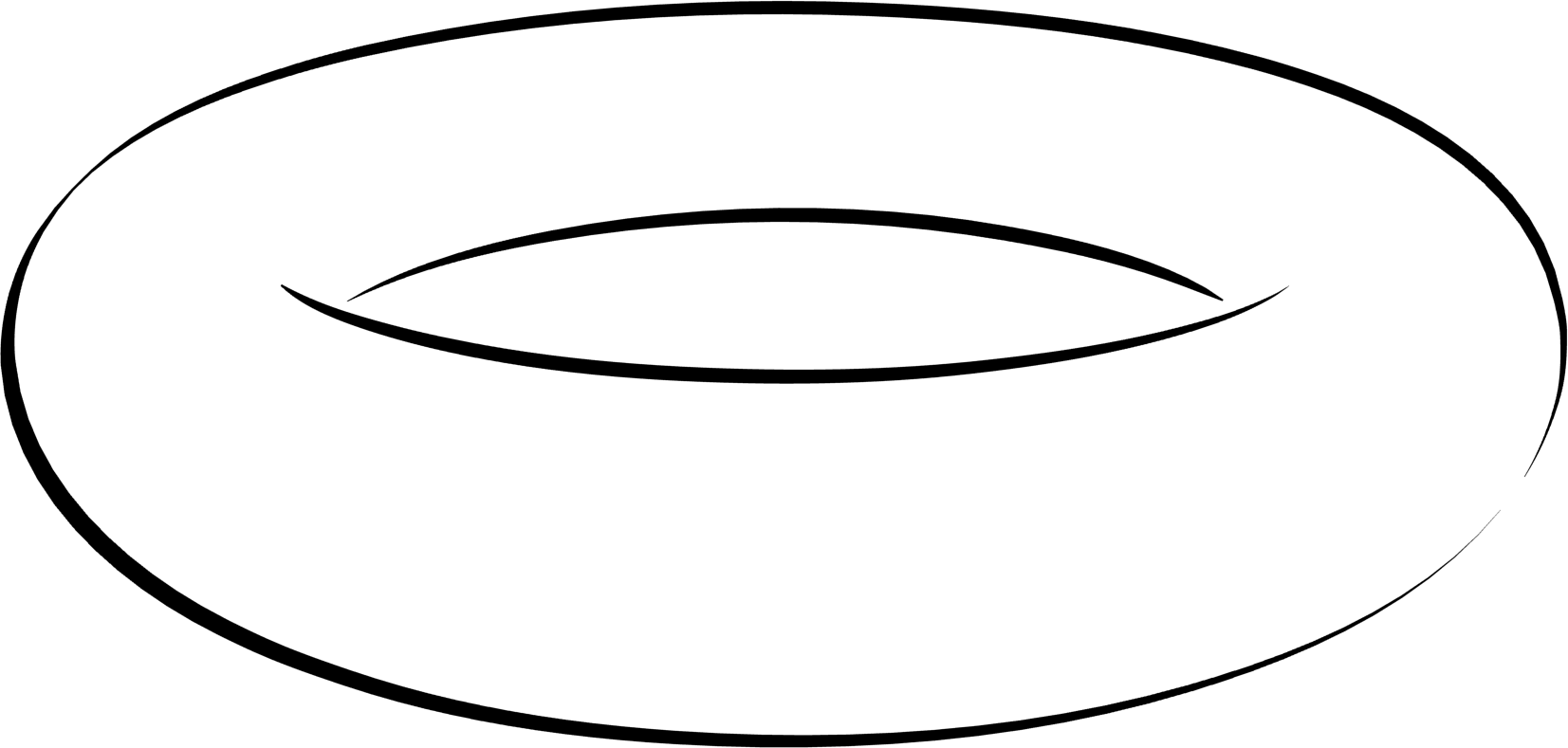} &
		\includegraphics[width=0.47\linewidth]{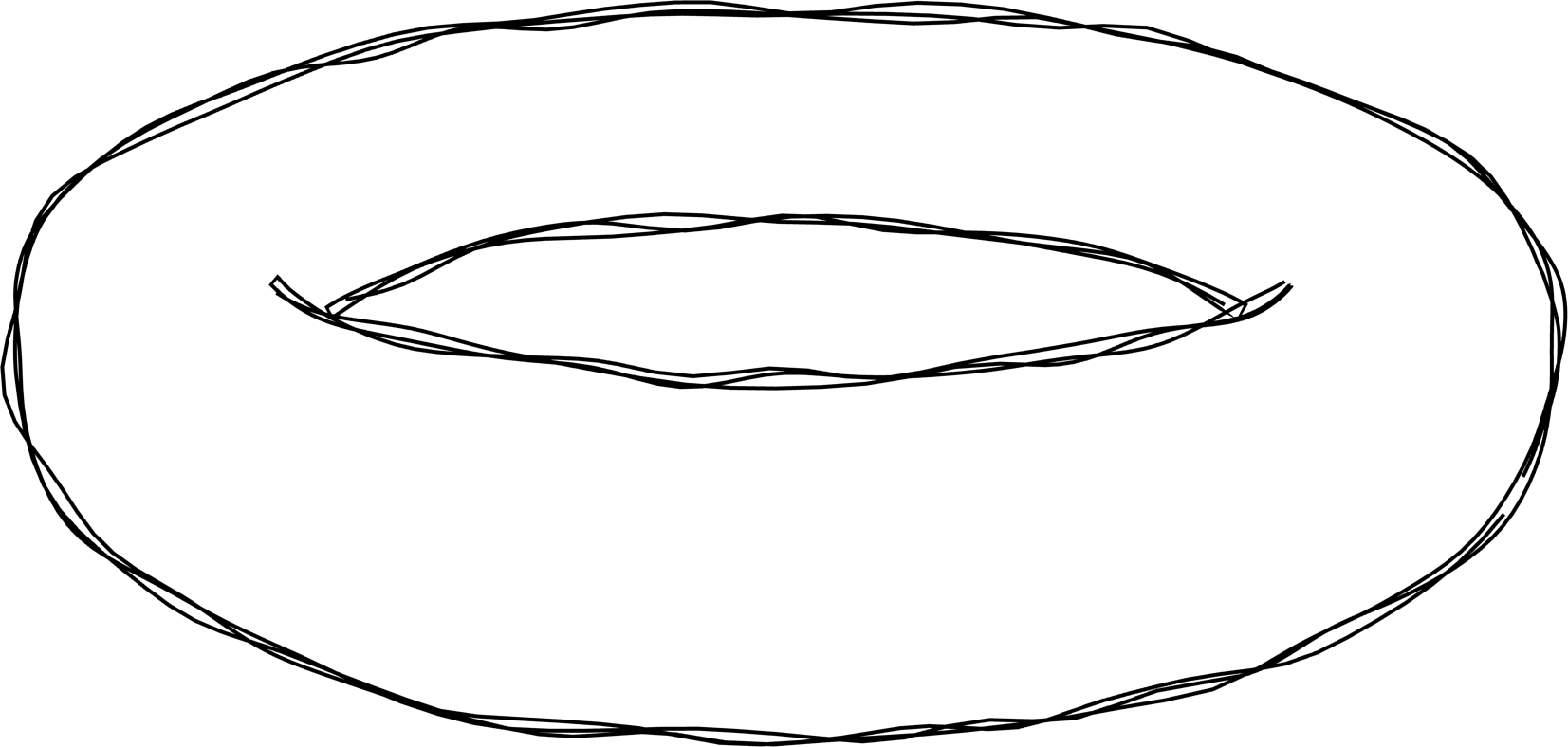}       \\
		\includegraphics[width=0.46\linewidth]{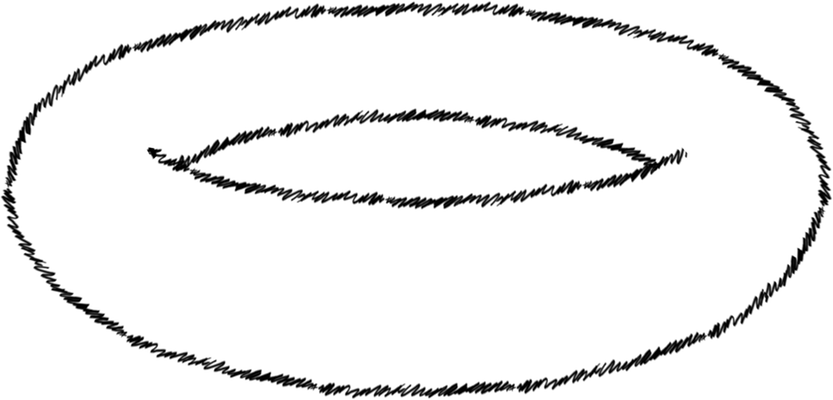}     &
		\includegraphics[width=0.47\linewidth]{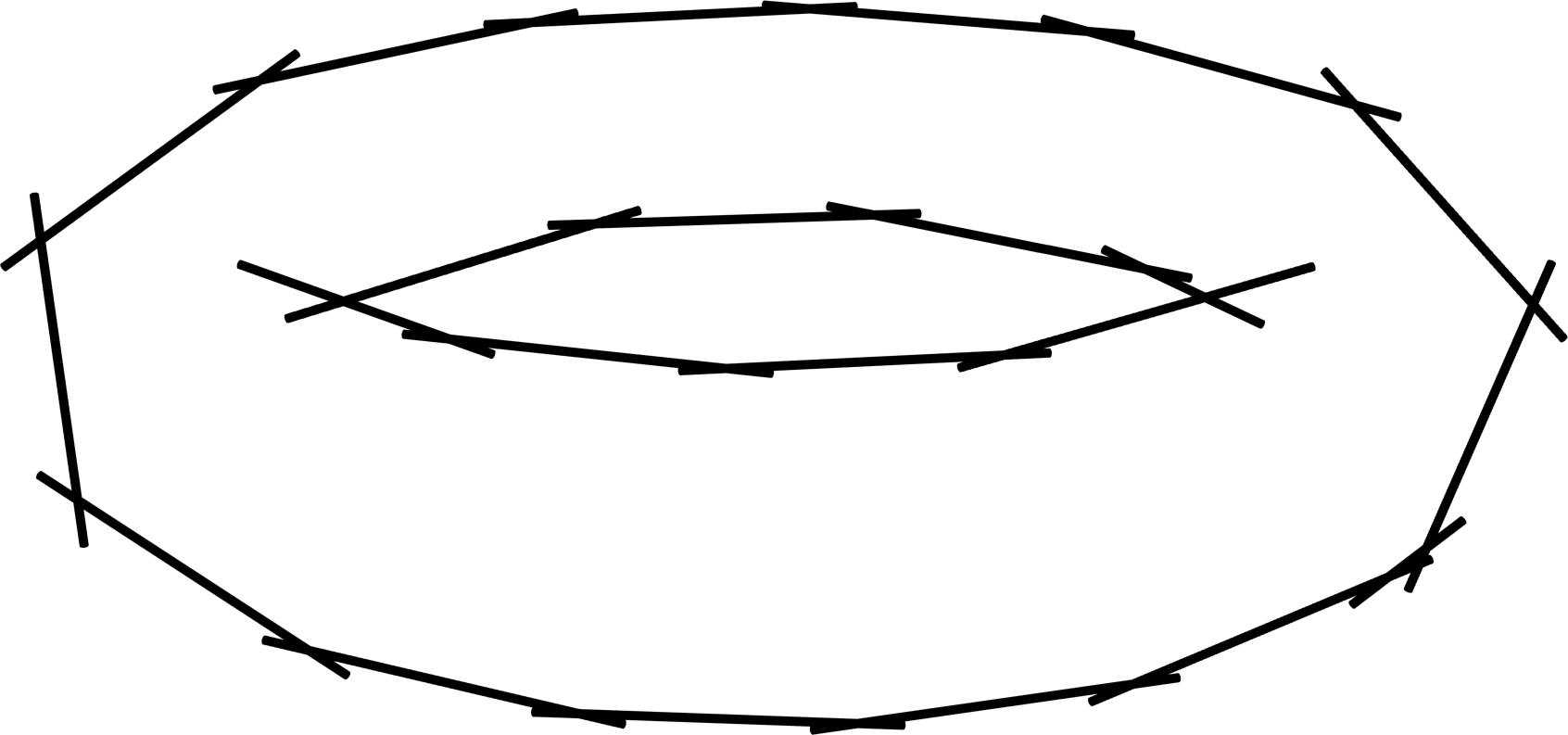}
	\end{tabular}
	\caption{\textbf{Stroke style} --- Various stroke styles applied to the contours of \fig{consistent_contours} using Blender Freestyle. From left to right, top to bottom: calligraphy, sketchy overdraw, textured scribbles and guiding lines.}\label{fig:stroke_style}
\end{figure}

Stroke stylization may be defined by a texture map, such as a scanned pen stroke \citep{Hsu:1994}, or procedurally by making attributes such as color, transparency, dashes \citep{Dooley:1990,Elber:1995b,Grabli:2010,Northrup:2000} vary along the stroke (\fig{stroke_style}). Artistic perturbations of the stroke by analytic (\eg~sine function) or noise functions and offsets \citep{Markosian:1997} can further add high-frequency variations to the stroke. To avoid tiling and stretching artifacts, texture synthesis can be used to generate arbitrary-length stroke textures \citep{Benard:2010} and offsets \citep{Hertzmann:2002,Kalnins:2002,Benard:2012,Lang:2015} from examples.
Stroke styles can depend on the underlying source geometry, e.g., thicker strokes for nearer objects or less-curved objects \citep{Goodwin:2007}, and stroke styles can also vary for different objects and materials. Stroke visibility may be ``haloed'' for greater clarity \citep{Elber:1995b}.
More sophisticated texture-based stroke models include ``RealBrush'', which uses multiple scanned paint strokes to model paint mixing \citep{Lu:2013} and ``DecoBrush'' \citep{Lu:2014}, which uses procedurally-defined line art textures.
These stroke textures may also be drawn directly in a WYSIWYG interface \citep{Kalnins:2002,Cardona:2015,Cardona:2016}. See the video accompanying the work of \citet{Kalnins:2002} for a particularly inspiring interface.

An alternative approach is to reproduce the appearance of natural media, such as ink, paint, watercolor, or charcoal, using using physical models. These methods simulate the pigments deposited by a drawing tool (\eg~pen, pencil, brush) on a substrate (paper or canvas). For example, \citet{Curtis:1997} reproduce watercolor strokes using fluid simulation to compute the motion of water and pigments deposited by a brush on a textured paper. Examples of similar simulations include oil paint \citep{Baxter:2004,Chen:2015:Wetbrush} and graphite pencils \citep{Sousa:1999}. 
The order in which strokes are drawn is usually important and an adequate blending model is thus required. If an accurate painting simulation is available, the Kubelka-Munk \citep{Kubelka:1948,Haase:1992} model of pigment layering can be used. Otherwise, simpler approximations can be used such as the OpenGL blending modes. To emulate thick media such as oil paint, a simple ``replace'' mode is usually sufficient. Substractive blending gives a decent approximation of the behavior of wet materials such as ink and watercolor. Finally the ``minimum'' blending mode can imitate dry media such as graphite and crayons.

\section{Topological simplification} \label{sec:simplification}

There are three reasons to perform a topological simplification step. First, computing smooth contours from meshes  often creates overly complex contours, as discussed in Section \ref{sec:ups_downs}. Second, even the correct contours may exhibit very tiny loops or other topological features that we would like to remove. Third, smoothing and filtering strokes is an important stylization step in creating artistic images. This includes removing strokes in overly dense regions.  These different simplification and stylization steps are each, potentially, operations on the View Graph (or Planar Map), one cannot apply them to individual strokes in isolation.

\paragraph{Clean-up heuristics for mesh contours.} 
When the underlying surface is smooth, each chain should be topologically equivalent to a line, which is essential for artifact-free stylized rendering. Unfortunately, due to numerical instabilities, topological errors might have been introduced during the contour extraction step, especially with polygonal mesh approximations. After projection onto the image plane, this leads to two main artifacts: overlapping contour edges and small ``zig-zags''. For example, when a mesh with low curvature is tangent to the view direction, multiple adjacent triangles may alternate between front- and back-facing orientation, producing a cluster of contour edges (\fig{bif}). 

Fast heuristics may be used to clean up these artifacts.  
For example, \citet{Northrup:2000} describe a method that identifies image-space line segments that overlap and are nearly parallel; they eliminate redundant segments that are similar to another very nearby segment (parallel and close-by), and smaller than the other segment. This allows them to merge many small edges with complex topology into long, simple paths. In addition, \citet{Isenberg:2002} perform a first simplification step in object-space to merge adjacent edges connecting at acute angles, and to remove spurious bifurcations produced by clusters of contour edges.

\citet{Foster:2007} proposed an alternative approach based on multi-resolution filtering through reverse subdivision. Each contour chain is first decomposed into a coarse base path and a representation of its high-frequency details. It is then reconstructed to its original resolution with a scaled-down version of the details to remove errors.

Though these methods do not provide any topological guarantees, they can be fast and simple and produce appealing results.

\paragraph{Removing tiny details.}
Even if the input contour generators have correct topology, they may exhibit unappealing topological details such as tiny loops or breaks due to cusps (\fig{redShoulder}a). To improve the appearance of the final rendered strokes, topological simplification can be applied to the View Graph chains (\fig{redShoulder}b). Unlike the previous heuristics, this simplification is purely a stylistic control; one that depends on the scale of the objects as well as the rendering style, not an attempt to estimate and fix topology from noisy curves.

In particular, we proposed the following topological simplifications (Figure \ref{fig:topsimp}) \citep{Benard:2014}. First, we categorized View Graph vertices (i.e., singularities) by the number of visible curves they connect: \emph{a dead-end} vertex is adjacent to a single visible curve (e.g., a visible curtain fold); a \emph{connector} vertex is adjacent to two visible curves, and a \emph{junction} vertex is adjacent to more than two vertices, i.e., bifurcations and image-space intersection vertices. Then, we defined a candidate chain for simplification as any connected sequence of visible curves that do not contain any junctions, but with image-space arc length less than a user-specified threshold (between 10 and 20 pixels in our experiments). Eventually the algorithm marks as invisible any candidate chain that (a) connects a junction to a dead-end, (b) connects a dead-end to a dead-end, (c) connects a vertex to itself, or (d) is overlapped in 2D by another chain (\fig{topsimp}). This process is iterated until there are no more changes to be made.

\begin{figure}
	\centering
	\small
	\def\svgwidth{0.6\linewidth}\import{figures/rendering/}{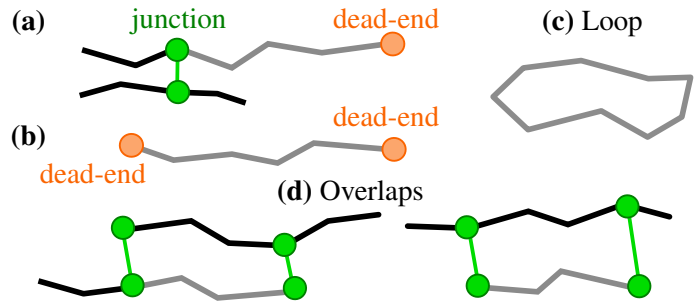}\caption{\textbf{Topological simplification} \citep{Benard:2014} ---
		Four cases are considered for simplification (candidate chain depicted in grey):
		\textbf{(a)} junction to dead-end connection;
		\textbf{(b)} dead-end to dead-end connection;
		\textbf{(c)} small closed loop;
		\textbf{(d)} small overlapping pieces of curve between two junctions.
	}\label{fig:topsimp}
\end{figure}

\begin{figure}
	\centering
	\small
	\begin{tabular}{c@{\hspace{-1.2em}}c@{\hspace{-2.5em}}cc@{\hspace{-1.2em}}c@{\hspace{-2.5em}}c}
		\textbf{(a)}                                                                           &
		\includegraphics[width=0.25\linewidth]{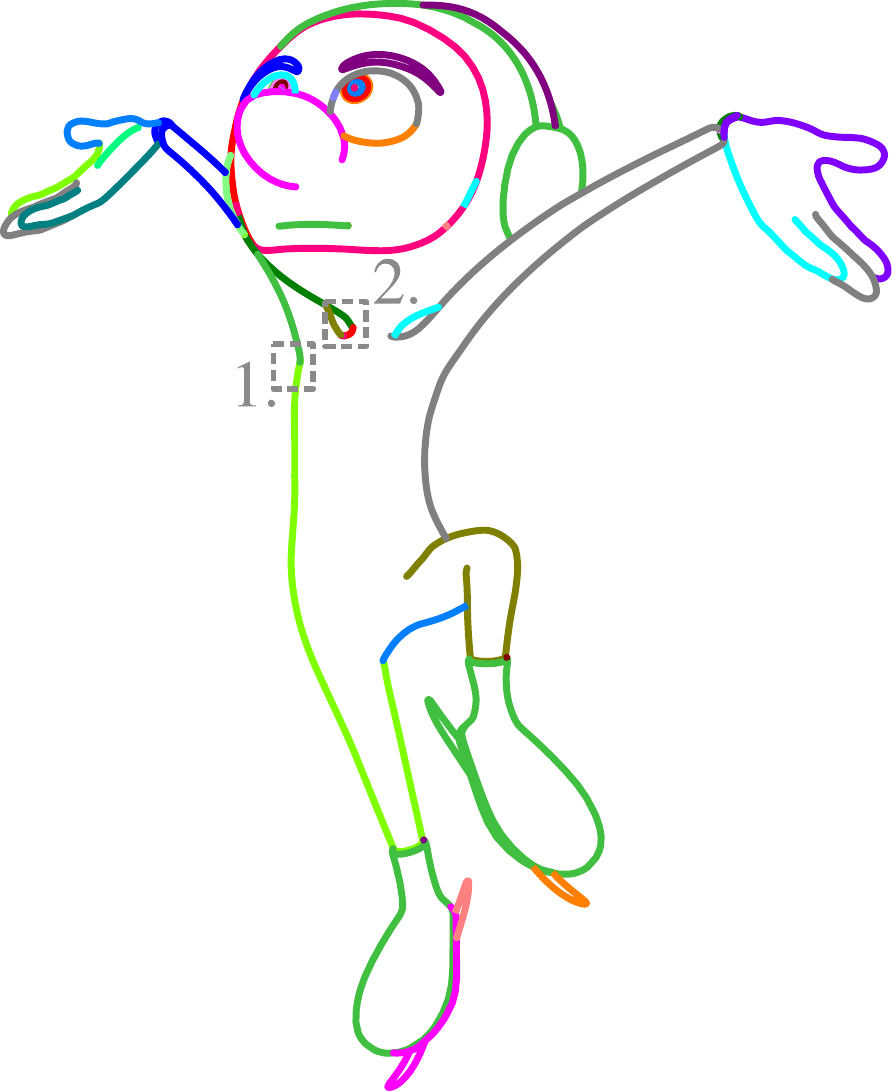}      &
		\includegraphics[width=0.28\linewidth]{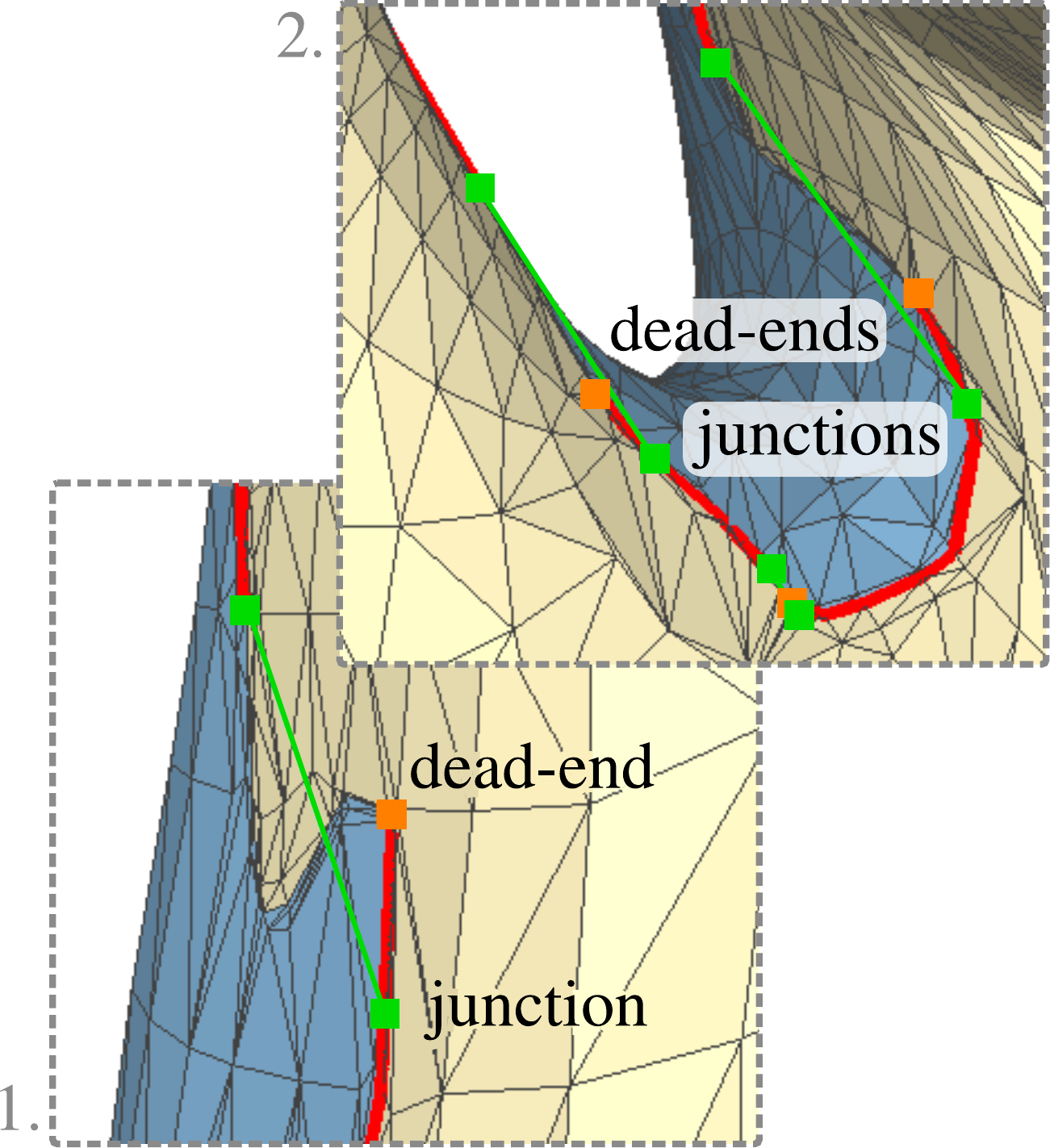}            &
		\textbf{(b)}                                                                           &
		\includegraphics[width=0.25\linewidth]{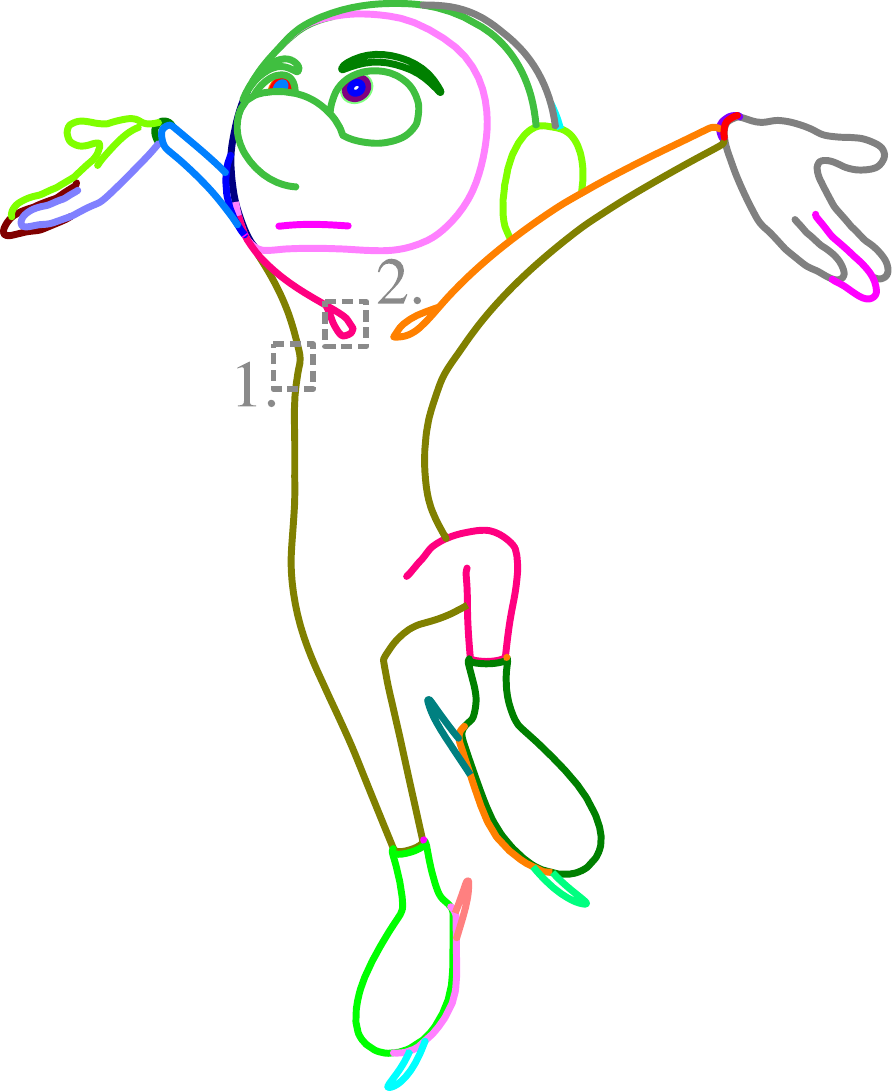} &
		\includegraphics[width=0.28\linewidth]{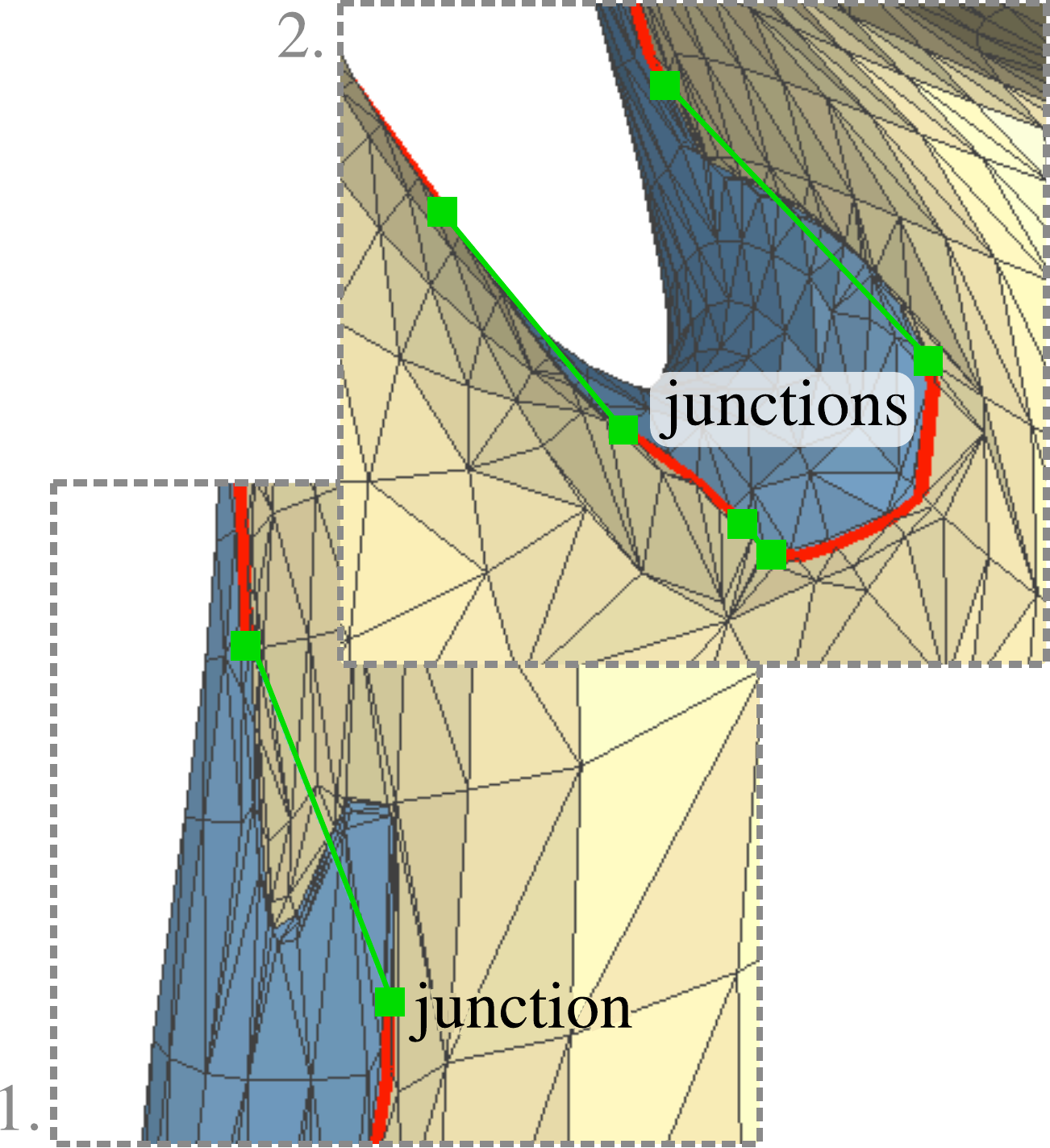}
	\end{tabular}
	\caption{\textbf{Result of topological simplification on Red} \citep{Benard:2014} ---
		Closeup on Red's shoulder and armpit, with genuine cusps, \textbf{(a)} before and \textbf{(b)} after	topological simplification. ``Red'' \ccCopy~Disney/Pixar}
	\label{fig:redShoulder}
\end{figure}

\paragraph{Controlling image-based density.} 
An artist drawing a small or distant object would draw only a few of its curves. However, directly computing all curves on a detailed surface from far away produces an overly-dense set of curves. \citet{Wilson:2004} propose a first method to omit excess curves from a drawing, based on the density of strokes in image space. \citet{Grabli:2004} further distinguish between two kinds of line density measures: an \emph{a priori} density and a \emph{casual} density. The former estimates, at a given scale and for a given direction, the geometric complexity of the line drawing that would be created if all visible lines were drawn without stylization. The latter measures the actual visual complexity of the stylized drawing while it is rendered one curve at a time. 

As noted by \citet{Winkenbach:1994,Preim::1995}, drawing simplification by line omission also requires ordering the curves by \emph{relevance}, the least relevant ones being omitted first. Various definition of curve relevance have been proposed. For instance, to preserve curves that separate objects that are far apart in depth, the curve relevance can be defined as the maximum depth difference measured along its segments.

\paragraph{Scale-dependent contours.} 
Another approach to curve simplification is to use a range of representations of the original surface, using a coarse version for rendering at a distance, and a detailed version for close-up viewing \citep{Deussen:2000,Jeong:2005,Kirsanov:2003,Ni:2006}.  Image-space density can be used as a criteria for selecting the object level-of-details. An advantage of this approach is that it can ensure a coherent topology of the drawing; however, it may not adapt to naturally varying density as well object-based stylization, for example, for highly-foreshortened objects.

\section{Object shading and texturing}
\label{sec:shading}

In addition to line drawing, we usually wish to shade or texture the object.  A thorough discussion of shading is beyond the scope of this tutorial. Generally, most methods compute shading independently from contours and other lines. In principle, doing so could create inconsistency between the line drawing and the shaded rendering, e.g., if the lines are smoothed or simplified. While this is not usually a problem, using a Planar Map to maintain a consistent representation can fix this issue \citep{Eisemann:2008,Winkenbach:1994}.

Once a shaded image has been rendered from the 3D scene, any of the methods in the book of \citet{Rosin:2013} may be applied as post-processes to stylize images as well. Nevertheless, more dedicated shading primitives will allow to produce more stylized results.

\paragraph{Toon shading.} A simple and popular shading algorithm is ``toon shading''. In toon shading, the shading is simply a function of the view vector and light direction: $\vec{n} \cdot \vec{l}$. In its simplest form, the user specifies two colors: a light color and a dark color. For shading, points where the dot product is below a threshold get rendered with the dark color; other points get rendered with the light color (\fig{contour_basic}).  This generalizes contour rendering, which discretizes $\vec{n} \cdot \vec{v}$ into black at the contour and white everywhere else.

Several generalizations of toon shading have been developed, typically mapping $\vec{n}$ to color with a more complex function, \eg~\citep{Lake:2000,Sloan:2001,Gooch:1998,Barla:2006,Mitchell:2007:IRT:1274871.1274883,Eisemann:2008,Vanderhaeghe:2011}.

\paragraph{Hatching and texturing.} Surface hatching (\fig{hatching}) often involves drawing hatching curves \citep{,Winkenbach:1994,Winkenbach:1996,Hertzmann:2000,Singh:2010,Kalogerakis:2012,Gerl:2013}.
These are curves on the surface, and their visibility is computed together with the other curves, using the algorithms described in this tutorial.

A wide variety of algorithms for texturing have also been developed without drawing texture curves, instead using texture maps in some way, \eg~\citep{Klein:2000,Praun:2001,Webb:2002,Breslav:2007}.

Two methods for stylizing objects and animations by example using the Image Analogies framework of \citet{Hertzmann:2001} have been developed \citep{Benard:2013,Fiser:2016}.

\section{Animation}
\label{sec:animation}

To produce an animated line drawing, one can simply extract and stylize the contours at every frame independently. However, as first noted by \citet{Masuch:1997}, the coherence between frames largely depends on the chosen stroke style. A ``calm'' style with only small deviations from the base path leads to a rather smooth animation, whereas a ``wild'' style with strong geometric distortions or using strong textures may lead to visual artifacts such as popping or sliding. This is a recurrent but still mostly open problem in non-photorealistic rendering \citep{Benard:2011}. In the following, we will summarize the main solutions to improve temporal coherence of line drawing animations.

\paragraph{2D curve tracking.} The general objective is to establish correspondences between stokes of subsequent frames and derive from them a coherent space-time parameterization. View-independent lines, such as creases or ridges and valleys, that are fixed on the 3D surface can leverage the underlying surface parameterization to ensure such correspondences.
In contrast, view-dependent lines such as contours move on the surface and their geometry and even topology change from frame to frame. Consequently most approaches directly compute correspondences in image-space. This is related to computer-assisted rotoscoping which aims to track edges in videos \citep{Agarwala:2004,ODonovan:2011} with the benefit that perfect motion information can be computed from 3D animations. Disney ``Paperman''~\citep{Whited:2012} precisely follows such an approach to propagate hand-drawn strokes over CG renders. The main drawback of this system is that artists need to manually merge or split strokes when their topology should change to adapt to the animation. 

\begin{figure}

\includegraphics[width=\linewidth]{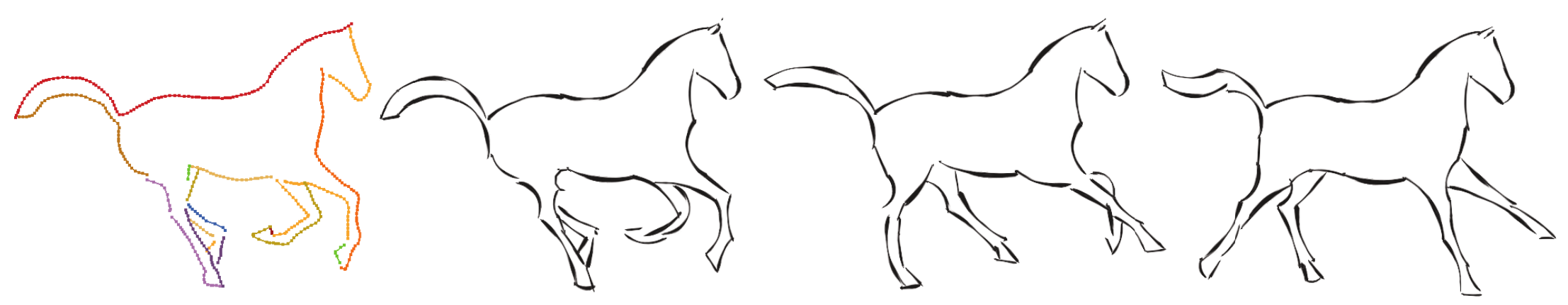}
\caption{\textbf{2D curve tracking by active contours} \citep{Benard:2012} --- The contour curves extracted from a galloping horse are tracked by active contours (left) ensuring a coherent parameterization of the strokes (right) during the animation.} \label{fig:horses}
\end{figure}

The curve tracking method of \citet{Benard:2012} lifts this limitation. It uses active contours (a.k.a. snakes) that automatically update their position, arrangement and topology to match the contour animation (\fig{horses}). However, based on heuristics, this method cannot guarantee that the strokes are faithfully depicting the contours, especially at junctions.

\begin{figure}
	\centering
	\small
	\def\svgwidth{0.6\linewidth}\import{figures/rendering/}{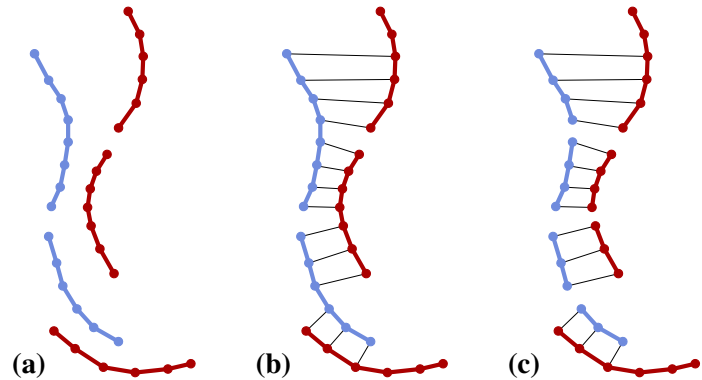}\caption{\textbf{2D curve matching} \citep{Ben-Zvi:2015} ---	
	Starting from the strokes at frame $f_i$ in blue and $f_{i+1}$ in red \textbf{(a)},
	point correspondences are computed \textbf{(b)} by a constrained optimization,
	from which consistent sub-strokes can be extracted \textbf{(c)} .
}\label{fig:matching}
\end{figure}

For more robustness, \citet{Ben-Zvi:2015} turn the problem of curve tracking into one of matching. Since matching full curves would be impractical due to topological changes, they seek to find a mapping between the vertices of the strokes in frame $f_i$ and $f_{i+1}$ (\fig{matching}). This mapping aims at minimizing the distance between matched vertices, after moving the points in $f_i$ to $f_{i+1}$ according to the animation motion field. The mapping should also maximize the number of matched vertices, favor one-to-one matches, and maintain the spatial ordering of the vertices on the strokes as much as possible. These objectives can be expressed as a constrained optimization problem on a bipartite graph whose nodes on each side of the graph are the vertices in each frame and whose edges connect any pair of vertices from different frames. Coherent sub-strokes are eventually constructed from these point-wise correspondences by splitting inconsistent curves (\fig{matching}c). A major drawback of this method is the computation cost of the point-matching algorithm which does not scale well with the number of curves.

\begin{figure}
	
	\begin{subfigure}[t]{0.31\linewidth}
		\includegraphics[width=\linewidth]{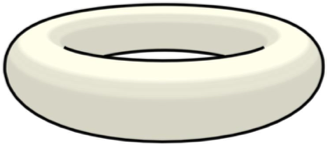}
		\caption{contours} \label{fig:css_0}
	\end{subfigure}
	\hspace{0.5em}
	\begin{subfigure}[t]{0.31\linewidth}
		\includegraphics[width=\linewidth]{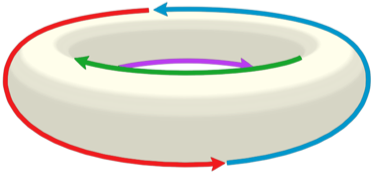}
		\caption{strokes in frame $f_{i}$} \label{fig:css_1}
	\end{subfigure}
	\hspace{0.5em}
	\begin{subfigure}[t]{0.31\linewidth}
		\includegraphics[width=\linewidth]{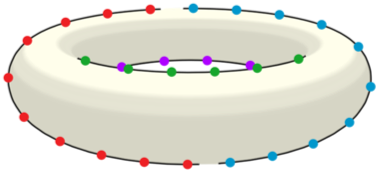}
		\caption{samples on \textbf{(b)}} \label{fig:css_2}
	\end{subfigure}
	\hspace{0.5em}
	\begin{subfigure}[t]{0.31\linewidth}
		\includegraphics[width=\linewidth]{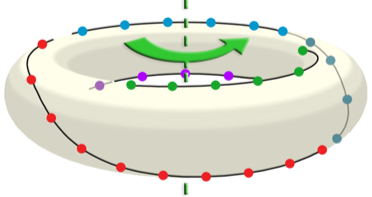}
		\caption{new view in frame $f_{i+1}$} \label{fig:css_3}
	\end{subfigure}
	\hspace{0.5em}
	\begin{subfigure}[t]{0.31\linewidth}
		\includegraphics[width=\linewidth]{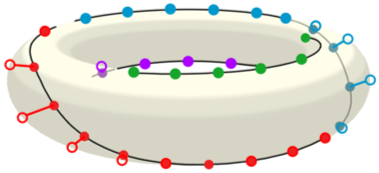}
		\caption{propagation} \label{fig:css_4}
	\end{subfigure}
	\hspace{0.5em}
	\begin{subfigure}[t]{0.31\linewidth}
		\includegraphics[width=\linewidth]{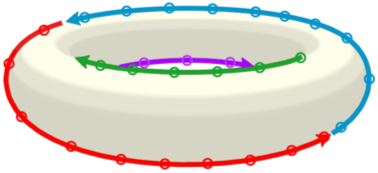}
		\caption{new strokes in $f_{i+1}$} \label{fig:css_5}
	\end{subfigure}
	\caption{\textbf{Parameterization propagation} \citep{Kalnins:2003} --- The strokes at frame $f_{i}$ \textbf{(b)} are sampled uniformly along their arc-length \textbf{(c)} and propagated by reprojection in the new camera \textbf{(d)} and local search in image-space \textbf{(e)}. Coherently parameterized strokes are created by potentially splitting new contour curves and fitting a continuous function to the samples balancing uniformity in 2D with coherence in 3D \textbf{(f)}.} \label{fig:css}
\end{figure}

\paragraph{Parameterization propagation.}  Another solution to these topological issues is to propagate the parameterization of the strokes instead of their geometry. Building upon the work of \citet{Bourdev:1998}, this is the key idea of \citet{Kalnins:2003}. They sample the parametrization of the strokes at frame $f_i$ uniformly along their arc-length (\fig{css_2}), and reproject those samples in the camera of the next frame $f_{i+1}$ following the 3D animation to approximate the contour motion (\fig{css_3}). Then, they locally search in image-space the location of the closest contour paths in the new view (\fig{css_4}). Samples from different brushes of frame $f_i$ may end up on the same contour path in frame $f_{i+1}$. In such a case, the contour path needs to be split into multiple strokes with consistent parameterization samples. Eventually, each stroke parameterization is computed by optimizing an energy function that balances the competing goals of uniform image-space arc-length parameterization and coherence on the object surface. This method ensures a temporally coherent parameterization for contours with simple topology at interactive framerates. However, since more complex objects, such as the Stanford Bunny, generate contours made of many tiny fragments, topological simplification (\sect{sec:simplification}) needs to be applied first, otherwise the strokes may get increasingly fragmented over time.

\begin{figure}
	\centering
	\small
	\def\svgwidth{0.75\linewidth}\import{figures/rendering/}{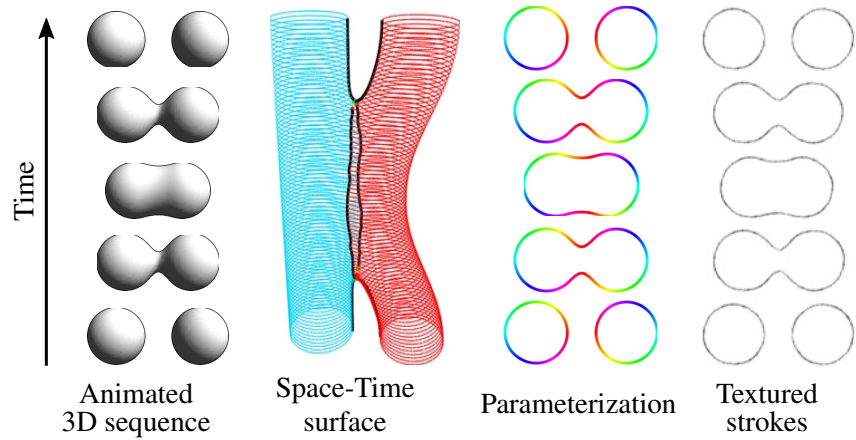}\caption{\textbf{Space-time parameterization} \citep{Buchholz:2011} ---	
	Temporally coherent textured strokes are computed by parameterizing the space-time surface swept by the
	contours over time.
	}\label{fig:tcoco}
\end{figure}

\paragraph{Space-time parameterization.} The above methods consider only two animation frames at a time. \citet{Buchholz:2011} proposed considering the entire animation as a whole, building a space-time parameterization of the strokes over time (\fig{tcoco}). To do so, they build the space-time surface swept by the contours over time, taking into account merging and splitting events. This allows minimizing distortions and ``popping'' artifacts over the whole sequence, rather than greedily processing each frame one-at-a-time. This comes however at the price of expensive computations (up to several minutes for few seconds of animation). 

\paragraph{Motion of contours across time.} 
Some theoretical analysis of how contours evolve over time is provided by \citet{Cipolla:2000,Plantinga:2006}. \citet{Plantinga:2006} additionally provide a fast algorithm for tracking contours of implicit surfaces over time, with guaranteed topological correctness, under orthographic projection.


\chapter{Conclusion}
\label{chap:conclusion}

The techniques we have described here provide an account for how to make line drawings from 3D models.  In organizing the information from the past few decades of research on this topic, we hope that this material will be useful for future practitioners and researchers.

The methods described here present a range of design choices.  On one extreme, hardware rendering methods allow real-time performance and relatively simple implementation, but only a narrow range of rendering styles.  Interpolated contours are much more flexible and allow for many more rendering styles, though with some implementation effort, and some potential artifacts appearing on smooth surfaces. In the other extreme, methods that attempt to correctly compute all curves for smooth surfaces are currently the most complex; providing full correctness guarantees remains a research problem. Within this range of options, there are more design choices to be made, such as which visibility tests to use besides ray tests, which strategies to use to propagate visibility, what numerical robustness strategies to try.  These choices make tradeoffs between stylistic control, accuracy, efficiency, and complexity of implementation.  Our community does not yet have the experience of building real systems that would allow us to make recommendations about many of these choices.  However, at a high level, knowing one's requirements for stylistic variation, real-time performance, and accuracy can guide one to one of the three main approaches listed above.

\section{Open research problems}

The work presented here has been developed for applications in entertainment, art, and scientific visualization. There are several games that have used hardware line-drawing methods, and 3D methods have shown up in a few films here and there, such as in Disney's Paperman, and the Freestyle NPR line drawing algorithms have been incorporated in the free Blender package.  Still, many of the most sophisticated methods here have not made it into commercial use.

These research areas were very active in the 90's and 2000's, and now there is little effort. This is particularly evident at SIGGRAPH, the flagship venue for this research; now most research, when it appears, is at more specialized, less-impactful venues.  There are many possible reasons for this stagnation.  We believe that it does not reflect a lack of interest in these problems, but, rather, the difficulty for many researchers outside the area in identifying known research projects.   It should be obvious from this tutorial that there are some clear, open research problems that are purely geometric in nature.

For professional stylization applications (e.g., films), one would ideally like to have a space-time planar map of a scene, including space-time correspondences and complete curve topology, and no existing system for this has even been attempted, to our knowledge. The engineering effort involved may be well beyond what is achievable by a single graduate student working alone, without truly heroic effort and understanding of the geometric, numerical, and artistic considerations involved.

Once developed, such a system would create a variety of new engineering, authoring, and workflow challenges and opportunities.

Since these problems first arose, there has been tremendous progress in other areas of geometric modeling and simulation. The fundamental difficulty for correct curve topology is that a single incorrect visibility test at a seemingly-insignificant little triangle can cause enormous visibility errors. This parallels problems that can occur in other areas of modeling and simulation. For example, when simulating a hanging piece of cloth, a single little missed collision can lead to massive interpenetrations that ruin the simulation: a single non-robust test causes topological catastrophe. The geometric modeling and simulation communities have developed robust geometric tools and techniques to prevent these problems. It may now be time to revisit the line drawing problem with this new knowledge in hand.

A separate problem is to build artist-friendly tools for authoring artistic styles. There have been various approaches explored in the literature, such as rotoscoping \citep{Kalnins:2002,Sabiston:2001,Whited:2012,Cardona:2015} and procedural authoring \citep{Grabli:2010}.  

Machine learning and example-based rendering could provide a way to author styles. So far, there has been very little effort combining machine learning and 3D NPR; exceptions include \citep{Kalogerakis:2012,Benard:2013,Fiser:2016}.

Whatever the future work, the methods described in this tutorial provide the first part of the story, but the rest is yet to be written.

\section{Acknowledgements}

We thank Doug DeCarlo for comments on a draft of this tutorial, and a reviewer for extremely detailed comments.

\appendix


\chapter{Fundamentals of Differential Geometry}
\label{app:diff_geom}

This chapter presents the fundamentals of differential geometry that are useful to define smooth surface contours. It is based on the books and courses of \citet{Cipolla:2000,Rusinkiewicz:2008,Crane:2013}.

\section{Geometry of surfaces}
The geometry of a 3D surface can be described using a map $f : M \subset \mathbb{R}^2 \rightarrow \mathbb{R}^3$ from a region $M$ in the Euclidean plane $\mathbb{R}^2$ to a subset $f(M)$ of $\mathbb{R}^3$. It is called a parametric surface if $f$ is an \emph{immersion}, that is its partial derivatives $\frac{\partial f}{\partial u}$ and $\frac{\partial f}{\partial v}$ are injective at each point of $M$. In this case, $f$ defines an \emph{immersed surface}; a surface where, for every point $\vec{u}$ of $M$, a definite 2-dimensional tangent plane is associated at $\vec{p} = f(\vec{u})$, or, simply, $\vec{p}(\vec{u})$ (\fig{param_surface}).

\begin{figure}
	\centering
	\small
	\def\svgwidth{\textwidth}\import{figures/smooth_contours/}{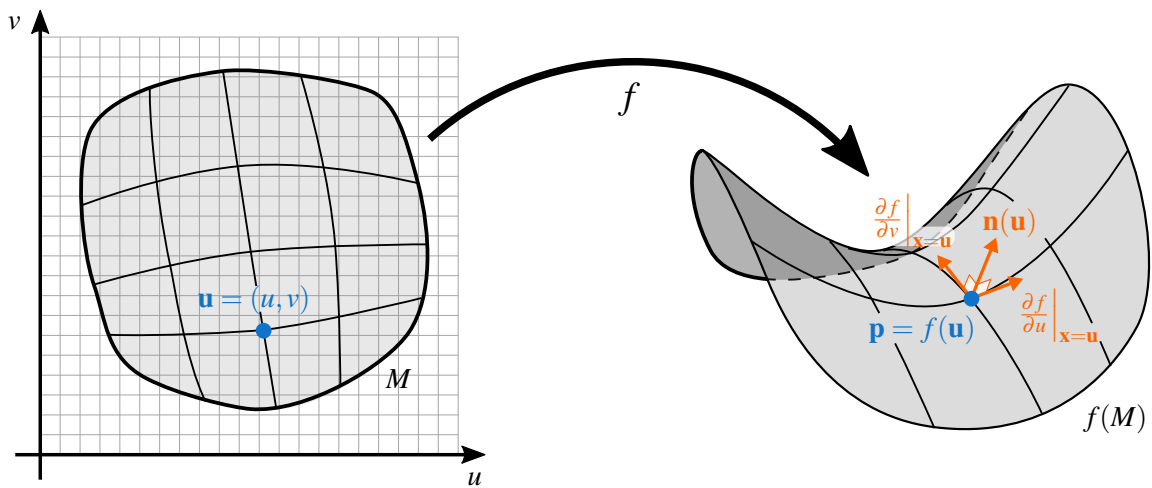}\caption{\textbf{Parametric surface} --- The immersive map $f : M \subset \mathbb{R}^2 \rightarrow \mathbb{R}^3$ associates to each parametric location $\vec{u}$ a unique 3D point $\vec{p}$; the normal $\vec{n}$ at that point is defined as the cross-product of the partial derivatives of the parameterization.}\label{fig:param_surface}
\end{figure}

A vector $\vec{n} \in \mathbb{R}^3$ is \emph{normal} to the tangent plane at $\vec{p}$ if, for all tangent vectors $\vec{t}$ at $\vec{p}$, $\vec{n} \cdot \vec{t} = 0$, where $\cdot$ is the canonical Euclidean dot product on $\mathbb{R}^3$. Since the tangent plane at $\vec{p}$ can be defined as a linear combination of the partial derivatives of the immersion $f$, the \emph{unit surface normal} $\vec{n}$ at $\vec{p}$ is defined as:
$$\vec{n}(\vec{u}) = \frac{\left.\frac{\partial f}{\partial u}\right|_{\vec{x}=\vec{u}} \times \left.\frac{\partial f}{\partial v}\right|_{\vec{x}=\vec{u}}}{\left\lVert \left.\frac{\partial f}{\partial u}\right|_{\vec{x}=\vec{u}} \times \left.\frac{\partial f}{\partial v}\right|_{\vec{x}=\vec{u}} \right\rVert}.$$
Note that interchanging the two partial derivatives takes $\vec{n}$ to $-\vec{n}$. If we can choose a consistent direction for $\vec{n}$ for all points $\vec{p}$, $M$ is \emph{orientable}. For orientable surfaces, $\vec{n}$ can be seen as a continuous map, called the \emph{Gauss map}, which associates each point of $M$ with its unit normal, viewed as a point on the unit sphere $S^2$.

\section{Curvature} 
\label{app:curvature}

Informally, the curvature describes how much a surface bends at a certain point and in a particular direction. For instance, an (infinite) cylinder curves around in a circle along one direction, and is completely flat along another direction. It is common to treat surface curvature in terms of 3D curves contained in the surface. We thus first need to define the curvature of a 3D curve.

\paragraph{Curvature of a curve.} Let $c : I \subset \mathbb{R} \rightarrow \mathbb{R}^3$ be a 3-dimensional parametric curve with \emph{unit speed}, \ie with \emph{arc-length} or \emph{natural} parameterization: $\left\lVert \frac{dc}{dt} \right\rVert = 1$. The curvature of $c$ is measured by the rate at which the \emph{unit} tangent vector changes as we move along $c$. This change is split into two pieces: the unit vector $\vec{n}$, called the \emph{principal normal}, which describes the direction of change, and a scalar $\kappa \in \mathbb{R}$, called the \emph{curvature}, which expresses the magnitude of change:
$$\frac{d^2 c}{dt^2} = \vec{t}' = -\kappa \vec{n}.$$
Assuming that $\kappa$ is never zero, the plane spanned by $\vec{t}$ and $\vec{n}$ is the \emph{osculating plane}. The vector $\vec{b}$ orthogonal to the osculating plane is called the \emph{binormal} of the curve: $\vec{b} = \vec{t} \times \vec{n}$. The orthonormal coordinate frame made of $\vec{t},\vec{n},\vec{b}$ is called the \emph{Frenet frame} (\fig{curve_curvature}).

\begin{figure}
	\centering
	\small
	\def\svgwidth{0.4\textwidth}\import{figures/smooth_contours/}{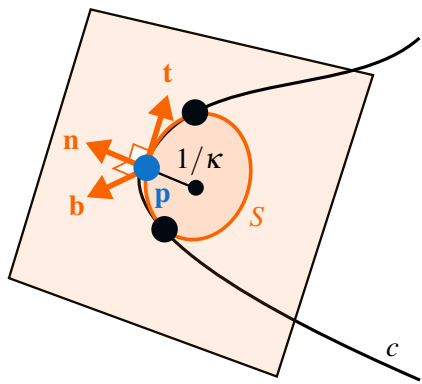}\caption{\textbf{Curvature of a curve} --- The curvature $\kappa$ of the 3D parametric curve $c$ at $\vec{p}=c(t)$ corresponds to the inverse of the radius of the osculating circle $S$ in the osculating plane (in gray) defined by the tangent $\vec{t}$ and principal normal $\vec{n}$.}\label{fig:curve_curvature}
\end{figure}

For any point $\vec{p} = c(t)$, the circle $S$ in the osculating plane with centre $\vec{p} + \vec{n}/\kappa$ is called the \emph{osculating circle} of $c$ at $\vec{p}$. This circle best approximates $c$ at $\vec{p}$, meaning that it has the same tangent direction $\vec{t}$ and curvature vector $\kappa \vec{n}$, or that their first and second derivatives agree. It corresponds to the circle passing through $\vec{p}$ and two infinitely close points on $c$, one approaching from the left and one from the right of $\vec{p}$. The radius and center of the osculating circle are often referred to as the \emph{radius of curvature} and \emph{center of curvature}, respectively.

The \emph{torsion} $\tau$ of $c$ measures the tendency of the curve to leave its osculating plane, \ie~the way the normal and binormal
twist around the curve. The \emph{Frenet-Serret formula} describes how the $\vec{t},\vec{n},\vec{b}$ frame changes along the curve:
$$\vec{t}' = -\kappa \vec{n}, \qquad \vec{n}' = \kappa \vec{t} - \tau \vec{b}, \qquad \vec{b}' = \tau \vec{n}.$$
For our purpose, the important formula is the second one, which shows that we can get the curvature by extracting the tangential part of $\vec{n}'$.

\begin{figure}
	\centering
	\small
	\def\svgwidth{\textwidth}\import{figures/smooth_contours/}{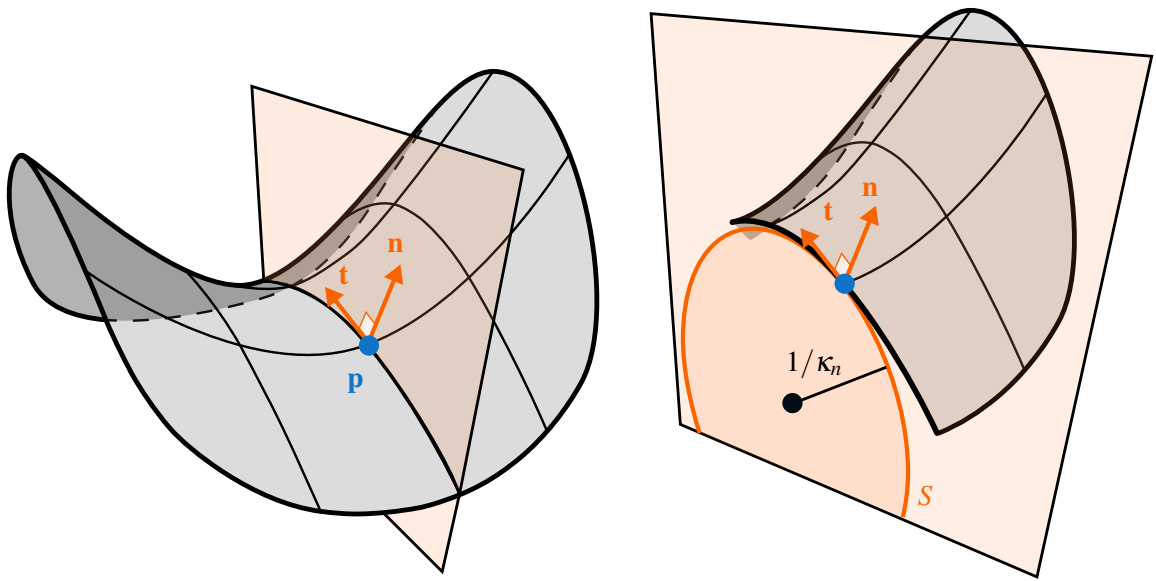}\caption{\textbf{Normal curvature} --- The curvature $\kappa_n$ of the curve at the intersection of the surface $f(M)$ and the plane spanned by the normal $\vec{n}$ and tangent direction $\vec{t}$ is the normal curvature of $M$ at $\vec{p}$ in the direction $\vec{t}$.}\label{fig:surface_curvature}
\end{figure}

\paragraph{Normal curvature of a surface.} Going back to surfaces, consider the plane containing a given point $\vec{p}$ on the surface $f(M)$, a vector $\vec{t}$ in the tangent plane at that point and the associated normal $\vec{n}$. This plane intersects the surface in a curve, and the curvature $\kappa_{n}$ of this curve is called the \emph{normal (or sectional) curvature} in the direction~$\vec{t}$ (\fig{surface_curvature}). Using the Frenet-Serret formula, we can get the normal curvature along $\vec{t}$ by extracting the tangential part of $\vec{n}'$:
$$\kappa_{n} (\vec{t}) = \frac{\vec{t} \cdot \vec{n'}}{\lVert \vec{t} \rVert^2}.$$
This means that the normal curvature is a measure of how much the normal changes in the direction $\vec{t}$. Note that it is signed, meaning that the surface can bend toward or away from the normal, but it is not affected by the length of $\vec{t}$. Since $\vec{n}$ is a unit vector, its derivative $\vec{n}'$ is perpendicular to $\vec{n}$, hence in the tangent plane of $\vec{p}$.
$\vec{n}'$ is also called the \emph{shape operator} $S(\vec{t})$.

\paragraph{Principal, Gaussian and Mean curvature.} For a given point $\vec{p}$, the unit vectors $\vec{e}_1$ and $\vec{e}_2$ along which the normal curvature is maximal and minimal, respectively, are called the \emph{principle directions} at $\vec{p}$; the associated curvature values $\kappa_1$ and $\kappa_2$ are called the \emph{principal curvatures}. If $\kappa_1 = \kappa_2$, every direction is principal and the point is called an \emph{umbilic} (\fig{curvature}a). Otherwise, there are two orthogonal principle directions (\ie{} $\vec{e}_1 \cdot \vec{e}_2 = 0$). Principal directions and curvatures respectively corresponds to eigenvectors and eigenvalues of the shape operator: $$S(\vec{e_i}) = \kappa_i \vec{e_i}.$$

\begin{figure}
	\centering
	\small
	\def\svgwidth{\textwidth}\import{figures/smooth_contours/}{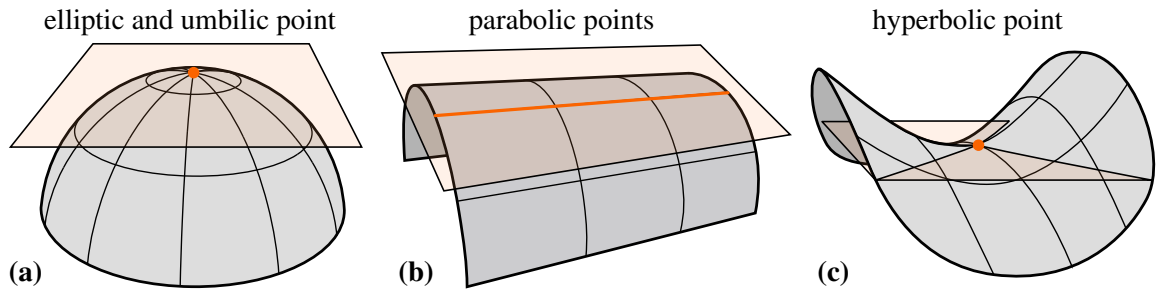}\caption{\textbf{Surface categorization based on Gaussian curvature.}}\label{fig:curvature}
\end{figure}

The \emph{Gaussian} curvature $K$ is equal to the product of the principal curvatures: $$K = \kappa_1 \kappa_2,$$
and the mean curvature $H$ is their average: $$H = \frac{\kappa_1 + \kappa_2}{2}.$$
Based on the sign of its Gaussian curvature, a surface point is called (\fig{curvature}):
\begin{itemize}
	\item \emph{elliptic} if $K > 0$, \ie{} $\kappa_1$ and $\kappa_2$ have the same sign;
	\item \emph{parabolic} if $K = 0 \Rightarrow \kappa_1 = 0$ or $\kappa_2 = 0$;
	\item \emph{hyperbolic} if $K < 0$, \ie{} $\kappa_1$ and $\kappa_2$ have opposite signs.
\end{itemize}
Surfaces with zero Gaussian curvature are called \emph{developable surfaces}, because they can be flattened out into a plane without distortions. For example, a cylinder always has one of its principal curvature equal to zero. Surfaces with zero mean curvature are called \emph{minimal surfaces} because they locally minimize surface area. Since $\kappa_1 = -\kappa_2$ on minimal surfaces, they tend to look like saddles, which is also a good example of surfaces with negative Gaussian surfaces. On the other hand, surfaces with positive Gaussian curvature tend to look like hemispheres.

\paragraph{The fundamental forms.} Even though they do not introduce new geometric ideas, the fundamental forms are important for historical reasons. The first fundamental form $\mathbf{I}$ corresponds to the metric induced by the map $f$, which measures the inner product between any two vectors $\vec{x}$, $\vec{y}$ in the tangent plane at $\vec{p}$:
$$\mathbf{I}(\vec{x},\vec{y}) = \vec{x} \cdot \vec{y}.$$
The second fundamental form $\mathbf{II}$ at $\vec{p}$ is a symmetric bilinear form specified by:
$$\mathbf{II} (\vec{x},\vec{y}) = S(\vec{x}) \cdot \vec{y} = S(\vec{y}) \cdot \vec{x}.$$
With those notations, the normal curvature in the tangent direction $\vec{t}$ can be re-written as: 
$$\kappa_n = \frac{\mathbf{II}(\vec{t},\vec{t})}{\mathbf{I}(\vec{t},\vec{t})}.$$

\paragraph{Principal coordinates.} Using the principal directions $\vec{e}_1, \vec{e}_2$ as a local basis for the tangent plane at $\vec{p}$ leads to the \emph{principal coordinates}. When the vectors $\vec{x}$ and $\vec{y}$ are expressed in principal coordinates, the second fundamental form corresponds to the following diagonal matrix: 
$$\mathbf{II}(\vec{x},\vec{y}) = \vec{x}^\top \left[ 
    \begin{array}{cc}
        \kappa_1 & 0 \\
        0 & \kappa_2
    \end{array} 
\right]\vec{y}.$$
This leads to Euler formula stating that the normal curvature in the direction $[\cos\theta, \sin\theta]^\top$ expressed in principal coordinates, where $\theta$ is the angle measured between this direction and $\vec{e}_1$, is: $$\kappa_n(\theta) = \kappa_1 \cos^2\theta + \kappa_2 \sin^2\theta.$$


\chapter{Convex and Concave Contours}
\label{app:convex}

We now show that concave contour edges cannot be visible.

Any triangle lies in a supporting plane that cuts space into two half-spaces.  The surface normal points to one of these half-spaces.  We can say that a shape is \textbf{in front of} the face when it is in the half-space that the normal points to.  For example, a face is \textbf{front-facing} when the camera $\vec{c}$ is on the front side of the face. Likewise, a face is back-facing when the camera is \textbf{in back of the face}.

\begin{definition}[Concave/convex edge]\label{def:concave_edge}
A mesh edge on an orientable mesh is \emph{concave} if each triangle is in front of the other. (That is, triangle $A$ is in front of triangle $B$ and vice-versa.) Otherwise, if they are each in back of the other, the face is \emph{convex}.
\end{definition}

\begin{figure}
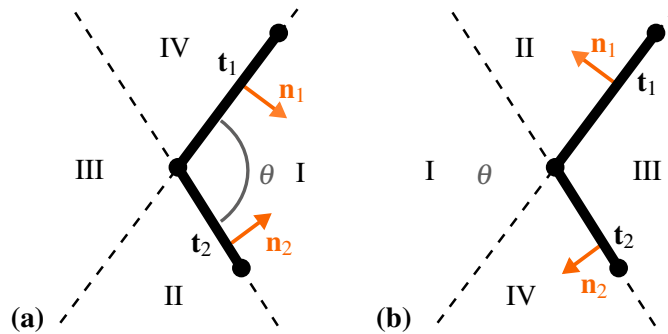

	\small
	\centering
	\textbf{(a)}~
	\def\svgwidth{0.22\textwidth}\import{figures/visibility/}{concave_edge.pdf_tex}
	\qquad
	\textbf{(b)}~
	\def\svgwidth{0.22\textwidth}\import{figures/visibility/}{convex_edge.pdf_tex}
	\caption{\textbf{Concave/convex edge} ---
	Cross-section plane for a mesh edge between triangles $\vec{t}_1$ and $\vec{t}_2$, with normals $\vec{n}_1$ and $\vec{n}_2$. The cross-section is some 3D plane perpendicular to the edge between the two triangles.
	The two faces divide space in four quadrants: I,II,III, and IV. For example, I is the set of points in front of both faces, and II is in front of $\vec{t}_1$ and behind $\vec{t}_2$.
	\textbf{(a)} Concave case: $\theta < \pi$. Each face is in front of the other. In this case, the edge cannot be a visible contour.
	\textbf{(b)} Convex case: $\theta > \pi$. Each face is behind the other.}
	\label{fig:concave_edge}
\end{figure}

These cases are visualized in Figure \ref{fig:concave_edge}. It is easier to think about convexity in terms of the angle $\theta$ between the two oriented faces: the edge is convex if $\theta<\pi$, and concave otherwise.  (If $\theta=\pi$, than the edge can never be a contour.) However, computing $\theta$ is somewhat tricky.

Intuitively, concave edges should not be visible when they are contours, because, from the camera's point-of-view, they are hidden inside the surface. 

\begin{theorem}
On an orientable mesh where back-faces are never visible, 
a contour on a concave edge is never visible.
\end{theorem}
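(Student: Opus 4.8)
The plan is to reduce everything to a two-dimensional picture in the cross-section plane through a point $\vec{p}$ in the relative interior of the edge, taken perpendicular to the edge, exactly as in Figure~\ref{fig:concave_edge}. In this plane the edge collapses to a single point $O$, the two adjacent triangles appear as rays emanating from $O$, and --- crucially --- both face normals $\vec{n}_1,\vec{n}_2$ lie in the plane, since each is orthogonal to the edge. Because both triangles are planar and contain the edge direction, the surface is locally a ``book'' of two half-planes whose spine is the edge, and the solid it bounds is a prism: the region cut out in the cross-section, extruded along the edge. The two face-lines partition the cross-section into the quadrants $\mathrm{I}$--$\mathrm{IV}$ of Figure~\ref{fig:concave_edge}, and this reduction is faithful for the occlusion question precisely because the sign of $(\vec{c}-\vec{p})\cdot\vec{n}_i$ depends only on the in-plane component of $\vec{c}-\vec{p}$.

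Next I would pin down which region is exterior. A point in region~$\mathrm{I}$ lies on the outward-normal side of \emph{both} faces, hence on the exterior side of the local surface; so for a concave edge --- where, by Definition~\ref{def:concave_edge}, each face is in front of the other, i.e.\ the angle between the front sides satisfies $\theta<\pi$ --- region~$\mathrm{I}$ is the locally empty exterior opening and $\mathrm{II}\cup\mathrm{III}\cup\mathrm{IV}$ is the local solid. Then I would locate the camera: since the edge is a contour, one adjacent face is front-facing and the other back-facing, say $(\vec{c}-\vec{p})\cdot\vec{n}_1>0$ and $(\vec{c}-\vec{p})\cdot\vec{n}_2<0$. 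By the definition of the quadrants this places the projected view direction $\vec{c}-\vec{p}$ in region~$\mathrm{II}$ (in front of $\vec{t}_1$, behind $\vec{t}_2$); the symmetric choice lands in region~$\mathrm{IV}$. Either way the direction toward the camera points into the local solid, never into the opening~$\mathrm{I}$. This also disposes of the non-contour cases automatically: both signs positive is region~$\mathrm{I}$ (both front-facing) and both negative is region~$\mathrm{III}$ (both back-facing), neither of which is a contour.

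The conclusion is then that the open segment from $\vec{p}$ toward $\vec{c}$ immediately enters the local solid prism. To turn this into genuine occlusion I would invoke the standing assumption that back-faces are never visible, which lets us regard the surface as a subset of a closed orientable surface bounding a solid, with the camera in the exterior free space. Since $\overline{\vec{p}\vec{c}}$ starts in the interior and ends at the exterior point $\vec{c}$, it must cross the bounding surface at some point strictly between $\vec{p}$ and $\vec{c}$; that crossing is a surface point occluding $\vec{p}$, so $\vec{p}$ fails the ray test and is invisible. As $\vec{p}$ was an arbitrary interior point of the edge, the entire concave contour edge is invisible.

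The main obstacle will be the last step. The cross-section argument cleanly shows the ray \emph{enters} the solid, but converting ``enters the solid'' into ``is occluded'' is exactly where the global topology and the back-faces-invisible hypothesis must do the work: locally the two adjacent triangles alone do not occlude $\vec{p}$ (the ray leaves $\vec{t}_1$ on its front side and $\vec{t}_2$ on its back side), so one cannot avoid appealing to the solid-bounding interpretation. Secondary care is needed to justify the faithful reduction to the cross-section (handled by the prism observation), to take $\vec{p}$ in the relative interior of the edge so that no vertex is involved, and to note that the edge-on degeneracy $\theta=\pi$ is excluded by the generic-position assumption.
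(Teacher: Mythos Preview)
Your argument is correct, and it shares the paper's cross-section/quadrant setup, but the final step is genuinely different. The paper does not pass through a global solid-bounding argument. Instead, once the camera is placed in quadrant~II, it observes directly that $\vec{t}_2$ is back-facing and hence invisible by hypothesis, and then notes the local geometric fact that $\vec{t}_2$ occludes $\vec{t}_1$ in the vicinity of the edge; from these two observations it concludes the edge is invisible. That is the whole proof.

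Your route is more explicitly topological: you correctly point out that the ray from the edge point $\vec{p}$ toward $\vec{c}$ does not actually strike the interior of either adjacent triangle, so you instead argue that the ray enters the bounded solid and must exit through the surface before reaching the exterior camera. This buys you a cleaner treatment of the edge point itself (no limit-from-$\vec{t}_1$ reasoning is needed), at the cost of invoking a stronger reading of the ``back-faces are never visible'' hypothesis --- namely, that the surface can be completed to a closed solid-bounding surface with the camera outside. The paper's version is shorter and stays local, using the hypothesis only in the form ``this particular back-face $\vec{t}_2$ is invisible,'' and letting the occlusion of nearby $\vec{t}_1$ points carry the conclusion to the edge.
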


\begin{proof}
The two faces on the edge divide 3D space into four quadrants (Figure \ref{fig:concave_edge}a) where the viewpoint $\vec{c}$ can lie. For the edge to be a contour, the viewpoint must be in either quadrant II or IV; otherwise, both faces are either front-facing or back-facing. If the camera is in quadrant II, triangle $\vec{t}_2$ is back-facing, and thus must be invisible. Moreover, $\vec{t}_2$ occludes $\vec{t}_1$, at least in the vicinity of the edge. Hence, the edge is invisible.  The same reasoning directly applies to case IV.
\end{proof}

Furthermore, a contour on a convex edge is, locally, visible. While a convex contour could be occluded by some other surface far away, it lies on a front-face and so may be visible and is not locally occluded.


\paragraph{Front-facing and back-facing edges.}

\citet{Markosian:1997} first introduced a version of these ideas called \textit{front-facing edges} and \textit{back-facing edges}. They defined a contour edge as \textit{front-facing} if its adjacent face nearest the camera is front-facing, otherwise it is back-facing (\fig{internal_external}).  They mention in a footnote that they use convex/concavity instead of this definition. 


\begin{figure}
	\centering
	\small
	\def\svgwidth{0.95\textwidth}\import{figures/visibility/}{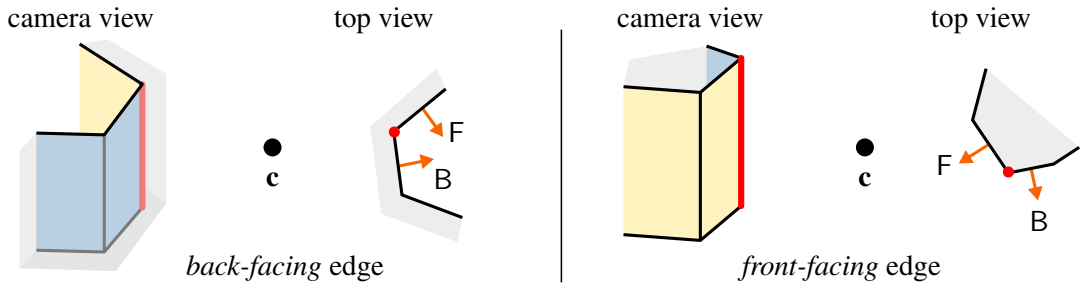}\caption{\textbf{Front-facing/back-facing edges} --- A contour generator edge is \emph{back-facing}, resp.~\emph{front-facing}, when its adjacent face nearest to the camera position $\vec{c}$ is back-facing ($\sB$), resp.~front-facing ($\sF$). The ``interior'' of the mesh (assumed closed) is depicted in grey for illustration purpose.
	Front-facing contour edges are always convex edges on the surface, and back-facing contours are always concave.}\label{fig:internal_external}
\end{figure}



Unfortunately, this definition does not always work.
The distance between the camera and a face is the distance to the nearest point on the face.  Figure \ref{fig:markosian_counterexample} shows a counterexample where their definition fails. In many situations, however, their method would be  equivalent to computing convex/concave.


\begin{figure}
\centering
\small
\def\svgwidth{0.28\textwidth}\import{figures/visibility/}{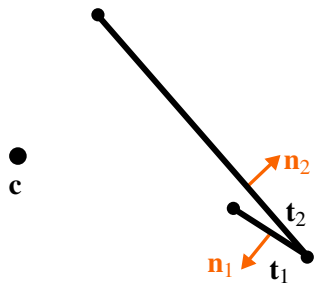}\caption{
	\textbf{Counterexample for Markosian's definition of front/back-facing edges} --- 
	In this example, the face nearest to the camera is $\vec{t}_2$, which is back-facing. Therefore the contour edge between the triangles would be marked as back-facing and thus invisible.  In fact, $\vec{t}_1$ occludes $\vec{t}_2$, even though $\vec{t}_2$ is nearer to the camera. This is a convex edge, and thus potentially visible.
}\label{fig:markosian_counterexample}
\end{figure}

That said, the definition only fails in extreme cases, and the definition is useful for intuition when looking at diagrams.



\chapter{Accurate Numerical Computation} 
\label{app:numerical}

The visibility computations determine the topology of the View Graph, and minor errors can cause major topological errors. Hence, it is important to use robust geometric computations wherever possible.

One simple approach to improving precision is to use high-precision arithmetic (e.g., \texttt{float128}), infinite precision arithmetic, and/or rational arithmetic (since rational numbers are closed under perspective and intersection operations).

\section{Logical intersections}
\label{sec:intersect}

During visibility, there are many different intersection computations.  Whenever possible, these intersections should be done using pointers and logic, rather than numerics.  In principle, all intersections could be detected in the image-space intersection step. However, numerical error could cause intersections to be missed this way.  For example, a contour edge intersecting a boundary edges can be detected simply by finding edges that share a vertex.  Smooth curves, on the other hand, lie within mesh faces; their end-points lie with mesh edges. Their intersections with mesh boundaries amount to determining if they share the same mesh edge.

Keeping track of pointers is also necessary for avoiding spurious intersections. For example, when performing a ray test on a contour edge, it is important that neither of the adjacent triangles unintentionally ``occludes'' the edge.

\section{Orientation test} \label{sec:orientation}

 The orientation test is a useful building block of computational geometry. Many of the tests described here can be implemented in terms of the orientation test, and robust libraries exist for this test, such as the predicates of \citet{Shewchuk:1997}\footnote{\url{http://www.cs.cmu.edu/~quake/robust.html}}.

In 3D, the orientation test determines whether a point $\vec{d}$ lies to the left of, to the right of, or on the oriented plane defined by three other points $\vec{a}$, $\vec{b}$ and $\vec{c}$, appearing in counter-clockwise order when viewed from above the plane (\fig{orientation}). Two points are thus on the opposite sides of a triangular face if the results of their orientation tests with the triangle's supporting plane have opposite signs.  

\newcommand{\AD}{\vv{\vec{a}\vec{d}}}
\newcommand{\BD}{\vv{\vec{b}\vec{d}}}
\newcommand{\CD}{\vv{\vec{c}\vec{d}}}

In any dimension, the orientation test can be implemented as a matrix determinant. In 3D, this is equivalent to the signed volume of the parallelepiped spanned by the vectors $\AD = (\vec{a}-\vec{d})$, $\BD = (\vec{b}-\vec{d})$, and $\CD = (\vec{c}-\vec{d})$. That is, the side of the triangle $\bigtriangleup\vec{abc}$ that $\vec{d}$ lies on is determined by the sign of the scalar triple product:
\begin{align*}
\mbox{\textsc{Orient3D}}(\vec{a},\vec{b},\vec{c},\vec{d}) 
&= \AD \cdot (\BD \times \CD) \\
&= \det(\AD, \BD, \CD)  \\
&= \left | \begin{array}{ccc} 
a_x - d_x & a_y - d_y & a_z - d_z \\
b_x - d_x & b_y - d_y & b_z - d_z \\
c_x - d_x & c_y - d_y & c_z - d_z
\end{array} \right | \\
&= \left | \begin{array}{cccc} 
a_x & a_y & a_z & 1 \\
b_x & b_y & b_z & 1 \\
c_x & c_y & c_z & 1 \\
d_x & d_y & d_z & 1
\end{array} \right |
\end{align*}
Highly-accurate libraries exist for these routines. The second determinant formula shows that swapping input parameters will flip the sign of the output, e.g., 
$\mbox{\textsc{Orient3D}}(\vec{a},\vec{b},\vec{c},\vec{d}) =
-\mbox{\textsc{Orient3D}}(\vec{b},\vec{a},\vec{c},\vec{d})$.

The main use for orientation tests is to determine whether two points are on the same or the opposite side of a triangle. For this, we can define a function $\mbox{\textsc{SameSide}}(\vec{a},\vec{b},\vec{c},\vec{d}, \vec{e})$ that returns \textbf{true} if $\vec{d}$ and $\vec{e}$ are on the same side of $\bigtriangleup\vec{abc}$, that is:
\begin{align*}&\mbox{\textsc{SameSide}}(\vec{a},\vec{b},\vec{c},\vec{d}, \vec{e}) \\
	&= (\mbox{\textsc{Orient3D}}(\vec{a},\vec{b},\vec{c},\vec{d}) >0) == (\mbox{\textsc{Orient3D}}(\vec{a},\vec{b},\vec{c},\vec{e})>0)
\end{align*}

\newcommand{\BA}{\vv{\vec{b}\vec{a}}}
\newcommand{\CA}{\vv{\vec{c}\vec{a}}}
\newcommand{\DA}{\vv{\vec{d}\vec{a}}}

In some cases, we also need to evaluate whether a point is on the front-facing side of a triangle, or the back-facing side.  For this, we can define a function
$\mbox{\textsc{FrontSide}}(\vec{a},\vec{b},\vec{c},\vec{d})$ that returns a 
positive value if $\vec{d}$ is on the front side of $\bigtriangleup\vec{abc}$. The surface normal (unnormalized), is given by $\vec{n} = \BA \times \CA$, and so:
\begin{align*}
\mbox{\textsc{FrontSide}}(\vec{a},\vec{b},\vec{c},\vec{d})
&= \DA \cdot \vec{n} \\
&= \DA \cdot (\BA \times \CA) \\
&= \det(\DA,\BA,\CA) \\
&= \mbox{\textsc{Orient3D}}(\vec{d},\vec{b},\vec{c},\vec{a}) \\
&= -\mbox{\textsc{Orient3D}}(\vec{a},\vec{b},\vec{c},\vec{d})
\end{align*}


\begin{figure}
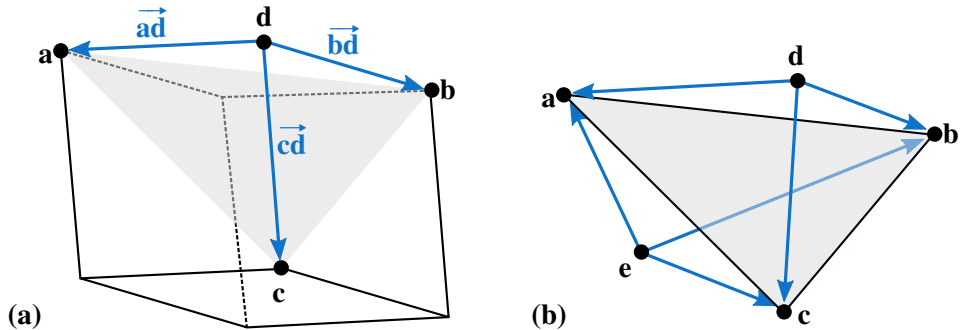

	\centering
	\small
	\def\svgwidth{0.4\textwidth}\import{figures/visibility/}{orientation.pdf_tex}
	\qquad
	\def\svgwidth{0.4\textwidth}\import{figures/visibility/}{sidedness.pdf_tex}
	\caption{\textbf{3D orientation and sidedness tests} --- The 3D orientation test \textbf{(a)} determines whether the point $\vec{d}$ is above, below or on the supporting plane of the oriented triangle $\bigtriangleup\vec{abc}$, \ie~whether the signed volume of the parallelepiped spanned by the vectors $\protect\AD$, $\protect\BD$ and $\protect\CD$ is positive, negative or zero. Two points $\vec{d}$ and $\vec{e}$ are then on the opposite sides of the triangle $\bigtriangleup\vec{abc}$ if the their orientation tests have opposite signs \textbf{(b)}.}\label{fig:orientation}
\end{figure}

\subsection{Applications of the orientation test}
\label{app:orient_apps}

Several of the tests used here can be implemented with the orientation test:

\paragraph{Front-facing.}
Given camera position $\vec{c}$, the face $\bigtriangleup\vec{abd}$ is front-facing if $\mbox{\textsc{FrontSide}}(\vec{a},\vec{b},\vec{d},\vec{c}) > 0$.

\paragraph{Concave edge.} (\sect{sec:concave_edge}).
A mesh edge with vertices $\vec{a}$ and $\vec{b}$ with triangles $\bigtriangleup\vec{abd}$ and $\bigtriangleup\vec{bae}$ is concave if: $\mbox{\textsc{FrontSide}}(\vec{a},\vec{b},\vec{d},\vec{e}) > 0$, or, equivalently, if $\mbox{\textsc{FrontSide}}(\vec{b},\vec{a},\vec{e},\vec{d}) > 0$.


\paragraph{Image-space intersection.}
Image-space intersections are normally detected by the sweep-line algorithm, for efficiency. However, they can also be expressed in terms of 2D orientation tests. The 2D orientation test  \textsc{SameSide2D} is a 2D version of the one described above.
  Using image-space coordinates, two line segments $\vec{ab}$ and $\vec{de}$ intersect if 
$\mathbf{not}~\textsc{SameSide2D}(\vec{d},\vec{e},\vec{a},\vec{b})~\mathbf{and}~\mathbf{not}~\textsc{SameSide2D}(\vec{a},\vec{b},\vec{d},\vec{e})$.

Alternately, one can test for intersection without computing the 2D projection at all. The problem is equivalent to testing whether the 3D triangles $\bigtriangleup\vec{abc}$ and $\bigtriangleup\vec{cde}$ intersect, where $\vec{c}$ is the camera position, which amounts to two 3D \textsc{SameSide} tests.

\paragraph{Overlap test for image-space intersection.} (\sect{sec:singular}). Suppose we have determined that 3D segments $\vec{ab}$ and $\vec{de}$ intersect in image space, viewed from camera $\vec{c}$. Suppose line segment $\vec{ab}$ is a contour or boundary, and it is in front of the other segment at this point. Let $\bigtriangleup\vec{abf}$ be an adjacent triangle on the mesh.  We need to determine which side of $\vec{de}$ is occluded by the surface. This amounts to determining whether 
$\mbox{\textsc{SameSide}}(\vec{a},\vec{b},\vec{c},\vec{d}, \vec{f})$ is true.

\paragraph{Boundary curtain fold.} (\sect{sec:singular}).
Curtain folds can occur on mesh boundaries (\fig{singular_points} {\large\textcircled{\footnotesize 4}}). Since there is no analogue of concave/convex edges for mesh boundaries, a different test is required. It consists in checking whether faces adjacent to a boundary vertex overlap in image-space, which can be computed by checking whether any non-adjacent face of the one-ring neighborhood of the boundary vertex (brown triangles in \fig{curtain_fold}) occludes the boundary edge which is the farthest from the camera.  

This can be reduced to three clipping tests (\fig{curtain_fold}); the edge $\vec{pe}$ is occluded by the face $\bigtriangleup\vec{pqr}$ if and only if:
(1) $\vec{c}$ and $\vec{e}$ are on opposite sides of this triangle, \ie{} $\mbox{\textsc{SameSide}}(\vec{p},\vec{q},\vec{r},\vec{c}, \vec{e})$ is false,
(2) $\vec{e}$ and $\vec{r}$ are on the same side of the triangle $\bigtriangleup\vec{cpq}$, \ie{} $\mbox{\textsc{SameSide}}(\vec{c},\vec{p},\vec{q},\vec{e}, \vec{r})$ is true,
(3) $\vec{e}$ and $\vec{q}$ are on the same side of the triangle $\bigtriangleup\vec{cpr}$, \ie{} $\mbox{\textsc{SameSide}}(\vec{c},\vec{p},\vec{r},\vec{e}, \vec{q})$ is true.
(This same test can also be used to detect curtain folds on contours, but it a simpler test exists for that case.)

It is possible, though unlikely, that a vertex has two boundary edges emerging from it, and both are locally-occluded. In this case, the above test produces a spurious curtain fold. These cases can be detected by performing the above local-overlap test on both boundary edges.  This additional test is optional, since spurious curtain folds should not affect the final visibility results.

\bibliography{contour_tutorial}

\end{document}